\newtheorem{thm}{Theorem}[section]
\newtheorem{lemma}[thm]{Lemma}
\theoremstyle{definition}
\newtheorem{defn}[thm]{Definition}
\newtheorem{exmp}[thm]{Example}
\newtheorem{prop}[thm]{Proposition}
\newtheorem{remark}[thm]{Remark}
\definecolor{tensorcolor}{rgb}{0.65,0.77,0.95}
\definecolor{mpdocolor}{HTML}{FCDE70}
\definecolor{whampdocolor}{HTML}{FCDE70}
\definecolor{whampdocolorw}{HTML}{EEF7FF}
\definecolor{mpdotcolor}{rgb}{1,0.98,0.94}
\definecolor{btensorcolor}{rgb}{0.65,0.50,0.69}
\definecolor{whitetensorcolor}{HTML}{F8F8F8}
\definecolor{diamondcolor}{HTML}{E6F1ED}
\definecolor{unitarycolor}{rgb}{0.8,0.5,.5}
\definecolor{lcolor}{HTML}{D9EAFD}
\newcommand\doubledx{1.6}
\newcommand\singledx{1.8}
\newcommand\whadx{0.988}
\newcommand\mthick{}
\newcommand\mthickt{very thick}
\newcommand{\GcTensor}[6]{
    \begin{scope}[shift={(#1)}]
    \ifnum#5=0
		\draw[Virtual] (-#2,0) -- (#2,0);
		\draw[] (0,#2) -- (0,-#2);
    \fi
    \ifnum#5=-1
		\draw[Virtual] (0,0) -- (#2,0);
		\draw[] (0,#2) -- (0,-#2);
    \fi
    \ifnum#5=1
		\draw[Virtual] (-#2,0) -- (0,0);
		\draw[] (0,#2) -- (0,-#2);
    \fi

    \ifnum#5=2
		\draw[Virtual] (-#2,0) -- (#2,0);
		\draw[] (0,0) -- (0,#2);
    \fi
    \ifnum#5=-2
		\draw[Virtual] (-#2,0) -- (#2,0);
		\draw[] (0,0) -- (0,-#2);
    \fi

    \ifnum#5=3
		\draw[Virtual] (-#2,0) -- (#2,0);
		\draw[] (0,-#2) -- (0,#2);
    \fi
    \ifnum#5=4
    \fi
        \draw[fill=#6, rounded corners=2pt,\mthick] (-#3,-#3) rectangle (#3,#3);
		\draw (0,0) node {\scriptsize #4};
	\end{scope}
}
\newcommand{\GTensor}[5]{
	 \GcTensor{#1}{#2}{#3}{#4}{#5}{tensorcolor};
}
\newcommand{\dmTensor}[4]{
    \begin{scope}[shift={(#1)}]
        \draw[\mthick] (-#2,0) -- (#2,0);
        \draw[\mthick] (-#2,0) -- (-#2,#2*0.3);
        \draw[\mthick] (#2,0) -- (#2,#2*0.3);
        \draw[\mthick, fill=diamondcolor] (#3,0) -- (0,#3) -- (-#3,0) -- (0,-#3) -- cycle;
        \draw (0,0) node {\scriptsize #4};
    \end{scope}
}
\newcommand{\uTensor}[6]{
    \begin{scope}[shift={(#1)}]
        \draw[\mthick](-#4,-#5) -- (-#4,#5);
        \draw[\mthick](#4,-#5) -- (#4,#5);
        \draw[\mthick, fill=lcolor, rounded corners=2pt] (-#2,-#3) rectangle (#2, #3);
        \draw (0,0) node {\scriptsize #6};
    \end{scope}
}
\newcommand{\vTensor}[6]{
    \begin{scope}[shift={(#1)}]
        \draw[\mthick](-#4,0) -- (-#4,#5);
        \draw[\mthick](#4,0) -- (#4,#5);
        \draw[\mthick](0,-#5) -- (0,0);
        \draw[\mthick, fill=whitetensorcolor, rounded corners=2pt] (-#2,-#3) rectangle (#2, #3);
        \draw (0,0) node {\scriptsize #6};
    \end{scope}
}
\newcommand{\MTensor}[4]{
\begin{scope}[shift={(#1)}]
    \draw[Virtual] (-#2,0) -- (#2,0);
    \draw[] (0,-#2) -- (0,#2);
    \filldraw[\mthick, fill=mpdocolor] (0,0) circle[radius=#3];
    \draw (0,0) node {\scriptsize #4};
\end{scope}
}
\newcommand{\etatensor}[6]{
    \begin{scope}[shift={(#1)}]
    \def\ls{1.2};
    \def\rs{0.8};
        \draw[-mid, \mthick](-#4*\ls,#3) -- (-#4*\ls,#5);
        \draw[mid-, \mthick](#4*\ls,#3) -- (#4*\ls,#5);
        \draw[-mid, \mthick](#4*\rs,#3) -- (#4*\rs,#5);
        \draw[mid-, \mthick](-#4*\rs,#3) -- (-#4*\rs,#5);
        \draw[\mthick, fill=lcolor, rounded corners=2pt] (-#2,-#3) rectangle (#2, #3);
        \draw (0,0) node {\scriptsize #6};
    \end{scope}
}
\newcommand{\etaleft}[6]{
 \begin{scope}[shift={(#1)}]
    \def\ls{1.2};
    \def\rs{0.8};
    \draw[\mthick](-#4*\ls,0) -- (-#4*\ls,#5);
    \draw[\mthick](-#4*\rs,0) -- (-#4*\rs,#5);
    \draw[-mid, \mthick](-#4*\ls,#3) -- (-#4*\ls,#5);
    \draw[mid-, \mthick](-#4*\rs,#3) -- (-#4*\rs,#5);
     \fill[lcolor, rounded corners=2pt] (-#2,-#3) rectangle (0, #3);
    \draw[black, \mthick, rounded corners=2pt] 
        (-#2, 0) -- (-#2, #3) -- (0, #3); 
    \draw[black, \mthick, rounded corners=2pt]
        (-#2, 0) -- (-#2, -#3) -- (0, -#3); 
    \draw (0,0) node {\scriptsize #6};
    \end{scope}
}
\newcommand{\etaright}[6]{
 \begin{scope}[shift={(#1)}]
    \def\ls{1.2};
    \def\rs{0.8};
    \draw[\mthick](#4*\ls,0) -- (#4*\ls,#5);
    \draw[\mthick](#4*\rs,0) -- (#4*\rs,#5);
    \draw[mid-, \mthick](#4*\ls,#3) -- (#4*\ls,#5);
        \draw[-mid, \mthick](#4*\rs,#3) -- (#4*\rs,#5);
     \fill[lcolor, rounded corners=2pt] (0,-#3) rectangle (#2, #3);
    \draw[black, \mthick, rounded corners=2pt] 
        (#2, 0) -- (#2, #3) -- (0, #3); 
    \draw[black, \mthick, rounded corners=2pt]
        (#2, 0) -- (#2, -#3) -- (0, -#3); 
    \draw (0,0) node {\scriptsize #6};
    \end{scope}
}
\newcommand{\tracetensor}[3]{
    \begin{scope}[shift={(#1)}]
    \def\ro{.1};
    \def\offset{0.08};
    \def\ls{1.2};
    \def\rs{0.8};
    \def\as{0.25};
        
		\draw[\mthickt,rounded corners=4pt](-#2*0.4,-#3) -- (-#2*0.4,#3*\ro)--(#2*0.4,#3*\ro)--(#2*0.4,-#3);

        \draw[-mid,\mthickt](-#2*0.4,-#3)--(-#2*0.4,#3*\ro-\as);
        \draw[mid-,\mthickt](#2*0.4,-#3)--(#2*0.4,#3*\ro-\as);
        
    \end{scope}
}
\newcommand{\EquationBox}[6]{
\begin{scope}[shift={(#1)}]
        \def\eqnv{1.};
        \draw[\mthick, rounded corners=4pt] (-#2, -#3) rectangle (#2, #3);
        \draw (0,0) node {\scriptsize #5};
        \draw (-#2+0.4,#3-0.4) node {#6};
        \end{scope}
}
\newcommand{\SingleDots}[2]{
	\begin{scope}[shift={(#1)}]
      \draw [thick, dotted] (-#2/2,0) to (#2/2,0);
	\end{scope}
}
\newcommand{\SideIdentityTensor}[4]{
	\begin{scope}[shift={(#1)}]
    \ifnum#4=-1
	   \draw [\mthick] (\doubledx-1,0.8) to  [bend right=90] (\doubledx-1,-0.8);
    \fi
    \ifnum#4=-2
	   \draw [\mthick] (\doubledx-1,0.8) to  [bend right=90] (\doubledx-1,-0.8);
      \draw [\mthick] (\doubledx-1,0.8) -- (\doubledx-0.5,0.8);
      \draw [\mthick] (\doubledx-1,-0.8) -- (\doubledx-0.5,-0.8);
    \fi
    \ifnum#4=-3
	   \draw [\mthick] (\doubledx-1,0.8) to  [bend right=90] (\doubledx-1,-0.8);
      \draw [\mthick] (\doubledx-1,0.8) -- (\doubledx-0.5,0.8);
      \draw [\mthick] (\doubledx-1,-0.8) -- (\doubledx-0.5,-0.8);
	\filldraw[color=black, fill=whitetensorcolor, \mthick] (\doubledx-1.4,0) circle (#3);
	\draw (\doubledx-1.4,0) node {#2};
    \fi
    \ifnum#4=1
	   \draw [\mthick] (-\doubledx+1,0.8) to  [bend left=90] (-\doubledx+1,-0.8);
    \fi
    \ifnum#4=2
	   \draw [\mthick] (-\doubledx+1,0.8) to  [bend left=90] (-\doubledx+1,-0.8);
      \draw [\mthick] (-\doubledx+1,0.8) -- (-\doubledx+0.5,0.8);
      \draw [\mthick] (-\doubledx+1,-0.8) -- (-\doubledx+0.5,-0.8);
    \fi
    \ifnum#4=3
	   \draw [\mthick] (-\doubledx+1,0.8) to  [bend left=90] (-\doubledx+1,-0.8);
      \draw [\mthick] (-\doubledx+1,0.8) -- (-\doubledx+0.5,0.8);
      \draw [\mthick] (-\doubledx+1,-0.8) -- (-\doubledx+0.5,-0.8);
	\filldraw[color=black, fill=whitetensorcolor, \mthick] (-\doubledx+1.4,0) circle (#3);
	\draw (-\doubledx+1.4,0) node {#2};
    \fi
\end{scope}
}
\newcommand{\projector}[7]{
    \begin{scope}[shift={(#1)}]
    \def\ro{.6};
    \def\ls{1.2};
    \def\rs{0.8};
    \def\offset{0.08};
    \def\hshiftt{0.3};
    \def\yshift{0.5};
    \def\ashift{0.15};
    \def\hshift{\singledx};
    \ifnum#7=0
		\draw[-mid, \mthick](-#4*\ls,0) -- (-#4*\ls,#5*3);
        \draw[mid-, \mthick](-#4*\rs,0) -- (-#4*\rs,#5*3);
        \draw[\mthick, rounded corners=2pt](#4*\rs,0) -- (#4*\rs,#5)--(#4*\ls,#5)--(#4*\ls,0);
        \draw[-mid,\mthickt](-#4*0.4,-#5)--(-#4*0.4,-#5+#3);
        \draw[mid-,\mthickt](#4*0.4,-#5)--(#4*0.4,-#5+#3);
        \draw[-mid, \mthick](#4*\rs,#3) -- (#4*\rs,#5-\ashift);
        \draw[mid-, \mthick](#4*\ls,#3) -- (#4*\ls,#5-\ashift);
    \fi
    \ifnum#7=1
        \draw[-mid,\mthick](#4*\rs,0) -- (#4*\rs,#5*3);
        \draw[mid-,\mthick](#4*\ls,0) -- (#4*\ls,#5*3);
        \draw[\mthick, rounded corners=2pt](-#4*\ls,0)--(-#4*\ls,#5)--(-#4*\rs,#5)--(-#4*\rs,0);
        \draw[-mid,\mthickt](-#4*0.4,-#5)--(-#4*0.4,-#5+#3);
        \draw[mid-,\mthickt](#4*0.4,-#5)--(#4*0.4,-#5+#3);
        \draw[mid-, \mthick](-#4*\rs,#3) -- (-#4*\rs,#5-\ashift);
        \draw[-mid, \mthick](-#4*\ls,#3) -- (-#4*\ls,#5-\ashift);
    \fi
    \ifnum#7=2
        \draw[\mthickt, rounded corners=4pt](-#4*0.4,0) -- (-#4*0.4,#5)--(#4*0.4,#5)--(#4*0.4,0);
        \draw[-mid,\mthickt](-#4*0.4,-#5)--(-#4*0.4,-#5+#3);
        \draw[mid-,\mthickt](#4*0.4,-#5)--(#4*0.4,-#5+#3);
    \fi
    \ifnum#7=3
        \draw[\mthick, rounded corners=2pt](-#4*\rs,0) -- (-#4*\rs,#5+\hshiftt*0.5)--(-#4*\rs-\hshift,#5+\hshiftt*0.5)--(-#4*\rs-\hshift,#5+2);
        \draw[\mthick, rounded corners=2pt](-#4*\ls,0)--(-#4*\ls,#5-\hshiftt*0.5)--(-#4*\ls-\hshift,#5-\hshiftt*0.5)--(-#4*\ls-\hshift,#5+2);
        \draw[\mthick, rounded corners=2pt](#4*\rs,0) -- (#4*\rs,#5)--(#4*\ls,#5)--(#4*\ls,0);
        \draw[-mid,\mthickt](-#4*0.4,-#5)--(-#4*0.4,-#5+#3);
        \draw[mid-, \mthickt](#4*0.4,-#5)--(#4*0.4,-#5+#3);
        \draw[-mid, \mthick](#4*\rs,#3) -- (#4*\rs,#5-\ashift);
        \draw[mid-, \mthick](#4*\ls,#3) -- (#4*\ls,#5-\ashift);
        \draw[-mid, \mthick](-#4*\ls-\hshift,#5+\hshiftt)--(-#4*\ls-\hshift,#5+2);
        \draw[mid-, \mthick](-#4*\rs-\hshift,#5+\hshiftt)--(-#4*\rs-\hshift,#5+2);
    \fi
    \ifnum#7=4
        \draw[\mthick,rounded corners=2pt](#4*\ls,0) -- (#4*\ls,#5-\hshiftt*0.5)--(#4*\ls+\hshift,#5-\hshiftt*0.5)--(#4*\ls+\hshift,#5+2);
        \draw[\mthick, rounded corners=2pt](-#4*\ls,0)--(-#4*\ls,#5)--(-#4*\rs,#5)--(-#4*\rs,0) ;
        \draw[\mthick, rounded corners=2pt](#4*\rs,0)--(#4*\rs,#5+\hshiftt*0.5)--(#4*\rs+\hshift,#5+\hshiftt*0.5)--(#4*\rs+\hshift,#5+2);
        \draw[-mid,\mthickt](-#4*0.4,-#5)--(-#4*0.4,-#5+#3);
        \draw[mid-,\mthickt](#4*0.4,-#5)--(#4*0.4,-#5+#3);
        \draw[mid-, \mthick](-#4*\rs,#3) -- (-#4*\rs,#5-\ashift);
        \draw[-mid, \mthick](-#4*\ls,#3) -- (-#4*\ls,#5-\ashift);
        \draw[mid-, \mthick](#4*\ls+\hshift,#5+\hshiftt)--(#4*\ls+\hshift,#5+2);
        \draw[-mid, \mthick](#4*\rs+\hshift,#5+\hshiftt)--(#4*\rs+\hshift,#5+2);
    \fi
    \ifnum#7=5
        \draw[\mthick, rounded corners=2pt](-#4*\rs,0) -- (-#4*\rs,#5-\hshiftt*0.5+\yshift)--(-#4*\rs+\hshift,#5-\hshiftt*0.5+\yshift)--(-#4*\rs+\hshift,#5+2);
        \draw[\mthick, rounded corners=2pt](-#4*\ls,0)--(-#4*\ls,#5+\hshiftt*0.5+\yshift)--(-#4*\ls+\hshift,#5+\hshiftt*0.5+\yshift)--(-#4*\ls+\hshift,#5+2);
        \draw[\mthick, rounded corners=2pt](#4*\rs,0) -- (#4*\rs,#5)--(#4*\ls,#5)--(#4*\ls,0);
        \draw[-mid, \mthick](#4*\rs,#3) -- (#4*\rs,#5-\ashift);
        \draw[mid-, \mthick](#4*\ls,#3) -- (#4*\ls,#5-\ashift);
        \draw[-mid,\mthickt](-#4*0.4,-#5)--(-#4*0.4,-#5+#3);
        \draw[mid-,\mthickt](#4*0.4,-#5)--(#4*0.4,-#5+#3);
        \draw[-mid, \mthick](-#4*\ls+\hshift,#5+\hshiftt+\yshift)--(-#4*\ls+\hshift,#5+2);
        \draw[mid-, \mthick](-#4*\rs+\hshift,#5+\hshiftt+\yshift)--(-#4*\rs+\hshift,#5+2);
    \fi
    \ifnum#7=6
        \draw[\mthick,rounded corners=2pt](#4*\ls,0) -- (#4*\ls,#5+\hshiftt*0.5+\yshift)--(#4*\ls-\hshift,#5+\hshiftt*0.5+\yshift)--(#4*\ls-\hshift,#5+2);
        \draw[\mthick, rounded corners=2pt](-#4*\ls,0)--(-#4*\ls,#5)--(-#4*\rs,#5)--(-#4*\rs,0);
        \draw[\mthick, rounded corners=2pt](#4*\rs,0)--(#4*\rs,#5-\hshiftt*0.5+\yshift)--(#4*\rs-\hshift,#5-\hshiftt*0.5+\yshift)--(#4*\rs-\hshift,#5+2);
        \draw[-mid,\mthickt](-#4*0.4,-#5)--(-#4*0.4,-#5+#3);
        \draw[mid-,\mthickt](#4*0.4,-#5)--(#4*0.4,-#5+#3);
        \draw[mid-, \mthick](-#4*\rs,#3) -- (-#4*\rs,#5-\ashift);
        \draw[-mid, \mthick](-#4*\ls,#3) -- (-#4*\ls,#5-\ashift);
        \draw[-mid, \mthick](#4*\rs-\hshift,#5+\hshiftt+\yshift)--(#4*\rs-\hshift,#5+2);
        \draw[mid-, \mthick](#4*\ls-\hshift,#5+\hshiftt+\yshift)--(#4*\ls-\hshift,#5+2);
    \fi
        \draw[\mthick, fill=whitetensorcolor, rounded corners=2pt] (-#2,-#3) rectangle (#2, #3);
        \draw (0,0) node {\scriptsize #6};
    \end{scope}
}
\newcommand{\blackdot}[5]{
\begin{scope}[shift={(#1)}]
\def\tp{0.75};
\draw[\mthick] (0,0.) -- (0, 0.5);
    \draw[\mthick] (0,-0.5) -- (0,0);
    \draw[Virtual, \mthick] (0,0.) -- (-0.5, 0.);
    \draw[Virtual, \mthick] (0.5,0.) -- (0., 0.);
    \node[symb, symb disk, symb large] at (0,0) {};
    \draw (0,\tp) node {\small #2};
    \draw (0,-\tp) node {\small #3};
    \draw (-\tp,0) node {\small #4};
    \draw (\tp,0.) node {\small #5};
\end{scope}
}
\newcommand{\haprojector}[3]{
\begin{scope}[shift={(#1)}]
\def\rd{0.25};
\def\disp{0.75};
    \filldraw[ultra thin, fill=white] (-1, -\rd) arc[start angle=90, end angle=270, x radius=\rd, y radius=-\rd] -- (1, \rd) arc[start angle=-90, end angle=90, x radius=\rd, y radius=-\rd] -- cycle;
    \ifnum#3=1
    \draw[mid-] (-\disp-\rd, -\rd) -- (-\disp-\rd, -2*\rd);
    \draw[-mid] (-\disp+\rd, -\rd) -- (-\disp+\rd, -2*\rd);
    \draw[mid-] (\disp-\rd, -\rd) -- (\disp-\rd, -2*\rd);
    \draw[-mid] (\disp+\rd, -\rd) -- (\disp+\rd, -2*\rd);
    \fi
    \draw (0,0) node {#2};
\end{scope}
}
\newcommand{\haMtensor}[2]{
\begin{scope}[shift={(#1)}]
\def\rd{0.25};
\def\vht{0.318};
 \draw[bevel, mid-] (0, -\rd) arc[start angle=-163.45, end angle=0, radius=\rd] -- (2*\rd, \rd+\vht);
 \draw[-mid] (0, \rd) -- (0, \rd+\vht);
  \filldraw[fill=whampdocolor] (0,0) circle[radius=\rd];
  \draw (0,0) node {#2};
\end{scope}
}
\newcommand{\whaMsymborg}[3]{
\begin{scope}[shift={(#1)}]
\filldraw[fill=#3] 
    (0,0) circle[radius=0.247];
    \node[anchor=center,scale=0.9] at (0,0) {#2};
\end{scope}
}
\newcommand{\whaNsymb}[1]{
\begin{scope}[shift={(#1)}]
\filldraw[fill=blue!5] 
    (0,0) circle[radius=0.247];
    \node[anchor=center,scale=0.9] at (0,0) {$\wt$};
\end{scope}
}
\newcommand{\whaMsymb}[1]{
\whaMsymborg{#1}{$M$}{whampdocolor}
}
\newcommand{\whaMorg}[4]{
\begin{scope}[shift={(#1)}]
\def\rd{0.247};
\def\rlen{0.318};
\ifnum#2=1
    \draw[-mid] (0, \rd) -- (0, \rd+\rlen);
    \draw[-mid] (0, -\rd-\rlen) -- (0, -\rd);
    \draw[Virtual, -mid] (\rd+\rlen, 0) -- (\rd, 0);
    \draw[Virtual, -mid] (-\rd, 0) -- (-\rd-\rlen, 0);
\fi
\ifnum#2=2
    \draw[-mid] (0, \rd) -- (0, \rd+\rlen);
    \draw[-mid] (0, -\rd-\rlen) -- (0, -\rd);
    \draw[Virtual] (\rd+\rlen, 0) -- (\rd, 0);
    \draw[Virtual] (-\rd, 0) -- (-\rd-\rlen, 0);
\fi
\whaMsymborg{(0,0)}{#3}{#4};
\end{scope}
}
\newcommand{\whaM}[2]{
\whaMorg{#1}{#2}{$M$}{whampdocolor};
}
\newcommand{\whaMDouble}[1]{
\begin{scope}[shift={(#1)}]
\draw[-mid] (0, 0.247) -- (0, 0.565);
\draw[bevel, mid-] (0.011, -0.246) arc[start angle=-163.45, end angle=0, radius=0.247] -- (0.494, 0.565);
\whaMsymb{(0, 0)};
\end{scope}
}
\newcommand{\whaNDouble}[1]{
\begin{scope}[shift={(#1)}]
\def\ysep{0.988};
\def\rd{0.247};
\draw[mid-] (0, -0.247) -- (0, -\ysep+\rd);
\draw[bevel, -mid] (0.011, 0.246) arc[start angle=163.45, end angle=0, radius=0.247] -- (0.494, -0.565) -- (0.494,-\ysep-0.18) arc[start angle=0, end angle=-163.45, radius=0.247] (0.011, -\ysep-0.246);
\whaNsymb{(0, 0)};
\end{scope}
}
\newcommand{\whaMtracesgl}[2]{
\begin{scope}[shift={(#1)}]
\def\rd{0.247};
\def\rlen{0.318};
\def\hro{0.25};
\draw[bevel, mid-] (0.011, -0.246) arc[start angle=-163.45, end angle=0, radius=0.247] -- (0.494, 0.18) arc[start angle=0, end angle=163.45, radius=0.247] (0.011, 0.246);
\whaMsymb{(0, 0)};
\end{scope}
}
\newcommand{\whachannelone}[2]{
\begin{scope}[shift={(#1)}]
\def\sep{#2};
\def\xone{0.};
\def\rd{0.247};
\draw[Virtual] (\xone-0.239, 0.626) arc[start angle=118.955, end angle=151.045, x radius=0.494, y radius=-0.494];
    \draw[Virtual] (\xone+0.6*\sep-\rd, 0.564) -- (\xone+\rd, 0.564);
    \draw[Virtual] (\xone+\sep-\rd, 0.564) -- (\xone+0.6*\sep+\rd, 0.564);
    \draw[Virtual, -mid] (\xone+\sep-\rd, 1.552) -- (\xone+\rd, 1.552);
    \filldraw[ultra thin, fill=white] 
    (\xone+0.6*\sep, 0.564) circle[radius=0.247];
    \node[anchor=center,scale=0.85] at (\xone+0.6*\sep, 0.564) {$\Xi$};
    \draw[Virtual] (\xone-0.239, 1.49) arc[start angle=118.955, end angle=151.045, radius=0.494];
    \filldraw[ultra thin, fill=white] 
    (\xone-0.494, 1.058) circle[radius=0.247];
    \node[anchor=center] at (\xone-0.494, 1.058) {\small $\Omega$};
    \draw[Virtual, mid-] (\xone+\sep+\rd, 1.552) -- (\xone+\sep+2*\rd, 1.552) arc[start angle=-90, end angle=90, x radius=0.494, y radius=-0.494] -- (\xone+\sep+\rd, 0.564);
    \whaMDouble{(\xone, 1.552)};
    \whaMDouble{(\xone+\sep, 1.552)};
    \whaNDouble{(\xone, 0.564)};
    \whaNDouble{(\xone+\sep, 0.564)};
\end{scope}
}
\newcommand{\whachannelonefour}[3]{
\begin{scope}[shift={(#1)}]
\def\sep{#2};
\def\seps{1.2};
\def\xone{0.};
\def\rd{0.247};
    \draw[Virtual] (\xone+0.1*\sep-\rd, 0.564) -- (\xone+\rd-0.5*\sep, 0.564);
    \draw[Virtual] (\xone+0.5*\sep-\rd, 0.564) -- (\xone+0.1*\sep+\rd, 0.564);
    \draw[Virtual, -mid] (\xone+1.5*\seps-\rd, 1.552) -- (\xone+\rd-1.5*\seps, 1.552);
    \filldraw[ultra thin, fill=white] 
    (\xone+0.1*\sep, 0.564) circle[radius=0.247];
    \node[anchor=center,scale=0.85] at (\xone+0.1*\sep, 0.564) {#3};
    \draw[Virtual] (\xone-1.5*\seps, 1.552) arc[start angle=90, end angle=180, radius=0.494];
    \draw[Virtual] (\xone-1.5*\seps, 0.564) arc[start angle=90, end angle=180, x radius=0.494, y radius=-0.494];
    \draw[Virtual] (\xone-1.5*\seps, 0.564) -- (\xone-0.5*\sep, 0.564) ;
    \filldraw[ultra thin, fill=white] 
    (\xone-0.494-1.5*\seps, 1.058) circle[radius=0.247];
    \node[anchor=center] at (\xone-0.494-1.5*\seps, 1.058) {\small $\Omega$};
    \draw[Virtual, mid-] (\xone+1.5*\seps+\rd, 1.552) -- (\xone+1.5*\seps+2*\rd, 1.552) arc[start angle=-90, end angle=90, x radius=0.494, y radius=-0.494] -- (\xone+1.5*\seps+\rd, 0.564)-- (\xone+0.5*\sep+\rd, 0.564);
    \whaMDouble{(\xone-0.5*\seps, 1.552)};
    \whaMDouble{(\xone+0.5*\seps, 1.552)};
    \whaMDouble{(\xone-1.5*\seps, 1.552)};
    \whaMDouble{(\xone+1.5*\seps, 1.552)};
    \whaNDouble{(\xone-0.5*\sep, 0.564)};
    \whaNDouble{(\xone+0.5*\sep, 0.564)};
\end{scope}
}
\newcommand{\wharhozeroloose}[3]{
\begin{scope}[shift={(#1)}]
\def\sep{#2};
\def\xone{0.};
\def\rd{0.247};
\draw[Virtual] (\xone, 0.564) arc[start angle=90, end angle=180, x radius=0.494, y radius=-0.494];
    \draw[Virtual] (\xone, 0.564) -- (\xone+\sep+\rd, 0.564);
    \draw[Virtual, -mid] (\xone+\sep-\rd, 1.552) -- (\xone+\rd, 1.552);
    \draw[Virtual] (\xone, 1.552) arc[start angle=90, end angle=180, radius=0.494];
    \filldraw[ultra thin, fill=white] 
    (\xone-0.494, 1.058) circle[radius=0.247];
    \node[anchor=center] at (\xone-0.494, 1.058) {\small $\Omega$};
    \draw[Virtual, mid-] (\xone+\sep+\rd, 1.552) -- (\xone+\sep+2*\rd, 1.552) arc[start angle=-90, end angle=90, x radius=0.494, y radius=-0.494] -- (\xone+\sep+\rd, 0.564);
    \ifnum#3=1
    \whaMsymb{(\xone, 1.552)};
    \whaMsymb{(\xone+\sep, 1.552)};
    \fi
    \ifnum#3=2
    \whaMDouble{(\xone, 1.552)};
    \whaMDouble{(\xone+\sep, 1.552)};
    \fi
\end{scope}
}
\newcommand{\wharholeft}[2]{
\begin{scope}[shift={(#1)}]
    \def\sep{1.2};
\def\xone{0.};
\def\rd{0.247};
\draw[Virtual] (\xone, 0.564) arc[start angle=90, end angle=180, x radius=0.494, y radius=-0.494];
    \draw[Virtual] (\xone, 0.564) -- (\xone+\sep+2*\rd, 0.564);
    \draw[Virtual, -mid] (\xone+\sep-\rd, 1.552) -- (\xone+\rd, 1.552);
    \draw[Virtual] (\xone, 1.552) arc[start angle=90, end angle=180, radius=0.494];
    \filldraw[ultra thin, fill=white] 
    (\xone-0.494, 1.058) circle[radius=0.247];
    \node[anchor=center] at (\xone-0.494, 1.058) {\small #2};
     \draw[Virtual] (\xone+\sep+\rd, 1.552) -- (\xone+\sep+2*\rd, 1.552); 
     \node[anchor=center, text=Virtual] at (\xone+\sep+4*\rd, 1.7) {$\overset{N-2}\cdots$};
     \node[anchor=center, text=Virtual] at (\xone+\sep+4*\rd, 0.564) {$\cdots$};
    \whaMsymb{(\xone, 1.552)};
    \whaMDouble{(\xone+\sep, 1.552)};
\end{scope}
}
\newcommand{\wharholefttwo}[1]{
\begin{scope}[shift={(#1)}]
    \def\sep{1.2};
    \def\seps{0.75};
\def\xone{0.};
\def\rd{0.247};
\draw[Virtual] (\xone, 0.564) arc[start angle=90, end angle=180, x radius=0.494, y radius=-0.494];
    \draw[Virtual] (\xone, 0.564) -- (\xone+\sep+3*\rd, 0.564);
    \draw[Virtual, -mid] (\xone+\sep-\rd, 1.552) -- (\xone+\rd, 1.552);
    \draw[Virtual] (\xone, 1.552) arc[start angle=90, end angle=180, radius=0.494];
     \draw[Virtual] (\xone+\sep+\rd, 1.552) -- (\xone+\sep+3*\rd, 1.552); 
     \node[anchor=center, text=Virtual] at (\xone+\sep+4*\rd, 1.7) {$\overset{N-2}\cdots$};
     \node[anchor=center, text=Virtual] at (\xone+\sep+4*\rd, 0.564) {$\cdots$};
    \whaMsymb{(\xone, 1.552)};
    \whaMDouble{(\xone+\sep, 1.552)};
     \filldraw[ultra thin, fill=white] 
    (\xone, 0.564) circle[radius=0.247];
    \node[anchor=center] at (\xone, 0.564) {\scriptsize $\Xi^{\frac{1}{2}}$};
    \filldraw[ultra thin, fill=white] 
    (\xone+\seps, 0.564) circle[radius=0.247];
    \node[anchor=center] at (\xone+\seps, 0.564) {$B'$};
    \filldraw[ultra thin, fill=white] 
    (\xone+2*\seps, 0.564) circle[radius=0.247];
    \node[anchor=center] at (\xone+2*\seps, 0.564) {\scriptsize $\Xi^{\frac{1}{2}}$};
\end{scope}
}
\newcommand{\wharhoright}[1]{
\begin{scope}[shift={(#1)}]
    \def\sep{1.2};
\def\xone{0.};
\def\rd{0.247};
    \draw[Virtual, -mid] (\xone-\rd, 1.552) -- (\xone-2*\rd, 1.552);
    \draw[Virtual, mid-] (\xone+\rd, 1.552) -- (\xone+2*\rd, 1.552) arc[start angle=-90, end angle=90, x radius=0.494, y radius=-0.494] -- (\xone+\rd, 0.564) -- (\xone-2*\rd, 0.564);
    \node[anchor=center, text=Virtual] at (\xone-4*\rd, 1.552) {$\cdots$};
     \node[anchor=center, text=Virtual] at (\xone-4*\rd, 0.564) {$\cdots$};
    \whaMsymb{(\xone, 1.552)};
\end{scope}
}
\newcommand{\whaMtrace}[2]{
\begin{scope}[shift={(#1)}]
\def\rd{0.247};
\def\rlen{0.318};
\def\hro{0.25};
\draw[Virtual] (\rd+\rlen+0.1, 0) -- (\rd, 0);
\draw[Virtual] (-\rd, 0) -- (-\rd-\rlen-0.1, 0);
\draw[bevel, -mid] (0.,\rd) -- (0., \rd+\rlen*\hro) arc[start angle=-180, end angle=0, x radius=\rd, y radius=-\rd] -- (2*\rd, 0.) -- (2*\rd, -\rd-\rlen*\hro) arc[start angle=0, end angle=180, x radius=\rd, y radius=-\rd] (0.,-\rd-\rlen*\hro) -- (0.,-\rd);
\filldraw[fill=whampdocolor] 
(0,0) circle[radius=0.247];
\node[anchor=center] at (0,0) {#2};
\end{scope}
}
\newcommand{\whaMtraceOp}[3]{
\begin{scope}[shift={(#1)}]
\def\rd{0.247};
\def\rlen{0.318};
\def\hro{0.25};
\draw[Virtual] (\rd+\rlen+0.1, 0) -- (\rd, 0);
\draw[Virtual] (-\rd, 0) -- (-\rd-\rlen-0.1, 0);
\draw[bevel, -mid] (0.,\rd) -- (0., 3*\rd+\rlen*\hro) arc[start angle=-180, end angle=0, x radius=\rd, y radius=-\rd] -- (2*\rd, 0.) -- (2*\rd, -\rd-\rlen*\hro) arc[start angle=0, end angle=180, x radius=\rd, y radius=-\rd] (0.,-\rd-\rlen*\hro) -- (0.,-\rd);
\filldraw[fill=whampdocolor] 
(0,0) circle[radius=\rd];
\node[anchor=center] at (0,0) {#2};
\filldraw[fill=whitetensorcolor] 
(0,2.5*\rd) circle[radius=\rd];
\node[anchor=center] at (0,2.5*\rd) {#3};
\end{scope}
}
\newcommand{\bendline}[1]{
\begin{scope}[shift={(#1)}]
\def\rd{0.247};
\def\rlen{0.318};
\def\hro{0.25};
\draw[bevel, -mid] (0.,3.2*\rd) -- (0., 3*\rd+\rlen*\hro) arc[start angle=-180, end angle=0, x radius=\rd, y radius=-\rd] -- (2*\rd, 0.) -- (2*\rd, -\rd-\rlen*\hro) arc[start angle=0, end angle=180, x radius=\rd, y radius=-\rd] (0.,-\rd-\rlen*\hro) -- (0.,-\rd);
\draw[]  (0.,\rd) -- (0.,1.8*\rd);
\end{scope}
}
\newcommand{\whaMtraceOpmul}[3]{
\begin{scope}[shift={(#1)}]
\def\rd{0.247};
\def\rlen{0.318};
\def\hro{0.25};
\draw[fill=whitetensorcolor, rounded corners=2pt] (-\rd,1.8*\rd) rectangle (2.47+\rd, 3.*\rd);
\draw (0,2.4*\rd) node {\scriptsize #3};
\draw[Virtual] (0.988+\rd+\rlen-0.1, 0) -- (-\rd-\rlen-0.1, 0);
\draw[Virtual] (2.47+\rd+\rlen+0.1, 0) -- (2.47-\rd-\rlen+0.1, 0);
\bendline{(0,0)};
\bendline{(0.988,0)};
\bendline{(2.47,0)};
\filldraw[fill=whampdocolor] 
(0,0) circle[radius=\rd];
\filldraw[fill=whampdocolor] 
(0.988,0) circle[radius=\rd];
\filldraw[fill=whampdocolor] 
(2.47,0) circle[radius=\rd];
\node[anchor=center] at (0,0) {#2};
\node[anchor=center] at (0.988,0) {#2};
\node[anchor=center] at (2.47,0) {#2};
\node[anchor=center] at (1.729, 0.1) {$\overset{l}\cdots$};
\end{scope}
}
\newcommand{\whaGcTensor}[6]{
    \begin{scope}[shift={(#1)}]
    \ifnum#5=-1
		\draw[Virtual] (0,0) -- (#2,0);
		\draw[-mid] (0,#3) -- (0,#2);
      \draw[-mid] (0,-#2) -- (0,-#3);
    \fi
    \ifnum#5=1
		\draw[Virtual] (-#2,0) -- (0,0);
		\draw[-mid] (0,#3) -- (0,#2);
      \draw[-mid] (0,-#2) -- (0,-#3);
    \fi
        \draw[fill=#6, rounded corners=2pt,\mthick] (-#3,-#3) rectangle (#3,#3);
		\draw (0,0) node {\scriptsize #4};
	\end{scope}
}
\newcommand{\whauTensor}[6]{
    \begin{scope}[shift={(#1)}]
        \draw[\mthick,-mid](-#4,-#5) -- (-#4,-#3);
        \draw[\mthick,-mid](-#4,#3) -- (-#4,#5);
        \draw[\mthick,-mid](#4,-#5) -- (#4,-#3);
        \draw[\mthick,-mid](#4,#3) -- (#4,#5);
        \draw[\mthick, fill=lcolor, rounded corners=2pt] (-#2,-#3) rectangle (#2, #3);
        \draw (0,0) node {\scriptsize #6};
    \end{scope}
}
\newcommand{\whaATensor}[3]{
\begin{scope}[shift={(#1)}]
\def\rd{0.247};
\def\rlen{0.318};

\ifnum#2=1
\draw[] (0, \rd) -- (0, \rd+\rlen);
\draw[Virtual] (\rd+\rlen, 0) -- (\rd, 0);
\draw[Virtual] (-\rd, 0) -- (-\rd-\rlen, 0);
\fi
\ifnum#2=3
\draw[] (0, -\rd) -- (0, -\rd-\rlen);
\draw[Virtual] (\rd+\rlen, 0) -- (\rd, 0);
\draw[Virtual] (-\rd, 0) -- (-\rd-\rlen, 0);
\fi
\draw[fill=tensorcolor, rounded corners=2pt,\mthick] (-\rd,-\rd) rectangle (\rd,\rd);
\node[anchor=center] at (0,0) {#3};

\end{scope}
}
\tikzset{baseline={([yshift=-.5ex]current bounding box.center)}}
\tikzset{every path/.style={ line width=0.5pt, line cap=round }}
\colorlet{Virtual}{RedOrange}
\tikzstyle{bevel} = [ preaction = { draw, white, line width=3pt,  line cap = round } ]
\tikzstyle{bevel wide} = [ preaction = { draw, white, line width=4pt,  line cap = round } ]
\tikzstyle{symb} = [ draw=black, fill=black, line width=0.4pt, inner sep=1.5pt ]
\tikzstyle{mysymb} = [ draw=black, fill=white, circle, line width=0.3pt, inner sep=1pt, font=\small ] 
\tikzstyle{symb large} = [ inner sep=2.1pt ]
\tikzstyle{symb small} = [ inner sep=1pt   ]
\tikzstyle{symb tiny} = [ inner sep=0.8pt ]
\tikzstyle{symb fdisk} = [ circle ]
\tikzstyle{symb disk} = [ circle ]
\tikzstyle{symb square} = [ rectangle ]
\tikzstyle{symb fsquare} = [ rectangle ]
\tikzstyle{Msymb}=[draw=black, fill=whampdocolor, circle, inner sep=1pt, font=\small]
\tikzstyle{Nsymb}=[draw=black, fill=whampdocolorw, circle, inner sep=1pt, font=\small]
\tikzstyle{-mid} = [ decoration={ markings, mark = at position 0.50*\pgfdecoratedpathlength+0.6*3pt with \arrow{>[width=2pt]} }, postaction={decorate} ]
\tikzstyle{mid-} = [ decoration={ markings, mark = at position 0.50*\pgfdecoratedpathlength+0.6*3pt with \arrow{<[width=2pt]} }, postaction={decorate} ]
\newcommand\subsetsim{\mathrel{%
  \ooalign{\raise0.2ex\hbox{$\subset$}\cr\hidewidth\raise-0.8ex\hbox{\scalebox{0.9}{$\sim$}}\hidewidth\cr}}}
\newcommand{\tb}{\hspace{0.2cm}}
\newcommand{\ra}{\rightarrow}
\newcommand{\mS}{\mathcal{S}}
\newcommand{\mT}{\mathcal{T}}
\newcommand{\mE}{\mathcal{E}}
\newcommand{\mL}{\mathcal{L}}
\newcommand{\tr}{\mathrm{Tr}}
\newcommand{\bo}{\mathbbm{1}}
\newcommand{\rd}{\mathrm{d}}
\newcommand{\mV}{\text{V}}
\newcommand{\mW}{\text{W}}
\newcommand{\dg}{\dagger}
\newcommand{\init}{\text{init}}
\newcommand{\Afib}{A_{\text{Fib}}}
\newcommand{\rof}{0.65}
\newcommand{\wt}{M'} 
\newcommand{\hbnt}{a} 
\newcommand{\Eone}{\text{I}}
\newcommand{\Etwo}{\text{II}}
\newcommand{\fQ}{\bm{Q}} 
\begin{document}
\title{Parent Lindbladians for Matrix Product Density Operators}

\author{Yuhan Liu}
\affiliation{Max Planck Institute of Quantum Optics, Hans-Kopfermann-Str. 1, Garching 85748, Germany}
\affiliation{Munich Center for Quantum Science and Technology (MCQST), Schellingstr. 4, 80799 M{\"{u}}nchen, Germany}

\author{Alberto Ruiz-de-Alarc\'{o}n}
\affiliation{Departamento de An\'alisis Matem\'atico y Matemática Aplicada, Universidad Complutense de Madrid, 28040 Madrid, Spain}
\affiliation{Department of Mathematics, CUNEF Universidad, 28040 Madrid, Spain}

\author{Georgios Styliaris}
\affiliation{Max Planck Institute of Quantum Optics, Hans-Kopfermann-Str. 1, Garching 85748, Germany}
\affiliation{Munich Center for Quantum Science and Technology (MCQST), Schellingstr. 4, 80799 M{\"{u}}nchen, Germany}

\author{Xiao-Qi Sun}
\affiliation{Max Planck Institute of Quantum Optics, Hans-Kopfermann-Str. 1, Garching 85748, Germany}
\affiliation{Munich Center for Quantum Science and Technology (MCQST), Schellingstr. 4, 80799 M{\"{u}}nchen, Germany}

\author{David P\'erez-Garc\'ia}
\affiliation{Departamento de An\'alisis Matem\'atico y Matemática Aplicada, Universidad Complutense de Madrid, 28040 Madrid, Spain}
\affiliation{Instituto de Ciencias Matem\'aticas (CSIC-UAM-UC3M-UCM), 28049 Madrid, Spain}

\author{J.~Ignacio Cirac}
\affiliation{Max Planck Institute of Quantum Optics, Hans-Kopfermann-Str. 1, Garching 85748, Germany}
\affiliation{Munich Center for Quantum Science and Technology (MCQST), Schellingstr. 4, 80799 M{\"{u}}nchen, Germany}

\date{\today}
\begin{abstract}
Understanding quantum phases of matter is a fundamental goal in physics. For pure states, the representatives of phases are the ground states of locally interacting Hamiltonians, which are also renormalization fixed points (RFPs). These RFP states are exactly described by tensor networks. Extending this framework to mixed states, matrix product density operators (MPDOs) which are RFPs are believed to encapsulate mixed-state phases of matter in one dimension, where non-trivial topological phases have already been shown to exist. However, to better motivate the physical relevance of those states, and in particular the physical relevance of the recently found non-trivial phases, it remains an open question whether such MPDO RFPs can be realized as steady states of local Lindbladians. In this work, we resolve this question by analytically constructing parent Lindbladians for MPDO RFPs. These Lindbladians are local, frustration-free, and exhibit minimal steady-state degeneracy. Interestingly, we find that parent Lindbladians possess a rich structure that distinguishes them from their Hamiltonian counterparts. In particular, we uncover an intriguing connection between the non-commutativity of the Lindbladian terms and the fact that the corresponding MPDO RFP belongs to a non-trivial phase.

\end{abstract}

\maketitle


\section{Introduction}

In quantum many-body systems, short-ranged interactions strongly constrain the entanglement of physically relevant states. This fact provides a natural foundation for theoretical tools that exploit the underlying entanglement structure.
Among these tools, tensor networks have been remarkably successful in revealing a variety of phenomena in strongly interacting many-body quantum systems~\cite{cirac2021matrix}. The tensor-network framework makes the spatial structure of the underlying correlations explicit while allowing for the efficient description of physically relevant many-body quantum states. For instance, matrix product states (MPS)~\cite{fannes1992finitely}, the prototypical class of pure tensor-network states in 1D, can approximate ground states of gapped local Hamiltonians using a number of parameters that scale only polynomially with system size~\cite{hastings2007area}. Analogously, matrix-product density operators (MPDO) are the mixed-state counterparts of MPS. They can provide an efficient description of
Gibbs states of local Hamiltonians~\cite{hastings2006solving,molnar2015approximating} and, via the bulk-boundary correspondence of tensor networks~\cite{cirac2011entanglement}, they appear as boundary states of 2D systems.

Beyond serving as a successful variational class~\cite{schollwock2011density}, tensor-network states also emerge as exact solutions to the ground state problem of local Hamiltonians. 
Specifically, every projected entangled-pair state~\cite{cirac2021matrix} (PEPS) -- the generalization of MPS to higher dimensions -- can be realized as the exact ground state of a local Hamiltonian. The resulting so-called parent Hamiltonians are
local, have the minimal possible ground-state degeneracy, and, in 1D, are also gapped~\cite{fannes1992finitely,cirac2021matrix}. The parent Hamiltonian construction thus endows pure tensor-network states -- MPS in particular -- with a direct physical interpretation.

Parent Hamiltonians also play a central role 
in the classification of quantum phases of matter~\cite{cirac2021matrix}.
PEPS, together with their parent Hamiltonians, provide representatives for all known (non-chiral)
gapped topological phases of matter in 2D, as they encompass string-net models~\cite{levin2005string,buerschaper2009explicit,levin2006detecting,kitaev2012models,lin2014generalizations,kim2024classifying}. In particular, this holds true even if PEPS are restricted to renormalization fixed point (RFP) states~\cite{cirac2021matrix}. This fact becomes both rigorous and explicit in 1D. There, RFP MPS emerge in the limit of a real-space renormalization flow.
Moreover, the complete phase classification can be conducted directly~\cite{chen2011classification,schuch2011classifying} and the associated MPS RFPs provide representatives for all phases.


Recently, there has been an increasing interest in classifying quantum phases of matter for mixed states~\cite{coser2019classification,de2022symmetry,lee2025symmetry,ma2023average,sang2024mixed,rakovszky2024defining,huang2024hydrodynamics,wang2023intrinsic,gu2024spontaneous,ma2024symmetry,you2024intrinsic,ellison2024towards,lessa2024strong,chen2024unconventional,ma2025topological}, where novel phases -- not present in pure states -- can emerge. Notably, for 1D MPDO, non-trivial phases exist even with short-range correlations~\cite{lessa2024mixed,sun2025}. This contrasts sharply with pure states, where non-trivial phases with short-range entanglement appear only after imposing symmetry protection~\cite{chen2011classification,schuch2011classifying}. Although a full classification of mixed-state phases is still unresolved, MPDO RFPs are expected to capture all possible phases, analogously to their pure-state counterparts. However, it has remained unclear whether MPDO RFPs admit a physical interpretation analogous to parent Hamiltonians for MPS. 





In this paper, we resolve this question by establishing the open system analog of the parent Hamiltonian construction for MPDO RFPs. Since this concerns mixed quantum states, in place of the Hamiltonian ground state, we consider steady states of a Markovian open-system evolution, described by a Lindblad generator~\cite{breuer2002theory}. \emph{When is a given MPDO the steady state of a local Lindbladian?} In analogy to the pure state case, we call this a parent Linbladian of the MPDO. We systematically answer this question, giving an explicit analytical parent Lindbladian construction for the case when the MPDO is an RFP (see \cref{tab:translation}). 
We show that the constructed parent Lindbladians satisfy many important properties: They are local, frustration-free, and with minimal steady-state degeneracy. In contrast to parent Hamiltonians, parent Lindbladians are not always commuting at the RFP and a nontrivial steady-state degeneracy does not necessarily imply long-range correlation (See \cref{tab:translation} for a summary and \cref{sec:sumres} for further discussion).

The problem of constructing a local Lindbladian with a unique (specified) tensor-network state as its fixed point has been considered before, most notably as a numerical algorithm under certain assumptions~\cite{bondarenko2021constructing}, in dissipative state preparation~\cite{diehl2008quantum,kraus2008preparation,verstraete2009quantum} and in the context of thermal states~\cite{kastoryano2016quantum,chen2023efficient,rouze2024efficient,guo2024designing,ding2024efficient}. In particular, Ref.~\cite{verstraete2009quantum} includes a simple construction to leverage the parent Hamiltonian of a given injective MPS to a parent Lindbladian with the specified MPS as its unique fixed point. Although this method can be used to construct a parent Lindbladian for the purification of a mixed state, it fails to give a parent Lindbladian for the mixed state itself, which is the main objective here. For thermal states, Ref.~\cite{guo2024designing} gives a construction for a parent Lindbladian for the case of stabilizer Hamiltonians.
Our construction contains thermal states of nearest-neighbor commuting Hamiltonians with zero correlation length but also includes states that are not necessarily full-rank, such as a generalization of (non-full-rank) thermal states and mixed states arising as boundaries of 2D topologically ordered systems.

\begin{table*}[ht]
    \centering
    \begin{tabular}{c|c|c}
    \hline\hline
       & Parent Hamiltonian $H$ & Parent Lindbladian  $\mathcal{L}$ (RFP)\\
       \hline
       Geometrically local & $H = \sum_i h_i$ & $\mathcal L = \sum_i \mathcal L_i$ \\
       \hline
       Local term construction & $h_i=\tau_i(\bo-VV^\dg)$ & $\mathcal{L}_i=\tau_i(\mT\circ\mS - \bo)$ \\
       \hline
       Frustration-free & $h_i |\psi_{\mathrm{MPS}}\rangle = 0 \; \forall i \quad (H \ge 0)$ & $\mathcal L_i (\rho_{\mathrm {MPDO}}) = 0 \; \forall i$\\
       \hline
       Degeneracy  & Minimal ground-state degeneracy & Minimal steady-state degeneracy  \\
       \hline
       Degeneracy implies long-range correlation & $\checkmark$ & Not always
       \\
       \hline
       Commuting for RFP? & $\checkmark$ & Not possible for a class of RFP 
       \\
    \hline \hline
    \end{tabular}
    \caption{Properties of parent Lindbladian for MPDO renormalization fixed-point, in comparison with parent Hamiltonian for MPS.} 
    \label{tab:translation}
\end{table*}

Our results endow MPDO, which are RFP, with a direct physical meaning -- they arise as the exact steady states of local dissipative dynamics. Moreover, we identify the class of MPDO whose parent Lindbladian is rapid-mixing, thereby allowing efficient mixed-state preparation. We envision our the parent Lindbladian construction can serve as a tool for classifying mixed-state quantum phases.

The remainder of the paper is organized as follows. In \cref{sec:sumres}, we summarize the results, outlining the core components for constructing parent Lindbladians and discussing their key properties. \Cref{sec:pre} provides an overview of the essential concepts of MPS and MPDO, offering the necessary background for the parent Lindbladian construction. The classification of MPDO RFPs is divided into two cases: simple and non-simple. In \cref{sec:general-comment}, we present a general framework for constructing parent Lindbladians for MPDO RFPs, with explicit treatments of the simple and non-simple cases detailed in \cref{sec:pL-simple} and \cref{sec:pL-non-simple}, respectively. Finally, we conclude in \cref{sec:outlook} by highlighting open questions and directions for future research.



\section{Summary of Results}
\label{sec:sumres}


We consider $N$ spins in 1D and a density operator, $\rho$, which is an RFP MPDO (as defined below). Our goal is to find a Lindbladian $\mathcal{L}^{(N)}$ such that: (i) $\mathcal{L}^{(N)}= \sum_{i=1}^N \mathcal{L}_i$, where $\mathcal{L}_i$'s are local Lindbladians themselves (i.e., act on the site $i$ and its $r$ neighbors, where $r$ is a fixed integer); (ii) $\mathcal{L}_i(\rho) = 0$ for all $i$; (iii) The convex set of states satisfying $\mathcal{L}^{(N)}(\rho)=0$ has a finite cardinality of extreme points independent of the system size $N$, and are RFP MPDOs. We will call such cardinality steady state degeneracy (SSD). We will call $\mathcal{L}$ the parent Lindbladian.


We recall that a Lindbladian is a superoperator that generates the time evolution of a density matrix capturing Markovian dynamics in open quantum systems~\cite{breuer2002theory}. An MPDO, specified by a tensor $M$, is a renormalization fixed-point (RFP) if there exist two local quantum channels (i.e., completely positive and trace-preserving maps) $\mT$ and $\mS$ that respectively convert a one-site tensor and a two-site tensor into each other~\cite{cirac2017matrix},
 \begin{align}\label{eq:MRFP-TS}
        \begin{array}{c}
        \begin{tikzpicture}[scale=1.,baseline={([yshift=-0.75ex] current bounding box.center)}
        ]
        \whaM{(0,0)}{2};
        \whaM{(3*\whadx,0)}{2};
        \whaM{(4*\whadx,0)}{2};
        \draw [->, thick](1.,0.4) to [out=45,in=135] (3*\whadx-1.,0.4);
        \draw (1.5*\whadx,0.85) node {$\mT$};
        \draw [->, thick](3*\whadx-1,-0.4) to [out=225,in=-45] (1.,-0.4);
        \draw (1.5*\whadx,-0.32) node {$\mS$};
        \end{tikzpicture}
        \end{array}
        \,.
    \end{align}
As expected physically, at the RFP, there is no (finite) characteristic length scale of the correlations in the system. This is made apparent by the fact that
the RFP of an MPDO has zero correlation length~\cite{cirac2017matrix}, in the sense that every two-point correlation function $\langle O_i O_j \rangle$ is independent of the distance $|i - j|$ between the two operators. 
Furthermore, the conditional mutual information $I(A:C|B)$ vanishes for any consecutive regions $A,B,C$ when the complement of $ABC$ is non-empty.

The two quantum channels of the RFP -- $\mT$ and $\mS$ --
allow for the straightforward construction of a local parent Lindbladian with desirable properties. For this, we first construct a local channel acting on two neighboring sites
\begin{equation}
    \mathcal{E}=\mT\circ\mS,
\end{equation}
and denote $\mathcal{E}_i=\tau_i(\mathcal{E})$ where $\tau_i$ translates the sites by amount $i$. The parent Lindbladian is then constructed as a summation of these terms,
\begin{align}
    \mathcal{L}^{(N)}&=\sum_{i=1}^{N}\mathcal{L}_i\\
    \mathcal{L}_i&=\mathcal{E}_i-\bo,
\end{align}
where $\mathcal{L}_i$ is understood as $\mathcal{L}_i\otimes\bo_{\text{rest}}$, acting trivially outside the support of $\mathcal{L}_i$. By construction, $\mathcal{L}_i$ is a valid Lindbladian operator, and the parent Lindbladian $\mathcal{L}$ is local. Conditions (i) and (ii) are obviously satisfied, and we will show in \cref{sec:pL-simple,sec:pL-non-simple} that (iii) is also fulfilled. 


We note that the above construction can be understood as a generalization of the MPS parent Hamiltonian. In that case, the role of $\mS, \mT$ is played an isometry $V$ and its hermitian conjugate $V^\dagger$. That is, $V^\dg$ is the isometry that brings two-site MPS tensor to one-site MPS tensor during renormalization, and $V$ brings one-site tensor to two-site tensor. The parent Hamiltonian $H = \sum_i h_i$ is then formed by choosing for each local term $h=\bo-VV^\dg$, which is completely analogous to our Lindbladian construction.

While the construction is intuitive, the major
contributions of the paper are to characterize several key theoretical properties of the parent Lindbladian:

\begin{itemize}
    \item \textbf{Local and frustration-free}: The parent Lindbladian is a summation of geometrically local terms, $\mathcal{L}^{(N)}=\sum_{i=1}^{N} \mathcal{L}_i$. A global steady state that satisfies $\mathcal{L}^{(N)}(\rho)=0$ is also in the space of steady states of each local term, $\mathcal{L}_i(\rho)=0$, making it frustration-free. 
 
     \item \textbf{Minimal steady-state degeneracy}: The Steady-State Degeneracy (SSD) of our constructed parent Lindbladian is minimal; no other local Lindbladian construction admits lower steady-state degeneracy.

     \item \textbf{Steady-state degeneracy does not necessarily imply long-range correlations}: Unlike MPS, we show the MPDO, whose parent Lindbladian hosts
     two or more steady states, does not necessarily have long-range correlations. For example, the parent Lindbladian of the state $\rho=\frac{1}{2^N}(\bo^{\otimes N}+\sigma_z^{\otimes N})$ has two degenerate steady states, but $\rho$ has only short-range correlations.  

     \item \textbf{Steady-state degeneracy does not classify mixed-state quantum phases}: Unlike MPS, we show the MPDOs, whose local parent Lindbladians host the same minimal steady-state degeneracy, may belong to different mixed-state quantum phases. For example, the parent Lindbladians for state $\rho=\frac{1}{2^N}(\bo^{\otimes N}+\sigma_z^{\otimes N})$ and $\rho=\frac{1}{2}(|0\rangle\langle0|^{\otimes N}+|1\rangle\langle1|^{\otimes N})$ both admit exactly two linearly independent steady states. Nevertheless, they belong to different phases. 
     
    \item \textbf{(Non)-commutativity of the parent Lindbladian}: The parent Lindbladian
    cannot always be chosen to be commuting ($[\mathcal{L}_i,\mathcal{L}_j]=0$) even at the RFP if we require it to be local and have minimal steady-state degeneracy.
    In contrast, the parent Hamiltonian at the MPS RFP can always be chosen as commuting, local, and with minimal ground-state degeneracy.

    \item \textbf{Rapid-mixing dynamics}: We identify a class of MPDO RFP whose parent Lindbladian is commuting, and thus rapid-mixing, in the sense that, starting from any initial state, the Lindbladian evolution brings it to the unique steady-state (or space of steady states) with vanishing error after time $t =O(\text{poly}\log N)$. For example, the parent Lindbladian for $\rho=\frac{1}{2^N}(\bo^{\otimes N}+\sigma_z^{\otimes N})$ is rapid-mixing and can bring product state $\rho=|0\rangle\langle0|^{\otimes N}$ to it after $t =O(\text{poly}\log N)$ with vanishing error. More interestingly, we will show in~\cref{sec:beyond} that the parent Lindbladian for the non-trivial density matrix~\cite{lessa2024mixed} $\rho_{CZX}^{(N)} = \frac{1}{2^N}(\bo_2^{\otimes N} + U^{(N)}_{CZX})$ is also rapid-mixing. 
\end{itemize}



The distinction of MPDO RFP between the so-called simple and non-simple~\cite{cirac2017matrix} will play an important role in our parent Lindbladian construction, which applies to both cases. The simple MPDO RFPs include, but are not limited to, Gibbs states of local commuting Hamiltonians ($e^{-\beta H}$ where $H=\sum_i h_i$ and $[h_i,h_j]=0$) with zero correlation length. However, the simple MPDOs under consideration can be non-full-rank mixed states, which can be understood as special limiting cases of Gibbs states. In that sense, our result goes beyond the usual Gibbs states of commuting Hamiltonians. More interestingly, the non-simple MPDO can correspond to boundaries of 2d topologically ordered systems. For this class, we will employ a systematic construction that exploits tools from $C^*$-weak Hopf algebras~\cite{molnar2022matrix,ruizdealarcon2024matrix,yoshiko2024haag,wang2024hopf}.


\section{Preliminaries}
\label{sec:pre}

In this section, we provide a minimal review of MPS and MPDO and set up the notation. For a comprehensive review, we refer the reader to Ref.~\cite{cirac2021matrix}. Regarding MPS, we introduce the concept of renormalization fixed points (RFPs) and the construction of parent Hamiltonians. Parent Hamiltonians enable the demonstration that MPS RFPs provide an exhaustive classification of MPS phases~\cite{schuch2011classifying}. Specifically, every MPS belongs to the same phase as its corresponding fixed point.

After establishing a unified structure for all MPS RFPs, we next introduce MPDOs and review the full characterization of their associated RFPs \cite{cirac2017matrix}, highlighting their significantly richer structures. In particular, we will treat simple and non-simple MPDO RFPs separately, as their distinct structures will lead to different parent Lindbladian constructions. This framework sets the stage for the parent Lindbladian construction, which we present in the following sections.


\subsection{Matrix Product State, Renormalization Fixed Points, and Parent Hamiltonian}
\subsubsection{Basics of MPS}
Consider a rank-3 tensor $A$ with graphical representation, 
\begin{align}
    \left(A^{i}\right)_{\alpha\beta} =
        \begin{array}{c}
        \begin{tikzpicture}[scale=1,baseline={([yshift=-0.65ex] current bounding box.center)}]
        \whaATensor{(0,0)}{1}{$A$};
		\draw (-0.75,0) node {$\alpha$};
		\draw (0.75,0) node {$\beta$};
		\draw (0,0.75) node {$i$};
        \end{tikzpicture}
        \end{array}
        \in \mathbb C
\end{align}
where $i=1,2,\cdots,d$ labels the Hilbert space and $\alpha,\beta=1,2,\cdots,D$ label the auxiliary (virtual) space. $d$ is the local Hilbert space dimension and $D$ is the bond dimension. $A^i$ can be viewed as a $D\times D$ matrix, $A^i\in\mathcal{M}_{D}$, where $\mathcal{M}_D$ denotes the matrix algebra of $D\times D$-dimensional complex matrices. 

Given the tensor $A$, one can concatenate $N$ copies to form a uniform MPS, 
\begin{equation}
    |\psi^{(N)}(A)\rangle=\sum_{\lbrace i\rbrace}\tr\left( A^{i_1} A^{i_2}\cdots A^{i_N}\right)|i_1 i_2 \cdots i_N\rangle.
\end{equation}
Graphically,
\begin{align}
|\psi^{(N)}(A)\rangle = 
    \begin{array}{c}
        \begin{tikzpicture}[scale=1.,baseline={([yshift=-0.75ex] current bounding box.center)}]
            \draw[Virtual] (1.235, 0.564) -- (0.741, 0.564);
    \draw[Virtual] (1.976, 0.564) -- (1.729, 0.564);
    \node[anchor=center] at (2.225, 0.661) {$\overset{N}\cdots$};
    \draw[Virtual] (2.717, 0.564) -- (2.47, 0.564);
    \draw[Virtual] (0.257, 0.564) arc[start angle=87.734, end angle=270, radius=0.247] -- (3.21, 0.071) arc[start angle=-90, end angle=90, radius=0.247];
    \draw[] (0.494, 0.811) -- (0.494, 1.129);
    \draw[] (1.482, 0.811) -- (1.482, 1.129);
    \draw[Virtual] (0.494, 0.564) -- (0.257, 0.564);
    \draw[] (2.964, 0.811) -- (2.964, 1.129);
    \draw[Virtual] (3.21, 0.564) -- (2.964, 0.564);
    \whaATensor{(0.494, 0.564)}{2}{$A$};
    \whaATensor{(1.482, 0.564)}{2}{$A$};
    \whaATensor{(2.964, 0.564)}{2}{$A$};
        \end{tikzpicture}
        \end{array}
        \,.
\end{align}
When restricted to a finite bond dimension $D$, MPS provides an efficient description of the underlying many-body states while incorporating by construction the entanglement area-law.

The MPS tensor has redundant degrees of freedom, in the sense that two different tensors may generate the same MPS $|\psi^{(N)}(A)\rangle$ for any $N$. One can use such redundant degrees of freedom, along with a finite number of blocking steps
(combining $n$ neighboring tensors to form a larger tensor with physical dimension $d^n$) 
to bring $A^i$ into a block-diagonal form where each block exhibits a desired structure.
Before showing this canonical form of the tensor, we first need to define the concept of injectivity \cite{cirac2017matrix}:

\begin{defn}[Normal and injective tensors]
An MPS tensor $A$ is called \emph{injective} if the matrices $A^i$ span the whole set of $\mathcal{M}_{D}$ matrices. An MPS tensor $A$ is a \emph{normal} tensor if it becomes injective after blocking. 
\end{defn}


\noindent In other words, $A$ is injective if any $D\times D$ matrix $X$ can be expanded as $X=\sum_{i=1}^d c_i A^i$, where $c_i\in\mathbb{C}$.

For any tensor $B$, it is always possible to obtain another tensor $A$ that generates the same MPS state for any $N$, and $A$ is in the following \textit{canonical form},
\begin{align}
    &A^i=\bigoplus_{\hbnt=1}^g (\mu_\hbnt\otimes A^i_\hbnt),\label{eqn:canonical}\\
    &\mathrm{where} \quad \mu_\hbnt=\mathrm{diag}\lbrace \mu_{\hbnt,1}, \mu_{\hbnt,2},\cdots,\mu_{\hbnt, n_\hbnt}\rbrace. \label{eqn:BNT}
\end{align}
Moreover, the tensors $\lbrace A_\hbnt\rbrace_{\hbnt=1,\cdots,g}$ can be taken to form a \textit{basis of normal tensors} (BNT): (i) Each $A_\hbnt$ is a normal tensor; (ii) There exists some $N_0$ such that for all $N>N_0$, $|\psi^{(N)}(A_\hbnt)\rangle$ are linearly independent. 
That is, a BNT explicitly separates the (normal) tensors generating proportional MPS.

The MPS state generated by $A$ can, therefore, be written as a summation,
\begin{align}
    |\psi^{(N)}(A)\rangle=\sum_{\hbnt=1}^g \left(\sum_{q=1}^{n_\hbnt} \mu_{\hbnt,q}^N\right) |\psi^{(N)}(A_\hbnt)\rangle.
\end{align}

\subsubsection{Renormalization Fixed Point}

In statistical mechanics, for phase classification, one defines the local coarse-graining transformation and the corresponding renormalization flow. The limit of renormalization flow, if it exists, is the renormalization fixed point. Different renormalization fixed points typically correspond to different phases.

In the context of MPS, coarse-graining is defined naturally by blocking neighboring MPS tensors and applying a local isometry to remove
short-range entanglement, while keeping the long-range entanglement intact~\cite{verstraete2009quantum}.
Such a renormalization flow is reversible, in the sense that there exists a local isometry to fine-grain the state. The renormalization fixed point is then defined as the limit of the renormalization flow \cite{cirac2017matrix}.
\begin{defn}[Renormalization fixed point]
A tensor $A$ is called a \emph{renormalization fixed point} (RFP) if it satisfies
\begin{align}
    A^{i_1} A^{i_2} = \sum_{j} V_{(i_1 i_2),j} A^{j},
\end{align}
where $V$ is an isometry $V^\dg V = \bo$. 
Graphically,
\begin{align} \label{eq:mps_rfp_def}
\begin{array}{c}
         \begin{tikzpicture}[scale=1,baseline={([yshift=-0.75ex] current bounding box.center)}]
        \whaATensor{(0,0)}{1}{$A$};
        \whaATensor{(\whadx,0)}{1}{$A$};
         \end{tikzpicture}
    \end{array}
    =
     \begin{array}{c}
    \begin{tikzpicture}[scale=.4,baseline={([yshift=-0.75ex] current bounding box.center)}]
        \vTensor{(\singledx*0.,1.5)}{1.2}{0.5}{\singledx*0.3}{1}{$V$};
        \foreach \x in {0,...,0}{
                \GTensor{(\singledx*0,0)}{1.}{.6}{\small $A$}{2}
        }
    \end{tikzpicture}
    \end{array}
    \,.
\end{align}
\label{def:RFP-pure}
\end{defn}

In the context of MPS, the phase classification is done conveniently using the RFP.
One can show that an MPS and its RFP are in the same phase using the parent Hamiltonian~\cite{schuch2011classifying}. The phase classification problem is then reduced to the phase classification for the RFP; MPS RFPs provide representatives for all 1D phases. 

At RFP, $|\psi^{(N)}(A)\rangle$ is a summation of linearly independent states $|\psi^{(N)}(A_\hbnt)\rangle$, 
each being a direct product of
states that are entangled only between neighboring sites. Specifically, by unitarily rotating the basis on each local site, $|\psi^{(N)}(A_\hbnt)\rangle$ takes the form of 
\begin{equation}
\begin{aligned}
    |\psi^{(N)}(A_\hbnt)\rangle&=|\omega(\Lambda_\hbnt)\rangle^{\otimes N}\\
    &\equiv\begin{array}{c}
            \begin{tikzpicture}[scale=.5,,baseline={([yshift=-0.75ex] current bounding box.center)}]
                \dmTensor{(0,0)}{1.}{.6}{\small $\Lambda_\hbnt$};
                \dmTensor{(2.3,0)}{1.}{.6}{\small $\Lambda_\hbnt$};
                \SingleDots{2.3*2,0}{\singledx/2};
                 \dmTensor{(2.3*3,0)}{1.}{.6}{\small $\Lambda_\hbnt$};
            \end{tikzpicture}
            \end{array}
            \;,
\end{aligned}
\end{equation}
where the matrix $\Lambda_\hbnt$ is diagonal, positive, and $\tr(\Lambda_\hbnt^2)=1$.
Thus
\begin{equation}
    |\psi^{(N)}(A)\rangle=\sum_{\hbnt=1}^g \left(\sum_{q=1}^{n_\hbnt} \mu_{\hbnt,q}^N\right) |\omega(\Lambda_\hbnt)\rangle^{\otimes N},
\end{equation}
where now $|\mu_{\hbnt,q}|=1$. 

\subsubsection{Parent Hamiltonian}
Given an MPS, one can always construct a local Hamiltonian such that this MPS is a ground state. A particularly useful construction is the parent Hamiltonian, which enjoys many nice properties. Given a uniform MPS generated by tensor $A$, one 
defines the vector space $G_L$ as the space spanned by all states $|\psi^{(L)}(A)\rangle_X$,
generated by $L$ consecutive 
tensors with arbitrary boundary condition $X$, 
\begin{align}
    G_L &= \lbrace  |\psi^{(L)}(A)\rangle_X|\forall X\in\mathcal{M}_{D}\rbrace\\
    |\psi^{(L)}(A)\rangle_X&=\sum_{\lbrace i\rbrace}\tr(X A^{i_1}\cdots A^{i_L})|i_1\cdots i_L\rangle.
\end{align}
$L$ is called the interaction length, and we choose large enough $L$ such that its orthogonal subspace $G_L^\perp$ in the $L$-site Hilbert space is not a null space, which is always possible. 
Consider any positive definite operator $h\geq 0$ such that $\mathrm{ker}(h)=G_L$ (one simple choice of $h$ is the projector to $G_L^\perp$). A parent Hamiltonian is constructed as \cite{perez2006matrix}
\begin{equation}
    H^{(N)}=\sum_{i=1}^{N} h_i,
\end{equation}
where $h_i=\tau_i(h)$ and $\tau_i$ translates the sites by
$i$. At the RFP, the choice of $h$ as the projector onto $G_L^\perp$ can be rewritten as 
\begin{equation}
    h=\bo-VV^\dg,
\end{equation}
where $V$ is the isometry from \cref{eq:mps_rfp_def}.

\begin{defn}[Frustration freeness]
    A Hamiltonian is \emph{frustration-free} if it can be written as a sum of terms $H^{(N)}=\sum_{i=1}^N h_i$, such that the lowest energy state of the full Hamiltonian is the lowest energy state of each individual term. Namely,
    \begin{equation}
        H^{(N)}|\psi\rangle=E_0|\psi\rangle\quad\Leftrightarrow\quad h_i|\psi\rangle=e^i_0|\psi\rangle, 
    \end{equation}
    where $E_0$ is the ground state energy of $H^{(N)}$ and $e_0^i$ is the ground state energy of $h_i$. 
\end{defn}
By construction, the parent Hamiltonian is translational invariant and frustration-free. The ground state of the full Hamiltonian satisfies $H^{(N)}|\psi\rangle=0$. Since each individual term $h_i$ is positive semidefinite, $\langle \psi|h_i|\psi\rangle\geq 0$. Therefore, $h_i|\psi\rangle=0$ holds for each individual state, verifying frustration freeness. 

Frustration-freeness plays an important role in proving that the ground-state degeneracy of a parent Hamiltonian is minimal when choosing sufficiently large $L$. In particular, the ground state degeneracy is $g$, the number of BNT elements in tensor $A$, and the ground state space is $\mathrm{Span}\lbrace|\psi^{(N)}(A_\hbnt)\rangle\rbrace$ \cite{perez2006matrix,cirac2021matrix}. Using the parent Hamiltonian of MPS at the RFP,
one can establish the phase classification of MPS: The number of elements in the BNT completely specifies the different phases~\cite{schuch2011classifying}. This number coincides with the ground-state degeneracy of the corresponding parent Hamiltonian.

Finally, the parent Hamiltonian is commuting $[h,\tau_j(h)]=0$ for all $j$ if the tensor is RFP. A nearest-neighbor parent Hamiltonian has $L=2$. One has the following theorem \cite{cirac2017matrix}.

\begin{thm}
    A tensor $A$ is an RFP iff for all $N>2$, $|\psi^{(N)}(A)\rangle$ is a ground state of a nearest-neighbor commuting parent Hamiltonian. 
\end{thm}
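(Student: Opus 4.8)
The plan is to prove both directions of the equivalence. For the ``if'' direction, suppose that for all $N > 2$ the state $|\psi^{(N)}(A)\rangle$ is the ground state of some nearest-neighbor commuting frustration-free parent Hamiltonian $H^{(N)} = \sum_i h_i$ with $h_i \ge 0$ and $h_i |\psi^{(N)}(A)\rangle = 0$. The goal is to extract the RFP condition \cref{eq:mps_rfp_def}. The key observation is that frustration-freeness forces the two-site reduced structure to be very rigid: since $h$ annihilates $|\psi^{(N)}(A)\rangle$ for all $N$, the kernel of $h$ must contain $G_2$, the space of two-site tensors with arbitrary boundary $X$. Commutativity of $h$ with its neighbor $\tau_1(h)$ then restricts how these two-site spaces can overlap on three sites. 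I would argue that the intersection $G_2 \otimes \mathcal{M}_D \cap \mathcal{M}_D \otimes G_2$ (appropriately interpreted inside the three-site space) must equal $G_3$, and that the commuting condition forces the local kernels to be products of projectors that are compatible in the sense needed for $A^{i_1}A^{i_2}$ to lie in the span of the single-site matrices after an isometric identification; concretely I would extract an isometry $V$ with $A^{i_1} A^{i_2} = \sum_j V_{(i_1 i_2), j} A^j$ from the fact that the support projector of the two-site tensor has the same rank as that of the one-site tensor.

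For the ``only if'' direction, assume $A$ is an RFP, so \cref{eq:mps_rfp_def} holds with an isometry $V$, $V^\dg V = \bo$. I would take the nearest-neighbor Hamiltonian with local term $h = \bo - VV^\dg$ acting on two sites (as already noted in the preliminaries), and verify the three required properties: (a) frustration-freeness and $h_i|\psi^{(N)}(A)\rangle = 0$, which follows because $V^\dg$ maps the two-site tensor pattern of $|\psi^{(N)}(A)\rangle$ into the one-site pattern, so $VV^\dg$ acts as the identity on the relevant subspace and $h$ annihilates it; (b) $h \ge 0$, immediate since $VV^\dg$ is a projector; (c) commutativity $[h_i, h_j] = 0$. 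For (c), the only nontrivial case is $|i - j| = 1$, where $h_i = \bo - VV^\dg$ on sites $i, i+1$ and $h_{i+1}$ on sites $i+1, i+2$. I would show $[VV^\dg_{i,i+1}, VV^\dg_{i+1,i+2}] = 0$ by using the RFP relation to rewrite both projectors in terms of the single canonical pattern: iterating \cref{eq:mps_rfp_def} shows that the range of $VV^\dg$ on any block of sites is the coarse-grained tensor space, and these nest consistently, so the two projectors commute (one can also invoke the block-diagonal canonical form \cref{eqn:canonical} to reduce to the normal case where injectivity makes the computation explicit).

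I would organize the argument by first treating the normal (injective-after-blocking) case, where the single-site matrices $A^i$ span $\mathcal{M}_D$ after blocking, so that $VV^\dg$ has a clean characterization as the orthogonal projector onto the span of $\{A^{i_1}A^{i_2}\}$ and the commutation is a direct computation; then I would reduce the general canonical-form case to the normal case block by block, noting that the $\mu_\hbnt$ factors in \cref{eqn:canonical} are pure phases at the RFP so they do not affect the projector structure, and that different BNT blocks occupy orthogonal sectors of the boundary space so their contributions to $h$ act independently.

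The main obstacle I anticipate is the ``if'' direction: deducing the RFP structure from the mere existence of a commuting nearest-neighbor parent Hamiltonian for \emph{all} $N$. Commutativity is a strong constraint, but translating ``the local kernels form a commuting family whose common kernel is $\mathrm{Span}\{|\psi^{(N)}(A_\hbnt)\rangle\}$ for every $N$'' into the algebraic identity $A^{i_1}A^{i_2} = \sum_j V_{(i_1i_2),j}A^j$ requires care: one must rule out the possibility of a commuting parent Hamiltonian whose ground space has the right dimension but whose tensor is not literally a fixed point. The key technical step will be a dimension-counting / support-projector argument showing that the support algebra of the two-site tensor is isometric to that of the one-site tensor, which combined with commutativity pins down the isometry $V$; I expect this is where the bulk of the work lies, and I would lean on the canonical form and the injectivity properties recalled above to make it rigorous.
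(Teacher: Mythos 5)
First, note that the paper does not prove this statement: it is quoted as background from Ref.~\cite{cirac2017matrix}, so there is no in-paper proof to match your proposal against. Judged on its own terms, your proposal has the right overall shape (two directions, reduction to the canonical form, $h=\bo-VV^\dg$ for the forward direction) but leaves genuine gaps in both. In the ``only if'' direction, the step ``the ranges nest consistently, so the two projectors commute'' is a non-sequitur: the ranges of $VV^\dg\otimes\bo$ and $\bo\otimes VV^\dg$ are $G_2\otimes\mathcal H$ and $\mathcal H\otimes G_2$, neither of which contains the other, and nesting is in any case not the relevant criterion (orthogonal projectors commute iff their product is self-adjoint, equivalently equals the projector onto the intersection of their ranges). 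The argument that actually closes this step uses the structural form of RFP MPS stated in the paper, namely $|\psi^{(N)}(A_\hbnt)\rangle=|\omega(\Lambda_\hbnt)\rangle^{\otimes N}$ after a local unitary, together with the induced splitting of each site into $\mathcal H_l\otimes\mathcal H_r$ factors (block by block): with that, $h_{i,i+1}$ acts nontrivially only on $\mathcal H_{r,i}\otimes\mathcal H_{l,i+1}$, so neighboring terms are supported on disjoint tensor factors and commute trivially. You gesture at the canonical form but never invoke this tensor factorization, which is the entire content of the commutativity claim. A smaller issue: $\ker(\bo-VV^\dg)=\operatorname{range}(V)$ equals $G_2=V\cdot G_1$ only when the one-site states span the full physical space; otherwise your $h$ has too large a kernel to be a parent Hamiltonian term in the paper's sense ($\ker h=G_L$).

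The ``if'' direction is the substantive half of the theorem, and your proposal does not prove it; it only names the place where a proof would have to live. ``Dimension counting of support projectors'' and the assertion that the two-site and one-site support projectors have equal rank are neither justified nor sufficient: from them one cannot extract the identity $A^{i_1}A^{i_2}=\sum_j V_{(i_1 i_2),j}A^{j}$ with $V^\dg V=\bo$. The missing ingredient is the structure theorem for commuting families of nearest-neighbor projectors in 1D (the Bravyi--Vyalyi-type decomposition, the same result the paper invokes through Ref.~\cite{hayden2004structure} in the MPDO setting): commutativity of the projectors onto $G_2\otimes\mathcal H$ and $\mathcal H\otimes G_2$ forces the middle site to decompose as $\mathcal H=\oplus_k\mathcal H_l^{(k)}\otimes\mathcal H_r^{(k)}$ with each projector supported on one factor, and frustration-freeness for all $N$ then forces the ground state to be a sum of products of bond states $|\omega(\Lambda)\rangle$, which is exactly the RFP form from which $V$ is read off. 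Without importing or reproving that decomposition, the converse does not go through.
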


Let us summarize a few key properties of MPS that will be contrasted later in the case of MPDO with nontrivial distinctions: the overall structure of the non-injective MPS RFPs does not depend on the details of each BNT element. Moreover, the corresponding parent Hamiltonian is always commuting and straightforward to construct. However, as we will demonstrate in subsequent sections, the additional structure inherent to density matrices in the mixed-state case leads to a richer set of properties.


\subsection{Matrix Product Density Operator and Renormalization Fixed Points}

Analogously, one may ask whether mixed states on a one-dimensional chain also allow a tensor network description. Consider a rank-4 tensor $M$ with graphical representation, 
\begin{align}
    \left(M^{ij}\right)_{\alpha\beta} =
        \begin{array}{c}
        \begin{tikzpicture}[scale=1.,baseline={([yshift=-0.65ex] current bounding box.center)}]
		\draw (-0.75,0) node {$\alpha$};
		\draw (0.75,0) node {$\beta$};
		\draw (0,0.75) node {$i$};
        \draw (0,-0.75) node {$j$};
        \whaM{(0,0)}{2};
        \end{tikzpicture}
        \end{array}
        \in \mathbb C
\end{align}
where $i,j=1,2,\cdots,d$ are physical space indices and $\alpha,\beta=1,2,\cdots,D$ are auxiliary space indices.
A tensor $M$
generates a valid matrix-product density operator (MPDO)
\begin{equation}
\begin{aligned}
     \rho^{(N)}(M)=&\sum_{\lbrace i,j\rbrace}\tr\left( M^{i_1 j_1} M^{i_2 j_2}\cdots M^{i_N j_N}\right)\\
     &\quad  |i_1 i_2 \cdots i_N\rangle\langle j_1 j_2 \cdots j_N|,
\end{aligned}
\end{equation}
or graphically,
\begin{align}
\rho^{(N)}(M)=
    \begin{array}{c}
\begin{tikzpicture}[scale=1]
    \draw[Virtual] (1.235, 0.564) -- (0.741, 0.564);
    \draw[Virtual] (1.976, 0.564) -- (1.729, 0.564);
    \node[anchor=center] at (2.225, 0.661) {$\overset{N}\cdots$};
    \draw[Virtual] (2.717, 0.564) -- (2.47, 0.564);
    \draw[Virtual] (0.257, 0.564) arc[start angle=87.734, end angle=270, radius=0.247] -- (3.21, 0.071) arc[start angle=-90, end angle=90, radius=0.247];
    \draw[-mid] (0.494, 0.811) -- (0.494, 1.129);
    \draw[-mid] (1.482, 0.811) -- (1.482, 1.129);
    \draw[Virtual] (0.494, 0.564) -- (0.257, 0.564);
    \draw[-mid] (2.964, 0.811) -- (2.964, 1.129);
    \draw[Virtual] (3.21, 0.564) -- (2.964, 0.564);
    \draw[bevel, -mid] (0.494, 0) -- (0.494, 0.318);
    \draw[bevel, -mid] (1.482, 0) -- (1.482, 0.318);
    \draw[bevel, -mid] (2.964, 0) -- (2.964, 0.318);
    \whaMsymb{(0.494, 0.564)};
    \whaMsymb{(1.482, 0.564)};
    \whaMsymb{(2.964, 0.564)};
\end{tikzpicture}
        \end{array}
        \,,
\end{align}
if $\rho^{(N)}=(\rho^{(N)})^\dg\geq 0$.
More generally, one can include a boundary condition matrix $B$ in MPDO, 
\begin{equation}
\rho^{(N)}(M)_{B} :=
\begin{tikzpicture}[scale=1]
    \draw[Virtual] (2.223, 0.564) -- (1.729, 0.564);
    \draw[Virtual] (2.963, 0.564) -- (2.716, 0.564);
    \node[anchor=center] at (3.212, 0.661) {$\overset{N}\cdots$};
    \draw[Virtual] (3.704, 0.564) -- (3.457, 0.564);
    \draw[Virtual] (0.247, 0.564) arc[start angle=90, end angle=270, radius=0.247] -- (4.198, 0.071) arc[start angle=-90, end angle=90, radius=0.247];
    \draw[-mid] (1.482, 0.811) -- (1.482, 1.129);
    \draw[-mid] (2.469, 0.811) -- (2.469, 1.129);
    \draw[Virtual] (1.482, 0.564) -- (1.244, 0.564);
    \draw[-mid] (3.951, 0.811) -- (3.951, 1.129);
    \draw[Virtual] (4.198, 0.564) -- (3.951, 0.564);
    \draw[bevel, -mid] (1.482, 0) -- (1.482, 0.318);
    \draw[bevel, -mid] (2.469, 0) -- (2.469, 0.318);
    \draw[bevel, -mid] (3.951, 0) -- (3.951, 0.318);
     \whaMsymb{(1.482, 0.564)};
     \whaMsymb{(2.469, 0.564)};
     \whaMsymb{(3.951, 0.564)};
    \draw[Virtual] (1.235, 0.564) -- (0.741, 0.564);
    \filldraw[ultra thin, fill=white] 
    (0.494, 0.564) circle[radius=0.247];
    \node[anchor=center] at (0.494, 0.564) {$B$};
\end{tikzpicture}
~.
\end{equation}
Given $M$, deciding whether $\rho^{(N)}(M)$ is a positive operator for all system sizes is, in fact, an undecidable problem~\cite{kliesch2014matrix,de2016fundamental}. Nevertheless, as we will see, it is possible to bypass this difficulty by assuming an RFP condition.

What kind of mixed states admit an MPDO description? 
There are at least two physical scenarios where MPDOs appear: 
\begin{enumerate}
    \item Gibbs states of local Hamiltonians  \cite{beigi2011classification,molnar2015approximating,chen2020matrix};
    \item Boundaries of two-dimensional topologically ordered systems \cite{schuch2013topological,bultinck2017anyons}. 
\end{enumerate}
Although a complete classification of mixed-state quantum phases is still ongoing, it is believed that the renormalization fixed points of MPDOs capture mixed-state phases of matter. In the following, we introduce RFPs for MPDO, distinguishing between simple and non-simple RFP characterizations. 


\subsubsection{Renormalization fixed point}
One natural way to generalize the definition of renormalization to MPDO is
to define coarse-graining channels which map two consecutive tensors to one and also maintain long-range correlation. 
Unlike the renormalization flow for MPS, which is always reversible, the existence of another quantum channel that can reverse the coarse-graining is not guaranteed. We require the existence of such a channel to restore reversibility.
Unlike MPS, devising an explicit algorithm to find the coarse-graining quantum channel of a given MPDO tensor $M$ is still challenging (see a recent progress in \cite{kato2024exact}). Nevertheless, one can still define the renormalization fixed point for MPDO as the limit of renormalization flow \cite{cirac2017matrix}.
\begin{defn}[RFP for MPDO]\label{def:RFP_MPDO}
    A tensor $M$ generating MPDO is called a \emph{renormalization fixed point} (RFP) if there exist two quantum channels, $\mT$ and $\mS$, that act on the physical space and fulfill the following condition: 
    \begin{align}
        \begin{array}{c}
        \begin{tikzpicture}[scale=1.,baseline={([yshift=-0.75ex] current bounding box.center)}
        ]
       \whaM{(0,0)}{2};
        \whaM{(3*\whadx,0)}{2};
        \whaM{(4*\whadx,0)}{2};
        \draw [->, thick](1.,0.4) to [out=45,in=135] (3*\whadx-1.,0.4);
        \draw (1.5*\whadx,0.85) node {$\mT$};
        \draw [->, thick](3*\whadx-1,-0.4) to [out=225,in=-45] (1.,-0.4);
        \draw (1.5*\whadx,-0.32) node {$\mS$};
        \end{tikzpicture}
        \end{array}
        \,,
    \end{align}
    namely,
    \begin{equation}
        \begin{aligned}
            &\mT(M_{\alpha\beta})=\sum_\gamma M_{\alpha\gamma}\otimes M_{\gamma\beta}\\
            &\mS(\sum_\gamma M_{\alpha\gamma}\otimes M_{\gamma\beta})=M_{\alpha\beta}. 
        \end{aligned}
    \end{equation}
\end{defn}
One can show this definition agrees with \cref{def:RFP-pure} when the MPDO is pure. 

Next, we would like to characterize the RFP for MPDO. By treating indices $i,j$ as a compound index (namely, by vectorizing the MPDO), $M$ can be viewed as a rank-3 tensor, and one can apply the previous results for matrix product state to bring $M^{ij}$ into the canonical form with a BNT as in \cref{eqn:canonical,eqn:BNT}, namely,
\begin{align}
    &M^{ij}=\bigoplus_{\hbnt=1}^g (\mu_\hbnt\otimes M^{ij}_\hbnt),\\
    &\mathrm{where} \quad \mu_\hbnt=\mathrm{diag}\lbrace \mu_{\hbnt,1}, \mu_{\hbnt,2},\cdots,\mu_{\hbnt, n_\hbnt}\rbrace. 
\end{align}
We call this the \textit{horizontal canonical form} and denote the BNT as $\lbrace M_\hbnt\rbrace$. 

Consider evaluating a two-point correlation function, which
can be evaluated via the tensor-network transfer matrix. The transfer matrices $E$ for MPS and MPDO are defined as
\begin{align}
    \mathrm{MPS}:E=
    \begin{array}{c}
        \begin{tikzpicture}[scale=1.,baseline={([yshift=-0.75ex] current bounding box.center)}
        ]
        \whaATensor{(0,0)}{1}{$A$};
        \whaATensor{(0,\whadx*0.82)}{3}{$A^*$};
        \end{tikzpicture}
        \end{array},\quad
        \mathrm{MPDO}: E=
\begin{array}{c}
        \begin{tikzpicture}[scale=1.,baseline={([yshift=-0.75ex] current bounding box.center)}
        ]
         \whaMtrace{(0,0)}{$M$};
        \end{tikzpicture}
        \end{array}
        \,.
\end{align}
While all BNT element $A_\hbnt$ contribute to the transfer matrix for MPS, some BNT element $M_\hbnt$ for an MPDO may lead to nilpotent operators, meaning that $\tr_R(\rho^{(N)}(M_\hbnt))=0$ when we trace $R\leq D$ consecutive sites, and $N$ is sufficiently large. Therefore, such BNT elements do not contribute to the two-point correlation function when the separation of the two operators is larger than $R$. This motivates the following definition. 

\begin{defn}[Simple]
    A tensor generating an MPDO is \emph{simple} if none of the elements in its BNT is nilpotent. 
\end{defn}

Below, we give two examples of non-injective MPDO tensors; one is simple, and the other is non-simple. 

\begin{exmp}[Classical GHZ state] \label{ex:classical_ghz}
    Consider a tensor $M$ with $d=D=2$
    which has components $M_{00}^{00}=M_{11}^{11}=1$, and all other components
    vanishing. The corresponding MPDO is
\begin{equation}
    \rho^{(N)}(M)=|0\rangle\langle 0|^{\otimes N} + |1\rangle\langle 1|^{\otimes N}. 
\end{equation}
This is an example of a simple tensor. There are two elements in BNT, each with bond dimension 1: $M_{(\hbnt=0)}^{00}=1$ and $M_{(\hbnt=1)}^{11}=1$. None of them is nilpotent. Similar to the GHZ state, this MPDO has long-range correlations. Indeed, $\rho^{(N)}(M)$ can be obtained from the pure GHZ state by tracing out one site. 
\end{exmp}

\begin{exmp}[Boundary state of the Toric Code]\label{exmp:ToricCodeBd}
Consider a $d=D=2$ tensor $M$ with 
components $M_{00}^{00}=M^{11}_{00}=M^{00}_{11}=-M^{11}_{11}=1/2$ and all other components  vanishing.
The corresponding MPDO is
\begin{equation}
    \rho^{(N)}(M)=\frac{1}{2^N}(\bo^{\otimes N}+\sigma_z^{\otimes N}).
\end{equation}
This is an example of a non-simple tensor. There are two elements in BNT, each with bond dimension 1: $M_{(\hbnt=0)}^{00}=\frac{1}{2},M_{(\hbnt=0)}^{11}=\frac{1}{2}$ and $M_{(\hbnt=1)}^{00}=\frac{1}{2},M_{(\hbnt=1)}^{11}=-\frac{1}{2}$. $M_{(\hbnt=0)}$ generates $\bo^{\otimes N}$ and $M_{(\hbnt=1)}$, which is nilpotent, generates $\sigma_z^{\otimes N}$. After tracing any single site, the MPDO is proportional to a maximally mixed state and thus does not have long-range correlations. 
\end{exmp}

The
simple and non-simple MPDO RPF tensors have different structures, which we review below.

\subsubsection{Simple RFP characterization}
\begin{thm}
\label{thm:simple-RFP}
    If a tensor $M$ is in canonical form,  generates an MPDO, is an RFP, and is simple, then each BNT element $M_\hbnt$ takes the following form after a local
    unitary rotation: 
    \begin{align}
        \begin{array}{c}
        \begin{tikzpicture}[scale=1.,baseline={([yshift=-0.75ex] current bounding box.center)}
        ]
        \whaMorg{(0,0)}{2}{\small $M_a$}{whampdocolor};
        \end{tikzpicture}
        \end{array}=\bigoplus_{k\in S_\hbnt} 
        \begin{tikzpicture}[scale=.4,baseline={([yshift=-0.5ex] current bounding box.center)}]
        \whaGcTensor{(0,0)}{1.}{.5}{\small $l_k$}{1}{lcolor};
        \whaGcTensor{(\singledx,0)}{1.}{.5}{\small $r_k$}{-1}{lcolor};
        \end{tikzpicture}
        \,,
        \label{eqn:simple-RFP}
    \end{align}
    with
    \begin{align}
    \begin{array}{c}
        \begin{tikzpicture}[scale=.4,baseline={([yshift=-0.75ex] current bounding box.center)}
        ]
        \whauTensor{(0,0)}{1.2}{.5}{\singledx*0.5}{1.}{\small $\eta_{k,k'}$};
        \end{tikzpicture}
        \end{array}=
         \begin{tikzpicture}[scale=.4,baseline={([yshift=-0.75ex] current bounding box.center)}]
        \whaGcTensor{(0,0)}{1.}{.5}{\small $r_{k}$}{-1}{lcolor};
        \whaGcTensor{(\singledx,0)}{1.}{.5}{\small $l_{k'}$}{1}{lcolor};
        \end{tikzpicture}
        \geq 0,\quad \text{for}\quad k,k'\in S_\hbnt
        \,,
    \end{align}
    and the matrix $R$ defined below is a rank-one matrix
    \begin{equation}
        R_{k,k'}:=\tr(\eta_{k,k'})=a_{k} b_{k'}
    \end{equation}
    with normalization $\sum_{k\in S_\hbnt} a_k b_k = 1$.

    Different elements of a BNT are supported on orthogonal physical subspaces. They further satisfy $\eta_{k,k'}=0$ if $k\in S_\hbnt, k'\in S_{\hbnt'}$ and $\hbnt\neq \hbnt'$.  
\end{thm}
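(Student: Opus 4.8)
The plan is to first strip off the BNT block structure and then, for a single normal block, read off the factorized form $\bigoplus_k l_k\otimes r_k$ from the two channels $\mT,\mS$ -- in particular from their complete positivity and trace preservation.

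\emph{Reduction to one normal, injective block.} I would insert the horizontal canonical form $M^{ij}=\bigoplus_\hbnt(\mu_\hbnt\otimes M_\hbnt^{ij})$ into the two relations of \cref{def:RFP_MPDO}. Since the MPDOs generated by distinct BNT elements are linearly independent for large $N$ and $\mT,\mS$ are linear, this forces $\mT$ and $\mS$ to act block-diagonally in the BNT index; comparing scalars on the two sides then forces the weights $\mu_{\hbnt,q}$ inside each block to be equal, so that, after absorbing an overall constant into each $M_\hbnt$, one is reduced to $\mT((M_\hbnt)_{\alpha\beta})=\sum_\gamma (M_\hbnt)_{\alpha\gamma}\otimes(M_\hbnt)_{\gamma\beta}$ together with the corresponding $\mS$-inverse, separately for each $\hbnt$. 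The same block-diagonality, combined with injectivity (achieved after finitely many blocking steps) and positivity of $\rho^{(N)}$, is what I would use to conclude that distinct BNT elements are supported on orthogonal physical subspaces and that the cross-block objects $\eta_{k,k'}$ vanish, settling the last paragraph of the statement. I would invoke \cite{cirac2017matrix} for the purely structural MPS/MPDO bookkeeping, including the transfer of the final single-site form back through the de-blocking.

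\emph{Core step.} Taking the physical partial trace of the relation for a single block and using trace preservation of $\mT$ immediately gives $E^2=E$ for the MPDO transfer operator $E=\tr_{\mathrm{phys}}M_\hbnt$; simplicity says $E$ is not nilpotent, so $E$ is a nonzero idempotent. Moving to a virtual-space gauge (which alters neither $\rho^{(N)}$ nor the RFP relations) in which $E$ is a coordinate projection, one extracts from it a splitting of the virtual space into a left-type and a right-type summand. Feeding this splitting back into the RFP relation and its $\mS$-inverse, and using injectivity, I expect to obtain: (i) the physical space decomposes into orthogonal sectors labelled by $k\in S_\hbnt$; (ii) on sector $k$ the tensor is a genuine tensor product $l_k\otimes r_k$, with $l_k$ carrying only the left virtual leg and $r_k$ only the right; (iii) $\eta_{k,k'}\ge 0$, inherited from $\rho^{(N)}\ge 0$ since $\eta_{k,k'}$ is, up to the gauge, a two-site reduced object. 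Once the factorized form is in hand, recomputing $E^2$ shows it has rank at most one; combined with $E^2=E\ne 0$ this gives $\mathrm{rank}\,E=1$, whence one of the two families $\{\tr_{\mathrm{phys}}l_k\}_k$, $\{\tr_{\mathrm{phys}}r_k\}_k$ consists of proportional vectors, which is exactly the statement that $R_{k,k'}=\tr\eta_{k,k'}$ is rank one; and since the trace of a rank-one idempotent equals $1$, one gets the normalization $\sum_{k}a_kb_k=\tr E=1$.

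\emph{Main obstacle.} The crux is step (ii): upgrading the cheap direct-sum decomposition of $M_\hbnt$ that the idempotent hands us to an honest tensor product of a purely-left and a purely-right piece on each sector. This is precisely where one must use that $\mT$ and $\mS$ are each completely positive, not merely that their composition inverts the blocking. A workable route is to write a Kraus decomposition $\mT=\sum_\mu\mathrm{Ad}_{K_\mu}$ with $K_\mu:\mC^d\to\mC^d\otimes\mC^d$, substitute it into the RFP relation, and use injectivity of $M_\hbnt$ to argue that each $K_\mu$ must factor across the left/right bipartition, so that conjugation by it cannot couple the two physical factors. A secondary technical point, handled by the standard fundamental-theorem gauge-uniqueness argument, is verifying that the factorized single-site form is stable under the de-blocking that returns from the injective tensor to the original one.
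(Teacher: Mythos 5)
First, a point of orientation: the paper does not prove this statement at all --- it is quoted as a reviewed result from Ref.~\cite{cirac2017matrix}, and the text immediately below it tells you the intended mechanism: the recovery channel $\mS$ implies that the conditional mutual information $I(A:C|B)$ vanishes for consecutive regions, and the decomposition $\mathcal{H}=\bigoplus_{k}\mathcal{H}^{(k)}_l\otimes\mathcal{H}^{(k)}_r$ together with the factorized form of $M_\hbnt$ is then exactly the content of the structure theorem for such states \cite{hayden2004structure}. Your proposal instead tries to extract this structure directly from the tensors, and the place where you yourself flag the crux --- upgrading the direct-sum splitting handed to you by the idempotent $E$ to an honest tensor product of a purely-left and a purely-right factor on each sector --- is a genuine gap, not a technical loose end. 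The proposed fix (``each Kraus operator $K_\mu$ of $\mT$ must factor across the bipartition by injectivity'') does not go through: the RFP relation constrains $\mT$ only on the $D^2$-dimensional image of the boundary map inside the $d^2$-dimensional physical operator space, so the Kraus operators are far from determined, and even where they are constrained, factorization of individual Kraus operators is not implied --- what one actually needs is the algebra/subsystem decomposition supplied by the structure theorem. In effect you are attempting to reprove Hayden--Jozsa--Petz--Winter ad hoc; you should instead invoke it (via the observation that $\mS$ is a recovery map for the marginal, hence $I(A:C|B)=0$).

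Two further steps are incorrect or unsupported as written. (i) The rank-one claim: from the factorized form one gets $E_{\alpha\beta}=\sum_{k}(\tr l_k)_\alpha(\tr r_k)_\beta$, a sum of $|S_\hbnt|$ rank-one terms, so ``recomputing $E^2$ shows it has rank at most one'' is false; and idempotency alone does not help, since $R=\bo$ is a nonnegative idempotent of full rank (corresponding to $\eta_{k,k'}=\delta_{kk'}\sigma_k$, i.e.\ a decomposable tensor). The rank-one property of $R$ is forced only by the \emph{normality} of $M_\hbnt$ --- if $R$ had rank $\geq 2$ the sectors would generate linearly independent MPDOs and $M_\hbnt$ would split further --- and your sketch never brings normality to bear at this point. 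Once rank-one-ness is established the normalization $\sum_k a_kb_k=\tr E=1$ does follow from $E^2=E\neq 0$ as you say. (ii) The orthogonality of physical supports of distinct BNT elements, and the vanishing of cross-block $\eta_{k,k'}$, are asserted to follow from ``block-diagonality, injectivity and positivity'' with no mechanism; for MPS the analogous statement is simply false (distinct BNT elements need not have orthogonal single-site supports), so whatever argument you give must use positivity of $\rho^{(N)}$ for all $N$ in an essential and explicit way. The correct ingredients ($E^2=E$ from trace preservation of $\mT$, the reduction to a single normal block, positivity of the diagonal $\eta_{k,k}$ as tensor factors of $\rho^{(N)}$) are all present in your outline, but the load-bearing steps are the ones left to ``I expect to obtain.''
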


We emphasize that the direct sum in \cref{eqn:simple-RFP} is in the physical space (unlike \cref{eqn:canonical}, where the direct sum is in the auxiliary space). The local Hilbert space for each site is decomposed into a direct sum $\mathcal{H}=\oplus_\hbnt \mathcal{H}_\hbnt$.
Moreover, each $\mathcal{H}_\hbnt$ is further decomposed into another direct sum $\mathcal{H}_\hbnt=\oplus_{k\in S_\hbnt} \mathcal{H}^{(k)}$ with $\mathcal{H}^{(k)}=\mathcal{H}^{(k)}_l\otimes \mathcal{H}^{(k)}_r$. Denoting $d_l^{(k)},d_r^{(k)}$ the Hilbert space dimension of $\mathcal{H}^{(k)}_l,\mathcal{H}^{(k)}_r$ respectively, the physical Hilbert space dimension of a single site is $d=\sum_\hbnt\sum_{k\in S_\hbnt} d_l^{(k)} d_r^{(k)}$. 

By concatenating $M$, the MPDO generated by simple RFP tensor $M$ is thus a summation of mixed states, each of which is a direct product,
\begin{equation}
\begin{aligned}
    &\rho^{(N)}(M)=\\
    &\quad\bigoplus_{\hbnt=1}^g n_\hbnt \bigoplus_{k_1,\ldots,k_N\in S_\hbnt}\eta_{k_1,k_2}\otimes \eta_{k_2,k_3} \otimes\cdots\otimes \eta_{k_N,k_1}.
\end{aligned}
\label{eqn:simple-MPDO-RFP}
\end{equation}
We use the direct sum to emphasize different summands are supported on orthogonal subspaces. In fact, each summand labeled by $\hbnt$ in the above MPDO is the generalized Gibbs state of nearest-neighboring commuting Hamiltonian with zero correlation length,
\begin{equation}
    \rho^{(N)}\propto e^{-\sum_{i=1}^N h_i},
\end{equation}
where $h_i=\tau_i(h)$, $[h_i,h_j]=0$. By ``generalized'' here we mean that each $h$ includes the limiting case where $h=h_1+\beta h_2$ with $h_2\geq 0,[h_1,h_2]=0$ and $\beta\ra\infty$, so that $e^{-h}=P_2 e^{-h_1}$, with $P_2$ the projector onto the ground space of $h_2$. Therefore, the generalized Gibbs state can be rank-deficient.
Vice versa, the generalized Gibbs state of nearest-neighboring commuting Hamiltonian with zero correlation length can be rewritten in the form of \cref{eqn:simple-MPDO-RFP} \cite{cirac2017matrix,bravyi2005commutative,beigi2011classification}. 


The MPDO generated by the simple RFP tensor satisfies several properties physically expected for an RFP. First, it has zero correlation length $E^2=E$. Second, the conditional mutual information $I(A:C|B)$ vanishes for any consecutive regions $A,B,C$ when the complement of $ABC$ is non-empty, which directly leads to the above local Hilbert space decomposition \cite{hayden2004structure}. 

\begin{exmp}[Gibbs state]
    We provide two examples of simple MPDO RFPs in \cref{sec:example-simple}. The first one is a direct product of Werner states, therefore with a single $k$ in \cref{eqn:simple-RFP}: there is no further decomposition for the local Hilbert space at each site. The second one is constructed by two Werner states with different parameters, and has two $k$'s: the local Hilbert space is a direct sum $\mathcal{H} = \mathcal{H}^{(k=1)} \oplus \mathcal{H}^{(k=2)}$. For both cases, we show that they are indeed Gibbs states by constructing the Gibbs state Hamiltonians explicitly.
\end{exmp}

Finally, using the characterization of simple RFP, one can explicitly construct the renormalization channels $\mT$ and $\mS$. When the MPDO is injective, with only one horizontal BNT element, the channel $\mT$ that
transforms two sites to three sites and the channel $\mS$ that
transforms three sites to two sites read 
\begin{equation}
\begin{aligned}
    &\mT(X)=\\
    &\bigoplus_{k_1,k_2}\left[ \frac{1}{a_{k_1} b_{k_2}} \tr_{1r,2l}\left(Q_{k_1}\otimes Q_{k_2}(X)Q^\dg_{k_1}\otimes Q^\dg_{k_2}\right)\right.\\
    &\quad \left.\otimes\left(\bigoplus_{k'}(\eta_{k_1,k'}\otimes\eta_{k',k_2})\right)\right],
\end{aligned}
\label{eqn:inj-T}
\end{equation}
and
\begin{equation}
\begin{aligned}
    &\mS(X)=\bigoplus_{k_1,k_2,k_3}\left[\frac{1}{a_{k_1} b_{k_3}}\right.\\
    &\quad\tr_{1r,2,3l}\left(Q_{k_1}\otimes Q_{k_2}\otimes Q_{k_3}(X)Q^\dg_{k_1}\otimes Q^\dg_{k_2}\otimes Q^\dg_{k_3}\right)\\
    &\quad \otimes \eta_{k_1,k_3}\biggr].
\end{aligned}
\label{eqn:inj-S}
\end{equation}
Here $Q_k$ is an isometry $Q_k Q_k^\dg=\bo$ that projects onto the Hilbert subspace $\mathcal{H}^{(k)}$ and satisfies $\sum_k Q_k^\dg Q_k=\bo$. After applying the projector $Q_k$, the symbol $\tr_{1r}$ denotes taking partial trace on $\mathcal{H}^{(k)}_r$, and ``1'' labels the site of the action. Similarly for $\tr_{2l}$ the partial trace is taken on $\mathcal{H}_l^{(k)}$ on the second site. 
Intuitively, the quantum channels perform project, partial trace, and replace.
Graphically,
 \begin{align}
 \mT &=\bigoplus_{k_1,k_2,k'}\frac{1}{a_{k_1} b_{k_2}}\nonumber\\
 &\quad\quad
    \begin{array}{c}
        \begin{tikzpicture}[scale=.45,baseline={([yshift=-0.75ex] current bounding box.center)}
        ]
        \projector{(\singledx,0)}{1.3}{.5}{\singledx*0.5}{1.}{\small $\fQ_{k_1}$}{3};
        \projector{(\singledx*3,0)}{1.3}{.5}{\singledx*0.5}{1.}{\small $\fQ_{k_2}$}{4};
        \etatensor{(\singledx,2)}{1.3}{.5}{\singledx*0.5}{1.}{$\eta_{k_1,k'}$}
        \etatensor{(\singledx*3,2)}{1.3}{.5}{\singledx*0.5}{1.}{$\eta_{k',k_2}$}
        \end{tikzpicture}
        \end{array}
        \,,
    \end{align}
    and
    \begin{align}
 \mS&=\bigoplus_{k_1,k_2,k_3}\frac{1}{a_{k_1} b_{k_3}}\nonumber\\
 &\quad\quad
    \begin{array}{c}
        \begin{tikzpicture}[scale=.45,baseline={([yshift=-0.75ex] current bounding box.center)}
        ]
        \projector{(0,0)}{1.3}{.5}{\singledx*0.5}{1.}{\small $\fQ_{k_1}$}{5};
        \projector{(\singledx*4,0)}{1.3}{.5}{\singledx*0.5}{1.}{\small $\fQ_{k_3}$}{6};
        \projector{(\singledx*2,0)}{1.3}{.5}{\singledx*0.5}{1.}{\small $\fQ_{k_2}$}{2};
        \etatensor{(\singledx*2,2)}{1.3}{.5}{\singledx*0.5}{1.}{$\eta_{k_1,k_3}$}
        \end{tikzpicture}
        \end{array}
        \,,
    \end{align}
    where we employ the vectorized notation, using the upward arrows to denote the output Hilbert space degrees of freedom and the downward arrows to denote the input Hilbert space degrees of freedom. $\fQ_k$ in the graph thus denotes $Q_k\otimes Q_k^*$.

    We would like to comment that the channels $\mT$ and $\mS$ introduced above convert a two-site tensor and a three-site tensor into each other. By blocking the tensor once more, we can define new $\mT$ and $\mS$ that go between one-site tensor and two-site tensor; however, we will keep the form between two-site and three-site tensor due to its simplicity. 
    
    

\subsubsection{Non-simple RFP characterization}
In this section, we review the more general non-simple MPDO RFPs and their characterization \cite{cirac2017matrix}. The non-simple RFPs have several non-trivial features since they can emerge on the boundary of topologically ordered models in 2D. We will first briefly state the general characterization, then use the language of $C^*$-weak Hopf algebra to write down explicit RFP tensors $M$ and the corresponding MPDOs.  

For the general characterization, we need to use the canonical form with a BNT in the vertical, instead of a horizontal, direction. Starting from the rank-4 tensor $M$, we now treat indices $\alpha,\beta$ as a compound index and view $M$ as a rank-3 tensor. One can then apply the results for MPS (with the horizontal and vertical directions exchanged) to bring $M_{\alpha\beta}$ into canonical form with a corresponding BNT. We call this the \textit{vertical canonical form} and denote the BNT as $\lbrace M_\lambda \rbrace$. Moreover, it is proven that~\cite{cirac2017matrix} if an MPDO tensor is in the horizontal canonical form, it is automatically also in the vertical canonical form.

Consider the one-site tensor $M$ and the two-site tensor $K$ formed by concatenating two copies of $M$ horizontally,
\begin{align}
        \begin{array}{c}
        \begin{tikzpicture}[scale=1.,baseline={([yshift=-0.75ex] current bounding box.center)}
        ]
        \whaMorg{(0,0)}{2}{$K$}{whampdocolor};
        \end{tikzpicture}
        \end{array}
        :=
        \begin{array}{c}
        \begin{tikzpicture}[scale=1.,baseline={([yshift=-0.75ex] current bounding box.center)}
        ]
       \whaM{(0,0)}{2};
       \whaM{(\whadx,0)}{2};
        \end{tikzpicture}
        \end{array}
        \,.
    \end{align}
Up to isometries acting on the physical indices, they can be written in the vertical canonical form,
\begin{align}
    &M_{(\alpha\beta)}=\bigoplus_{\lambda} \mu_\lambda \otimes M_{(\alpha\beta),\lambda}\label{eqn:Mvertical}\\
    &K_{(\alpha\beta)}=\bigoplus_{\kappa} \nu_\kappa \otimes K_{(\alpha\beta),\kappa},\label{eqn:Kvertical}
\end{align}
where $\lbrace M_\lambda\rbrace$ and $\lbrace K_\kappa\rbrace$ are BNTs for $M$ and $K$, respectively, and $\mu_\lambda,\nu_\kappa$ are diagonal positive matrices. We define $m_\lambda=\tr(\mu_\lambda)$ and $n_\kappa=\tr(\nu_\kappa)$. Again, we emphasize the direct sum and direct product are in the physical space (vertical direction). 

\begin{thm}
A tensor $M$ generating an MPDO is an RFP iff the elements in BNTs $\lbrace M_\lambda\rbrace$ and $\lbrace K_\kappa\rbrace$ for one-site tensor $M$ and two-site tensor $K$ are related by a unitary $U_\lambda$ acting on the physical index
\begin{equation}
    K_{(\alpha\beta),\lambda }= U_\lambda M_{(\alpha\beta),\lambda} U_\lambda^\dg,
    \label{eqn:same-BNT}
\end{equation}
for any $\lambda$ and $m_\lambda=n_\lambda$~\cite{cirac2017matrix}.
\label{thm:non-simple}
\end{thm}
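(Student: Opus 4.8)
The plan is to prove the two implications of \cref{thm:non-simple} separately. The backward direction is an explicit construction of the renormalization channels $\mathcal T,\mathcal S$ from the matching-BNT data, while the forward direction rests on the observation that the RFP conditions of \cref{def:RFP_MPDO} make $\mathcal T$ a channel that is \emph{reversible} (invertible by another channel) on a suitable operator algebra, which by a structure theorem rigidly constrains its form.

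For ``$\Rightarrow$'': assume $M$ is an RFP, so there are channels $\mathcal T,\mathcal S$ with $\mathcal T(M_{\alpha\beta})=\sum_\gamma M_{\alpha\gamma}\otimes M_{\gamma\beta}=K_{\alpha\beta}$ and $\mathcal S(K_{\alpha\beta})=M_{\alpha\beta}$ for all $\alpha,\beta$. By linearity $\mathcal S\circ\mathcal T=\mathrm{id}$ on $\mathcal V_M:=\mathrm{span}\{M_{\alpha\beta}\}$ and $\mathcal T\circ\mathcal S=\mathrm{id}$ on $\mathcal V_K:=\mathrm{span}\{K_{\alpha\beta}\}$. The first step is to identify these operator systems with the vertical canonical form: writing $M_{(\alpha\beta)}=\bigoplus_\lambda \mu_\lambda\otimes M_{(\alpha\beta),\lambda}$, the set $\mathcal V_M$ is block-diagonal, and since each $M_\lambda$ is normal (injective after a finite blocking), passing to blocked tensors — harmless, since blocking commutes with the RFP channels — makes each block fill out a full matrix algebra; thus $\mathcal S\circ\mathcal T=\mathrm{id}$ on a subalgebra $\mathcal A=\bigoplus_\lambda \mathcal M_{D_\lambda}$. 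The second step is to invoke the structure theorem for channels reversible on a subalgebra (Petz / Knill--Laflamme / ``no information without disturbance''): a channel $\Phi$ with a channel left-inverse on $\mathcal A=\bigoplus_\lambda \mathcal M_{D_\lambda}$ must act there as $\bigoplus_\lambda X_\lambda\mapsto\bigoplus_\lambda W_\lambda(X_\lambda\otimes\sigma_\lambda)W_\lambda^\dagger$ for isometries $W_\lambda$ and fixed states $\sigma_\lambda$; using that reversibility holds in \emph{both} directions one argues $\sigma_\lambda$ can be taken pure and $W_\lambda$ surjective onto the corresponding target block. Applying this to $\Phi=\mathcal T$ and matching $\mathcal T(M_{\alpha\beta})=K_{\alpha\beta}$ against $K_{(\alpha\beta)}=\bigoplus_\kappa\nu_\kappa\otimes K_{(\alpha\beta),\kappa}$ forces the $\kappa$-labels to coincide with the $\lambda$-labels, $\nu_\lambda=\mu_\lambda$ (hence $m_\lambda=n_\lambda$), and $K_{(\alpha\beta),\lambda}=U_\lambda M_{(\alpha\beta),\lambda}U_\lambda^\dagger$ where $U_\lambda$ absorbs the (pure) $\sigma_\lambda$ into the (surjective) $W_\lambda$.

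For ``$\Leftarrow$'': assume $K_{(\alpha\beta),\lambda}=U_\lambda M_{(\alpha\beta),\lambda}U_\lambda^\dagger$ and $m_\lambda=n_\lambda$ for all $\lambda$. In the vertical canonical form the one-site physical space decomposes as $\mathcal H\cong\bigoplus_\lambda(\mathbb C^{m_\lambda}\otimes\mathcal L_\lambda)$ and the two-site one as $\mathcal H\otimes\mathcal H\cong\bigoplus_\lambda(\mathbb C^{n_\lambda}\otimes\widetilde{\mathcal L}_\lambda)$, with the inclusion $\mathcal L_\lambda\hookrightarrow\widetilde{\mathcal L}_\lambda$ implemented by $U_\lambda$ together with attaching a fixed reference vector on the complement $\widetilde{\mathcal L}_\lambda\ominus U_\lambda\mathcal L_\lambda$ (the extra room produced by blocking; $m_\lambda=n_\lambda$ makes the multiplicity factors match exactly). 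Setting $W=\bigoplus_\lambda(\bo\otimes w_\lambda)$ with $w_\lambda|\phi\rangle=U_\lambda|\phi\rangle\otimes|\mathrm{ref}_\lambda\rangle$, define $\mathcal T(X)=WXW^\dagger$ and take $\mathcal S$ to be its Petz recovery map on the MPDO support (equivalently $W^\dagger(\cdot)W$ completed to a trace-preserving map). Substituting $M_{\alpha\beta}$ and using the BNT relations then gives $\mathcal T(M_{\alpha\beta})=K_{\alpha\beta}$ and $\mathcal S(K_{\alpha\beta})=M_{\alpha\beta}$, so $M$ is an RFP.

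The main obstacle I expect is the bookkeeping in the forward direction that upgrades the generic isometric-conjugation form from the reversibility theorem into the sharp statement with genuine on-site unitaries $U_\lambda$ and \emph{equal} multiplicities $m_\lambda=n_\lambda$: one must verify that blocking two copies of $M$ creates no BNT labels beyond those of $M$ once the channels exist, that the reference states $\sigma_\lambda$ are forced to be pure (which genuinely uses invertibility in both directions), and that the blocking needed to make $\mathcal V_M$ algebra-like does not spoil the one-site/two-site comparison. Everything else reduces to the already-established structure of MPS canonical forms and BNTs and to standard facts about quantum channels.
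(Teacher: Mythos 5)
First, note that this paper does not actually prove \cref{thm:non-simple}; it imports the statement from Ref.~\cite{cirac2017matrix}, so there is no in-paper proof to match. Your overall architecture (reversible-channel structure theorem for ``$\Rightarrow$'', explicit construction for ``$\Leftarrow$'') is the right one and is essentially that of the cited reference, but two of your steps contain genuine gaps.

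In the forward direction, the step that turns $\mathcal V_M=\mathrm{span}\{M_{(\alpha\beta)}\}$ into a direct sum of full matrix algebras ``by blocking'' does not work as stated. The blocking that makes a normal vertical BNT element injective is blocking of the \emph{vertical} MPS, i.e.\ it enlarges the span by adjoining products $M_{(\alpha_1\beta_1)}M_{(\alpha_2\beta_2)}$ of the physical-space matrices. The RFP channels $\mathcal T,\mathcal S$ act on the physical space and relate one horizontal site to two; they are linear but not multiplicative, so the identity $\mathcal S\circ\mathcal T=\mathrm{id}$ on $\mathcal V_M$ does not extend to the algebra generated by $\mathcal V_M$, and ``blocking commutes with the RFP channels'' is only true for \emph{horizontal} blocking, which does not produce vertical injectivity. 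The standard repair is to observe that $\mathcal V_M\subseteq\mathcal F_{\mathcal S\circ\mathcal T}$ and $\mathcal V_K\subseteq\mathcal F_{\mathcal T\circ\mathcal S}$, invoke the fixed-point structure theorem for quantum channels (the same Theorem 6.14 of \cite{wolf2012quantum} this paper cites elsewhere) to get distorted algebras $\bigoplus_\lambda\mathcal M_{D_\lambda}\otimes\sigma_\lambda$ on which $\mathcal T$ and $\mathcal S$ are mutually inverse CPTP maps, and only then match those blocks to the vertical BNT blocks using normality. That matching step is nontrivial and is missing from your sketch.

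In the backward direction, your $\mathcal T(X)=WXW^\dagger$ with an isometry $W$ cannot be correct in general: the multiplicity matrices $\mu_\lambda$ and $\nu_\lambda$ typically act on spaces of \emph{different} dimensions and have different spectra -- the hypothesis only guarantees $\tr\mu_\lambda=\tr\nu_\lambda$. An isometric conjugation cannot map $\mu_\lambda\otimes M_{(\alpha\beta),\lambda}$ to $\nu_\lambda\otimes U_\lambda M_{(\alpha\beta),\lambda}U_\lambda^\dagger$ unless $\nu_\lambda$ is an isometric image of $\mu_\lambda$. The correct channel traces out the input multiplicity factor and replaces it by $\nu_\lambda/n_\lambda$, so that $\mu_\lambda\otimes X\mapsto \tr(\mu_\lambda)\,(\nu_\lambda/n_\lambda)\otimes U_\lambda XU_\lambda^\dagger=\nu_\lambda\otimes U_\lambda XU_\lambda^\dagger$ precisely because $m_\lambda=n_\lambda$ -- this is exactly where that hypothesis is used, and it is the form the paper itself writes down for the Fibonacci example in \cref{sec:Fibvertical}. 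Your parenthetical ``$m_\lambda=n_\lambda$ makes the multiplicity factors match exactly'' conceals this; once fixed, the rest of the construction (completion to a trace-preserving map on the kernel block, verification on $M_{\alpha\beta}$ and $K_{\alpha\beta}$) goes through.
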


The multiplication in \cref{eqn:same-BNT} is
over the physical indices (along the vertical direction). In other words, at the RFP, the one-site tensor $M$ and two-site tensor $K$ share the same vertical BNT up to a unitary transformation. This feature can be used to construct the channels $\mT$ and $\mS$, and thereby the parent Lindbladian for the non-simple case in later sections. 

We comment that, in practice, it is more convenient to generate the non-simple MPDO RFP using $C^*$-weak Hopf algebras, as we introduce in \cref{subsubsec:nonSimpleRfpWha}; this construction also makes the MPDOs manifest as boundaries of 2D topologically ordered systems.
However, such constructions do not exhaust all non-simple RFPs~\cite{liu2025trading}. We will explicitly present a counterexample and construct its parent Lindbladian in~\cref{sec:beyond}. 

\subsection{Lindblad Master Equation}

Here we give a short introduction to the Lindbladian formalism.
The Linbladian operator $\mL$ is a linear map that generates a continuous dynamical semigroup of quantum channels $\mathcal{E}_{\text{evol},t}$ via $\mathcal{E}_{\text{evol},t}=e^{\mL t}$. Physically, the Lindblad master equation \cite{lindblad1976generators} governs the evolution of an open quantum system weakly coupled to the environment,
\begin{equation}
    \dot{\rho}:=\mathcal{L}(\rho)=-i[H,\rho]+\sum_k\left(L_k \rho L_k^\dg - \frac{1}{2}\lbrace L_k^\dg L_k,\rho\rbrace\right),
\end{equation}
where the Hamiltonian part $H$ governs the unitary part of the evolution, and $\lbrace L_k\rbrace$ are called jump operators, governing the dissipative part of the evolution. 
An initial density matrix $\rho_{\text{init}}$ evolves as $\rho(t)=e^{\mathcal{L}t}(\rho_{\text{init}})$. The evolution operator $e^{\mathcal{L}t}$ is completely positive and trace-preserving for all $t\ge0$. 

One can prove that the eigenvalues of $\mathcal{L}$ have a non-positive real part with at least one zero eigenvalue. In particular, the parent Lindbladian we construct takes the form of $
\mathcal{L}^{(N)}=\sum_{i=1}^N (\mathcal{E}_i-\bo)$, subsequently has $H=0$ and only jump operators, and all the non-zero eigenvalues have a strictly negative real part~\footnote{This is because we can rewrite $
\mathcal{L}^{(N)}=N(\mathcal{E}^{(N)}-\bo)$ where $\mathcal{E}^{(N)}=\frac{1}{N}\sum_{i=1}^N \mathcal{E}_i$, and the eigenvalues of the quantum channel $\mathcal{E}^{(N)}$ lie within the unit circle}. 

Given a Lindbladian operator, a steady state is a density matrix $\rho_\infty$ corresponding to the zero eigenvalue, satisfying $\mathcal{L}(\rho_\infty)=0$. Namely, it is a fixed point of the evolution operator $e^{\mathcal{L}t}(\rho_\infty)=\rho_\infty$. The set of steady states forms a convex set, and all steady states can be written as a convex combination of extreme points. In case there is a finite number of such points, we will call it steady-state degeneracy (SSD). If $\mathcal{L}$ has a unique steady state, any initial state $\rho_{\text{init}}$ will evolve towards the steady state after infinite time evolution, $\lim_{t\ra\infty} e^{\mathcal{L}t}(\rho_{\text{init}})=\rho_\infty$. If the steady state is not unique, the initial state will evolve towards a state in the convex set of steady states. 

\subsection{Mixed-state Quantum Phases}

For completeness, here we provide a definition of the mixed-state quantum phases via Lindbladian evolution~\cite{coser2019classification}. 

\begin{defn}
    A state $\rho_0$ can be driven fast to another state $\rho_1$, and we write $\rho_0\ra\rho_1$, if there exists a local and time-dependent Lindbladian $\mathcal{L}$ acting on the original system and a locality preserving ancillary system, such that for any $t\geq \text{poly} \log N$,
    \begin{equation}
        \|e^{t\mathcal{L}}(\rho_0\otimes\sigma_0)-\rho_1\otimes\sigma_1 \|_1 \leq \epsilon_N,
    \end{equation}
    with $\sigma_0$ and $\sigma_1$ respectively the initial and final states of the ancillas, 
    and $\epsilon_N\ra 0$ in the thermodynamic limit $N\ra \infty$. 

    Two states belong to the same phase if there exist two local Lindbladians described above such that $\rho_0\ra\rho_1$ and $\rho_1\ra \rho_0$.  
\end{defn}

Including the ancillary system in the definition is necessary for the phase equivalence to be transitive. 

In particular, if an MPDO $\rho^{(N)}$ admits a rapid-mixing parent Lindbladian that brings a product state $\rho_0$ to $\rho^{(N)}$, then it is in the trivial phase. Showing the reverse direction $\rho^{(N)}\ra \rho_0$ is easy by designing a parent Lindbladian for the product state. 

\section{Parent Lindbladian constructions}
\label{sec:general-comment}
After reviewing the characterization of the MPDO renormalization fixed point tensor, we are now ready to construct its
local parent Lindbladian. We use the channels $\mathcal{T}$ and $\mathcal{S}$ of the RFP to form a local quantum channel 
\begin{equation}
    \mathcal{E}=\mathcal{T}\circ \mathcal{S}
\end{equation}
and denote $\mathcal{E}_i=\tau_i(\mathcal{E})$, where $\tau_i$ translates the sites by amount $i$. We then construct the parent Lindbladian as,
\begin{align}
    \mathcal{L}_i&=\mathcal{E}_i-\bo,\\
    \mathcal{L}^{(N)}&=\sum_{i=1}^{N} \mathcal{L}_i. \label{eq:parent_lind_def}
\end{align}
In this construction, the jump operators of the local Lindbladians $\mathcal{L}_i$ are identified with the Kraus operators of $\mathcal{E}_i$ (note that $\sum_k L_k^\dg L_k=\bo$ due to the trace-preserving property of $\mathcal{E}_i$), and the Hamiltonian part of $\mathcal{L}_i$ is zero. The MPDO generated by $M$ is a steady state of $\mathcal{L}^{(N)}$ by construction, $\mathcal{L}^{(N)}(\rho^{(N)}(M))=0$, and the Lindbladian is geometrically local.

One may wonder whether we can find an easier solution by first constructing the parent Lindbladian for the purification of the mixed state and then taking a partial trace over the purifying system. There are two problems with this approach. First, not every MPDO allows for a local purification, as pointed out in \cite{de2016fundamental}. The second is that the quantum channel for partial trace cannot be generated by a Lindbladian. 
Therefore, this procedure fails to give a parent Lindbladian for the mixed state itself.

Next, we define several key properties for parent Lindbladians.

\begin{defn}[Frustration freeness]
    A Lindbladian is \emph{frustration-free} if it can be written as a sum of terms $\mathcal{L}^{(N)}=\sum_{i=1}^{N} \mathcal{L}_i$, such that every steady state of the full Lindbladian is the steady state of each individual term. Namely,
    \begin{equation}
        \mathcal{L}^{(N)}(\rho) = 0  \quad \Leftrightarrow \quad \mathcal{L}_i(\rho)=0.
    \end{equation}
\end{defn}

\begin{defn}[Minimal steady-state degeneracy]
    A parent Lindbladian $\mathcal{L}^{(N)}$ of an MPDO $\rho^{(N)}(M)$ reaches \emph{minimal steady-state degeneracy} (SSD) if there does not exist any
    other local Lindbladian
    with a lower-dimensional steady-state subspace which includes $\rho^{(N)}(M)$. 
\end{defn}

While the parent Lindbladian construction is straightforward, 
proving its properties, such as minimal steady-state degeneracy and frustration-freeness, is challenging. The rest of the text is devoted to proving the properties summarized in \cref{sec:sumres} for the parent Lindbladian defined in \cref{eq:parent_lind_def}.
There are at least two main differences compared to the parent Hamiltonian.

First, unlike the parent Hamiltonian, parent Lindbladian is not automatically frustration-free. Although the eigenvalues of $\mathcal{L}_i$ always have a non-positive real part, it is possible that a global steady-state is not a steady-state of each local term, as shown in the following counter-example,
\begin{exmp}
    Consider the following Lindbladian acting on a single qubit,
    \begin{equation}
        \begin{aligned}
            \mathcal{L}&=\mathcal{L}_1+\mathcal{L}_2\\
            \mathcal{L}_1(X)&=\tr(X)|0\rangle\langle 0|-X\\
            \mathcal{L}_2(X)&=\tr(X)|1\rangle\langle 1|-X.
        \end{aligned}
    \end{equation}
    The steady state of $\mathcal{L}$ is $\rho=\frac{1}{2}(|0\rangle\langle 0|+|1\rangle\langle 1|)$, and it is not a steady state of $\mathcal{L}_1$ or $\mathcal{L}_2$. 
\end{exmp}
However, if the parent Lindbladian is commuting, then frustration-freeness is guaranteed. 

\begin{lemma}
    A Lindbladian $\mathcal{L}=\sum_i \mathcal{L}_i$ with $[\mathcal{L}_i,\mathcal{L}_j]=0$ and $\mathcal{L}_i=\mathcal{E}_i-\bo$ is frustration-free. That is, if $\mathcal{L}(\rho)=0$ then $\mathcal{L}_i(\rho)=0$ for all $i$. 
\end{lemma}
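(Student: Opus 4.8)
The plan is to exploit the fact that each $\mathcal{E}_i$ is a quantum channel, so its eigenvalues lie in the closed unit disk, together with the commutativity $[\mathcal{L}_i,\mathcal{L}_j]=0$ (equivalently $[\mathcal{E}_i,\mathcal{E}_j]=0$). First I would observe that $\mathcal{L}(\rho)=0$ means $\sum_i \mathcal{E}_i(\rho) = N\rho$, i.e.\ $\rho$ is an eigenvector with eigenvalue $N$ of the operator $\sum_i \mathcal{E}_i$. Since the $\mathcal{E}_i$ commute and each is a contraction with respect to a suitable norm (the trace norm, for which quantum channels are contractive, $\|\mathcal{E}_i(X)\|_1 \le \|X\|_1$), I want to conclude that an eigenvalue of the sum equal to $N$ forces each $\mathcal{E}_i(\rho) = \rho$.

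The key quantitative step is the following: for any state (or more generally trace-class operator) $\rho$, $\|\mathcal{E}_i(\rho)\|_1 \le \|\rho\|_1$, hence
\begin{equation}
    N\|\rho\|_1 = \Big\|\sum_{i=1}^N \mathcal{E}_i(\rho)\Big\|_1 \le \sum_{i=1}^N \|\mathcal{E}_i(\rho)\|_1 \le N\|\rho\|_1.
\end{equation}
This chain of inequalities must be saturated, which gives two pieces of information: (a) $\|\mathcal{E}_i(\rho)\|_1 = \|\rho\|_1$ for every $i$, and (b) the triangle inequality is saturated, so all the $\mathcal{E}_i(\rho)$ are ``aligned'' (nonnegative combinations in the sense of equality in $\|A+B\|_1 \le \|A\|_1 + \|B\|_1$). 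For $\rho$ a positive semidefinite steady-state candidate this alignment is automatic since $\mathcal{E}_i(\rho)\ge 0$, so the real content is (a) combined with the fixed-point structure. To upgrade $\|\mathcal{E}_i(\rho)\|_1 = \|\rho\|_1$ to $\mathcal{E}_i(\rho) = \rho$, I would use commutativity: define $\bar{\mathcal{E}} = \frac{1}{N}\sum_i \mathcal{E}_i$, which is itself a quantum channel with $\bar{\mathcal{E}}(\rho) = \rho$; then for each fixed $j$, applying the averaged channel over the orbit, or directly arguing that $\mathcal{E}_j$ leaves invariant the fixed-point space of $\bar{\mathcal{E}}$ (because the channels commute, $\mathcal{E}_j$ commutes with $\bar{\mathcal{E}}$, hence maps $\ker(\bar{\mathcal{E}}-\bo)$ to itself), reduces the problem to the restriction of $\mathcal{E}_j$ to this finite-dimensional invariant subspace. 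On that subspace, I argue $\mathcal{E}_j$ acts as the identity: since $\bar{\mathcal{E}}$ is the average of the commuting contractions $\mathcal{E}_i$, and $\bar{\mathcal{E}}$ fixes $\rho$, writing $\rho$ in terms of the (commuting, jointly diagonalizable on the relevant generalized eigenspaces) $\mathcal{E}_i$ and using that $\tfrac1N\sum_i \lambda_i = 1$ with all $|\lambda_i|\le 1$ forces $\lambda_i = 1$ for all $i$, hence $\mathcal{E}_i(\rho)=\rho$, i.e.\ $\mathcal{L}_i(\rho)=0$.

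\textbf{Main obstacle.} The step I expect to require the most care is passing from the norm equality to the eigenvalue equality without assuming $\mathcal{E}_i$ is normal or diagonalizable individually: a general quantum channel can have nontrivial Jordan blocks associated with peripheral (modulus-one) eigenvalues only in a restricted way, but one must rule out the possibility that $\rho$ lies in a Jordan chain of $\bar{\mathcal{E}}$ or of some $\mathcal{E}_i$ above the eigenvalue $1$. The clean way around this is to recall the standard fact that for a quantum channel the peripheral spectrum is semisimple (no nontrivial Jordan blocks on the unit circle) — which follows from power-boundedness of $\mathcal{E}_i^n$ in trace norm — so on the peripheral eigenspaces the commuting family $\{\mathcal{E}_i\}$ is simultaneously diagonalizable, and the convexity argument $\frac1N\sum \lambda_i = 1$, $|\lambda_i|\le 1 \Rightarrow \lambda_i = 1$ applies directly. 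I would state this as the crux of the argument and cite power-boundedness; the remaining manipulations are routine.
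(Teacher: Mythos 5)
Your proof is correct, and it reaches the conclusion by a mechanism different from the paper's. The paper constructs the infinite-time projectors $P_i=\lim_{t\to\infty}e^{t\mathcal{L}_i}$, observes they are commuting idempotents (hence diagonalizable with spectrum $\{0,1\}$), and deduces that $P=\prod_i P_i$ fixing $\rho$ forces each $P_i$ to fix $\rho$. You instead work directly with the averaged channel $\bar{\mathcal{E}}=\tfrac{1}{N}\sum_i\mathcal{E}_i$: the fixed-point space $V=\ker(\bar{\mathcal{E}}-\bo)$ is invariant under every $\mathcal{E}_j$ by commutativity, the restrictions can be simultaneously triangularized, and the extreme-point argument $\tfrac{1}{N}\sum_i\lambda_i=1$ with $|\lambda_i|\le 1$ forces $\lambda_i=1$, after which power-boundedness of $\mathcal{E}_i^n$ rules out a Jordan block at $1$ and gives $\mathcal{E}_i|_V=\bo_V$. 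Both arguments ultimately rest on the same nontrivial input --- semisimplicity of the peripheral spectrum of a quantum channel --- which the paper invokes implicitly when asserting that the limits defining $P_i$ exist; your version makes that dependence explicit, which is a virtue. Two small remarks. First, your opening trace-norm saturation chain is not load-bearing: the triangularization-plus-convexity step is self-contained and, usefully, applies to an arbitrary element of $\ker\mathcal{L}$, not only to positive $\rho$ (the lemma is a statement about the full kernel, so this matters). Second, the claim that the family is ``simultaneously diagonalizable on the peripheral eigenspaces'' is stated slightly out of order: you must first triangularize on $V$ to learn that every eigenvalue of $\mathcal{E}_i|_V$ equals $1$, and only then invoke semisimplicity (or power-boundedness of the restriction) to conclude $\mathcal{E}_i|_V=\bo_V$. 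With that ordering made explicit, the proof is complete.
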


\begin{proof}
    Using the local Lindbladian term, we define the projectors,
    \begin{equation}
    \label{eqn:projector-def}
    \begin{aligned}
         P_i&=\lim_{t\ra\infty} e^{t\mathcal{L}_i}\\
         P&=\lim_{t\ra\infty} e^{t\sum_i \mathcal{L}_i}=\prod_i P_i,
    \end{aligned}
    \end{equation}
    where we used $[\mathcal{L}_i,\mathcal{L}_j]=0$.
    The limit is well-defined because $\mathcal{L}_i$ does not have pure-imaginary eigenvalues by construction. $P_i$ is a projector $P_i^2=P_i$ but not necessarily Hermitian. Using $P_i^2=P_i$, one can show that $P_i$ as a linear map is non-defective: for each eigenvalue, the geometric multiplicity equals the algebraic multiplicity. Therefore, $P_i$ is diagonalizable. 

    The commutation of the Lindbladian $[\mathcal{L}_i,\mathcal{L}_j]=0$ leads to the commutation of the projectors $[P_i,P_j]=0$, and subsequently, they are simultaneously diagonalizable. Furthermore, the eigenvalues of $P_i$ can only be 0 or 1. For a steady state, $\mathcal{L}(\rho_\infty)=0$ leads to $P(\rho_\infty)=1$, and thus $P_i(\rho_\infty)=1$ for all $i$, equivalently,  $\mathcal{L}_i(\rho_\infty)=0$. This finishes the proof.
\end{proof}

Second, let the linear space of kernels of $\mathcal{L}_i$, which is equivalent to the linear space of fixed points of $\mathcal{E}$, be denoted as $\mathcal{F}_{\mathcal{E}}$. 
Define the vector space $G_L(M)$ as the space spanned by all states $\rho^{(L)}(M)_X$, which are states of $L$ consecutive sites generated by $M$ with arbitrary boundary condition $X$,
\begin{equation}
\label{eqn:GLM}
    \begin{aligned}
        &G_L(M) = \lbrace  \rho^{(L)}(M)_X|\forall X\in\mathcal{M}_{D}\rbrace
    \end{aligned}
\end{equation}
where
\begin{equation}
    \begin{aligned}
    &\rho^{(L)}(M)_X=\\
    &\sum_{\lbrace ij\rbrace}\tr(X M^{i_1 j_1}\cdots M^{i_L j_L})|i_1\cdots i_L\rangle\langle j_1\cdots j_L|,
    \end{aligned}
\end{equation}
and $L$ is the
interaction length of $\mathcal{E}$. Unlike parent Hamiltonians where $G_L=\mathrm{ker}(h)$, for a parent Lindbladian, we generally have 
\begin{equation}
    G_L(M)\subseteq \mathcal{F}_{\mathcal{E}}. 
\end{equation}
This is because the fixed-point space of a quantum channel
has a specific structure 
(See Theorem 6.14 of \cite{wolf2012quantum}) that might not be satisfied by $G_L(M)$ (see \cref{exm:two-werner} in \cref{sec:example-simple}).

Because of the above two differences, the techniques for proving parent Hamiltonian ground-state degeneracy
cannot be directly applied to determine the steady-state degeneracy of parent Lindbladians. In the following, we give explicit constructions of the channels $\mathcal{E}$ and show the above-constructed parent Lindbladian is still frustration-free and reaches the minimal steady-state degeneracy, along with other properties such as commuting, rapid-mixing, and
the nature of its correlations. We discuss the simple MPDO and non-simple MPDO in the next two sections, respectively. 

\section{Parent Lindbladian constructions: Simple MPDO}
\label{sec:pL-simple}

In this section, we explicitly construct the parent Lindbladians for simple MPDOs and prove their properties. We begin by considering the injective MPDO case, where the MPDO has only one BNT element in horizontal canonical form, and then extend to the non-injective MPDO case. The parent Lindbladian for injective MPDOs is commuting, has a unique steady state, and exhibits rapid mixing. For non-injective MPDOs, we prove a no-go theorem stating that the parent Lindbladian cannot be commuting. We also show that the steady-state degeneracy is minimal and equals $g$, the number of BNT elements in the horizontal canonical form of the tensor $M$.

\subsection{Injective MPDO}
Using the channels $\mT$ and $\mS$ in \cref{eqn:inj-T,eqn:inj-S}, the local channel $\mathcal{E}_i=\tau_i(\mT\circ \mS)$ acting on three consecutive sites $i,i+1,i+2$ reads
\begin{equation}
\begin{aligned}
    \mathcal{E}_i(X)&=\bigoplus_{k_1,k_3}\left[\frac{1}{a_{k_1} b_{k_3}}\right. \\
    &\quad \tr_{1r,2,3l}\left(Q_{k_1}\otimes \bo \otimes Q_{k_3}(X)Q^\dg_{k_1}\otimes \bo \otimes Q^\dg_{k_3}\right)\\
    &\quad \left.\otimes\left(\bigoplus_{k_2'}(\eta_{k_1,k_2'}\otimes\eta_{k_2',k_3})\right)\right].
\end{aligned}
\label{eqn:simplechannel}
\end{equation}
Graphically,
 \begin{align}
 \mathcal{E}_i&=\bigoplus_{k_1,k_3,k_2'}\frac{1}{a_{k_1} b_{k_3}}\nonumber\\
 &\quad\quad
    \begin{array}{c}
        \begin{tikzpicture}[scale=.45,baseline={([yshift=-0.75ex] current bounding box.center)}
        ]
        \projector{(0,0)}{1.3}{.5}{\singledx*0.5}{1.}{\small $\fQ_{k_1}$}{0};
        \projector{(\singledx*4,0)}{1.3}{.5}{\singledx*0.5}{1.}{\small $\fQ_{k_3}$}{1};
        \tracetensor{(\singledx*2,0)}{\singledx*0.5}{1.}
        \etatensor{(\singledx,2)}{1.3}{.5}{\singledx*0.5}{1.}{$\eta_{k_1,k_2'}$}
        \etatensor{(\singledx*3,2)}{1.3}{.5}{\singledx*0.5}{1.}{$\eta_{k_2',k_3}$}
        \end{tikzpicture}
        \end{array}
        \,,
    \end{align}
    where we employ the vectorized notation, using the upward arrows to denote the output Hilbert space degrees of freedom and the downward arrows to denote the input Hilbert space degrees of freedom.

    Using the graphical notation, it is easy to see the channels $\mathcal{E}_i$ commute. The product of neighboring channels is 
     \begin{align}
 \mathcal{E}_i\circ\mathcal{E}_{i+1}&=\bigoplus_{k_1,k_4,k_2',k_3'}\frac{1}{a_{k_1} b_{k_4}}\nonumber\\
 &
    \begin{array}{c}
        \begin{tikzpicture}[scale=.45,baseline={([yshift=-0.75ex] current bounding box.center)}
        ]
        \projector{(0,0)}{1.3}{.5}{\singledx*0.5}{1.}{\small $\fQ_{k_1}$}{0};
        \tracetensor{(\singledx*2,0)}{\singledx*0.5}{1.};
        \tracetensor{(\singledx*4,0)}{\singledx*0.5}{1.};
        \projector{(\singledx*6,0)}{1.3}{.5}{\singledx*0.5}{1.}{\small $\fQ_{k_4}$}{1};
        \etatensor{(\singledx,2)}{1.3}{.5}{\singledx*0.5}{1.}{$\eta_{k_1,k_2'}$}
        \etatensor{(\singledx*3,2)}{1.3}{.5}{\singledx*0.5}{1.}{$\eta_{k_2',k_3'}$}
        \etatensor{(\singledx*5,2)}{1.3}{.5}{\singledx*0.5}{1.}{$\eta_{k_3',k_4}$}
        \end{tikzpicture}
        \end{array}
        \,,
    \end{align}
    and similarly for $\mathcal{E}_{i+1}\circ\mathcal{E}_{i}$, 
    leading to
    \begin{align}
        \mathcal{E}_i\circ\mathcal{E}_{i+1}=\mathcal{E}_{i+1}\circ\mathcal{E}_{i},
    \end{align}
    and subsequently $[\mathcal{L}_i,\mathcal{L}_{i+1}]=0$. The non-overlapping Lindbladians trivially commute. Therefore, the parent Lindbladian is commuting for injective MPDO, which leads to frustration freeness. 

    Frustration freeness allows for a simple proof of the steady state uniqueness. Using frustration freeness, a global steady state $\rho$ with $\mathcal{L}(\rho)=0$ also satisfies $\mathcal{L}_i(\rho)=0$, or equivalently, $\mathcal{E}_i(\rho)=\rho$ for all $i$. Therefore, it must satisfy
    \begin{equation}
        \mathcal{E}_1\circ \mathcal{E}_2\circ\cdots\circ\mathcal{E}_N(\rho)=\rho. 
    \end{equation}
    The channel $\mathcal{E}_1\circ \mathcal{E}_2\circ\cdots\circ\mathcal{E}_N$ is nothing but the replacement channel, which can be shown graphically, for example, when $N=4$, 
    \begin{align}
 &\mathcal{E}_1\circ \mathcal{E}_2\circ\mathcal{E}_3\circ\mathcal{E}_4=\bigoplus_{k_1',k_2',k_3',k_4'}\nonumber\\
 &
    \begin{array}{c}
        \begin{tikzpicture}[scale=.45,baseline={([yshift=-0.75ex] current bounding box.center)}
        ]
        \tracetensor{(\singledx*0,0)}{\singledx*0.5}{1.};
        \tracetensor{(\singledx*2,0)}{\singledx*0.5}{1.};
        \tracetensor{(\singledx*4,0)}{\singledx*0.5}{1.};
        \tracetensor{(\singledx*6,0)}{\singledx*0.5}{1.};
        \etaright{(-\singledx*1,2)}{1.3}{.5}{\singledx*0.5}{1.}{}
        \etatensor{(\singledx,2)}{1.3}{.5}{\singledx*0.5}{1.}{$\eta_{k_1',k_2'}$}
        \etatensor{(\singledx*3,2)}{1.3}{.5}{\singledx*0.5}{1.}{$\eta_{k_2',k_3'}$}
        \etatensor{(\singledx*5,2)}{1.3}{.5}{\singledx*0.5}{1.}{$\eta_{k_3',k_4'}$}
        \etaleft{(\singledx*7,2)}{1.3}{.5}{\singledx*0.5}{1.}{$\eta_{k_4',k_1'}$}
        \end{tikzpicture}
        \end{array}
        \,.
    \end{align}
    This quantum channel takes the trace of the input state and replaces it with $\rho^{(N)}(M)$. The unique steady state of the parent Lindbladian is then $\rho_\infty=\rho^{(N)}(M)$.

\subsection{Rapid mixing dynamics}
\label{sec:rapid-mixing}

The parent Lindbladian of a simple MPDO has one additional nice property: it is a rapid-mixing. That is, starting from any initial $\rho_{\text{init}}$ and evolving for time $t$, the distance between the evolved state $e^{\mathcal{L}t}(\rho_{\text{init}})$ and the unique steady-state $\rho_\infty$ vanishes when $t =O(\text{poly}\log N)$ in the thermodynamic limit $N\ra\infty$~\cite{coser2019classification}. The rapid mixing dynamics is a direct consequence of the fact that the parent Lindbladian is commuting.

\begin{lemma}
    A local Lindbladian $\mathcal{L}^{(N)}=\sum_{i=1}^N \mathcal{L}_i$, where $[\mathcal{L}_i,\mathcal{L}_j]=0$ and $\mathcal{L}_i$ does not have pure-imaginary eigenvalues, has rapid mixing dynamics. 
\end{lemma}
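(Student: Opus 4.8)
\emph{Proof proposal.} The plan is to use commutativity to factorize the semigroup, $e^{t\mathcal{L}^{(N)}}=\prod_{i=1}^{N}e^{t\mathcal{L}_i}$, and to compare it factor by factor with the global steady-state projector $P=\prod_{i=1}^{N}P_i$, where $P_i:=\lim_{t\to\infty}e^{t\mathcal{L}_i}$ (these limits exist and each $P_i$ is a quantum channel, exactly as established in the proof of the previous lemma). The crucial observation will be that the system size enters \emph{only} through the number of factors in this product, which is linear in $N$, whereas each factor relaxes at a rate that is independent of $N$; hence a time $t=O(\log N)$ already suppresses the total error.

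First I would isolate the decay of one local term. Each $\mathcal{L}_i=\mathcal{E}_i-\bo$ is supported on a fixed number $k$ of consecutive sites and is a translate of one fixed superoperator $\mathcal{L}=\mathcal{E}-\bo$ acting on $\mathcal{M}_{d^{k}}$. On this fixed finite-dimensional space, every eigenvalue $\mu\neq1$ of the channel $\mathcal{E}$ obeys $\operatorname{Re}\mu<1$ and the eigenvalue $1$ is semisimple (as $\mathcal{E}$ is power-bounded), so $\mathcal{L}$ has a positive spectral gap $\Delta:=\min\{\,1-\operatorname{Re}\mu:\ \mu\in\operatorname{spec}(\mathcal{E}),\ \mu\neq1\,\}$, independent of $i$ and $N$, and $e^{t\mathcal{L}}\to P_{\mathcal{E}}:=\lim_{t\to\infty}e^{t\mathcal{L}}$. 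Jordan decomposition of $\mathcal{L}$ then gives $N$-independent constants $C,m$ with $\|e^{t\mathcal{L}}-P_{\mathcal{E}}\|_{\diamond}\le C(1+t)^{m}e^{-\Delta t}$, where $\|\cdot\|_{\diamond}$ is the diamond (completely bounded trace) norm. Because the diamond norm is invariant under tensoring with an identity channel, $\|e^{t\mathcal{L}_i}-P_i\|_{\diamond}$ satisfies the same bound when $\mathcal{L}_i$ is viewed as a superoperator on the full $N$-site chain, with the same $C,m,\Delta$.

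Next I would telescope. Using $[\mathcal{L}_i,\mathcal{L}_j]=0$ and the algebraic identity $\prod_i a_i-\prod_i b_i=\sum_j\bigl(\prod_{i<j}b_i\bigr)(a_j-b_j)\bigl(\prod_{i>j}a_i\bigr)$ with $a_i=e^{t\mathcal{L}_i}$ and $b_i=P_i$, together with submultiplicativity of $\|\cdot\|_{\diamond}$ and $\|e^{t\mathcal{L}_i}\|_{\diamond}=\|P_i\|_{\diamond}=1$ (both are trace-preserving completely positive maps), one obtains
\begin{align*}
\bigl\|e^{t\mathcal{L}^{(N)}}-P\bigr\|_{\diamond}
&\le\sum_{j=1}^{N}\bigl\|e^{t\mathcal{L}_j}-P_j\bigr\|_{\diamond}\\
&\le N\,C\,(1+t)^{m}\,e^{-\Delta t}.
\end{align*}
Since $P=\prod_i P_i=\lim_{t\to\infty}e^{t\mathcal{L}^{(N)}}$ (previous lemma), $P(\rho_{\mathrm{init}})$ is exactly the steady state reached from $\rho_{\mathrm{init}}$, and for any density operator $\rho_{\mathrm{init}}$ one has $\|e^{t\mathcal{L}^{(N)}}(\rho_{\mathrm{init}})-P(\rho_{\mathrm{init}})\|_{1}\le\|e^{t\mathcal{L}^{(N)}}-P\|_{\diamond}$. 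Taking $t=\tfrac{2}{\Delta}\log N$ makes the right-hand side $\le C N^{-1}\bigl(1+\tfrac{2}{\Delta}\log N\bigr)^{m}\to0$ as $N\to\infty$; equivalently, the mixing time to error $\varepsilon$ is $O\!\bigl(\Delta^{-1}\log(N/\varepsilon)\bigr)$, which is precisely rapid mixing, in fact with the stronger $O(\log N)$ rather than $O(\operatorname{poly}\log N)$ scaling.

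The only step requiring real care is the choice of norm in the telescoping estimate: it must be simultaneously submultiplicative, invariant under tensoring with the identity (so that the single-term gap bound is not degraded when $\mathcal{L}_i$ is embedded into the $N$-site chain), and one in which quantum channels are contractions. The diamond norm meets all three requirements, which is why it is the right object here; with the bare trace-induced $1\to1$ operator norm, invariance under ancilla-tensoring fails and one would not recover an $N$-independent relaxation rate. The remaining ingredients — the Jordan-form decay bound, the telescoping identity, and the final choice of $t$ — are routine.
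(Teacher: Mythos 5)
Your proposal is correct and follows essentially the same route as the paper's proof: commutativity to factorize the semigroup, the telescoping identity in the diamond norm using submultiplicativity and contractivity of channels, a Jordan-decomposition bound on each local factor, and a final choice of $t$ logarithmic in $N$. The only differences are cosmetic refinements — you keep the polynomial Jordan-block prefactor explicit, note the tensor-stability of the diamond norm, and obtain $t=O(\log N)$ where the paper settles for $t=(\log N)^{1+\delta}$.
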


Note that in our formalism, $\mathcal{L}_i$ does not have pure-imaginary eigenvalues by construction. 

\begin{proof}
    Since the Lindbladian is commuting, we define the projectors $P_i, P$ as in \cref{eqn:projector-def}. Consider the completely
bounded trace norm of the superoperator $e^{\mathcal{L}t}-P$,
    \begin{equation}
        \begin{aligned}
            &\|e^{\mathcal{L}t}-P \|_\lozenge \\& = \|\prod_i e^{\mathcal{L}_i t} -\prod_i P_i \|_\lozenge\\
            &\leq \sum_{i=1}^N \|e^{\mathcal{L}_1 t}e^{\mathcal{L}_2 t}\cdots e^{\mathcal{L}_{i-1} t}(e^{\mathcal{L}_i t}-P_i) P_{i+1}\cdots P_N \|_\lozenge\\
            &\leq \sum_{i=1}^N \|e^{\mathcal{L}_i t}-P_i\|_\lozenge,
        \end{aligned}
    \end{equation}
    where we use the triangle inequality, submultiplicativity of the completely bounded trace norm, and $\|\mathcal{E}\|_\lozenge=1$ for a quantum channel $\mathcal{E}$. 

    For each local term, one can use the Jordan decomposition of the superoperator to show that the norm is bounded by
    \begin{equation}
        \|e^{\mathcal{L}_i t}-P_i\|_\lozenge \leq  e^{-\gamma_i t}
    \end{equation}
    for some constant $\gamma_i$ since $\mathcal{L}_i$ acts locally on a Hilbert space of fixed dimensions. Subsequently, 
    \begin{equation}
        \|e^{\mathcal{L}t}-P \|_\lozenge\leq N e^{-\gamma t}
    \end{equation}
    where $\gamma=\min(\gamma_1,\gamma_2,\cdots, \gamma_N)$. 

    The error is then bounded by
    \begin{equation}
    \begin{aligned}
    \epsilon:&=\|e^{\mathcal{L}t}(\rho_{\text{init}})-\rho_\infty\|_1\\
        &=\|e^{\mathcal{L}t}(\rho_\init)-P(\rho_\init)\|_1\leq \| e^{\mathcal{L}t}-P\|_1 \|\rho_\init\|_1\\
        &=\| e^{\mathcal{L}t}-P\|_1\leq \| e^{\mathcal{L}t}-P\|_\lozenge\leq  N e^{-\gamma t}.
    \end{aligned}
    \end{equation}
    The norm in $\| e^{\mathcal{L}t}-P\|_1$ is the induced trace norm of the superoperator, which lower bounds the completely bounded trace norm.
    One can take $t=(\log N)^{1+\delta}$ for any positive $\delta$ and show $\epsilon\ra 0$ when $N\ra\infty$. This justifies that $\mathcal{L}$ is a rapid-mixing Lindbladian. 
\end{proof}

\subsection{Non-injective simple MPDO}
A natural guess for generalizing the local quantum channel for a non-injective MPDO is given by
\begin{equation}
\begin{aligned}
    &\mathcal{E}^{(\Eone)}_i(X)=\bigoplus_\hbnt \bigoplus_{k_1,k_3\in S_\hbnt}\frac{1}{a_{k_1} b_{k_3}} \\
    &\quad \tr_{1r,2,3l}\left(Q_{k_1}\otimes P_\hbnt\otimes Q_{k_3}(X)Q^\dg_{k_1} \otimes \bo \otimes Q^\dg_{k_3}\right)\\
    &\quad \otimes\left(\bigoplus_{k_2'\in S_\hbnt}(\eta_{k_1,k_2'}\otimes\eta_{k_2',k_3})\right),
\end{aligned}
\end{equation}
where $P_\hbnt=\sum_{k\in S_\hbnt}Q_k^\dg Q_k$ is the projector to Hilbert subspace corresponding to BNT element $\hbnt$ and satisfy $\sum_\hbnt P_\hbnt = \bo$.
By direct inspection, one can see $\mathcal{E}_i^{(\Eone)}$ as a linear map is completely positive (CP) and $\rho^{(N)}(M)$ in \cref{eqn:simple-MPDO-RFP} is a fixed point of this CP map. 
However, a valid quantum channel must also be trace-preserving (TP), while
\begin{equation}
\begin{aligned}
    \tr\left(\mathcal{E}_i^{(\Eone)} (X)\right)&=\tr\left(\left(\sum_\hbnt P_\hbnt\otimes P_\hbnt\otimes P_\hbnt\right) X\right)\\
    &\neq \tr(X).
\end{aligned}
\end{equation}
To form a CPTP map, denoting $P=\sum_\hbnt P_\hbnt\otimes P_\hbnt\otimes P_\hbnt$ and $P^\perp=\bo-P$, we construct the local quantum channel $\mathcal{E}_i$ as
\begin{align}
    &\mathcal{E}_i=\mathcal{E}_i^{(\Eone)}+\mathcal{E}_i^{(\Etwo)},\\
    &\mathrm{where}\quad \mathcal{E}_i^{(\Etwo)}(X)=\tr(P^\perp X) \rho_0,
\end{align}
and $\rho_0$ is a density matrix satisfying 
\begin{equation}
    \sum_\hbnt (P_\hbnt\otimes P_\hbnt\otimes P_\hbnt) \rho_0 (P_\hbnt\otimes P_\hbnt\otimes P_\hbnt)=\rho_0. 
\end{equation}
Note that this condition is stronger than $P\rho_0 P = \rho_0$. One can check that the second piece $\mathcal{E}_i^{(\Etwo)}$ is still a CP map, and the combined map $\mathcal{E}_i$ is indeed trace-preserving. 

By construction, the second piece $\mathcal{E}_i^{(\Etwo)}$ satisfies $\mathcal{E}_i^{(\Etwo)}\circ\mathcal{E}_i=0$, which leads to $\mathcal{E}_i\circ\mathcal{E}_i=\mathcal{E}^{(\Eone)}_i\circ\mathcal{E}_i$. Using the graphical representation, one can show that $\mathcal{E}_i^{(\Eone)}$ is idempotent, $\mathcal{E}_i^{(\Eone)}\circ \mathcal{E}_i^{(\Eone)}=\mathcal{E}_i^{(\Eone)}$. Combing the above properties, one can show the equivalence
\begin{equation}
    \mathcal{E}_i(X)=X\quad \Leftrightarrow\quad \mathcal{E}_i^{(\Eone)}(X)=X.
    \label{eqn:fixed-point-on-one}
\end{equation}
The fixed point space of local channel $\mathcal{E}_i$ is the same as the fixed point space of the CP map $\mathcal{E}_i^{(\Eone)}$. 

The proof of frustration-freeness is more involved, since $\mathcal E_i$ is no longer commuting. As we will show later, this is not merely an artifact of our choice of $\mathcal E_i^{(\Etwo)}$; in fact, non-injective simple MPDO RFPs never admit a commuting local parent Lindbladian (see \cref{sec:simple-no-go}).
To prove the parent Lindbladian is frustration-free, we make a convenient choice of $\rho_0$ as
\begin{equation}
    \rho_0 = \frac{1}{g}\bigoplus_{\hbnt=1}^g \bigoplus_{k_1 k_2 k_3\in S_\hbnt}\eta_{k_1,k_2}\otimes \eta_{k_2,k_3} \otimes \eta_{k_3,k_1}. 
    \label{eqn:simplerho0}
\end{equation}
Under this condition, $\mathcal{E}_i^{(\Eone)}(\rho_0)=\rho_0$ and
\begin{equation}
    \mathcal{E}_i^{(\Eone)}\circ \mathcal{E}_i^{(\Etwo)}=\mathcal{E}_i^{(\Etwo)}.
\end{equation}
Combined with the properties
\begin{align}
    \mathcal{E}_i^{(\Eone)}\circ \mathcal{E}_i^{(\Eone)}&=\mathcal{E}_i^{(\Eone)}\\
    \mathcal{E}_i^{(\Eone)}\circ \mathcal{E}_{i+1}^{(\Eone)}&=\mathcal{E}_{i+1}^{(\Eone)}\circ\mathcal{E}_i^{(\Eone)},
\end{align}
(while keeping in mind that $\mathcal{E}_i\circ \mathcal{E}_{i+1}\neq\mathcal{E}_{i+1}\circ\mathcal{E}_i$) one can prove frustration freeness using a ``flow diagram'' method. We leave the proof for \cref{sec:proof-ff}.

With frustration-freeness, steady-state degeneracy can be shown using a method similar to simple MPDOs. A global steady state satisfies $\mathcal{E}_i(\rho)=\rho$, or equivalently, $\mathcal{E}^{(\Eone)}_i(\rho)=\rho$ for all $i$. Consequently, it must satisfy
    \begin{equation}
        \mathcal{E}^{(\Eone)}_1\circ \mathcal{E}^{(\Eone)}_2\circ\cdots\circ\mathcal{E}^{(\Eone)}_N(\rho)=\rho. 
    \end{equation}
The CP map $ \mathcal{E}^{(\Eone)}_1\circ \mathcal{E}^{(\Eone)}_2\circ\cdots\circ\mathcal{E}^{(\Eone)}_N$ can similarly be found using graphical representation, where now
\begin{align}
 \mathcal{E}^{(\Eone)}_i&=\bigoplus_\hbnt \bigoplus_{k_1,k_3,k_2'\in S_\hbnt}\frac{1}{a_{k_1} b_{k_3}}\nonumber\\
 &\quad\quad
    \begin{array}{c}
        \begin{tikzpicture}[scale=.45,baseline={([yshift=-0.75ex] current bounding box.center)}
        ]
        \projector{(0,0)}{1.3}{.5}{\singledx*0.5}{1.}{\small $\fQ_{k_1}$}{0};
        \projector{(\singledx*4,0)}{1.3}{.5}{\singledx*0.5}{1.}{\small $\fQ_{k_3}$}{1};
        \projector{(\singledx*2,0)}{1.3}{.5}{\singledx*0.5}{1.}{\small $P_{\hbnt}$}{2};
        \etatensor{(\singledx,2)}{1.3}{.5}{\singledx*0.5}{1.}{$\eta_{k_1,k_2'}$}
        \etatensor{(\singledx*3,2)}{1.3}{.5}{\singledx*0.5}{1.}{$\eta_{k_2',k_3}$}
        \end{tikzpicture}
        \end{array}
        \,.
    \end{align}
Let's take $N=4$ as an example, 
 \begin{align}
 &\mathcal{E}^{(\Eone)}_1\circ \mathcal{E}^{(\Eone)}_2\circ\mathcal{E}^{(\Eone)}_3\circ\mathcal{E}^{(\Eone)}_4=\bigoplus_\hbnt \bigoplus_{k_1',k_2',k_3',k_4'\in S_\hbnt}\nonumber\\
 &
    \begin{array}{c}
        \begin{tikzpicture}[scale=.45,baseline={([yshift=-0.75ex] current bounding box.center)}
        ]
        \projector{(\singledx*0,0)}{1.3}{.5}{\singledx*0.5}{1.}{\small $P_{\hbnt}$}{2};
        \projector{(\singledx*2,0)}{1.3}{.5}{\singledx*0.5}{1.}{\small $P_{\hbnt}$}{2};
        \projector{(\singledx*4,0)}{1.3}{.5}{\singledx*0.5}{1.}{\small $P_{\hbnt}$}{2};
        \projector{(\singledx*6,0)}{1.3}{.5}{\singledx*0.5}{1.}{\small $P_{\hbnt}$}{2};
        \etaright{(-\singledx*1,2)}{1.3}{.5}{\singledx*0.5}{1.}{}
        \etatensor{(\singledx,2)}{1.3}{.5}{\singledx*0.5}{1.}{$\eta_{k_1',k_2'}$}
        \etatensor{(\singledx*3,2)}{1.3}{.5}{\singledx*0.5}{1.}{$\eta_{k_2',k_3'}$}
        \etatensor{(\singledx*5,2)}{1.3}{.5}{\singledx*0.5}{1.}{$\eta_{k_3',k_4'}$}
        \etaleft{(\singledx*7,2)}{1.3}{.5}{\singledx*0.5}{1.}{$\eta_{k_4',k_1'}$}
        \end{tikzpicture}
        \end{array}
        \,,
    \end{align}
and the generalization to arbitrary $N$ is straightforward. Therefore, the global steady-states take the form of 
\begin{equation}
\begin{aligned}
    &\rho_\infty^{(N)}=\\
    &\quad\bigoplus_{\hbnt=1}^g c_\hbnt \bigoplus_{k_1,\ldots,k_N\in S_p}\eta_{k_1,k_2}\otimes \eta_{k_2,k_3} \otimes\cdots\otimes \eta_{k_N,k_1},
\end{aligned}
\end{equation}
with $c_\hbnt\in\mathbb{C}$. A direct inspection shows all the states in the above form are steady states of $\mathcal{L}$. This concludes that the steady-state degeneracy (SSD) is $g$, the number of horizontal BNT elements, i.e.,
\begin{equation}
    (\text{SSD})=g.
\end{equation}
Moreover, the minimally possible steady-state degeneracy is $g$, as proven below.

\begin{thm}
\label{thm:minimal-SSD}
    Consider a simple MPDO $\rho^{(N)}(M)$ generated by the RFP tensor $M$ with $g$ BNT elements in the horizontal canonical form. A local Lindbladian $\mL^{(N)}$ that has $\rho^{(N)}(M)$ as its steady state must have at least $g$ independent steady states. In particular, the MPDO $\rho^{(N)}(M_\hbnt)$  generated by each BNT element is a steady state. 
\end{thm}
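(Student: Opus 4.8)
The plan is to establish the stronger statement that every MPDO $\rho^{(N)}(M_\hbnt)$ built from a single BNT element is itself a steady state of $\mathcal{L}^{(N)}$. Granting this, the theorem follows quickly: by \cref{thm:simple-RFP} the BNT elements are supported on pairwise orthogonal physical subspaces $\mathcal{H}_\hbnt$, so the $\rho^{(N)}(M_\hbnt)$ are supported on the mutually orthogonal global sectors $\mathcal{H}_\hbnt^{\otimes N}$ and, being nonzero (no $M_\hbnt$ is nilpotent, by simplicity), are linearly independent; since they all have unit trace they are affinely independent, and a compact convex set of affine dimension at least $g-1$ has at least $g$ extreme points, so $\mathrm{SSD}\ge g$ for any local Lindbladian having $\rho^{(N)}(M)$ as a steady state. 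Combined with the explicit count of steady states already obtained for our construction, this also yields minimality of the degeneracy.

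To show $\rho^{(N)}(M_\hbnt)$ is stationary, I would write $\rho^{(N)}(M)=\bigoplus_\hbnt n_\hbnt\,\rho^{(N)}(M_\hbnt)$ as in \cref{eqn:simple-MPDO-RFP}, with $n_\hbnt>0$ and each summand a positive operator on $\mathcal{H}_\hbnt^{\otimes N}$, so that $\rho^{(N)}(M_\hbnt)\le \tfrac1{n_\hbnt}\rho^{(N)}(M)$. Applying the CPTP evolution $\Phi_t:=e^{t\mathcal{L}^{(N)}}$ and using $\Phi_t(\rho^{(N)}(M))=\rho^{(N)}(M)$ gives, for all $t\ge 0$, $\sigma_\hbnt(t):=\Phi_t(\rho^{(N)}(M_\hbnt))\le \tfrac1{n_\hbnt}\rho^{(N)}(M)$, and in particular $\mathrm{supp}\,\sigma_\hbnt(t)\subseteq\bigoplus_{\hbnt'}\mathcal{H}_{\hbnt'}^{\otimes N}$. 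The heart of the proof is then to show that no weight leaks out of the $\hbnt$-sector: writing $\Pi_{\hbnt'}=P_{\hbnt'}^{\otimes N}$ with $P_{\hbnt'}$ the one-site projector onto $\mathcal{H}_{\hbnt'}$, the block weights $f_{\hbnt'}(t):=\tr[\Pi_{\hbnt'}\sigma_\hbnt(t)]$ must satisfy $f_{\hbnt'}(t)=0$ for all $t$ and all $\hbnt'\ne\hbnt$.

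Here locality enters. Writing $\mathcal{L}^{(N)}=\sum_i\mathcal{L}_i$ with $\mathcal{L}_i$ supported on a region $R_i$, and taking $N$ large enough that each $R_i^{c}$ is nonempty, one has $\dot f_{\hbnt'}(t)=\sum_i\tr[\mathcal{L}_i^{*}(\Pi_{\hbnt'})\,\sigma_\hbnt(t)]$, where $\mathcal{L}_i^{*}(\Pi_{\hbnt'})=B_i\otimes P_{\hbnt'}^{\otimes R_i^{c}}$ for some operator $B_i$ on $R_i$ (since $\mathcal{L}_i$ acts trivially on $R_i^{c}$). Using the orthogonality $P_{\hbnt'}^{\otimes R_i^{c}}P_{\hbnt''}^{\otimes R_i^{c}}=\delta_{\hbnt'\hbnt''}P_{\hbnt'}^{\otimes R_i^{c}}$ (valid precisely because $R_i^{c}\ne\emptyset$) together with $\mathrm{supp}\,\sigma_\hbnt(t)\subseteq\bigoplus_{\hbnt''}\mathcal{H}_{\hbnt''}^{\otimes N}$, the $i$-th term reduces to $\tr_{R_i}[P_{\hbnt'}^{\otimes R_i}\,\mathcal{L}_i^{R_i}(\tilde\sigma^{(i)}_{\hbnt'}(t))]$, where $\tilde\sigma^{(i)}_{\hbnt'}(t):=\tr_{R_i^{c}}[\Pi_{\hbnt'}\sigma_\hbnt(t)\Pi_{\hbnt'}]\ge 0$ is supported on $\mathcal{H}_{\hbnt'}^{\otimes R_i}$. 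I would then invoke the elementary inequality that for any Lindbladian $\mathcal{L}$ and any subspace $W$ with projector $P_W$, $\tr[P_W\mathcal{L}(\sigma)]\le 0$ whenever $\sigma\ge 0$ is supported on $W$ — because $t\mapsto \tr[P_W e^{t\mathcal{L}}\sigma]$ is bounded above by $\tr\sigma$ for all $t\ge 0$, with equality at $t=0$, so its right derivative there is nonpositive. With $W=\mathcal{H}_{\hbnt'}^{\otimes R_i}$ and $\mathcal{L}=\mathcal{L}_i^{R_i}$ every term is $\le 0$, so $\dot f_{\hbnt'}(t)\le 0$; since $f_{\hbnt'}(0)=0$ and $f_{\hbnt'}\ge 0$ this forces $f_{\hbnt'}\equiv 0$. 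Hence $\sigma_\hbnt(t)$ is supported on $\mathcal{H}_\hbnt^{\otimes N}$; projecting the bound $\sigma_\hbnt(t)\le\tfrac1{n_\hbnt}\rho^{(N)}(M)$ onto that sector gives $\sigma_\hbnt(t)\le\rho^{(N)}(M_\hbnt)$, and since $\Phi_t$ is trace-preserving the two sides have equal trace, so $\sigma_\hbnt(t)=\rho^{(N)}(M_\hbnt)$, i.e. $\rho^{(N)}(M_\hbnt)$ is a steady state.

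The main obstacle is exactly this no-leakage step: ruling out that the dissipative evolution slowly transfers probability into, or builds coherence with, the other superselection sectors $\mathcal{H}_{\hbnt'}^{\otimes N}$ even though it starts inside one of them. Both hypotheses are used essentially — locality, so that $\mathcal{L}_i^{*}$ acts on $R_i^{c}$ only as the block projector $P_{\hbnt'}^{\otimes R_i^{c}}$, which is what powers the orthogonality reduction (a genuinely global term such as $X\mapsto\tr(X)\rho^{(N)}(M)-X$, which has $\mathrm{SSD}=1$, has $R_i^{c}=\emptyset$ and indeed fails this reduction), and the fact that each $\mathcal{L}_i$ is a bona fide Lindbladian, which is what makes the ``weight in a subspace cannot grow'' inequality hold term by term. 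A routine point to verify along the way is that $\mathrm{supp}\,\sigma_\hbnt(t)\subseteq\bigoplus_{\hbnt'}\mathcal{H}_{\hbnt'}^{\otimes N}$ genuinely confines $\tilde\sigma^{(i)}_{\hbnt'}(t)$ to the $\hbnt'$-block on $R_i$, which again follows from the same site-wise orthogonality after tracing out $R_i^{c}$.
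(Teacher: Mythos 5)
Your proof is correct, and it takes a genuinely different route from the paper's. The paper argues via local indistinguishability in the vectorized picture: it forms the positive, $k$-local operator $H_{\mathcal L}=\hat{\mathcal L}^\dagger\hat{\mathcal L}$ annihilating $|\rho(M)\rangle\!\rangle$, notes that $2|\rho(M_\hbnt)\rangle\!\rangle-|\rho(M)\rangle\!\rangle$ has the same reduced density matrices on every region with nonempty complement (all cross terms between BNT sectors vanish under any partial trace, by the local orthogonality of \cref{thm:simple-RFP}), concludes that this non-positive combination also lies in $\ker\hat{\mathcal L}$, and recovers the $\rho^{(N)}(M_\hbnt)$ by linear combination. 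You instead stay inside the cone of positive operators and run the dynamics: the operator inequality $\rho^{(N)}(M_\hbnt)\le n_\hbnt^{-1}\rho^{(N)}(M)$ is preserved by the CPTP semigroup, and the sector weights $f_{\hbnt'}(t)$ are killed by a term-by-term monotonicity argument (the weight of a positive operator inside a subspace cannot grow under a Lindbladian supported on that subspace), after which trace preservation pins $e^{t\mathcal L}(\rho^{(N)}(M_\hbnt))=\rho^{(N)}(M_\hbnt)$. Both proofs rest on the same two ingredients --- local orthogonality of the BNT sectors and nonempty complements $R_i^c$ of the local supports --- but they trade them differently: the paper's version needs each $\hat{\mathcal L}_i$ only to be a local superoperator and gets ``locally indistinguishable operators share the kernel'' essentially for free, whereas yours additionally uses that each $\mathcal L_i$ is itself a Lindblad generator (harmless, since this is the paper's definition of a local Lindbladian, and any generator with local Hamiltonian terms and local jump operators decomposes this way) and in exchange never leaves the set of physical states, makes explicit where positivity and trace preservation enter, and exhibits concretely why a global term such as $X\mapsto\tr(X)\rho^{(N)}(M)-X$ evades the conclusion. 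One cosmetic remark: the extreme-point count is not needed for the statement as written --- linear independence of the $g$ mutually orthogonal, unit-trace steady states $\rho^{(N)}(M_\hbnt)$ already yields ``at least $g$ independent steady states.''
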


\begin{proof}
    Consider the vectoried state $|\rho(M)\rangle\!\rangle$ of $\rho(M)$ and define an operator in the vectorized space $H_\mL=\hat{\mL}^\dg \hat{\mL}$, where $\hat{\mL}$ is the representation of $\mL$ in the vectorized space. By construction, $H_\mL$ is positive semi-definite, satisfies $\langle\!\langle\rho(M)|H_\mL|\rho(M)\rangle\!\rangle=0$, and it is $k$-local (but not geometrically local) because each term in the expansion $H_\mL=\sum_{ij}\hat{\mL}^\dg_i \hat{\mL}_j$ is only supported on a finite number of sites.

    Now, if there are $g$ BNT elements in the horizontal canonical form of $M$, one can decompose $|\rho(M)\rangle\!\rangle=\sum_{\hbnt=1}^g |\rho(M_\hbnt)\rangle\!\rangle$, where $\langle\!\langle\rho(M_\hbnt)|\rho(M_{\hbnt'})\rangle\!\rangle=0$ for $\hbnt\neq \hbnt'$. We construct another vectorized state \[
    \begin{aligned}
&|\rho'(M)\rangle\!\rangle=2|\rho(M_{\hbnt=1})\rangle\!\rangle-|\rho(M)\rangle\!\rangle\\
&\quad=|\rho(M_{\hbnt=1})\rangle\!\rangle-|\rho(M_{\hbnt=2})\rangle\!\rangle-\cdots-|\rho(M_{\hbnt=g})\rangle\!\rangle.
    \end{aligned}
    \]
   Consider any region $A$ and its complement
   $\bar{A}$. By construction, 
   \[
   \tr_{\bar{A}}(|\rho(M)\rangle\!\rangle\langle\!\langle\rho(M))|=\tr_{\bar{A}}(|\rho'(M)\rangle\!\rangle\langle\!\langle\rho'(M)|),
   \]
   where we use the fact that different BNT elements are supported on orthogonal physical subspaces locally (\cref{thm:simple-RFP}). 
   This shows that each term $\hat{\mL}_i^\dg \hat{\mL}_j$ in $H_\mL$, being $k$-local, cannot distinguish $|\rho(M)\rangle\!\rangle$ and $|\rho'(M)\rangle\!\rangle$, and therefore $|\rho'(M)\rangle\!\rangle$ also fulfills $\langle\!\langle \rho'(M)|H_\mL|\rho'(M)\rangle\!\rangle=0$. Equivalently, $\hat{\mathcal{L}}|\rho'(M)\rangle\!\rangle=0$ and back to the unvectorized notation, $\mathcal{L}(\rho'(M))=0$. 

Since both $\rho(M)$ and $\rho'(M)$ are steady states of $\mL$, their summation, $\rho(M_{\hbnt=1})$, is also a steady state of $\mL$. This can be applied to any BNT element $\hbnt$ and finishes the proof that each $\rho(M_\hbnt)$ is a steady state of $\mL$. The steady-state degeneracy is at least $g$.
\end{proof}


This shows that our parent Lindbladian construction reaches the minimal steady-state degeneracy. Below we show an example of parent Lindbladian for the non-injective simple MPDO, the classical GHZ state.

\begin{exmp}[Classical GHZ state]
\label{ex:pLGHZ}
Let us consider again
\begin{equation}
    \rho^{(N)}(M)=|0\rangle\langle 0|^{\otimes N} + |1\rangle\langle 1|^{\otimes N}. 
\end{equation}
(see also \cref{ex:classical_ghz}).
In this example, there are two elements in the BNT, each with a bond dimension of 1. Therefore, it is sufficient to consider a two-site local channel rather than a three-site one. The two projectors for the two BNT elements are 
\[
P_{0}=\begin{pmatrix}
    1 & 0\\
    0 & 0
\end{pmatrix},\quad P_{1}=\begin{pmatrix}
    0 & 0\\
    0 & 1
\end{pmatrix}.
\]
Denoting $P^\perp=\bo-P_0\otimes P_0-P_1\otimes P_1$, the local channel takes the
form
\begin{equation}
\begin{aligned}
    \mathcal{E}_i(X)&=\tr\left((P_0\otimes P_0) X\right) P_0\otimes P_0 \\
    & + \tr\left((P_1\otimes P_1) X\right) P_1\otimes P_1 + \tr\left((P^\perp )X\right) \rho_0, 
\end{aligned}
\end{equation}
and $\rho_0$ can be chosen as $\rho_0=\frac{1}{2}(P_0\otimes P_0+P_1\otimes P_1)$. Using this local channel,
the parent Lindbladian has a two-dimensional space of steady states: 
\[
\rho_\infty^{(N)}=c_1 |0\rangle\langle 0|^{\otimes N} + c_2 |1\rangle\langle 1|^{\otimes N},
\]
with $c_1,c_2\geq 0$. 
\end{exmp}
    
\subsection{No-go theorem for commuting local Lindbladian}
\label{sec:simple-no-go}
In this section, we prove a no-go theorem regarding the existence of commuting local Lindbladians with minimal steady-state degeneracy. We first prove a lemma on long-range correlations of non-injective simple MPDO.
\begin{lemma}
\label{lemma:LRC}
    Consider a non-injective simple MPDO RFP $\rho^{(N)}(M)$ generated by the tensor $M$ with horizontal BNT $\lbrace M_\hbnt\rbrace$. The state $\rho^{(N)}=c_a \rho^{(N)}(M_\hbnt)+c_{\hbnt'}\rho^{(N)}(M_{\hbnt'})$ with $\hbnt\neq\hbnt', c_\hbnt>0, c_{\hbnt'}>0$ has long-range correlations, in the sense that there exist local operators $O^A,O^B$ such that the connected correlation function $\langle O^A O^B\rangle_c$ has a positive lower bound for arbitrarily large separation between region $A$ and region $B$.
\end{lemma}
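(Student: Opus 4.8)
\emph{Proof idea.} The plan is to use the orthogonality of the physical supports of distinct BNT elements guaranteed by \cref{thm:simple-RFP} to exhibit an order parameter that detects \emph{which} BNT sector the state occupies — a degree of freedom that behaves like a classical Ising/GHZ label and is therefore perfectly correlated across the whole chain. Concretely, let $\mathcal{H}=\bigoplus_\hbnt \mathcal{H}_\hbnt$ be the single-site decomposition from \cref{thm:simple-RFP}, and let $P_\hbnt$ denote the single-site projector onto $\mathcal{H}_\hbnt$. Since $\rho^{(N)}(M_\hbnt)$ is a direct sum of tensor products of the positive operators $\eta_{k,k'}$ with $k,k'\in S_\hbnt$ (cf.\ \cref{eqn:simple-RFP,eqn:simple-MPDO-RFP}), it is supported entirely on $\mathcal{H}_\hbnt^{\otimes N}$; equivalently $P_\hbnt^{\otimes N}\rho^{(N)}(M_\hbnt)P_\hbnt^{\otimes N}=\rho^{(N)}(M_\hbnt)$. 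Hence, placing $P_\hbnt$ at two sites $i\neq j$, the operator $P_\hbnt^{(i)}$ (and $P_\hbnt^{(i)}P_\hbnt^{(j)}$) acts as the identity on $\rho^{(N)}(M_\hbnt)$ and annihilates $\rho^{(N)}(M_{\hbnt'})$, because $P_\hbnt P_{\hbnt'}=0$.

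First I would fix normalizations. Each $Z_\hbnt:=\tr\rho^{(N)}(M_\hbnt)$ is strictly positive: $\rho^{(N)}(M_\hbnt)$ is positive as a sum of positive operators on mutually orthogonal subspaces, and it is nonzero because $M$ is simple, i.e.\ no BNT element is nilpotent. (Using $R_{k,k'}=a_k b_{k'}$ with $\sum_k a_k b_k=1$ one in fact gets $Z_\hbnt=\tr(R^N)=1$, but only $Z_\hbnt>0$ is needed below.) Writing the normalized mixture as $\rho=p\,\rho^{(N)}(M_\hbnt)/Z_\hbnt+(1-p)\,\rho^{(N)}(M_{\hbnt'})/Z_{\hbnt'}$ with weight $p=c_\hbnt Z_\hbnt/(c_\hbnt Z_\hbnt+c_{\hbnt'}Z_{\hbnt'})\in(0,1)$, the support property gives, for every $i\neq j$,
\begin{equation}
\langle P_\hbnt^{(i)}\rangle_\rho=p,\qquad \langle P_\hbnt^{(i)}P_\hbnt^{(j)}\rangle_\rho=p,
\end{equation}
so the connected correlator is
\begin{equation}
\langle P_\hbnt^{(i)}P_\hbnt^{(j)}\rangle_c=p-p^2=p(1-p)>0,
\end{equation}
a constant independent of $|i-j|$ and of $N$. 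Taking $O^A=P_\hbnt^{(i)}$ and $O^B=P_\hbnt^{(j)}$ then yields the claimed uniform positive lower bound; if one prefers a traceless bounded observable, $\tilde O=P_\hbnt-p\,\bo$ at the two sites gives the same value $p(1-p)$.

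Since the argument is short, I do not expect a genuine obstacle. The only points requiring care are (i) verifying that $P_\hbnt^{(i)}$ acts as the identity on $\rho^{(N)}(M_\hbnt)$ and kills $\rho^{(N)}(M_{\hbnt'})$, which is immediate from the orthogonal-support statement in \cref{thm:simple-RFP}, and (ii) the strict positivity of the normalization constants $Z_\hbnt$, which is exactly where simplicity enters. Indeed, a non-simple example such as \cref{exmp:ToricCodeBd} shows the statement fails without this hypothesis: there the nilpotent sector contributes $\sigma_z^{\otimes N}$ (trace zero, not supported on an orthogonal subspace), so mixing it in does not produce long-range order.
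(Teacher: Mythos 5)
Your proof is correct, and it takes a more elementary route than the paper's. The paper builds operators $O^A=O^A_\hbnt-O^A_{\hbnt'}$ supported on the orthogonal single-site subspaces, tunes them so the one-point functions vanish, and then evaluates the two-point function through the transfer-matrix formalism, using $E_\hbnt^2=E_\hbnt$ and the rank-one structure of $E_\hbnt$ to factorize $\langle O^AO^B\rangle$ into a product of single-insertion traces. You instead observe that the sector projector $P_\hbnt^{(i)}$ is itself a perfect order parameter: by the orthogonal-support statement of \cref{thm:simple-RFP} it fixes both the one-point and the two-point expectation values to the same mixing weight $p$, so the connected correlator is $p(1-p)>0$ with no transfer-matrix computation at all. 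Both arguments hinge on the same structural fact (distinct BNT elements occupy orthogonal physical subspaces on every site, so cross terms vanish), but yours buys simplicity and makes the mechanism transparent — the long-range order is literally the classical label of the sector — while the paper's buys a more flexible family of witnesses (arbitrary operators within each sector, evaluated via the transfer matrix), which is the form it reuses elsewhere. Your two points of care are well placed: the identity action of $P_\hbnt^{(i)}$ on $\rho^{(N)}(M_\hbnt)$ follows because each site of $\rho^{(N)}(M_\hbnt)$ carries $\mathcal{H}_l^{(k_i)}\otimes\mathcal{H}_r^{(k_i)}\subseteq\mathcal{H}_\hbnt$, and $Z_\hbnt=n_\hbnt(\sum_{k\in S_\hbnt}a_kb_k)^N=n_\hbnt>0$ (your parenthetical $Z_\hbnt=1$ drops the harmless multiplicity factor $n_\hbnt$, but as you note only positivity is used). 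The closing remark contrasting with the non-simple case is a nice sanity check consistent with \cref{exmp:ToricCodeBd}.
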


\begin{proof}
We will prove the lemma by construction. 
    For a non-injective simple MPDO RFP $\rho^{(N)}(M)$ generated by tensor $M$ with horizontal BNT $\lbrace M_\hbnt\rbrace$, denote the transfer matrix for a BNT element $M_\hbnt$ as $E_\hbnt$, and the transfer matrix for $M_\hbnt$ with operator $O^A_{\hbnt}$ inserted as $E_\hbnt^{A}$. Diagrammatically,
    \begin{equation}
        E_\hbnt=\begin{array}{c}
        \begin{tikzpicture}[scale=1.,baseline={([yshift=-0.75ex] current bounding box.center)}
        ]
          \whaMtrace{(0,0)}{\small $M_\hbnt$};
        \end{tikzpicture}
        \end{array},\quad E_\hbnt^{A}=\begin{array}{c}
        \begin{tikzpicture}[scale=1.,baseline={([yshift=-0.75ex] current bounding box.center)}
        ]
        \whaMtraceOp{(0,0)}{\small $M_\hbnt$}{\small $O^A_{\hbnt}$};
        \end{tikzpicture}
        \end{array}
    \end{equation}
    The operator $O^A_{\hbnt}$ is an onsite operator supported on $\mathcal{H}_\hbnt$. 

    Next, we construct $O^A$ as $O^A=O^A_\hbnt-O^A_{\hbnt'}$ where $\hbnt\neq \hbnt'$, namely, $O^A_\hbnt$ and $O^A_{\hbnt'}$ are supported on orthogonal physical spaces.  
    The one-point function $\langle O^A\rangle$ can be computed as
    \begin{equation}
        \begin{aligned}
            &\langle O^A\rangle=\tr(O^A \rho^{(N)})\\
            &=c_\hbnt\tr(O^A_\hbnt\rho^{(N)}(M_\hbnt))-c_{\hbnt'}\tr(O^A_{\hbnt'}\rho^{(N)}(M_{\hbnt'}))\\
            &=c_\hbnt \tr(E_\hbnt E_\hbnt^A)-c_{\hbnt'}\tr(E_{\hbnt'}E_{\hbnt'}^A),
        \end{aligned}
    \end{equation}
    where the fixed point condition $E_\hbnt^2=E_\hbnt$ is applied. Similarly,
    \begin{equation}
        \langle O^B\rangle=c_\hbnt \tr(E_\hbnt E_\hbnt^B)-c_{\hbnt'}\tr(E_{\hbnt'}E_{\hbnt'}^B).
    \end{equation}
    One can always choose $O^A_\hbnt,O^A_{\hbnt'},O^B_{\hbnt},O^B_{\hbnt'}$ such that the one-point functions vanish, $\langle O^A\rangle=0$ and $\langle O^B\rangle=0$.

    For $O^A$ and $O^B$ supported on different sites, one can compute the two-point function as
    \begin{equation}
        \begin{aligned}
            &\langle O^A O^B\rangle=\tr(O^A O^B\rho^{(N)})\\
            &=c_\hbnt\tr(O^A_\hbnt O^B_\hbnt \rho^{(N)}(M_\hbnt))+c_{\hbnt'}\tr(O^A_{\hbnt'} O^B_{\hbnt'}\rho^{(N)}(M_{\hbnt'}))\\
            &=c_\hbnt \tr(E_\hbnt E_\hbnt^A E_\hbnt E_\hbnt^B)+c_{\hbnt'}\tr(E_{\hbnt'}E_{\hbnt'}^A E_{\hbnt'}E_{\hbnt'}^B),
        \end{aligned}
    \end{equation}
    where in the second line, there are no cross terms like $O_\hbnt^A O_{\hbnt'}^B$ because different BNT elements are supported on orthogonal physical subspaces locally (\cref{thm:simple-RFP}). Since $E_\hbnt$ is a rank-one matrix, one can further decompose
    \begin{equation}
        \begin{aligned}
&\langle O^A O^B\rangle\\
&=c_\hbnt \tr(E_\hbnt E_\hbnt^A)\tr(E_\hbnt E_\hbnt^B)+c_{\hbnt'}\tr(E_{\hbnt'}E_{\hbnt'}^A)\tr(E_{\hbnt'}E_{\hbnt'}^B)\\
&=c_\hbnt (1+\frac{c_\hbnt}{c_{\hbnt'}})\tr(E_\hbnt E_\hbnt^A)\tr(E_\hbnt E_\hbnt^B),
    \end{aligned}
    \end{equation}
    where in the last line, we use $\langle O^A\rangle=\langle O^B\rangle=0$. Since the one-point functions vanish, the connected two-point correlation function $\langle O^A O^B\rangle_c=\langle O^A O^B\rangle$, which is a positive finite value, independent of the separation between $A$ and $B$. 
\end{proof}

The above proof essentially shows that the non-injective MPDO RFPs have long-range correlations. The fact that different horizontal BNT elements are supported on orthogonal physical subspaces plays an important role in the above proof. This feature is true only for simple MPDOs. Non-simple MPDOs do not have this property and a state $\rho^{(N)}=c_a \rho^{(N)}(M_\hbnt)+c_{\hbnt'}\rho^{(N)}(M_{\hbnt'})$ may only have short-range correlations.

We are now ready to prove the following no-go theorem for commuting local Lindbladian. 

\begin{thm}
 A non-injective simple MPDO generated by an RFP tensor $M$ does not admit a local Lindbladian that has $\rho^{(N)}(M)$ as its steady state, is commuting, has minimal steady-state degeneracy, and has no purely imaginary eigenvalues.
\end{thm}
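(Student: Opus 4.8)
The plan is to argue by contradiction, playing the degeneracy lower bound of \cref{thm:minimal-SSD} against the long-range correlations of \cref{lemma:LRC}, with commutativity forcing a uniform light cone that mediates the clash. Suppose $\mL^{(N)}=\sum_{i=1}^N\mL_i$ is local with interaction range $r$, satisfies $[\mL_i,\mL_j]=0$, has $\rho^{(N)}(M)$ as a steady state, attains minimal steady-state degeneracy, and has no purely imaginary eigenvalues. By \cref{thm:minimal-SSD} the minimal degeneracy equals $g$ and each $\rho^{(N)}(M_\hbnt)$ is a steady state; by \cref{thm:simple-RFP} these $g$ density matrices are linearly independent and supported on mutually orthogonal physical subspaces. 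Since the degeneracy is finite, the fixed-point structure of a Lindbladian (Theorem 6.14 of \cite{wolf2012quantum}) has all blocks one-dimensional, so $\ker\mL^{(N)}=\mathrm{Span}\{\rho^{(N)}(M_\hbnt)\}_{\hbnt=1}^{g}$ and the steady states form exactly the simplex $\{\sum_\hbnt c_\hbnt\,\rho^{(N)}(M_\hbnt):c_\hbnt\ge 0\}$. As $\mL^{(N)}$ has no purely imaginary eigenvalues, $P:=\lim_{t\to\infty}e^{t\mL^{(N)}}$ exists and is a CPTP projection onto this kernel; in particular $P$ fixes every $\rho^{(N)}(M_\hbnt)$.

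First I would establish a uniform light cone for $P$. Using $[\mL_i,\mL_j]=0$ I write $e^{t\mL^{(N)}}=\prod_i e^{t\mL_i}$ and regroup the factors by coloring the sites modulo $2r+1$: within each color class the supports of the $\mL_i$ are disjoint, so that class contributes a tensor product of channels on bounded windows, and the whole product collapses to a composition of only $2r+1$ such layers. Hence $(e^{t\mL^{(N)}})^\dagger$ maps an operator supported on a region $X$ to one supported on the $\ell$-neighborhood of $X$ with $\ell=O(r^2)$, \emph{independently of $t$ and of $N$}; taking $t\to\infty$ the same holds for $P^\dagger$. Using unitality of the layer duals and disjointness, for regions $A,B$ with $\mathrm{dist}(A,B)>2\ell$ the image $P^\dagger(O^AO^B)$ factorizes as $[P^\dagger(O^A)][P^\dagger(O^B)]$ with the two factors on disjoint regions; since any product state $\sigma=\bigotimes_j\sigma_j$ factorizes across that cut, the connected correlation $\langle O^AO^B\rangle_c$ in $P(\sigma)$ vanishes for all local $O^A,O^B$.

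Next I would force a long-range correlated steady state into the picture. I claim some product state $\sigma^\star$ has $P(\sigma^\star)$ with at least two nonzero components in the expansion over $\{\rho^{(N)}(M_\hbnt)\}$. Otherwise every product state maps under $P$ to a single (trace-one) $\rho^{(N)}(M_{\hbnt(\sigma)})$; the set of product density matrices is connected and $P$ is continuous, so $\hbnt(\sigma)$ is constant, hence $P$ is constant on product states, and since these span $\mathcal{M}_d^{\otimes N}$ linearity forces $P(X)=\tr(X)\,\rho^{(N)}(M_{\hbnt_0})$ -- contradicting $P$ fixing $\rho^{(N)}(M_{\hbnt_1})$ for some $\hbnt_1\neq\hbnt_0$ (which exists as $g\ge 2$). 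With $\sigma^\star$ in hand, $P(\sigma^\star)\ge 0$ lies in $\mathrm{Span}\{\rho^{(N)}(M_\hbnt)\}$, so by orthogonality of supports $P(\sigma^\star)=\sum_\hbnt c_\hbnt\,\rho^{(N)}(M_\hbnt)$ with all $c_\hbnt\ge 0$ and at least two strictly positive. The construction in the proof of \cref{lemma:LRC} -- which only uses two positive components, the orthogonality of their physical supports, and the rank-one idempotency of the corresponding transfer matrices -- then yields local $O^A,O^B$ with $\langle O^AO^B\rangle_c$ in $P(\sigma^\star)$ equal to a fixed positive constant at arbitrary separation. For $N$ large enough to host two regions at distance $>2\ell$ this contradicts the previous paragraph, so no such $\mL^{(N)}$ exists.

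The main obstacle I anticipate is the light-cone step: one must check carefully that commutativity alone allows the $N$-fold composition to be reorganized into $O(r)$ layers of disjoint bounded-window channels, that unitality of the layer duals genuinely gives the factorization of $P^\dagger(O^AO^B)$ outside the light cone, and -- crucially -- that the resulting bound on correlation spreading is uniform in $N$, so that it truly conflicts with the $N$-independent positive lower bound of \cref{lemma:LRC}. A secondary technical point, which makes the ``$P$ must fix two distinct extreme points'' argument work, is justifying that finiteness of the steady-state degeneracy forces $\ker\mL^{(N)}$ to be precisely $\mathrm{Span}\{\rho^{(N)}(M_\hbnt)\}$ and the steady-state set to be the associated simplex; the extension of \cref{lemma:LRC} from exactly-two-component states to states with two or more positive components is routine but should be stated explicitly.
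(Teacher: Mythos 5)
Your proposal is correct and follows essentially the same route as the paper's proof: commutativity collapses $e^{t\mathcal{L}^{(N)}}$ into a finite-depth circuit with an $N$-independent light cone, which forbids the long-range correlations that \cref{lemma:LRC} guarantees for any steady state with two or more positive components along $\{\rho^{(N)}(M_\hbnt)\}$. The only (harmless) variation is in how you produce such a state: the paper explicitly exhibits two product states mapping to distinct $\rho^{(N)}(M_\hbnt)$, $\rho^{(N)}(M_{\hbnt'})$ and interpolates to hit $\tfrac12(\rho^{(N)}(M_\hbnt)+\rho^{(N)}(M_{\hbnt'}))$, whereas you argue by connectedness that otherwise $P$ would be a constant replacement channel, contradicting that it fixes $g\ge 2$ linearly independent steady states.
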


\begin{proof}
    We will use proof by contradiction. For this, we assume that there exists a local and commuting parent Lindbladian that has $\rho^{(N)}(M)$ as its steady state, and has $\mathrm{(SSD)}=g$ (recall that by \cref{thm:minimal-SSD}, the minimal steady-state degeneracy is $g$ and the space of steady states is spanned by $\lbrace \rho(M_p)\rbrace$). More precisely, consider a commuting local Lindbladian $\mathcal{L}^{(N)}=\sum_{i=1}^{N}\mathcal{L}_i$, where each summand $\mathcal{L}_i$ is local, acting on a region of length $L$, and $L$ is a finite number that does not grow with $N$. Since $[\mathcal{L}_i,\mathcal{L}_j]=0$, the time evolution operator $\mathcal{E}^{(N)}_{\mathrm{evol},t}=e^{\mathcal{L}^{(N)}t}$ can be written as the product 
    \begin{equation}
        \mathcal{E}^{(N)}_{\mathrm{evol},t}=\prod_{i=0}^{N-1}e^{\mathcal{L}_i t}=\prod_{l=0}^{L-1}\left(\prod_{j=0}^{N/L-1}e^{\mathcal{L}_{jL+l}t}\right).
    \end{equation}
    In other words, $\mathcal{E}^{(N)}_{\mathrm{evol},t}=e^{\mathcal{L}^{(N)}t}$ is a finite-depth local channel with depth $L$. 

    As a finite-depth local channel, $\mathcal{E}^{(N)}_{\mathrm{evol},t}$ has a strict light cone, and no correlation can be created outside the light cone. This also applies when $t\ra\infty$; recall that the limit exists and is equal to the steady-state projector $\mathcal{E}_{\mathrm{evol},\infty}$ since $\mL$ has no purely imaginary eigenvalues. Specifically, starting from an initial state $\rho^{(N)}_{\text{init}}$ which is a product state, the connected correlation function with $O(t)=\tr(\mathcal{E}_{\mathrm{evol},t}(\rho^{(N)}_{\text{init}})O)$ satisfies
    \begin{equation}
    (O_A O_B)_c(t)\equiv (O_A O_B)(t)-O_A(t) O_B(t)=0
    \label{eqn:Lieb-Robinson}
    \end{equation}
    when the separation between region $A$ and $B$ is larger than $2L$ for any operators $O_A,O_B$. Consequently, the steady state after infinite time evolution $\rho_\infty^{(N)}=\mathcal{E}_{\mathrm{evol},\infty}(\rho^{(N)}_{\text{init}})$ is still short-range correlated. 

    Now, consider initial states as product states,
    \begin{equation}
        \rho_{\init}^{(N)}=\rho_1\otimes \rho_2\otimes\cdots\otimes\rho_N,
    \end{equation}
    where $\rho_i$ is an arbitrary density matrix at site $i$. By assumption, the steady state is in the space spanned by $\lbrace \rho(M_\hbnt)\rbrace$. Suppose there exists a product state whose steady state is a superposition $c_1 \rho(M_\hbnt)+c_2 \rho(M_{\hbnt'})$ with $\hbnt\neq \hbnt'$. By \cref{lemma:LRC}, this state has long-range correlations and is in contradiction to \cref{eqn:Lieb-Robinson}. Therefore, the steady state of a product initial state does not contain superposition. 
    
    Since all possible product states $\rho_{\init}^{(N)}$ form a complete basis for the operators, all the states in the space of steady states shall be reached, and there must exist two initial states,
$\rho^{(N)}_{\text{init},\hbnt}$ and $\rho^{(N)}_{\text{init},\hbnt'}$, such that their corresponding steady states are the MPDOs generated by different BNT elements,
    \begin{equation}
        \begin{aligned}
            &\mathcal{E}_{\mathrm{evol},\infty}(\rho^{(N)}_{\text{init},\hbnt})=\rho(M_\hbnt)\\
            &\mathcal{E}_{\mathrm{evol},\infty}(\rho^{(N)}_{\text{init},\hbnt'})=\rho(M_{\hbnt'}),
        \end{aligned}
    \end{equation}
 which do not have long-range correlations. 
    Denote $\rho_{\init,\hbnt}^{(N)}=\rho_{1,\hbnt}\otimes\cdots\otimes\rho_{N,\hbnt}$ and similarly for $\rho_{\init,\hbnt'}^{(N)}$, 
    we next interpolate between $\rho_{i,\hbnt}$ and $\rho_{i,\hbnt'}$ by $\rho_i(s)$ where $0\leq s\leq 1$, and $\rho_i(s)$ is continuous in $s$, $\rho_i(0)=\rho_{i,\hbnt}$ and $\rho_i(1)=\rho_{i,\hbnt'}$. For example, one can choose
    \begin{equation}
        \rho_i(s)=(1-s)\rho_{i,\hbnt}+s\rho_{i,\hbnt'},
    \end{equation}
    and build an interpolated product initial state
    \begin{equation}
        \rho^{(N)}_{\text{init}}(s)=\rho_1(s)\otimes\rho_2(s)\otimes \cdots\otimes \rho_N(s). 
    \end{equation}
    The evolution operator $\mathcal{E}_{\mathrm{evol},\infty}$ as a linear map is continuous, so there must exist a $s'\in(0,1)$ such that $\mathcal{E}_{\mathrm{evol},\infty}(\rho^{(N)}(s'))= \frac{1}{2}(\rho^{(N)}(M_\hbnt)+  \rho^{(N)}(M_{\hbnt'}))$. This is since $\mathcal{E}_{\mathrm{evol},\infty}$ preserves positivity and $\rho^{(N)}(M_\hbnt)$, $\rho^{(N)}(M_\hbnt')$ are supported on orthogonal spaces, thus the convex-combination coefficients are indeed non-negative for all $s$. Again, by \cref{lemma:LRC} this state has long-range correlations and contradicts \cref{eqn:Lieb-Robinson}.
\end{proof}

The long-range correlations in $\rho^{(N)}(M)$ not only lead to a no-go theorem for a commuting local parent Lindbladian but also forbid rapid-mixing dynamics~\cite{bravyi2006lieb,poulin2010lieb}.

\section{Parent Lindbladian constructions: Non-simple MPDO}
\label{sec:pL-non-simple}



In this section, we explicitly construct the parent Lindbladian of non-simple MPDOs and prove their properties. In principle, $\mT$ and $\mS$ can be constructed using the vertical canonical forms of one-site tensor $M$ and two-site tensor $K$, subject to the condition in \cref{thm:non-simple}, which we demonstrate explicitly for two examples in \cref{sec:beyond} and \cref{sec:Fib}. For the purpose of proving the properties of the parent Lindbladian, we adopt a more convenient algebraic approach based on the $C^*$-weak Hopf algebra, which implicitly uses the horizontal canonical form.

We begin with a lightweight review of the characterization of non-simple MPDO RFPs using $C^*$-weak Hopf algebra, employing the graphical notation. Given a $C^*$-weak Hopf algebra $A$, one can generate an MPDO RFP, which corresponds to the boundary of a 2D topologically ordered system, and construct the renormalization channels $\mT$ and $\mS$. We leave the rigorous algebraic treatments behind the graphical notations in \cref{appendix:WHA}. Next, we construct the parent Lindbladian using $\mT$ and $\mS$. In the special case of $C^*$-Hopf algebras, the parent Lindbladian is commuting and exhibits rapid mixing, though the steady state is not unique. For a general $C^*$-weak Hopf algebra $A$, the steady-state degeneracy (SSD) depends on the boundary condition: with open boundary condition SSD equals $\text{dim}(A)$, while with periodic boundary condition SSD equals $g$, the number of BNT elements in the horizontal canonical form.

Finally, we present a case study of a non-simple MPDO RFP that lies beyond the framework of $C^*$-weak Hopf algebra, and construct its parent Lindbladian via the vertical canonical form. The conclusion regarding SSD remains unchanged in this case.

\subsection{Non-simple RFP characterization: \texorpdfstring{$C^*$}{C*}-weak Hopf algebras}
\label{subsubsec:nonSimpleRfpWha}

This subsection is devoted to recalling 
some elementary facts
for the construction of RFP MPDOs from the algebraic point of view in Refs.~\cite{molnar2022matrix,ruizdealarcon2024matrix,yoshiko2024haag}; see also \cref{appendix:WHA} for rigorous definitions and statements. 
The central algebraic structures in this context, introduced by Böhm and Szlach\'{a}nyi in \cite{bohm_coassociativec_1996}, are known as $C^*$-weak Hopf algebras, and are, in essence, complex finite-dimensional vector spaces $A$ such that both $A$ and their dual vector spaces $A^*$ are $C^*$-algebras, and which are furthermore compatible in the sense that their categories of $*$-algebra representations encompass unitary (multi)fusion categories \cite{etingof_fusion_2005, etingof_tensor_2015}. For example, any group algebra $A=\mathbb{C}G$ of a finite group $G$ can be promoted to a $C^*$-Hopf algebra  (See details in \cref{ex:group-algebra}).



In the context of tensor networks and for our purposes here, let us recall that, given any $C^*$-weak Hopf algebra $A$, one can systematically construct a rank-four tensor
\begin{equation}
    \begin{tikzpicture}[scale=1]
    \draw[-mid] (0.6, 0.811) -- (0.6, 1.129);
    \draw[-mid] (0.6, 0) -- (0.6, 0.318);
    \draw[Virtual, -mid] (1.199, 0.564) -- (0.847, 0.564);
    \whaMsymb{(0.6, 0.564)};
    \draw[Virtual, -mid] (0.353, 0.564) -- (0, 0.564);
\end{tikzpicture}
\end{equation}
generating RFP MPDOs for all system sizes $N\in\mathbb{N}$:
\begin{equation}
\rho^{(N)}(M) :=
\begin{tikzpicture}[scale=1]
    \draw[Virtual, -mid] (1.235, 0.564) -- (0.741, 0.564);
    \draw[Virtual] (1.976, 0.564) -- (1.729, 0.564);
    \node[anchor=center] at (2.225, 0.661) {$\overset{N}\cdots$};
    \draw[Virtual] (2.717, 0.564) -- (2.47, 0.564);
    \draw[Virtual] (0.257, 0.564) arc[start angle=87.734, end angle=270, radius=0.247] -- (3.21, 0.071) arc[start angle=-90, end angle=90, radius=0.247];
    \draw[-mid] (0.494, 0.811) -- (0.494, 1.129);
    \draw[-mid] (1.482, 0.811) -- (1.482, 1.129);
    \draw[Virtual] (0.494, 0.564) -- (0.257, 0.564);
    \draw[-mid] (2.964, 0.811) -- (2.964, 1.129);
    \draw[Virtual, -mid] (3.21, 0.564) -- (2.964, 0.564);
    \draw[bevel, -mid] (0.494, 0) -- (0.494, 0.318);
    \draw[bevel, -mid] (1.482, 0) -- (1.482, 0.318);
    \draw[bevel, -mid] (2.964, 0) -- (2.964, 0.318);
    \whaMsymb{(0.494, 0.564)};
    \whaMsymb{(1.482, 0.564)};
    \whaMsymb{(2.964, 0.564)};
\end{tikzpicture}
~,
\end{equation}
in the sense of \cref{def:RFP_MPDO} or, more generally,
\begin{equation}
\rho^{(N)}(M)_{B} :=
\begin{tikzpicture}[scale=1]
    \draw[Virtual, -mid] (2.223, 0.564) -- (1.729, 0.564);
    \draw[Virtual] (2.963, 0.564) -- (2.716, 0.564);
    \node[anchor=center] at (3.212, 0.661) {$\overset{N}\cdots$};
    \draw[Virtual] (3.704, 0.564) -- (3.457, 0.564);
    \draw[Virtual] (0.247, 0.564) arc[start angle=90, end angle=270, radius=0.247] -- (4.198, 0.071) arc[start angle=-90, end angle=90, radius=0.247];
    \draw[-mid] (1.482, 0.811) -- (1.482, 1.129);
    \draw[-mid] (2.469, 0.811) -- (2.469, 1.129);
    \draw[Virtual] (1.482, 0.564) -- (1.244, 0.564);
    \draw[-mid] (3.951, 0.811) -- (3.951, 1.129);
    \draw[Virtual, -mid] (4.198, 0.564) -- (3.951, 0.564);
    \draw[bevel, -mid] (1.482, 0) -- (1.482, 0.318);
    \draw[bevel, -mid] (2.469, 0) -- (2.469, 0.318);
    \draw[bevel, -mid] (3.951, 0) -- (3.951, 0.318);
     \whaMsymb{(1.482, 0.564)};
     \whaMsymb{(2.469, 0.564)};
     \whaMsymb{(3.951, 0.564)};
    \draw[Virtual, -mid] (1.235, 0.564) -- (0.741, 0.564);
    \filldraw[ultra thin, fill=white] 
    (0.494, 0.564) circle[radius=0.247];
    \node[anchor=center] at (0.494, 0.564) {$B$};
\end{tikzpicture}
~,
\end{equation}
for certain admissible boundary conditions $B$; see \cite{ruizdealarcon2024matrix}.
Furthermore, the tensor $M$ is in the horizontal 
canonical form by construction, for which we label the BNT as $a = 1,\ldots, g$ 
and let
\begin{equation}
\begin{tikzpicture}[scale=1]
    \draw[-mid] (0.6, 0.811) -- (0.6, 1.129);
    \draw[-mid] (0.6, 0) -- (0.6, 0.318);
    \draw[Virtual, -mid] (1.199, 0.564) -- (0.847, 0.564);
    \whaMsymb{(0.6, 0.564)};
    \draw[Virtual, -mid] (0.353, 0.564) -- (0, 0.564);
\end{tikzpicture}
=\sum_{a=1}^g
\begin{tikzpicture}[scale=1]
    \draw[-mid] (0.6, 0.811) -- (0.6, 1.129);
    \draw[-mid] (0.6, 0) -- (0.6, 0.318);
    \draw[Virtual, -mid] (1.199, 0.564) -- (0.847, 0.564);
    \whaMsymb{(0.6, 0.564)};
    \draw[Virtual, -mid] (0.353, 0.564) -- (0, 0.564);
    \node[anchor=center, font=\footnotesize, text=Virtual] at (1.025, 0.701) {$a$};
    \node[anchor=center, font=\footnotesize, text=Virtual] at (0.178, 0.701) {$a$};
\end{tikzpicture}
\end{equation}
stand for its corresponding block decomposition; here, the labels $a$ correspond to 
projecting onto the $a$-th block at the virtual level.
Furthermore, 
there exists a rank-four tensor $\wt$ with the same block decomposition, and two rank-two tensors $\Omega$ and $\Xi$ satisfying
\begin{equation}
\label{eq:BlackVsWhite}
\begin{tikzpicture}[scale=1]
    \draw[Virtual] (1.729, 0.423) -- (1.482, 0.423);
    \draw[Virtual, -mid] (2.469, 0.423) -- (1.729, 0.423);
    \draw[Virtual] (1.729, 1.411) -- (1.482, 1.411);
    \draw[-mid] (1.235, 0.67) -- (1.235, 1.164);
    \draw[bevel, -mid] (1.245, 1.658) arc[start angle=-163.45, end angle=0, x radius=0.247, y radius=-0.247] -- (1.729, 0.988) -- (1.729, 0.247) arc[start angle=0, end angle=163.45, x radius=0.247, y radius=-0.247];
     \whaMsymb{(1.235, 0.423)};
     \whaNsymb{(1.235, 1.411)};
    \draw[Virtual, mid-] (2.469, 1.411) -- (1.729, 1.411);
    \draw[Virtual, -mid] (0.988, 0.423) -- (0, 0.423);
    \draw[Virtual] (0.988, 1.411) -- (0, 1.411);
    \filldraw[ultra thin, fill=white] 
    (0.494, 1.411) circle[radius=0.247];
    \node[anchor=center, scale=0.85] at (0.494, 1.411) {$\Omega$};
\end{tikzpicture}
= \sum_a 
\begin{tikzpicture}[scale=1]
    \draw[Virtual, -mid] (0, 0.988) arc[start angle=-90, end angle=90, x radius=0.494, y radius=-0.494];
    \draw[Virtual, -mid] (1.235, 0) arc[start angle=90, end angle=270, x radius=0.494, y radius=-0.494];
    \node[anchor=center, font=\footnotesize, text=Virtual] at (0.318, 0.494) {$a$};
    \node[anchor=center, font=\footnotesize, text=Virtual] at (0.917, 0.494) {$a$};
\end{tikzpicture}
\end{equation}
and, on the other hand, 
\begin{equation}\label{eq:XiOmega}
    \sum_a
\begin{tikzpicture}[scale=1]
    \draw[Virtual, bevel, -mid] (0, 0.988) arc[start angle=-90, end angle=90, x radius=0.494, y radius=-0.494];
    \draw[Virtual, bevel, -mid] (1.727, 0.459) arc[start angle=4.001, end angle=354.819, x radius=0.494, y radius=-0.494];
    \node[anchor=center, font=\footnotesize, text=Virtual] at (0.318, 0.494) {$a$};
    \node[anchor=center, font=\footnotesize, text=Virtual] at (0.917, 0.494) {$a$};
    \filldraw[ultra thin, fill=white] 
    (1.693, 0.494) circle[radius=0.247];
    \node[anchor=center] at (1.693, 0.494) {$\Xi$};
\end{tikzpicture}
=
\begin{tikzpicture}[scale=1]
    \draw[Virtual, bevel, -mid] (0, 0.988) arc[start angle=-90, end angle=90, x radius=0.494, y radius=-0.494];
    \node[anchor=center,mysymb] at (0.494, 0.494) {$\Omega$};
\end{tikzpicture}
\end{equation}
See \cref{appendix:WHA} for explicit constructions of the tensors $M,M',\Omega$ and $\Xi$. Essentially, \cref{eq:BlackVsWhite} is a consequence of the horizontal canonical form.
Moreover, $\Omega$ 
is invertible and is proportional to the identity matrix in each block $a$, and therefore
it commutes at the virtual level with $M$ (and $M'$) by construction:
\begin{equation}
\label{eq:OmegaCommutes}
    \begin{tikzpicture}[scale=1]
    \draw[Virtual] (0.988, 0.564) -- (0, 0.564);
    \whaMsymb{(1.235, 0.564)};
    \filldraw[ultra thin, fill=white] 
    (0.494, 0.564) circle[radius=0.247];
    \node[anchor=center,scale=0.85] at (0.494, 0.564) {$\Omega$};
    \draw[-mid] (1.235, 0.811) -- (1.235, 1.129);
    \draw[-mid] (1.235, 0) -- (1.235, 0.318);
    \draw[Virtual, -mid] (1.834, 0.564) -- (1.482, 0.564);
\end{tikzpicture}
=
\begin{tikzpicture}[scale=1]
    \whaMsymb{(0.6, 0.564)};
    \draw[-mid] (0.6, 0.811) -- (0.6, 1.129);
    \draw[-mid] (0.6, 0) -- (0.6, 0.318);
    \draw[Virtual, -mid] (0.353, 0.564) -- (0, 0.564);
    \draw[Virtual] (1.834, 0.564) -- (0.847, 0.564);
    \filldraw[ultra thin, fill=white] 
    (1.341, 0.564) circle[radius=0.247];
    \node[anchor=center,scale=0.85] at (1.341, 0.564) {$\Omega$};
\end{tikzpicture};
\end{equation}
in other words, we say $\Omega$ is central with respect to $M$. 
By using \cref{eq:BlackVsWhite,eq:XiOmega} and the property of $\Omega$, one can easily derive that
\begin{equation}
\label{eq:MvsNXi}
\begin{tikzpicture}[scale=1]
    \draw[Virtual] (1.729, 0.423) -- (1.482, 0.423);
    \draw[Virtual, -mid] (3.951, 0.423) -- (3.21, 0.423);
    \draw[Virtual] (3.21, 0.423) -- (2.963, 0.423);
    \draw[Virtual, -mid] (2.469, 0.423) -- (1.729, 0.423);
    \draw[Virtual] (2.469, 1.411) -- (1.482, 1.411);
    \draw[Virtual] (3.951, 1.411) -- (3.21, 1.411);
    \filldraw[ultra thin, fill=white] 
    (2.117, 1.411) circle[radius=0.247];
    \node[anchor=center,scale=0.85] at (2.117, 1.411) {$\Xi$};
    \draw[-mid] (1.235, 0.67) -- (1.235, 1.164);
    \draw[Virtual, -mid] (0.988, 0.423) -- (0, 0.423);
    \draw[Virtual] (0.988, 1.411) -- (0, 1.411);
    \draw[-mid] (2.716, 0.67) -- (2.716, 1.164);
    \draw[bevel, -mid] (2.727, 1.658) arc[start angle=-163.45, end angle=0, x radius=0.247, y radius=-0.247] -- (3.21, 0.988) -- (3.21, 0.247) arc[start angle=0, end angle=163.45, x radius=0.247, y radius=-0.247];
    \draw[bevel, -mid] (1.245, 1.658) arc[start angle=-163.45, end angle=0, x radius=0.247, y radius=-0.247] -- (1.729, 0.988) -- (1.729, 0.247) arc[start angle=0, end angle=163.45, x radius=0.247, y radius=-0.247];
    \whaMsymb{(1.235, 0.423)};
    \whaMsymb{(2.716, 0.423)};
    \whaNsymb{(1.235, 1.411)};
    \whaNsymb{(2.716, 1.411)};
    \draw[Virtual] (3.21, 1.411) -- (2.963, 1.411);
    \filldraw[ultra thin, fill=white] 
    (0.494, 1.411) circle[radius=0.247];
    \node[anchor=center,scale=0.85] at (0.494, 1.411) {$\Omega$};
\end{tikzpicture}
=
\sum_a \begin{tikzpicture}[scale=1]
    \draw[Virtual, -mid] (0, 0.988) arc[start angle=-90, end angle=90, x radius=0.494, y radius=-0.494];
    \draw[Virtual, -mid] (1.235, 0) arc[start angle=90, end angle=270, x radius=0.494, y radius=-0.494];
    \node[anchor=center, font=\footnotesize, text=Virtual] at (0.318, 0.494) {$a$};
    \node[anchor=center, font=\footnotesize, text=Virtual] at (0.917, 0.494) {$a$};
\end{tikzpicture}
\end{equation}


Below we show the example of MPDO tensor when $A=\mathbb{C}\mathbb{Z}_2$. See \cref{sec:Fib} for a more advanced example where $A$ is the Fibonacci $C^*$-weak Hopf algebra.

\begin{exmp}[Boundary state of the Toric Code]\label{ex:CZ2_M_N}
Let us reconsider the boundary state of the toric code introduced in \cref{exmp:ToricCodeBd}. This MPDO tensor is in fact generated using $\mathbb{C}\mathbb{Z}_2$ group $C^*$-algebra, where the tensors $M$ and $\wt$ are given by
\[ 
\begin{tikzpicture}[scale=1]
    \whaM{(0,0)}{1};
\end{tikzpicture}
=
\begin{tikzpicture}[scale=1]
    \draw[-mid] (0.6, 0.811) -- (0.6, 1.129);
    \draw[-mid] (0.6, 0) -- (0.6, 0.318);
    \draw[Virtual, -mid] (1.199, 0.564) -- (0.847, 0.564);
    \whaNsymb{(0.6, 0.564)};
    \draw[Virtual, -mid] (0.353, 0.564) -- (0, 0.564);
\end{tikzpicture}
=  \frac{\bo}{2} \otimes |e\rangle\langle e| +  \frac{\sigma_z}{2}  \otimes |g\rangle\langle g| \]
where the first tensor factor corresponds to the endomorphism in the physical space and the second tensor factor to one in the virtual space, 
and the rank-two tensors acting in the virtual space are 
\[ 
\begin{tikzpicture}[scale=1]
    \node[anchor=center,mysymb] (no0) at (0.564, 0.121) {$\Omega$};
    \draw[Virtual, -mid] (no0) -- (0, 0.121);
    \draw[Virtual, -mid] (1.129, 0.121) -- (no0);
\end{tikzpicture}
=
\begin{tikzpicture}[scale=1]
    \node[anchor=center,mysymb] (no0) at (0.564, 0.121) {$\Xi$};
    \draw[Virtual, -mid] (no0) -- (0, 0.121);
    \draw[Virtual, -mid] (1.129, 0.121) -- (no0);
\end{tikzpicture}
= \frac{\bo}{2}.
\]
We show the explicit derivation using $C^*$-Hopf algebra in \cref{sec:CZ2-construction}.
\end{exmp}

Any tensor $M$ satisfying \cref{eq:BlackVsWhite,eq:XiOmega,eq:OmegaCommutes} defines RFP MPDOs in the sense of \cref{def:RFP_MPDO}. Indeed, on the one hand, the fine-graining quantum channel $\mathcal{T}$ is given by the following diagrammatic formula using the vectorized notation:  
\begin{equation}
\mathcal{T} =
\begin{tikzpicture}[scale=1]
    \draw[Virtual] (0.502, 1.42) arc[start angle=118.955, end angle=151.045, radius=0.494];
    \draw[Virtual, -mid] (0.309, 0.749) arc[start angle=-151.045, end angle=-90, radius=0.494] -- (1.235, 0.494);
    \draw[Virtual, -mid] (1.976, 1.482) -- (0.988, 1.482);
    \draw[Virtual, -mid] (1.729, 0.494) -- (2.716, 0.494) arc[start angle=-90, end angle=90, radius=0.494] -- (2.469, 1.482);
    \node[anchor=center] at (0.247, 0.988) {$\Omega$};
    \filldraw[ultra thin, fill=white] 
    (0.247, 0.988) circle[radius=0.247];
    \node[anchor=center] at (0.247, 0.988) {$\Omega$};
    \draw[-mid] (0.741, 1.729) -- (0.741, 1.976);
    \draw[bevel, mid-] (0.741, 1.235) arc[start angle=180, end angle=360, radius=0.247] -- (1.235, 1.976);
    \draw[-mid] (2.223, 1.729) -- (2.223, 1.976);
    \draw[bevel, mid-] (2.223, 1.235) arc[start angle=180, end angle=360, radius=0.247] -- (2.716, 1.976);
    \whaMsymb{(0.741, 1.482)};
    \whaMsymb{(2.223, 1.482)};
    \draw[-mid] (1.482, 0) -- (1.482, 0.247);
    \draw[bevel, -mid] (1.482, 0.741) arc[start angle=-180, end angle=0, x radius=0.247, y radius=-0.247] -- (1.976, 0);
    \whaNsymb{(1.482, 0.494)};
\end{tikzpicture}
~.
\end{equation}
On the other hand, the coarse-graining quantum channel $\mathcal{S}$ is given by a sum of the form
\begin{equation}
    \mathcal{S} = \mathcal{S}^{(\Eone)} + \mathcal{S}^{(\Etwo)},
\end{equation}
where the first summand is defined by the expression
\begin{equation}
\mathcal{S}^{(\Eone)} = 
\begin{tikzpicture}[scale=1]
    \draw[Virtual] (0.741, 0.494) arc[start angle=90, end angle=180, x radius=0.494, y radius=-0.494];
    \draw[Virtual, -mid] (1.235, 1.482) -- (0.741, 1.482) arc[start angle=90, end angle=151.045, radius=0.494];
    \draw[Virtual] (1.623, 0.494) -- (0.741, 0.494);
    \draw[Virtual] (2.716, 0.494) -- (1.623, 0.494);
    \draw[Virtual] (2.716, 0.494) arc[start angle=-90, end angle=90, radius=0.494] -- (1.729, 1.482);
    \draw[-mid] (0.741, 0) -- (0.741, 0.247);
    \draw[-mid] (2.223, 0) -- (2.223, 0.247);
    \draw[-mid] (1.482, 1.729) -- (1.482, 1.976);
    \draw[bevel, mid-] (1.482, 1.235) -- (1.482, 1.235) arc[start angle=180, end angle=360, radius=0.247] -- (1.976, 1.976);
    \draw[bevel, -mid] (2.223, 0.741) -- (2.223, 0.741) arc[start angle=-180, end angle=0, x radius=0.247, y radius=-0.247] -- (2.716, 0);
    \draw[bevel, -mid] (0.741, 0.741) -- (0.741, 0.741) arc[start angle=-180, end angle=0, x radius=0.247, y radius=-0.247] -- (1.235, 0);
    \whaMsymb{(1.482, 1.482)};
    \filldraw[ultra thin, fill=white] 
    (0.247, 0.988) circle[radius=0.247];
    \node[anchor=center] at (0.247, 0.988) {$\Omega$};
    \whaNsymb{(0.741, 0.494)};
    \filldraw[ultra thin, fill=white] 
    (1.623, 0.494) circle[radius=0.247];
    \node[anchor=center] at (1.623, 0.494) {$\Xi$};
    \whaNsymb{(2.223, 0.494)};
\end{tikzpicture}
~,
\end{equation}
which is completely positive and trace-non-increa\-sing, but not necessarily trace-preserving, and the second summand is such that
\begin{equation}\label{eq:S2annh}
\begin{tikzpicture}[scale=1]
    \draw[-mid] (0.741, 0.67) -- (0.741, 0.917);
    \draw[Virtual, -mid] (0.494, 0.423) -- (0, 0.423);
    \draw[-mid] (2.223, 0.67) -- (2.223, 0.917);
    \draw[-mid] (1.411, 1.411) -- (1.411, 1.905);
    \draw[mid-] (1.905, 1.411) -- (1.905, 1.905);
    \draw[Virtual] (1.976, 0.423) -- (0.988, 0.423);
    \draw[Virtual, -mid] (3.457, 0.423) -- (2.963, 0.423);
    \draw[Virtual] (2.963, 0.423) -- (2.469, 0.423);
    \draw[bevel, mid-] (0.751, 0.177) arc[start angle=-163.45, end angle=0, radius=0.247] -- (1.235, 0.917);
    \draw[bevel, mid-] (2.233, 0.177) arc[start angle=-163.45, end angle=0, radius=0.247] -- (2.716, 0.917);
    \filldraw[ultra thin, fill=white] (0.741, 0.917) arc[start angle=90, end angle=270, x radius=0.247, y radius=-0.247] -- (2.716, 1.411) arc[start angle=-90, end angle=90, x radius=0.247, y radius=-0.247] -- cycle;
    \node[anchor=center] at (1.658, 1.164) {$\mathcal{S}^{(\Etwo)}$};
    \whaMsymb{(0.741, 0.423)};
    \whaMsymb{(2.223, 0.423)};
\end{tikzpicture}
\equiv 0,
\end{equation}
i.e.~acts non-trivially only on the orthogonal of 
the space spanned by $\rho^{(2)}(M)_X$ where $X$ is an arbitrary boundary condition matrix, and makes $\mathcal{S}$ trace-preserving. We will make a particular choice of $\mS^{(\Etwo)}$ in the next subsection. 

It can be deduced that the effect of the fine-grained map followed by the coarse-grained map is trivial, i.e.~\( \mathcal{S}\circ \mathcal{T} \) leaves the tensor $M$ invariant; 
and the opposite combination, the quantum channel
\( \mathcal{T} \circ \mathcal{S} \)
is an idempotent map that 
leaves the two-site tensor invariant and stabilizes the RFP MPDOs. 
This makes \( \mathcal{T} \circ \mathcal{S} \) the main candidate for the local stabilizers on the Lindbladian construction in the next subsections.

\subsection{Parent Lindbladian}
With the notations of \cref{subsubsec:nonSimpleRfpWha}, let us consider the quantum channel
\(
\mathcal{E} := \mathcal{T}\circ \mathcal{S}
\),
which can be decomposed in the form
\begin{align}
    &\mathcal{E} = \mathcal{E}^{(\Eone)} + \mathcal{E}^{(\Etwo)},\\
    &\text{ where }\mathcal{E}^{(k)} := \mathcal{T}\circ \mathcal{S}^{(k)},\; k=\Eone,\Etwo.
\end{align}
In diagrammatic notation, it is easy to check that the first term takes the form
\begin{equation}
\mathcal{E}^{(\Eone)} =
\begin{tikzpicture}[scale=1]
    \draw[Virtual] (0.502, 0.626) arc[start angle=118.955, end angle=151.045, x radius=0.494, y radius=-0.494];
    \draw[Virtual] (1.376, 0.564) -- (0.988, 0.564);
    \draw[Virtual] (1.976, 0.564) -- (1.87, 0.564);
    \draw[-mid] (0.741, 0) -- (0.741, 0.318);
    \draw[-mid] (0.741, 1.799) -- (0.741, 2.117);
    \draw[Virtual, -mid] (1.976, 1.552) -- (1.235, 1.552);
    \draw[-mid] (2.223, 1.799) -- (2.223, 2.117);
    \filldraw[ultra thin, fill=white] 
    (1.623, 0.564) circle[radius=0.247];
    \draw[bevel, -mid] (0.751, 0.811) arc[start angle=-163.45, end angle=0, x radius=0.247, y radius=-0.247] -- (1.235, 0);
    \draw[bevel, mid-] (2.233, 1.305) arc[start angle=-163.45, end angle=0, radius=0.247] -- (2.716, 2.117);
    \node[anchor=center,scale=0.85] at (1.623, 0.564) {$\Xi$};
    \draw[Virtual] (1.235, 1.552) -- (0.988, 1.552);
    \draw[bevel, mid-] (0.751, 1.305) arc[start angle=-163.45, end angle=0, radius=0.247] -- (1.235, 2.117);
    \whaMsymb{(0.741, 1.552)};
    \draw[Virtual] (0.502, 1.49) arc[start angle=118.955, end angle=151.045, radius=0.494];
    \filldraw[ultra thin, fill=white] 
    (0.247, 1.058) circle[radius=0.247];
    \node[anchor=center,scale=0.85] at (0.247, 1.058) {$\Omega$};
    \draw[Virtual, mid-] (2.469, 1.552) -- (2.716, 1.552) arc[start angle=-90, end angle=90, x radius=0.494, y radius=-0.494] -- (2.469, 0.564);
    \draw[-mid] (2.223, 0) -- (2.223, 0.318);
    \draw[bevel, -mid] (2.233, 0.811) arc[start angle=-163.45, end angle=0, x radius=0.247, y radius=-0.247] -- (2.716, 0);
    \whaMsymb{(2.223, 1.552)};
    \whaNsymb{(0.741, 0.564)};
    \whaNsymb{(2.223, 0.564)};
\end{tikzpicture}
~,
\end{equation}
and, moreover, that the RFP MPDOs with arbitrary boundary conditions are invariant under the action of $\mathcal{E}^{(\Eone)}$ (and hence of  $\mathcal{E}$ due to \cref{eq:S2annh}), since
\begin{equation}\label{eq:E1Stab}
\begin{tikzpicture}[scale=1]
    \draw[Virtual] (0.502, 1.319) arc[start angle=118.955, end angle=151.045, x radius=0.494, y radius=-0.494];
    \draw[-mid] (0.741, 2.492) -- (0.741, 2.739);
    \draw[Virtual, -mid] (1.976, 2.245) -- (1.235, 2.245);
    \draw[-mid] (2.223, 2.492) -- (2.223, 2.739);
    \draw[bevel, mid-] (2.233, 1.999) arc[start angle=-163.45, end angle=0, radius=0.247] -- (2.716, 2.739);
    \draw[Virtual] (1.235, 2.245) -- (0.988, 2.245);
    \draw[bevel, mid-] (0.751, 1.999) arc[start angle=-163.45, end angle=0, radius=0.247] -- (1.235, 2.739);
    \whaMsymb{(0.741, 2.245)};
    \draw[Virtual] (0.502, 2.184) arc[start angle=118.955, end angle=151.045, radius=0.494];
    \whaMsymb{(2.223, 2.245)};
    \filldraw[ultra thin, fill=white] 
    (0.247, 1.752) circle[radius=0.247];
    \node[anchor=center] at (0.247, 1.752) {$\Omega$};
    \draw[Virtual, mid-] (2.469, 2.245) -- (2.716, 2.245) arc[start angle=-90, end angle=90, x radius=0.494, y radius=-0.494] -- (2.469, 1.258);
    \draw[Virtual] (2.469, 1.258) -- (0.988, 1.258);
    \filldraw[ultra thin, fill=white] 
    (1.623, 1.258) circle[radius=0.247];
    \node[anchor=center] at (1.623, 1.258) {$\Xi$};
    \draw[-mid] (0.741, 0.764) -- (0.741, 1.011);
    \draw[bevel, -mid] (0.751, 1.504) arc[start angle=-163.45, end angle=0, x radius=0.247, y radius=-0.247] -- (1.235, 0.764) -- (1.235, 0.27) arc[start angle=-5.094, end angle=185.094, x radius=0.248, y radius=-0.248];
    \draw[-mid] (2.223, 0.764) -- (2.223, 1.011);
    \draw[bevel, -mid] (2.233, 1.504) arc[start angle=-163.45, end angle=0, x radius=0.247, y radius=-0.247] -- (2.716, 0.764) -- (2.716, 0.27) arc[start angle=0, end angle=180, x radius=0.247, y radius=-0.247];
    \whaNsymb{(0.741, 1.258)};
    \whaNsymb{(2.223, 1.258)};
    \draw[Virtual] (1.976, 0.517) -- (0.988, 0.517);
    \whaMsymb{(0.741, 0.517)};
    \draw[Virtual, -mid] (0.494, 0.517) -- (0, 0.517);
    \draw[Virtual, -mid] (3.21, 0.517) -- (2.469, 0.517);
    \whaMsymb{(2.223, 0.517)};
\end{tikzpicture}
=
\begin{tikzpicture}[scale=1]
    \draw[-mid] (0.741, 0.67) -- (0.741, 0.917);
    \draw[Virtual] (1.976, 0.423) -- (0.988, 0.423);
    \draw[Virtual, -mid] (0.494, 0.423) -- (0, 0.423);
    \draw[Virtual, -mid] (3.21, 0.423) -- (2.469, 0.423);
    \draw[-mid] (2.223, 0.67) -- (2.223, 0.917);
    \draw[bevel, mid-] (0.751, 0.177) arc[start angle=-163.45, end angle=0, radius=0.247] -- (1.235, 0.917);
    \draw[bevel, mid-] (2.233, 0.177) arc[start angle=-163.45, end angle=0, radius=0.247] -- (2.716, 0.917);
    \whaMsymb{(2.223, 0.423)};
    \whaMsymb{(0.741, 0.423)};
\end{tikzpicture}
\end{equation}
as an immediate consequence of \cref{eq:MvsNXi}.


Let us denote $\mathcal{E}_i=\tau_i(\mathcal{E})$ where $\tau_i$ translates the operator by $i$ sites and
define the linear maps
\begin{align}
    \mathcal{E}^{(k)}_{i,...,i+\ell}=\mathcal{E}_i^{(k)}\circ\mathcal{E}_{i+1}^{(k)}\circ\cdots\circ\mathcal{E}_{i+\ell}^{(k)},
\end{align}
for $k=\Eone,\Etwo$,
 where 
 the operators are tensorized with the identity map elsewhere.
For example, for $\ell = 2$ it is easy to check that it takes the form
\begin{equation}\label{eq:defE_3sites}
\mathcal{E}^{(\Eone)}_{i,i+1,i+2} = 
\begin{tikzpicture}[scale=1]
    \draw[Virtual] (0.502, 0.626) arc[start angle=118.955, end angle=151.045, x radius=0.494, y radius=-0.494];
    \draw[Virtual] (0.502, 1.49) arc[start angle=118.955, end angle=151.045, radius=0.494];
    \draw[Virtual] (1.376, 0.564) -- (0.988, 0.564);
    \draw[Virtual] (4.198, 0.564) -- (1.87, 0.564);
    \draw[bevel, -mid] (0.751, 0.811) arc[start angle=-163.45, end angle=0, x radius=0.247, y radius=-0.247] -- (1.235, 0);
    \draw[-mid] (0.741, 1.799) -- (0.741, 2.117);
    \draw[Virtual, -mid] (1.976, 1.552) -- (0.988, 1.552);
    \draw[-mid] (2.223, 1.799) -- (2.223, 2.117);
    \draw[Virtual, -mid] (4.198, 0.564) arc[start angle=-90, end angle=90, radius=0.494] -- (3.951, 1.552);
    \draw[-mid] (3.704, 0) -- (3.704, 0.318);
    \draw[-mid] (3.704, 1.799) -- (3.704, 2.117);
    \draw[bevel, mid-] (3.714, 1.305) arc[start angle=-163.45, end angle=0, radius=0.247] -- (4.198, 2.117);
    \filldraw[ultra thin, fill=white] 
    (0.247, 1.058) circle[radius=0.247];
    \filldraw[ultra thin, fill=white] 
    (3.104, 0.564) circle[radius=0.247];
    \filldraw[ultra thin, fill=white] 
    (1.623, 0.564) circle[radius=0.247];
    \node[anchor=center] at (1.623, 0.564) {$\Xi$};
    \node[anchor=center] at (0.247, 1.058) {$\Omega$};
    \node[anchor=center] at (3.104, 0.564) {$\Xi$};
    \draw[bevel, -mid] (2.233, 0.811) arc[start angle=-163.45, end angle=0, x radius=0.247, y radius=-0.247] -- (2.716, 0);
    \draw[bevel, -mid] (3.714, 0.811) arc[start angle=-163.45, end angle=0, x radius=0.247, y radius=-0.247] -- (4.198, 0);
    \draw[-mid] (0.741, 0) -- (0.741, 0.318);
    \draw[-mid] (2.223, 0) -- (2.223, 0.318);
    \draw[Virtual, -mid] (3.457, 1.552) -- (2.469, 1.552);
    \draw[bevel, mid-] (2.233, 1.305) arc[start angle=-163.45, end angle=0, radius=0.247] -- (2.716, 2.117);
    \draw[bevel, mid-] (0.751, 1.305) arc[start angle=-163.45, end angle=0, radius=0.247] -- (1.235, 2.117);
    \whaMsymb{(0.741, 1.552)};
    \whaMsymb{(2.223, 1.552)};
    \whaMsymb{(3.704, 1.552)};
    \whaNsymb{(3.704, 0.564)};
    \whaNsymb{(2.223, 0.564)};
    \whaNsymb{(0.741, 0.564)};
\end{tikzpicture}~.
\end{equation}
This is because $\mathcal{E}_{i}^{(\Eone)}\circ\mathcal{E}_{i+1}^{(\Eone)}$ is given by the expression 
\[ 
\begin{tikzpicture}[scale=1]
    \draw[Virtual] (0.502, 0.556) arc[start angle=118.955, end angle=151.044, x radius=0.494, y radius=-0.494];
    \draw[Virtual, -mid] (0.502, 1.42) arc[start angle=118.955, end angle=151.044, radius=0.494];
    \draw[Virtual] (1.623, 0.494) -- (0.741, 0.494);
    \draw[Virtual] (2.716, 0.494) -- (1.623, 0.494);
    \draw[Virtual, -mid] (2.716, 0.494) arc[start angle=-90, end angle=90, radius=0.494] -- (2.469, 1.482);
    \draw[-mid] (0.741, 0) -- (0.741, 0.247);
    \draw[-mid] (2.223, 0) -- (2.223, 0.247);
    \draw[bevel, -mid] (2.233, 0.741) arc[start angle=-163.45, end angle=0, x radius=0.247, y radius=-0.247] -- (2.716, 0);
    \draw[bevel, -mid] (0.751, 0.741) arc[start angle=-163.45, end angle=0, x radius=0.247, y radius=-0.247] -- (1.235, 0);
    \draw[-mid] (0.741, 1.729) -- (0.741, 3.951);
    \draw[Virtual, -mid] (1.976, 1.482) -- (0.988, 1.482);
    \draw[-mid] (2.223, 1.729) -- (2.223, 2.223);
    \draw[Virtual] (2.223, 2.469) arc[start angle=90, end angle=180, x radius=0.494, y radius=-0.494];
    \draw[Virtual, -mid] (2.963, 3.457) -- (2.223, 3.457) arc[start angle=90, end angle=180, radius=0.494];
    \draw[Virtual] (3.104, 2.469) -- (2.223, 2.469);
    \draw[Virtual] (4.198, 2.469) -- (3.104, 2.469);
    \draw[Virtual, -mid] (4.198, 2.469) arc[start angle=-90, end angle=90, radius=0.494] -- (2.963, 3.457);
    \draw[-mid] (3.704, 0) -- (3.704, 2.223);
    \draw[bevel] (3.714, 2.716) arc[start angle=-163.45, end angle=0, x radius=0.247, y radius=-0.247] -- (4.198, 2.223);
    \draw[-mid] (2.223, 3.704) -- (2.223, 3.951);
    \draw[mid-] (2.233, 3.21) arc[start angle=-163.45, end angle=0, radius=0.247] -- (2.716, 3.951);
    \draw[Virtual, -mid] (3.704, 3.457) -- (2.223, 3.457);
    \draw[-mid] (3.704, 3.704) -- (3.704, 3.951);
    \draw[mid-] (3.714, 3.21) arc[start angle=-163.45, end angle=0, radius=0.247] -- (4.198, 3.951);
    \draw[bevel, mid-] (2.233, 1.235) arc[start angle=-163.45, end angle=0, radius=0.247] -- (2.716, 2.223) -- (2.716, 2.646) arc[start angle=0, end angle=163.45, radius=0.247];
    \filldraw[ultra thin, fill=white] 
    (3.104, 2.469) circle[radius=0.247];
    \filldraw[ultra thin, fill=white] 
    (1.729, 2.963) circle[radius=0.247];
    \filldraw[ultra thin, fill=white] 
    (1.623, 0.494) circle[radius=0.247];
    \filldraw[ultra thin, fill=white] 
    (0.247, 0.988) circle[radius=0.247];
    \node[anchor=center] at (1.623, 0.494) {$\Xi$};
    \node[anchor=center] at (0.247, 0.988) {$\Omega$};
    \node[anchor=center] at (3.104, 2.469) {$\Xi$};
    \node[anchor=center] at (1.729, 2.963) {$\Omega$};
    \whaMsymb{(2.223, 3.457)};
    \whaMsymb{(3.704, 3.457)};
    \whaNsymb{(2.223, 2.469)};
    \whaMsymb{(2.223, 1.482)};
    \whaNsymb{(3.704, 2.469)};
    \draw[mid-] (4.198, 0) -- (4.198, 2.223);
    \draw[bevel] (0.751, 1.235) arc[start angle=-163.45, end angle=0, radius=0.247] -- (1.235, 1.729);
    \whaMsymb{(0.741, 1.482)};
    \whaNsymb{(2.223, 0.494)};
    \whaNsymb{(0.741, 0.494)}
    \draw[mid-] (1.235, 1.729) -- (1.235, 3.951);
\end{tikzpicture}~,
\]
and by virtue of \cref{eq:BlackVsWhite}, this is equal to
\[
\begin{tikzpicture}[scale=1]
    \draw[Virtual] (0.741, 0.494) arc[start angle=90, end angle=180, x radius=0.494, y radius=-0.494];
    \draw[Virtual, -mid] (1.482, 1.482) -- (0.741, 1.482) arc[start angle=90, end angle=180, radius=0.494];
    \draw[Virtual] (1.376, 0.494) -- (0.988, 0.494);
    \draw[Virtual] (3.069, 0.494) -- (2.469, 0.494);
    \draw[Virtual, -mid] (3.069, 0.494) arc[start angle=-89.206, end angle=89.206, radius=0.494] -- (2.946, 1.482);
    \draw[-mid] (0.741, 0) -- (0.741, 0.494);
    \draw[-mid] (2.223, 0) -- (2.223, 0.494);
    \draw[bevel, -mid] (2.233, 0.741) arc[start angle=-163.45, end angle=0, x radius=0.247, y radius=-0.247] -- (2.716, 0);
    \draw[bevel, -mid] (0.751, 0.741) arc[start angle=-163.45, end angle=0, x radius=0.247, y radius=-0.247] -- (1.235, 0);
    \draw[-mid] (0.741, 1.729) -- (0.741, 3.951);
    \draw[bevel] (0.751, 1.235) arc[start angle=-163.45, end angle=0, radius=0.247] -- (1.235, 1.729);
    \draw[Virtual, -mid] (1.905, 1.482) -- (0.741, 1.482);
    \draw[Virtual, -mid] (2.858, 3.457) -- (1.87, 3.457) arc[start angle=90, end angle=270, radius=0.494];
    \draw[Virtual] (4.198, 2.469) -- (3.104, 2.469);
    \draw[Virtual, -mid] (4.198, 2.469) arc[start angle=-90, end angle=90, radius=0.494] -- (2.963, 3.457);
    \draw[-mid] (3.704, 0) -- (3.704, 2.223);
    \draw[bevel] (3.714, 2.716) arc[start angle=-163.45, end angle=0, x radius=0.247, y radius=-0.247] -- (4.198, 2.223);
    \draw[-mid] (2.223, 3.457) -- (2.223, 3.951);
    \draw[bevel, mid-] (2.233, 3.21) arc[start angle=-163.45, end angle=0, radius=0.247] -- (2.716, 3.951);
    \draw[Virtual, -mid] (3.704, 3.457) -- (2.223, 3.457);
    \draw[-mid] (3.704, 3.457) -- (3.704, 3.951);
    \draw[bevel, mid-] (3.714, 3.21) arc[start angle=-163.45, end angle=0, radius=0.247] -- (4.198, 3.951);
    \draw[Virtual] (2.963, 1.482) arc[start angle=90, end angle=257.658, x radius=0.494, y radius=-0.494];
    \draw[Virtual] (1.87, 2.469) arc[start angle=-90, end angle=90, x radius=0.494, y radius=-0.494];
    \filldraw[ultra thin, fill=white] 
    (0.247, 0.988) circle[radius=0.247];
    \filldraw[ultra thin, fill=white] 
    (1.623, 0.494) circle[radius=0.247];
    \filldraw[ultra thin, fill=white] 
    (3.104, 2.469) circle[radius=0.247];
    \draw[Virtual] (1.976, 0.494) -- (1.87, 0.494);
    \draw[mid-] (4.198, 0) -- (4.198, 2.223);
    \node[anchor=center] at (1.623, 0.494) {$\Xi$};
    \node[anchor=center] at (0.247, 0.988) {$\Omega$};
    \node[anchor=center] at (3.104, 2.469) {$\Xi$};
    \draw[mid-] (1.235, 1.729) -- (1.235, 3.951);
    \whaNsymb{(0.741, 0.494)};
    \whaMsymb{(0.741, 1.482)};
    \whaNsymb{(2.223, 0.494)};
    \whaMsymb{(2.223, 3.457)};
    \whaMsymb{(3.704, 3.457)};
    \whaNsymb{(3.704, 2.469)};
\end{tikzpicture}~,
\]
which corresponds exactly to the expression introduced in \cref{eq:defE_3sites}. Similarly, one can also show $\mathcal{E}_{i+1}^{(\Eone)}\circ\mathcal{E}_i^{(\Eone)}$ is given by the same expression in \cref{eq:defE_3sites}, and therefore the two linear maps $\mathcal{E}_{i+1}^{(\Eone)}$ and $\mathcal{E}_i^{(\Eone)}$ commute,
\begin{equation}
    \mathcal{E}_i^{(\Eone)}\circ\mathcal{E}_{i+1}^{(\Eone)}=\mathcal{E}_{i+1}^{(\Eone)}\circ\mathcal{E}_{i}^{(\Eone)}.
\end{equation}
Due to this commutation, the definition of $\mathcal{E}^{(\Eone)}_{i,\ldots,i+\ell}$ is independent of the order of the products.

Furthermore, in the particular case of $C^*$-Hopf algebras it holds that $\mathcal{E} = \mathcal{E}^{(\Eone)}$ is already trace-preserving and hence $\mathcal{E}^{(\Etwo)} \equiv 0$; this includes the case of group $C^*$-algebras. 
Hence, the parent Lindbladian is commuting for the MPDO RFP generated by $C^*$-Hopf algebra, and subsequently, the parent Lindbladian is frustration-free and is rapid-mixing (the same proof in \cref{sec:rapid-mixing} applies). Define the quantum channels for open boundary condition (OBC) 
\begin{equation}
\begin{aligned}    \mathcal{E}_{1,2,\ldots,N}^{(\Eone)\text{OBC}}:=\mathcal{E}^{(\Eone)}_{1,2,\ldots,N}
\end{aligned}
\end{equation}
and periodic boundary condition (PBC),
\begin{equation}
\mathcal{E}_{1,2,\ldots,N}^{(\Eone)\text{PBC}}:=\mathcal{E}^{(\Eone)}_{N}\circ\mathcal{E}^{(\Eone)}_{1,2,\ldots,N},
\end{equation}
where $\mathcal{E}_N^\Eone$ acts on sites $N$ and 1, closing the boundary. 
The global steady state of the parent Lindbladian $\mL$ is, therefore, the fixed point of the channel
\begin{equation}
    \mathcal{E}_{1,2,\ldots,N}^{(\Eone)\text{O/PBC}}(\rho)=\rho,
\end{equation}
depending on the boundary condition. 

However, one cannot expect that $\mathcal{E}=\mathcal{E}^{(\Eone)}$ for $C^*$-weak Hopf algebras in general; see e.g.~\cref{sec:Fib}. Similar to the non-injective simple case, there are degrees of freedom in the choice of $\mE^{(\Etwo)}$. Here, we make a convenient choice of,
\begin{equation}
    \mE^{(\Etwo)}\propto\begin{array}{c}
\begin{tikzpicture}[scale=1]
    \draw[Virtual, -mid] (2.223, 0.564) -- (1.729, 0.564);
    \draw[Virtual] (2.963, 0.564) -- (2.716, 0.564);
    \draw[Virtual] (0.247, 0.564) arc[start angle=90, end angle=270, radius=0.247] -- (2.963, 0.071) arc[start angle=-90, end angle=90, radius=0.247];
    \draw[-mid] (1.482, 0.811) -- (1.482, 1.129);
    \draw[-mid] (2.469, 0.811) -- (2.469, 1.129);
    \draw[Virtual] (1.482, 0.564) -- (1.244, 0.564);
    \draw[bevel, mid-] (1.482, 0.318) arc[start angle=-163.45, end angle=0, radius=0.247] -- (1.976, 1.129);
    \draw[bevel, mid-] (2.469, 0.318) arc[start angle=-163.45, end angle=0, radius=0.247] -- (2.963, 1.129);
    \whaMsymb{(2.469, 0.564)};
    \whaMsymb{(1.482, 0.564)};
    \draw[Virtual, -mid] (1.235, 0.564) -- (0.741, 0.564);
    \filldraw[ultra thin, fill=white] 
    (0.494, 0.564) circle[radius=0.247];
    \node[anchor=center] at (0.494, 0.564) {$\Omega$};
    \haprojector{(2, -0.5)}{$P^\perp$}{1};
    \end{tikzpicture}
    \end{array}
    \label{eqn:choiceE2WHA}
\end{equation}
where the normalization constant can be determined by $\Omega$, 
and $P^\perp$ is a projector onto the orthogonal space of the space generated by two-site MPDO,
\begin{equation}
\begin{array}{c}
\begin{tikzpicture}[scale=1]
\draw[Virtual] (-0.75, -0.818) -- (0.25, -0.818);
\draw[Virtual,mid-] (-1.75, -0.818) -- (-1.25, -0.818);
\draw[Virtual,mid-] (0.75, -0.818) -- (1.5, -0.818);
\haMtensor{(-1,-0.818)}{$M$};
\haMtensor{(0.5,-0.818)}{$M$};
\haprojector{(0,0)}{$P^\perp$}{0};
\end{tikzpicture}
\end{array}=0.
\end{equation}
By construction, $\mathcal{E}^{(\Etwo)}$ satisfies the properties,
\begin{align}
\label{eq:propsEE1E2wha}
    &\mathcal{E}^{(\Etwo)}\circ\mathcal{E}=0\\
    &\mathcal{E}^{(\Eone)}\circ\mathcal{E}^{(\Etwo)}=\mathcal{E}^{(\Etwo)},
\end{align}
and along with the fact that $\mathcal{E}^{(\Eone)}\circ\mathcal{E}^{(\Eone)}=\mathcal{E}^{(\Eone)}$ one can show $\mathcal{E}(X)=X$ if and only if $\mathcal{E}^{(\Eone)}(X)=X$, same as \cref{eqn:fixed-point-on-one}. These properties also allow us to prove the frustration-freeness, where we show the proof in \cref{sec:proof-ff}. Being frustration-free, the global steady state of the parent Lindbladian is, again, the fixed point of the quantum channel, 
\begin{equation}
    \mathcal{E}^{(\Eone)\text{O/PBC}}_{1,2,\ldots,N}(\rho)=\rho.
    \label{eqn:OPBC}
\end{equation}

Let us now characterize the
parent Lindbladian fixed points, equivalently, the solution of \cref{eqn:OPBC}, 
in two different cases, depending on whether we consider open or periodic boundary conditions. As we will see, the degeneracy of the steady state is different for these two boundary conditions.

\subsection{Open boundary conditions} 
For an open finite quantum system of $N$ sites in a one-dimensional interval, labeled $1,\ldots, N$, we consider the local Lindbladian defined by
\begin{equation}
 \mathcal{L}^{(N)} = \sum_{i=1}^{N-1} \mathcal{L}_{i}, \quad \mathcal{L}_{i} := \mathcal{E}_{i}-\bo.
 \end{equation}
 As an immediate consequence of \cref{eq:E1Stab}, we have
\( \mathcal{L}_{i}( \rho^{(N)}(M)_B) = 0 \) for any boundary condition $B$, and consequently, \( \mathcal{L}^{(N)}( \rho^{(N)}(M)_B) = 0\). 
In fact, these are all the solutions of the fixed points of the linear map $\mathcal{E}_{1,2,\ldots,N}^{(\Eone),\text{OBC}}$, and equivalently the steady states of the global Lindbladian. Indeed, by expressing $\mathcal{E}_{1,2,\ldots,N}^{(\Eone),\text{OBC}}$ graphically as a generalization of \cref{eq:defE_3sites}, any steady state $\rho^{(N)}_\infty$ satisfies $\rho^{(N)}_\infty = \mathcal{E}^{(\Eone)}_{1,\ldots, N}(\rho^{(N)}_\infty) = \rho^{(N)}(M)_X$ for some boundary condition $X$.
The number of independent admissible boundary condition $B$ equals the dimension of the $C^*$-weak Hopf algebra $A$, and thus
\begin{equation}
    (\text{SSD})=\text{dim}(A). 
\end{equation}



\subsection{Periodic boundary conditions}
For a finite quantum chain system of $N$ sites, i.e.~with periodic boundary conditions, let
\begin{equation} \mathcal{L}^{(N)} = \sum_{i=1}^{N} \mathcal{L}_{i}, \quad \mathcal{L}_{i} := \mathcal{E}_i-\bo;
\end{equation}
where the sites $1$ and $N+1$ are identified.
For the sake of simplicity, let us note that it is sufficient to evaluate
$\mathcal{E}_{1,2,\ldots,N}^{(\Eone)\text{PBC}}$ with system size $N=2$
\begin{equation}
    \mathcal{E}_{1,2,\ldots,N}^{(\Eone)\text{PBC}}=\mathcal{E}^{(\Eone)}_{2} \circ \mathcal{E}^{(\Eone)}_{1},
\end{equation}
which 
is given by the expression
\begin{equation}
\begin{tikzpicture}[scale=1]
    \draw[Virtual] (0.986, 0.494) arc[start angle=90, end angle=180, x radius=0.494, y radius=-0.494];
    \draw[Virtual, -mid] (1.48, 1.482) -- (0.986, 1.482) arc[start angle=90, end angle=180, radius=0.494];
    \draw[Virtual] (1.868, 0.494) -- (0.986, 0.494);
    \draw[Virtual, -mid] (2.467, 1.482) -- (1.48, 1.482);
    \draw[Virtual, -mid] (1.233, 2.469) arc[start angle=-90, end angle=90, radius=0.494];
    \draw[Virtual] (0.739, 2.469) -- (0.492, 2.469);
    \node[anchor=center, text=Virtual] at (4.382, 2.457) {$\cdots$};
    \node[anchor=center, text=Virtual] at (4.382, 3.449) {$\cdots$};
    \draw[Virtual] (3.208, 0.494) -- (2.115, 0.494);
    \draw[Virtual, -mid] (3.208, 0.494) arc[start angle=-90, end angle=90, radius=0.494] -- (2.961, 1.482);
    \draw[Virtual] (2.475, 2.531) arc[start angle=118.956, end angle=151.044, x radius=0.494, y radius=-0.494];
    \draw[Virtual] (2.475, 3.395) arc[start angle=118.956, end angle=151.044, radius=0.494];
    \filldraw[ultra thin, fill=white] 
    (2.22, 2.963) circle[radius=0.247];
    \filldraw[ultra thin, fill=white] 
    (1.973, 0.494) circle[radius=0.247];
    \filldraw[ultra thin, fill=white] 
    (0.492, 0.988) circle[radius=0.247];
    \filldraw[ultra thin, fill=white] 
    (3.596, 2.469) circle[radius=0.247];
    \node[anchor=center, text=Virtual] at (0.234, 2.458) {$\cdots$};
    \node[anchor=center, text=Virtual] at (0.234, 3.446) {$\cdots$};
    \draw[Virtual] (0.739, 3.457) -- (0.492, 3.457);
    \node[anchor=center] at (0.492, 0.988) {$\Omega$};
    \node[anchor=center] at (3.596, 2.469) {$\Xi$};
    \node[anchor=center] at (2.22, 2.963) {$\Omega$};
    \node[anchor=center] at (1.973, 0.494) {$\Xi$};
    \draw[Virtual] (3.349, 2.469) -- (2.961, 2.469);
    \draw[Virtual] (4.09, 2.469) -- (3.843, 2.469);
    \draw[Virtual] (4.09, 3.457) -- (2.961, 3.457);
    \draw[-mid] (2.714, 1.729) -- (2.714, 2.223);
    \draw[bevel, -mid] (2.724, 1.235) arc[start angle=-163.45, end angle=0, radius=0.247] -- (3.208, 2.223) -- (3.208, 2.646) arc[start angle=0, end angle=163.45, radius=0.247];
    \whaMsymb{(2.714, 1.482)};
    \whaNsymb{(2.714, 2.469)};
    \draw[-mid] (0.986, 1.729) -- (0.986, 2.223);
    \draw[bevel, mid-] (0.996, 1.235) arc[start angle=-163.45, end angle=0, radius=0.247] -- (1.48, 2.223) -- (1.48, 2.646) arc[start angle=0, end angle=163.45, radius=0.247];
    \whaMsymb{(0.986, 1.482)};
    \whaNsymb{(0.986, 2.469)};
    \draw[-mid] (0.986, 3.704) -- (0.986, 3.951);
    \draw[bevel, mid-] (0.996, 3.21) arc[start angle=-163.45, end angle=0, radius=0.247] -- (1.48, 3.951);
    \whaMsymb{(0.986, 3.457)};
    \draw[-mid] (2.714, 3.704) -- (2.714, 3.951);
    \draw[bevel, mid-] (2.724, 3.21) arc[start angle=-163.45, end angle=0, radius=0.247] -- (3.208, 3.951);
    \whaMsymb{(2.714, 3.457)};
    \draw[-mid] (0.986, 0) -- (0.986, 0.247);
    \draw[bevel, -mid] (0.996, 0.741) arc[start angle=-163.45, end angle=0, x radius=0.247, y radius=-0.247] -- (1.48, 0);
    \whaNsymb{(0.986, 0.494)};
    \draw[-mid] (2.714, 0) -- (2.714, 0.247);
    \draw[bevel, -mid] (2.724, 0.741) arc[start angle=-163.45, end angle=0, x radius=0.247, y radius=-0.247] -- (3.208, 0);
    \whaNsymb{(2.714, 0.494)};
\end{tikzpicture}
\end{equation}
where the open left and right indices are identified, respectively, and again as a direct consequence of \cref{eq:BlackVsWhite}, this is equal to
\begin{equation}
\sum_a
\begin{tikzpicture}[scale=1]
    \draw[Virtual] (3.702, 0.494) -- (1.726, 0.494);
    \filldraw[ultra thin, fill=white] 
    (2.361, 0.494) circle[radius=0.247];
    \filldraw[ultra thin, fill=white] 
    (3.949, 0.494) circle[radius=0.247];
    \node[anchor=center] at (2.361, 0.494) {$\Xi$};
    \node[anchor=center] at (3.949, 0.494) {$\Xi$};
    \draw[Virtual, -mid] (4.443, 1.482) -- (3.455, 1.482);
    \draw[Virtual] (3.455, 1.482) -- (1.726, 1.482);
    \draw[Virtual] (4.443, 0.494) -- (4.196, 0.494);
    \node[anchor=center, text=Virtual] at (4.734, 0.482) {$\cdots$};
    \node[anchor=center, text=Virtual] at (4.735, 1.473) {$\cdots$};
    \draw[-mid] (1.48, 1.482) -- (1.48, 1.976);
    \draw[bevel, mid-] (1.49, 1.235) arc[start angle=-163.45, end angle=0, radius=0.247] -- (1.973, 1.976);
    \draw[-mid] (1.48, 0) -- (1.48, 0.247);
    \draw[bevel, -mid] (1.49, 0.741) arc[start angle=-163.45, end angle=0, x radius=0.247, y radius=-0.247] -- (1.973, 0);
    \whaMsymb{(1.48, 1.482)};
    \whaNsymb{(1.48, 0.494)};
    \draw[-mid] (2.961, 1.482) -- (2.961, 1.976);
    \draw[bevel, mid-] (2.971, 1.235) arc[start angle=-163.45, end angle=0, radius=0.247] -- (3.455, 1.976);
    \draw[-mid] (2.961, 0) -- (2.961, 0.247);
    \draw[bevel, -mid] (2.971, 0.741) arc[start angle=-163.45, end angle=0, x radius=0.247, y radius=-0.247] -- (3.455, 0);
    \whaMsymb{(2.961,1.482)};
    \whaNsymb{(2.961, 0.494)};
    \draw[Virtual, mid-] (1.233, 0.494) -- (0.492, 0.494);
    \draw[Virtual, -mid] (1.233, 1.482) -- (0.492, 1.482);
    \node[anchor=center, text=Virtual] at (0.234, 0.483) {$\cdots$};
    \node[anchor=center, text=Virtual] at (0.234, 1.471) {$\cdots$};
    \node[anchor=center, font=\footnotesize, text=Virtual] at (0.856, 1.621) {$a$};
    \node[anchor=center, font=\footnotesize, text=Virtual] at (0.857, 0.633) {$a$};
\end{tikzpicture}~.
\label{eqn:E1PBC}
\end{equation}

Let $B$ be central with respect to $M$ (namely, $B$ is proportional to the identity matrix in each sector $a$) and let us prove that the previous quantum channel $\mathcal{E}_{1,2}^{(\Eone)\text{PBC}}$ leaves $\rho^{(2)}(M)_B$ invariant:
\begin{equation}
\begin{aligned}
&\mathcal{E}_{1,2}^{(\Eone)\text{PBC}}(\rho^{(2)}(M)_B)=\\
&\sum_a
\begin{tikzpicture}[scale=1]
    \draw[Virtual] (3.702, 1.729) -- (1.726, 1.729);
    \filldraw[ultra thin, fill=white] 
    (2.361, 1.729) circle[radius=0.247];
    \filldraw[ultra thin, fill=white] 
    (3.949, 1.729) circle[radius=0.247];
    \node[anchor=center] at (2.361, 1.729) {$\Xi$};
    \node[anchor=center] at (3.949, 1.729) {$\Xi$};
    \draw[Virtual, -mid] (4.443, 2.716) -- (3.455, 2.716);
    \draw[Virtual] (3.455, 2.716) -- (1.726, 2.716);
    \draw[Virtual] (4.443, 1.729) -- (4.196, 1.729);
    \node[anchor=center, text=Virtual] at (4.734, 1.716) {$\cdots$};
    \node[anchor=center, text=Virtual] at (4.735, 2.708) {$\cdots$};
    \draw[-mid] (1.48, 2.716) -- (1.48, 3.21);
    \draw[bevel, mid-] (1.49, 2.47) arc[start angle=-163.45, end angle=0, radius=0.247] -- (1.973, 3.21);
    \draw[-mid] (1.48, 0.988) -- (1.48, 1.482);
    \draw[-mid] (2.961, 2.716) -- (2.961, 3.21);
    \draw[bevel, mid-] (2.971, 2.47) arc[start angle=-163.45, end angle=0, radius=0.247] -- (3.455, 3.21);
    \draw[-mid] (2.961, 0.988) -- (2.961, 1.482);
    \whaMsymb{(2.961, 2.716)};
    \draw[Virtual, mid-] (1.233, 1.729) -- (0.492, 1.729);
    \draw[Virtual, -mid] (1.233, 2.716) -- (0.492, 2.716);
    \node[anchor=center, text=Virtual] at (0.234, 1.718) {$\cdots$};
    \node[anchor=center, text=Virtual] at (0.234, 2.706) {$\cdots$};
    \node[anchor=center, font=\footnotesize, text=Virtual] at (0.856, 2.856) {$a$};
    \node[anchor=center, font=\footnotesize, text=Virtual] at (0.857, 1.868) {$a$};
    \draw[Virtual] (4.196, 0.741) -- (0.986, 0.741);
    \draw[bevel, -mid] (1.49, 1.975) arc[start angle=-163.45, end angle=0, x radius=0.247, y radius=-0.247] -- (1.973, 1.235) -- (1.973, 0.564) arc[start angle=0, end angle=163.45, x radius=0.247, y radius=-0.247];
    \draw[bevel, -mid] (2.971, 1.975) arc[start angle=-163.45, end angle=0, x radius=0.247, y radius=-0.247] -- (3.455, 1.235) -- (3.455, 0.564) arc[start angle=0, end angle=163.45, x radius=0.247, y radius=-0.247];
    \filldraw[ultra thin, fill=white] 
    (0.862, 0.741) circle[radius=0.247];
    \node[anchor=center] at (0.862, 0.741) {$B$};
    \draw[Virtual] (0.615, 0.741) arc[start angle=90, end angle=270, radius=0.37];
    \draw[Virtual] (4.196, 0) -- (0.615, 0);
    \draw[Virtual] (4.196, 0) arc[start angle=90, end angle=270, radius=-0.37];
    \whaNsymb{(1.48, 1.729)};
    \whaMsymb{(1.48, 2.716)};
    \whaMsymb{(2.961, 0.741)};
    \whaMsymb{(1.48, 0.741)};
    \whaNsymb{(2.961, 1.729)};
\end{tikzpicture}
\end{aligned}
\end{equation}
which, by virtue of \cref{eq:MvsNXi}, equals
\begin{equation}
\label{eqn:pbc-action}
\begin{aligned}
\\=&
\sum_a
\begin{tikzpicture}[scale=1]
    \draw[Virtual] (4.443, 1.235) -- (3.455, 1.235);
    \draw[Virtual] (3.455, 1.235) -- (1.726, 1.235);
    \draw[Virtual] (4.443, 0.247) -- (4.196, 0.247);
    \node[anchor=center, text=Virtual] at (4.734, 0.235) {$\cdots$};
    \node[anchor=center, text=Virtual] at (4.735, 1.226) {$\cdots$};
    \draw[-mid] (1.48, 1.482) -- (1.48, 1.729);
    \draw[bevel, mid-] (1.49, 0.988) arc[start angle=-163.45, end angle=0, radius=0.247] -- (1.973, 1.729);
    \whaMsymb{(1.48, 1.235)};
    \draw[-mid] (2.961, 1.482) -- (2.961, 1.729);
    \draw[bevel, mid-] (2.971, 0.988) arc[start angle=-163.45, end angle=0, radius=0.247] -- (3.455, 1.729);
    \whaMsymb{(2.961, 1.235)};
    \draw[Virtual] (2.22, 0.247) -- (3.702, 0.247);
    \draw[Virtual, -mid] (1.233, 1.235) -- (0.492, 1.235);
    \node[anchor=center, text=Virtual] at (0.234, 1.224) {$\cdots$};
    \node[anchor=center, font=\footnotesize, text=Virtual] at (0.856, 1.374) {$a$};
    \draw[Virtual, mid-] (1.233, 0.247) -- (0.492, 0.247);
    \node[anchor=center, font=\footnotesize, text=Virtual] at (0.856, 0.387) {$a$};
    \node[anchor=center, text=Virtual] at (0.234, 0.236) {$\cdots$};
    \filldraw[ultra thin, fill=white] (1.48, 0) arc[start angle=90, end angle=270, x radius=0.247, y radius=-0.247] -- (1.973, 0.494) arc[start angle=-90, end angle=90, x radius=0.247, y radius=-0.247] -- cycle;
    \node[anchor=center] at (1.726, 0.247) {$\Omega^{-1}$};
    \filldraw[ultra thin, fill=white] 
    (2.591, 0.247) circle[radius=0.247];
    \node[anchor=center] at (2.591, 0.247) {$B$};
    \filldraw[ultra thin, fill=white] 
    (3.949, 0.247) circle[radius=0.247];
    \node[anchor=center] at (3.949, 0.247) {$\Xi$};
\end{tikzpicture}
\end{aligned}
\end{equation}
and since both $B$ and $\Omega^{-1}$ are central with respect to $M$,
i.e.~proportional to the identity in each sector,
\begin{equation}
\begin{aligned}
\\=&
\sum_a
\begin{tikzpicture}[scale=1]
    \filldraw[ultra thin, fill=white] 
    (5.184, 0.247) circle[radius=0.247];
    \node[anchor=center] at (5.184, 0.247) {$\Xi$};
    \draw[Virtual] (5.678, 1.235) -- (4.69, 1.235);
    \draw[Virtual] (5.678, 0.247) -- (5.431, 0.247);
    \node[anchor=center, text=Virtual] at (5.969, 0.235) {$\cdots$};
    \node[anchor=center, text=Virtual] at (5.97, 1.226) {$\cdots$};
    \draw[Virtual] (1.233, 0.247) -- (4.937, 0.247);
    \filldraw[ultra thin, fill=white] 
    (1.48, 1.235) circle[radius=0.247];
    \node[anchor=center] at (1.48, 1.235) {$B$};
    \draw[Virtual] (2.467, 1.235) -- (1.726, 1.235);
    \draw[Virtual, -mid] (1.233, 1.235) -- (0.492, 1.235);
    \node[anchor=center, text=Virtual] at (0.234, 1.224) {$\cdots$};
    \node[anchor=center, font=\footnotesize, text=Virtual] at (0.856, 1.374) {$a$};
    \node[anchor=center, text=Virtual] at (0.234, 0.236) {$\cdots$};
    \node[anchor=center, font=\footnotesize, text=Virtual] at (0.856, 0.387) {$a$};
    \draw[Virtual, mid-] (1.233, 0.247) -- (0.492, 0.247);
    \draw[Virtual] (5.184, 1.235) -- (2.467, 1.235);
    \draw[-mid] (3.208, 1.235) -- (3.208, 1.729);
    \draw[bevel, mid-] (3.218, 0.988) arc[start angle=-163.45, end angle=0, radius=0.247] -- (3.702, 1.729);
    \whaMsymb{(3.208, 1.235)};
    \draw[-mid] (4.69, 1.235) -- (4.69, 1.729);
    \draw[bevel, mid-] (4.7, 0.988) arc[start angle=-163.45, end angle=0, radius=0.247] -- (5.184, 1.729);
    \whaMsymb{(4.69, 1.235)};
    \filldraw[ultra thin, fill=white] (2.079, 0.988) arc[start angle=90, end angle=270, x radius=0.247, y radius=-0.247] -- (2.573, 1.482) arc[start angle=-90, end angle=90, x radius=0.247, y radius=-0.247] -- cycle;
    \node[anchor=center] at (2.326, 1.235) {$\Omega^{-1}$};
\end{tikzpicture}
\end{aligned}
\end{equation}
finally, as a consequence of \cref{eq:XiOmega},
\begin{equation}
\begin{aligned}
\\=&
\begin{tikzpicture}[scale=1]
    \draw[Virtual] (4.937, 0.423) -- (3.949, 0.423);
    \draw[Virtual] (3.949, 0.423) -- (2.22, 0.423);
    \node[anchor=center, text=Virtual] at (5.229, 0.415) {$\cdots$};
    \draw[-mid] (1.973, 0.423) -- (1.973, 0.917);
    \draw[bevel, mid-] (1.984, 0.177) arc[start angle=-163.45, end angle=0, radius=0.247] -- (2.467, 0.917);
    \whaMsymb{(1.973, 0.423)};
    \draw[-mid] (3.455, 0.423) -- (3.455, 0.917);
    \draw[bevel, mid-] (3.465, 0.177) arc[start angle=-163.45, end angle=0, radius=0.247] -- (3.949, 0.917);
    \whaMsymb{(3.455, 0.423)};
    \draw[Virtual, -mid] (0.986, 0.423) -- (0.492, 0.423);
    \node[anchor=center, text=Virtual] at (0.234, 0.412) {$\cdots$};
    \filldraw[ultra thin, fill=white] 
    (1.232, 0.423) circle[radius=0.247];
    \node[anchor=center] at (1.232, 0.423) {$B$};
    \draw[Virtual] (1.726, 0.423) -- (1.479, 0.423);
\end{tikzpicture}=\rho^{(2)}(M)_B
\end{aligned}
\end{equation}
as we wanted to prove. 
Therefore $\mathcal{L}^{(2)}(\rho^{(2)}(M)_B) = 0$ for any boundary weight $B$ which is central with respect to $M$. 

This can be easily generalized to arbitrary $N\geq 2$ to conclude that $\rho^{(N)}(M)_B$ with a central $B$ is a steady state of $\mL^{(N)}$; and there are $g$ independent choices of $B$ that are central, where $g$ is the number of BNT. In fact, these are all the steady states of $\mL$ by noting that $\mathcal{E}_{1,2,\ldots,N}^{(\Eone)\text{PBC}}$ can be expressed as a generalization of \cref{eqn:E1PBC}, which is a summation over BNT elements $a$, and each summand is tracing and replacing. To summarize, when taking the periodic boundary condition, the steady-state degeneracy of $\mL$ is
\begin{equation}
    (\text{SSD})=g.
\end{equation}

Moreover, it can be proven that the minimal steady-state degeneracy is $g$ when the $C^*$-weak Hopf algebra is biconnected, meaning there is a unique vacuum sector $e$. In this case, the transfer matrix for a horizontal BNT element $E_a=0$ unless $a=e$. In other words, all BNT elements are simple except the one associated with the vacuum sector. Biconnected $C^*$-weak Hopf algebras include unitary fusion categories, which are the cases of primary interest (see precise definition in \cref{appendix:WHA}).

\begin{thm}
\label{thm:minimal-SSD-WHA}
    Consider a non-simple MPDO $\rho^{(N)}(M)$ generated by RFP tensor $M$ with $g$ BNT elements in the horizontal canonical form, constructed from a biconnected $C^*$-weak Hopf algebra.  A local Lindbladian $\mL^{(N)}$ that has $\rho^{(N)}(M)$ as its steady state must have at least $g$ independent steady states. 
\end{thm}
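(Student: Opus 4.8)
The plan is to transcribe the proof of \cref{thm:minimal-SSD} into the biconnected $C^*$-weak Hopf algebra setting, the role of ``distinct simple BNT elements are supported on orthogonal physical subspaces'' being played now by a nilpotency statement for mixed transfer matrices. First I would vectorize: write $|\rho(M)\rangle\!\rangle$ for the vectorized MPDO and, for an arbitrary local Lindbladian $\mathcal L^{(N)}$ with $\mathcal L^{(N)}(\rho^{(N)}(M))=0$, set $H_{\mathcal L}=\hat{\mathcal L}^{\dg}\hat{\mathcal L}\ge 0$. Exactly as in \cref{thm:minimal-SSD}, $H_{\mathcal L}$ is $k$-local (each summand $\hat{\mathcal L}_i^{\dg}\hat{\mathcal L}_j$ is supported on a window of size $\le \ell$, with $\ell$ independent of $N$) and $\langle\!\langle\rho(M)|H_{\mathcal L}|\rho(M)\rangle\!\rangle=0$, i.e.\ $\hat{\mathcal L}|\rho(M)\rangle\!\rangle=0$. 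The goal is to produce $g$ linearly independent vectors in $\ker\hat{\mathcal L}$, which will be realized by honest density matrices.

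Next I would use the horizontal canonical form $|\rho(M)\rangle\!\rangle=\sum_{a=1}^{g}|\rho(M_a)\rangle\!\rangle$ with linearly independent summands, and biconnectedness: there is a unique vacuum sector $e$, and every BNT element $M_a$ with $a\ne e$ is nilpotent ($E_a=0$). The key lemma I would establish is that for $a\ne b$ the mixed transfer matrix $\mathcal E_{a,b}=\sum_{i,i'}M_a^{ii'}\otimes\overline{M_b^{ii'}}$ is nilpotent — by biconnectedness at least one of $M_a,M_b$ is nilpotent whenever $a\ne b$, and this is precisely what should force $\mathcal E_{a,b}$ (rather than merely having spectral radius $<1$) to be nilpotent. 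Granting this, tracing out a run of $m$ consecutive sites of $|\rho(M_a)\rangle\!\rangle\langle\!\langle\rho(M_b)|$ inserts $\mathcal E_{a,b}^{\,m}$ and so gives $0$ once $m$ exceeds the (uniformly bounded) nilpotency index.

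Then, for each $a_0\ne e$, form $|\rho^{(a_0)}\rangle\!\rangle:=|\rho(M)\rangle\!\rangle-2|\rho(M_{a_0})\rangle\!\rangle$. The operators $|\rho^{(a_0)}\rangle\!\rangle\langle\!\langle\rho^{(a_0)}|$ and $|\rho(M)\rangle\!\rangle\langle\!\langle\rho(M)|$ have the same diagonal $(a,a)$ terms (coefficient $(\pm 1)^2=1$) and differ only in cross terms with exactly one index equal to $a_0$ — hence with $a\ne b$ — which are killed by $\tr_{\bar A}$ whenever $\bar A$ contains a long enough run; so, for any window $A$ with $|A|\le\ell$ and $N$ large,
\[
\tr_{\bar A}\!\bigl(|\rho^{(a_0)}\rangle\!\rangle\langle\!\langle\rho^{(a_0)}|\bigr)=\tr_{\bar A}\!\bigl(|\rho(M)\rangle\!\rangle\langle\!\langle\rho(M)|\bigr).
\]
Since every term of $H_{\mathcal L}$ is supported on such an $A$, it cannot distinguish $|\rho^{(a_0)}\rangle\!\rangle$ from $|\rho(M)\rangle\!\rangle$, so $\langle\!\langle\rho^{(a_0)}|H_{\mathcal L}|\rho^{(a_0)}\rangle\!\rangle=0$ and $\hat{\mathcal L}|\rho^{(a_0)}\rangle\!\rangle=0$. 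The $g$ vectors $\{|\rho(M)\rangle\!\rangle\}\cup\{|\rho^{(a_0)}\rangle\!\rangle:a_0\ne e\}$ span $\mathrm{span}\{|\rho(M_a)\rangle\!\rangle\}$, which is $g$-dimensional, so $\ker\mathcal L^{(N)}$ has dimension at least $g$; taking suitable combinations, each $\rho^{(N)}(M_a)$, and equivalently each of the $g$ independent central-$B$ MPDOs $\rho^{(N)}(M)_B$ of \cref{eqn:E1PBC} (which are genuine density matrices), lies in $\ker\mathcal L^{(N)}$, giving $g$ independent steady states.

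The main obstacle is the key lemma. Concretely I expect it to reduce to showing that all power traces $\tr(\mathcal E_{a,b}^{\,m})=\langle\!\langle\rho^{(m)}(M_b)|\rho^{(m)}(M_a)\rangle\!\rangle$ vanish for $a\ne b$ and $m$ large — i.e.\ that distinct BNT elements generate exactly (not just asymptotically) orthogonal states once at least one is nilpotent — and checking that this interacts correctly with the auxiliary $\mu_a$ factors so that the diagonal coefficients above are genuinely uniform in $a$. This is exactly where biconnectedness is indispensable: without a unique vacuum, two non-nilpotent BNT elements would produce cross terms that decay only exponentially in $N$ rather than vanish identically, breaking the ``cannot distinguish'' step. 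Everything else is a direct adaptation of \cref{thm:minimal-SSD}.
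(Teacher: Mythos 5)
Your overall architecture (vectorize, form $H_{\mathcal L}=\hat{\mathcal L}^\dagger\hat{\mathcal L}$, and show that flipping the sign of one BNT summand is invisible to every $k$-local term $\hat{\mathcal L}_i^\dagger\hat{\mathcal L}_j$) would work if your key lemma held, but the lemma is both unproven and supported by an invalid heuristic, and it is precisely the step the paper's own proof avoids. You assert that nilpotency of one of $M_a,M_b$ ``should force'' the mixed doubled transfer matrix $\mathcal E_{a,b}=\sum_{i,i'}M_a^{ii'}\otimes\overline{M_b^{ii'}}$ to be nilpotent. But the single-layer transfer matrix $E_b=\sum_i M_b^{ii}$ being nilpotent (or even exactly zero) carries no information about the doubled object: the diagonal doubled transfer matrix $\mathcal E_{b,b}$ of a nilpotent block is \emph{never} nilpotent, since $\tr(\mathcal E_{b,b}^{\,m})=\|\,|\rho^{(m)}(M_b)\rangle\!\rangle\|^2>0$ for a normal tensor (e.g.\ the block generating $\sigma_z^{\otimes N}$ in \cref{exmp:ToricCodeBd} has $E_b=0$ but $\mathcal E_{b,b}=1/2$). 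So whatever makes $\mathcal E_{a,b}$ nilpotent for $a\ne b$ must be a genuine cross-sector orthogonality statement --- essentially Schur orthogonality of the distinct irreducible blocks of $\Psi$ in the weak-Hopf construction, giving $\langle\!\langle\rho^{(m)}(M_b)|\rho^{(m)}(M_a)\rangle\!\rangle=0$ exactly --- and nothing in your proposal establishes it. Since every nontrivial cancellation in your argument (the vanishing of the off-diagonal terms of $\tr_{\bar A}(|\rho'\rangle\!\rangle\langle\!\langle\rho'|)-\tr_{\bar A}(|\rho\rangle\!\rangle\langle\!\langle\rho|)$) rests on this lemma, the proof is incomplete at its central point.

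The paper's proof takes a different and shorter route that never touches doubled or mixed transfer matrices. It works with the operators themselves: biconnectedness gives $E_\hbnt=0$ exactly for every non-vacuum sector, so tracing out even a single site annihilates $\rho^{(N)}(M_\hbnt)$ for $\hbnt\ne e$. Hence $\rho^{(N)}(M)$ and $\rho^{(N)}(M)-2\rho^{(N)}(M_\hbnt)$ have identical reduced density operators on every proper subregion and cannot be distinguished by any local Lindbladian term, which directly yields the $g-1$ additional steady operators. If you want to salvage your vectorized route, the missing ingredient is exactly the identity $\tr\bigl(\rho^{(m)}(M_b)^\dagger\rho^{(m)}(M_a)\bigr)=0$ for $a\ne b$ and all $m$, and it must be derived from the block structure $\Psi=\bigoplus_a\Psi_a^{\oplus n_a}$ of the virtual representation, not from nilpotency of $E_b$.
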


\begin{proof}
    Using the horizontal canonical form, one can decompose the state into $\rho^{(N)}(M)=\sum_{\hbnt=1}^g \rho^{(N)}(M_\hbnt)$. Let's construct an operator 
    \begin{equation}
        \begin{aligned}
            &\rho^{'(N)}(M)=\rho^{(N)}(M)-2\rho^{(N)}(M_1)\\
            &\quad =-\rho^{(N)}(M_1)+\rho^{(N)}(M_2)+\cdots+\rho^{(N)}(M_g), 
        \end{aligned}
    \end{equation}
    where $\hbnt=1$ is not the vacuum sector $e$. Since $E_\hbnt=0$ for all non-vacuum sectors, the reduced density matrix for $\rho^{(N)}(M_\hbnt)$ vanishes when $\hbnt\neq e$. 
    Therefore, when tracing out one site, the reduced density matrix for $\rho^{(N)}(M)$ and $\rho^{'(N)}(M)$ are the same and equal to the reduced density matrix for $\rho^{(N)}(M_e)$. Any term $\mL_i$ in a local Lindbladian $\mL$ cannot distinguish $\rho^{(N)}(M)$ and $\rho^{'(N)}(M)$; if $\rho^{(N)}(M)$ is a steady state of $\mL$, then $\rho^{'(N)}(M)$ must also be a steady state. Finally, there are $(g-1)$ independent ways to construct $\rho^{(N)}(M)-2\rho^{(N)}(M_\hbnt)$, $\hbnt\neq e$. We conclude that there are at least $g$ steady states of a local Lindbladian that has $\rho^{(N)}(M)$ as its steady state.  
\end{proof}



Therefore, under the periodic boundary condition for $\mL$, our constructed parent Lindbladian reaches the minimal steady-state degeneracy. 

Finally, we would like to point out the implications of the above results for the particular case of $C^*$-Hopf algebras.

\begin{thm}
    An MPDO RFP generated by $C^*$-Hopf algebras exhibits only short-range correlations and admits a commuting local rapid-mixing parent Lindbladian with minimal degenerate steady states under periodic boundary conditions. The minimal degeneracy equals $g$, the number of BNT elements in the horizontal canonical form of the MPDO tensor, which is larger than 1. 
\end{thm}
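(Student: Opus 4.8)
The plan is to assemble the statement from ingredients already in place, after recording the two structural facts that set $C^*$-Hopf algebras apart from general $C^*$-weak Hopf algebras. First I would check that any $C^*$-Hopf algebra $A$ with $\dim A>1$ is \emph{biconnected}: its counit $\varepsilon:A\to\mathbb C$ is a unital $*$-homomorphism, hence defines a one-dimensional representation that is the unique vacuum sector $e$; equivalently, in the horizontal canonical form every BNT element $M_\hbnt$ with $\hbnt\neq e$ is nilpotent, so its transfer matrix vanishes, $E_\hbnt=0$. Second, I would invoke the fact recorded in \cref{subsubsec:nonSimpleRfpWha} (and proved in \cref{appendix:WHA}) that for $C^*$-Hopf algebras the coarse-graining channel $\mathcal S$ is already trace-preserving, so $\mathcal S^{(\Etwo)}\equiv 0$ and therefore $\mathcal E=\mathcal T\circ\mathcal S=\mathcal E^{(\Eone)}$.

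Given these two inputs the dynamical part is immediate. Since $\mathcal E_i=\mathcal E_i^{(\Eone)}$ and the maps $\mathcal E_i^{(\Eone)}$ were shown to commute (the diagrammatic computation leading to \cref{eq:defE_3sites}), the parent Lindbladian $\mathcal L^{(N)}=\sum_i(\mathcal E_i-\bo)$ satisfies $[\mathcal L_i,\mathcal L_j]=0$ and is geometrically local. The commuting lemma then gives frustration-freeness, and, since each $\mathcal L_i$ has no purely imaginary eigenvalues, the rapid-mixing lemma yields $\|e^{\mathcal L t}-P\|_\lozenge\le N e^{-\gamma t}$ for the projector $P$ onto the steady-state space, hence convergence to the steady-state manifold with vanishing error at time $t=O(\mathrm{poly}\log N)$.

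Next I would pin down the steady states under periodic boundary conditions. Frustration-freeness reduces $\mathcal L^{(N)}(\rho)=0$ to $\mathcal E^{(\Eone)\mathrm{PBC}}_{1,\ldots,N}(\rho)=\rho$, and the evaluation of $\mathcal E^{(\Eone)\mathrm{PBC}}_{1,2}$ in \cref{eqn:E1PBC} --- generalized to arbitrary $N$ --- exhibits this channel as a direct sum over BNT blocks $a$ of trace-and-replace maps, whose fixed points are exactly the $g$ MPDOs $\rho^{(N)}(M)_B$ with $B$ central with respect to $M$, so $(\mathrm{SSD})=g$. Minimality is then \cref{thm:minimal-SSD-WHA}, whose biconnectedness hypothesis was verified in the first step. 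Finally $g\ge 2$: a $C^*$-Hopf algebra with $\dim A>1$ carries, besides the one-dimensional vacuum block $e$, at least one further (nilpotent) BNT block, so $g>1$.

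For the short-range correlations, I would argue exactly as in the discussion of nilpotent BNT elements below the definition of \emph{simple}: the blocks $M_\hbnt$ with $\hbnt\neq e$ contribute nothing to $\langle O_i O_j\rangle$, $\langle O_i\rangle$, or $\langle O_j\rangle$ once $|i-j|>R=D$, so every such connected two-point function reduces to the one computed in the vacuum block alone; and since $M_e$ is a normal tensor at an RFP, its transfer matrix obeys $E_e^2=E_e$ while having a unique eigenvalue of largest modulus, which forces $E_e$ to be rank one and hence $\langle O_i O_j\rangle=\langle O_i\rangle\langle O_j\rangle$ at large separation. (Alternatively one may invoke the finite light cone of the commuting evolution $e^{\mathcal L t}$, as in the no-go proof of \cref{sec:simple-no-go}.) The real work, I expect, is the first paragraph --- establishing biconnectedness of $C^*$-Hopf algebras and the identity $\mathcal S^{(\Etwo)}\equiv 0$ at the required level of rigor --- since everything downstream is bookkeeping on the lemmas and the periodic-boundary computation already carried out.
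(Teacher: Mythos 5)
Your proposal is correct and follows essentially the same route as the paper, which likewise assembles this theorem from the fact that $\mathcal{S}^{(\Etwo)}\equiv 0$ for $C^*$-Hopf algebras (so $\mathcal{E}=\mathcal{E}^{(\Eone)}$ and the local terms commute), the rapid-mixing lemma for commuting Lindbladians, the periodic-boundary trace-and-replace computation giving $(\mathrm{SSD})=g$, and \cref{thm:minimal-SSD-WHA} for minimality; your explicit verification of biconnectedness via the one-dimensional counit representation and of $g>1$ fills in steps the paper leaves implicit. The one soft spot is deducing that $E_e$ is rank one from ``normality plus $E_e^2=E_e$'' (normality controls the doubled MPS transfer matrix, not $\tr_{\mathrm{phys}}M_e$; the paper instead gets rank one from $E=\Psi(\omega)$ with a one-dimensional vacuum sector), but since you also offer the finite-light-cone argument — which is the paper's stated primary justification for short-range correlations — the proof stands as written.
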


The fact that this class of RFPs only exhibits short-range correlations is a direct consequence of the existence of rapid-mixing local parent Lindbladian~\cite{bravyi2006lieb,poulin2010lieb}. It can also be understood by noting that the transfer matrix $E=\sum_\hbnt E_\hbnt = E_e$ is a rank-one matrix. This demonstrates that steady-state degeneracy does not necessarily imply the existence of long-range correlations for MPDOs. Moreover, it shows that steady-state degeneracy alone cannot classify mixed-state quantum phases. For example, MPDO RFPs generated by $C^*$-Hopf algebras and non-injective simple RFPs may share the same steady-state degeneracy, yet they belong to different phases. This is demonstrated in the following example.

\begin{exmp}[Boundary state of the Toric Code]
\label{exm:CZ2-parent-Lindbladian}
As a continuation of \cref{ex:CZ2_M_N}, let us note that the local quantum channel acting on two sites is given by the explicit formula
    \begin{equation}
        \mathcal{E}(X)=
        \frac{1}{4}\left(\tr(X)\bo_2^{\otimes 2}+\tr(\sigma_z^{\otimes 2} X) \sigma_z^{\otimes 2}\right).
    \end{equation}
    The corresponding local Lindbladian is rapid-mi\-xing. It is easy to verify that it brings the product state $|0\rangle\langle 0|^{\otimes N}$ to $(\bo^{\otimes N}+\sigma_z^{\otimes N})/2^N$ and the product states $|0\rangle\langle 0|^{\otimes (N-1)}\otimes |1\rangle\langle 1|$ to $(\bo^{\otimes N}-\sigma_z^{\otimes N})/2^N$. The steady states of the global Lindbladian take the form of
    \begin{equation}
        \rho^{(N)}_\infty=\frac{1}{2^N} \bo_2^{\otimes N} + c \sigma_z^{\otimes N}
    \end{equation}
    with $c\in(-2^{-N},2^{-N})$. 
    There are two independent steady states, independent of the boundary condition. Indeed, $g=\text{dim}(A)=2$ for $A=\mathbb{C}\mathbb{Z}_2$. However, unlike \cref{ex:pLGHZ}, the coefficient of $\bo_2^{\otimes N}$ is fixed to $1/2^N$ for $\rho^{(N)}_\infty$ to be a valid density matrix. This is a general feature for non-simple MPDO. 
    This state is in the trivial phase because the parent Lindbladian drives a product state fast to it, and therefore, it is in a different phase from the classical GHZ state. 
\end{exmp}

We provide a more complicated example for the parent Lindbladian construction in \cref{sec:Fib} for the Fibonacci $C^*$-weak Hopf algebra $A=\text{Fib}$. In this example, $g=2$ while $\text{dim}(A)=13$. 

\subsection{Beyond $C^*$-weak Hopf algera: a case study}
\label{sec:beyond}

Having reached general conclusions for non-simple RFPs constructed from $C^*$-weak Hopf algebras, in this final section, we present a case study of a non-simple RFP beyond the framework of $C^*$-weak Hopf algebras~\cite{lessa2024mixed,liu2025trading}. The density matrix under concern is the CZX density matrix associated with a nontrivial 3-cocycle. 

\begin{exmp}[CZX density matrix]
Consider the density matrix of even system size $N$~\cite{lessa2024mixed}
    \begin{equation}
    \rho_{CZX}^{(N)} = \frac{1}{2^N}(\bo_2^{\otimes N} + U^{(N)}_{CZX}),
\end{equation}
where $U^{(N)}_{CZX}$ is a nontrivial representation of $\mathbb{Z}_2$ group element associated with a nontrivial 3-cocycle,
\begin{equation}
    U^{(N)}_{CZX} = \prod_{i=1}^N CZ_{i,i+1} \prod_{i=1}^N (\sigma_x)_i.
\end{equation}
This density matrix can be represented by an MPDO tensor $\tilde{M}$ with $d=2,D=3$, 
\begin{equation}
\begin{aligned}
    \tilde{M}^{00}&=\frac{1}{2}\begin{pmatrix}
        1 & 0 & 0\\
        0 & 0 & 0\\
        0 & 0 & 0
    \end{pmatrix}, \quad
    \tilde{M}^{11}=\frac{1}{2}\begin{pmatrix}
        1 & 0 & 0\\
        0 & 0 & 0\\
        0 & 0 & 0
    \end{pmatrix}\\
    \tilde{M}^{01}&=\frac{1}{2}\begin{pmatrix}
        0 & 0 & 0\\
        0 & 1 & 1\\
        0 & 0 & 0
    \end{pmatrix},\quad 
     \tilde{M}^{10}=\frac{1}{2}\begin{pmatrix}
        0 & 0 & 0\\
        0 & 0 & 0\\
        0 & 1 & -1
    \end{pmatrix}. 
\end{aligned}
\end{equation}
This is an example of a non-simple RFP with two elements in the horizontal BNT, the first with bond dimension $D=1$ generating $\bo_2^{\otimes N}$ and the second with $D=2$ generating $U_{CZX}^{(N)}$. After tracing any single site, the MPDO is proportional to a maximally mixed state and does not have long-range correlations. Nevertheless, $\rho_{CZX}^{(N)}$ is in the non-trivial phase due to multipartite entanglement~\cite{lessa2024mixed}.
\end{exmp}

One can prove that this MPDO is an RFP after blocking two sites by showing the vertical canonical form verifies~\cref{thm:non-simple} (see details in~\cref{app:CZX}). However, this RFP lacks the $C^*$-weak Hopf algebra structure, as carefully analyzed in \cite{liu2025trading}. In particular, the counit is absent in $A$ and $A$ is not cosemisimple, therefore the previous constructions of renormalization channels $\mT$ and $\mS$ from $C^*$-weak Hopf algebra do not apply. Nevertheless, one can still construct the renormalization channels from the vertical canonical form, as described below.  

\begin{exmp}[Parent Lindbladian for the CZX density matrix]
\label{ex:CZX-ex}
    Using the vertical canonical form of $M$ and $K$, one can construct the channel $\mT$ and $\mS$ as follows. 
    Denote $M$ as the two-site tensor (blocking two sites of $\tilde{M}$) and $K$ as the four-site tensor (blocking four sites of $\tilde{M}$) with vertical canonical form as in~\cref{eqn:Mvertical,eqn:Kvertical}, 
    \begin{align}
    &U M_{(\alpha\beta)} U^\dg=\bigoplus_{\lambda=1}^2 \mu_\lambda \otimes M_{(\alpha\beta),\lambda}\\
    &V K_{(\alpha\beta)} V^\dg=\bigoplus_{\lambda=1}^2 \nu_\lambda \otimes M_{(\alpha\beta),\lambda},
\end{align}
    that verifies the conditions in~\cref{thm:non-simple}, $m_\lambda:=\tr[\mu_\lambda]=n_\lambda:=\tr[\nu_\lambda]$. There are two elements in the vertical BNT, each with physical dimension $d_\lambda=2$, and both $\mu_1,\mu_2$ are one-dimensional, both $\nu_1$ and $\nu_2$ are four-dimensional. Using the vertical canonical form, the channel $\mT$ bringing $M$ to $K$ is
\begin{equation}
\begin{aligned}
    &\mT(X)=V^\dg\left[\bigoplus_{\lambda=1}^2(\frac{\nu_\lambda}{m_\lambda}\otimes P_\lambda UXU^\dg P_\lambda^\dg)\right]V
\end{aligned}
\end{equation}
where 
\begin{equation}
    P_1 = \left(\begin{array}{c|c}
        \bo_2 & 0_{2\times 2} 
    \end{array}\right),\quad 
    P_2 = \left(\begin{array}{c|c}
        0_{2\times 2}  & \bo_2
    \end{array}\right)
\end{equation}
are the isometries projecting to the first and second BNT, respectively.

Similarly, the four-site MPDO $K$ to two-site MPDO $M$ channel $\mS$ is
\begin{equation}
    \begin{aligned}
        &\mS(X)=U^\dg \left[\bigoplus_{\lambda=1}^2\tr_1(P_\lambda' V X V^\dg (P_\lambda')^\dg) \right]U
    \end{aligned}
\end{equation}
where 
\begin{equation}
    P_1' = \left(\begin{array}{c|c}
        \bo_8 & 0_{8\times 8} 
    \end{array}\right),\quad 
    P_2' = \left(\begin{array}{c|c}
        0_{8\times 8}  & \bo_8
    \end{array}\right)
\end{equation}
and $\tr_1$ denotes tracing over the multiplicity degrees of freedom. The explicit forms of $M_\lambda,\mu_\lambda,\nu_\lambda,U,V$ are given in~\cref{app:CZX}. Knowing the renormalization channels $\mT,\mS$, the parent Lindbladian is constructed by the standard algorithm as in~\cref{sec:general-comment}. 

This parent Lindbladian is commuting and therefore rapid-mixing. The steady states of the global Lindbladian take the form of
\begin{equation}
    \rho_\infty^{(N)}=\frac{1}{2^N}(\bo_2^{\otimes N}+ c U_{CZX}^{(N)}),\quad c\in[-1,1],
\end{equation}
hence the steady-state degeneracy equals 2. 
We leave the detailed derivation in~\cref{app:CZX}. 
\end{exmp}

\section{Outlook}
\label{sec:outlook}
In this paper, we systematically construct the parent Lindbladians for the renormalization fixed-points of MPDO. Our constructed parent Lindbladian satisfies the important properties of local, frustration-free, and reaches minimal steady-state degeneracy. In comparison to pure states, the parent Lindbladian has unique properties that are distinct from the parent Hamiltonian.

There are at least three interesting questions that deserve further study. 
The first open question concerns the parent Lindbladians with symmetries. Given the construction of our parent Lindbladians, they naturally inherit the symmetries present in the state.  As such, this construction could be potentially useful in exploring symmetry-protected topological phases for mixed states~\cite{coser2019classification,de2022symmetry,ma2023average,sang2024mixed,rakovszky2024defining,huang2024hydrodynamics,wang2023intrinsic,gu2024spontaneous,ma2024symmetry,you2024intrinsic,ellison2024towards,lessa2024strong,chen2024unconventional}. However, in this work, we have focused exclusively on topological phases without symmetry protection. Investigating parent Lindbladians with symmetries is left for future study. 

Second, can we construct the parent Lindbladian for a general MPDO that is not at RFP? A natural generalization is still taking $\mathcal{E}=\mT\circ\mS$ and $\mathcal{L}=\mathcal{E}-\bo$, but with $\mT,\mS$ being the renormalization quantum channels for a general MPDO,
 \begin{align}
        \begin{array}{c}
        \begin{tikzpicture}[scale=1.,baseline={([yshift=-0.75ex] current bounding box.center)}
        ]
        \whaMorg{(0,0)}{2}{$\tilde{M}$}{mpdotcolor};
        \whaM{(3*\whadx,0)}{2};
        \whaM{(4*\whadx,0)}{2};
        \draw [->, thick](1.,0.4) to [out=45,in=135] (3*\whadx-1.,0.4);
        \draw (1.5*\whadx,0.85) node {$\mT$};
        \draw [->, thick](3*\whadx-1,-0.4) to [out=225,in=-45] (1.,-0.4);
        \draw (1.5*\whadx,-0.32) node {$\mS$};
        \end{tikzpicture}
        \end{array}
        \,,
    \end{align}
where the tensor $\tilde{M}$ after coarse-graining is not the same as $M$. 
We note that the parent Hamiltonian construction $h=\bo-VV^\dg$ applies to general MPS that is not at RFP, and it is natural to expect the above generalization to also apply to general MPDOs. One difficulty, however, is that there is no universal algorithm to find the renormalization quantum channels $\mT$ and $\mS$ when away from MPDO RFPs; furthermore, such channels do not exist for certain MPDOs. We note a recent progress in \cite{kato2024exact}, and it is reasonable to expect that our construction can be applied to the MPDO where renormalization channels exist and are known. 

The third question comes from an interesting numerical observation that the parent Lindbladians constructed from the vertical canonical form and from the $C^*$-weak Hopf algebra (which is associated with the horizontal canonical form) appears identical for the examples studied (see \cref{sec:Fib}). This finding hints at a potential duality between the horizontal and vertical canonical forms at MPDO RFPs, requiring further study.

\section*{Acknowledgement}
We are very grateful to Zhiyuan Wang, Ángela Capel, and Kohtaro Kato for discussions.
This work was in part supported by Munich Quantum Valley (MQV), which is supported by the Bavarian state government with funds from the Hightech Agenda Bayern Plus; and by Munich Center for Quantum Science and Technology (MCQST), funded by the Deutsche Forschungsgemeinschaft (DFG) under Germany’s Excellence Strategy (EXC2111-390814868); and by funding from the German Federal Ministry of Education and Research (BMBF) through FermiQP.
This work was in part supported by the Alexander von Humboldt Foundation.
This work was in part supported by the Spanish Ministry of Science and Innovation MCIN/AEI/10.13039/501100011033 (CEX2023-001347-S, PID2020-113523GB-I00, PID2023-146758NB-I00), Universidad Complutense de Madrid (FEI-EU-22-06), and the Ministry for Digital Transformation and of Civil Service of the Spanish Government through the QUANTUM ENIA project call – Quantum Spain project, and by the European Union through the Recovery, Transformation and Resilience Plan – NextGenerationEU within the framework of the Digital Spain 2026 Agenda.
This research was supported in part by Perimeter Institute for Theoretical Physics. Research at Perimeter Institute is supported by the Government of Canada through the Department of Innovation, Science, and Economic Development, and by the Province of Ontario through the Ministry of Colleges and Universities.

\bibliography{main}

\appendix

\section{Example of Simple MPDO RFP}
\label{sec:example-simple}
In this section, we give two examples of simple MPDO RFP, which has the structure as in \cref{thm:simple-RFP}. We construct their Gibbs state Hamiltonian and parent Lindbladian and discuss their properties.  
\begin{exmp}[Product of Werner state]
\label{exm:werner}
    Consider an MPDO RFP as a product of Werner state $W(b)$, where $W(b)$ is supported on $\mathbb{C}^2\otimes \mathbb{C}^2$ and $b\in[0,1]$,
    \begin{equation}
    \begin{aligned}
        \rho^{(N)} &= W(b)\otimes W(b)\otimes\cdots\otimes W(b)\\
        W(b) &= \frac{1}{4}(\bo\otimes\bo - b(\sigma_x\otimes\sigma_x + \sigma_y\otimes \sigma_y+ \sigma_z\otimes \sigma_z)). 
    \end{aligned}
    \end{equation}
    The MPDO tensor $M$ only has a single BNT element and a single $k=1$, and we identify $\eta_{1,1}=W(b)$. Therefore, the Hilbert space for each site is decomposed as $\mathcal{H}=\mathcal{H}_l\otimes\mathcal{H}_r$ with $d_l=2,d_r=2$ and $d=d_l d_r=4$. The explicit MPDO tensor representation is injective and has bond dimension $D=4$,
    \begin{equation}
    \begin{aligned}
        &\begin{tikzpicture}[scale=.4,baseline={([yshift=-0.5ex] current bounding box.center)}]
        \GcTensor{(0,0)}{1.}{.5}{\small $r_1$}{-1}{lcolor};
        \draw (1.4,0) node {\scriptsize 1};
        \end{tikzpicture}=\frac{1}{2}\bo,\quad\quad
        \begin{tikzpicture}[scale=.4,baseline={([yshift=-0.5ex] current bounding box.center)}]
        \GcTensor{(0,0)}{1.}{.5}{\small $l_1$}{1}{lcolor};
        \draw (-1.4,0) node {\scriptsize 1};
        \end{tikzpicture}=\frac{1}{2}\bo,\\
        &\begin{tikzpicture}[scale=.4,baseline={([yshift=-0.5ex] current bounding box.center)}]
        \GcTensor{(0,0)}{1.}{.5}{\small $r_1$}{-1}{lcolor};
        \draw (1.4,0) node {\scriptsize 2};
        \end{tikzpicture}=\frac{\sqrt{b}}{2}\sigma_x,\quad\quad
        \begin{tikzpicture}[scale=.4,baseline={([yshift=-0.5ex] current bounding box.center)}]
        \GcTensor{(0,0)}{1.}{.5}{\small $l_1$}{1}{lcolor};
        \draw (-1.4,0) node {\scriptsize 2};
        \end{tikzpicture}=-\frac{\sqrt{b}}{2}\sigma_x,\\
        &\begin{tikzpicture}[scale=.4,baseline={([yshift=-0.5ex] current bounding box.center)}]
        \GcTensor{(0,0)}{1.}{.5}{\small $r_1$}{-1}{lcolor};
        \draw (1.4,0) node {\scriptsize 3};
        \end{tikzpicture}=\frac{\sqrt{b}}{2}\sigma_y,\quad\quad
        \begin{tikzpicture}[scale=.4,baseline={([yshift=-0.5ex] current bounding box.center)}]
        \GcTensor{(0,0)}{1.}{.5}{\small $l_1$}{1}{lcolor};
        \draw (-1.4,0) node {\scriptsize 3};
        \end{tikzpicture}=-\frac{\sqrt{b}}{2}\sigma_y,\\
        &\begin{tikzpicture}[scale=.4,baseline={([yshift=-0.5ex] current bounding box.center)}]
        \GcTensor{(0,0)}{1.}{.5}{\small $r_1$}{-1}{lcolor};
        \draw (1.4,0) node {\scriptsize 4};
        \end{tikzpicture}=\frac{\sqrt{b}}{2}\sigma_z,\quad\quad
        \begin{tikzpicture}[scale=.4,baseline={([yshift=-0.5ex] current bounding box.center)}]
        \GcTensor{(0,0)}{1.}{.5}{\small $l_1$}{1}{lcolor};
        \draw (-1.4,0) node {\scriptsize 4};
        \end{tikzpicture}=-\frac{\sqrt{b}}{2}\sigma_z
        \,.
    \end{aligned}
    \end{equation}

    When $b\in[0,1)$, $\eta_{1,1}=W(b)$ is full-rank, and this MPDO is a Gibbs state of nearest-neighboring commuting Hamiltonian $\rho^{(N)}=e^{-H^{(N)}}$, with Hamiltonian
    \begin{equation}
        H^{(N)}=\sum_{i=1}^{N} h_i
    \end{equation}
    where $h_i=\tau_i(h)$ and $h$ is supported on two consecutive sites as 
    \begin{equation}
        h = \bo_{\mathcal{H}_l}\otimes (-\log \eta_{1,1})\otimes\bo_{\mathcal{H}_r}.
    \end{equation}
    The identity $e^{\bo\otimes A}=\bo\otimes e^A$ is used in the derivation. 

    The channel $\mathcal{E}$ in parent Lindbladian is simply a replacement channel,
    \begin{equation}
        \mathcal{E}(X)=\tr_{1r,2,3l}(X)\otimes (\eta_{11}\otimes\eta_{11}).
    \end{equation}
\end{exmp}

\begin{exmp}[Variation of Werner state]
\label{exm:two-werner}
    Consider an MPDO RFP generated by tensor $M$ that has a single BNT element and two $k$'s, $k=1$ and $k=2$, with components
    \begin{equation}
    \begin{aligned}
        &\begin{tikzpicture}[scale=.4,baseline={([yshift=-0.5ex] current bounding box.center)}]
        \GcTensor{(0,0)}{1.}{.5}{\small $r_k$}{-1}{lcolor};
        \draw (1.4,0) node {\scriptsize 1};
        \end{tikzpicture}=\frac{1}{2\sqrt{2}}\bo,\quad\quad
        \begin{tikzpicture}[scale=.4,baseline={([yshift=-0.5ex] current bounding box.center)}]
        \GcTensor{(0,0)}{1.}{.5}{\small $l_k$}{1}{lcolor};
        \draw (-1.4,0) node {\scriptsize 1};
        \end{tikzpicture}=\frac{1}{2\sqrt{2}}\bo,\\
        &\begin{tikzpicture}[scale=.4,baseline={([yshift=-0.5ex] current bounding box.center)}]
        \GcTensor{(0,0)}{1.}{.5}{\small $r_k$}{-1}{lcolor};
        \draw (1.4,0) node {\scriptsize 2};
        \end{tikzpicture}=\frac{\sqrt{b_k}}{2\sqrt{2}}\sigma_x,\quad\quad
        \begin{tikzpicture}[scale=.4,baseline={([yshift=-0.5ex] current bounding box.center)}]
        \GcTensor{(0,0)}{1.}{.5}{\small $l_k$}{1}{lcolor};
        \draw (-1.4,0) node {\scriptsize 2};
        \end{tikzpicture}=-\frac{\sqrt{b_k}}{2\sqrt{2}}\sigma_x,\\
        &\begin{tikzpicture}[scale=.4,baseline={([yshift=-0.5ex] current bounding box.center)}]
        \GcTensor{(0,0)}{1.}{.5}{\small $r_k$}{-1}{lcolor};
        \draw (1.4,0) node {\scriptsize 3};
        \end{tikzpicture}=\frac{\sqrt{b_k}}{2\sqrt{2}}\sigma_y,\quad\quad
        \begin{tikzpicture}[scale=.4,baseline={([yshift=-0.5ex] current bounding box.center)}]
        \GcTensor{(0,0)}{1.}{.5}{\small $l_k$}{1}{lcolor};
        \draw (-1.4,0) node {\scriptsize 3};
        \end{tikzpicture}=-\frac{\sqrt{b_k}}{2\sqrt{2}}\sigma_y,\\
        &\begin{tikzpicture}[scale=.4,baseline={([yshift=-0.5ex] current bounding box.center)}]
        \GcTensor{(0,0)}{1.}{.5}{\small $r_k$}{-1}{lcolor};
        \draw (1.4,0) node {\scriptsize 4};
        \end{tikzpicture}=\frac{\sqrt{b_k}}{2\sqrt{2}}\sigma_z,\quad\quad
        \begin{tikzpicture}[scale=.4,baseline={([yshift=-0.5ex] current bounding box.center)}]
        \GcTensor{(0,0)}{1.}{.5}{\small $l_k$}{1}{lcolor};
        \draw (-1.4,0) node {\scriptsize 4};
        \end{tikzpicture}=-\frac{\sqrt{b_k}}{2\sqrt{2}}\sigma_z
        \,,
    \end{aligned}
    \end{equation}
    where $b_1,b_2\in[0,1]$. The Hilbert space for each site is decomposed as $\mathcal{H}=(\mathcal{H}_l^{(1)}\otimes\mathcal{H}_r^{(1)})\oplus(\mathcal{H}_l^{(2)}\otimes\mathcal{H}_r^{(2)})$ with $d_l^{(1)}=d_r^{(1)}=d_l^{(2)}=d_r^{(2)}=2$, and $d=\sum_k d_l^{(k)} d_r^{(k)}=8$. The tensor is injective with bond dimension $D=4$. Explicitly,
    \begin{equation}
        \begin{array}{c}
        \begin{tikzpicture}[scale=1.,baseline={([yshift=-0.65ex] current bounding box.center)}]
		\draw (-0.75,0) node {$\alpha$};
		\draw (0.75,0) node {$\beta$};
        \whaM{(0,0)}{2};
        \end{tikzpicture}
        \end{array}=\begin{pmatrix}
            (l_1)_\alpha\otimes (r_1)_\beta & \\
            & (l_2)_\alpha\otimes (r_2)_\beta
        \end{pmatrix}.
    \end{equation}
    By construction,
    \begin{equation}
    \begin{aligned}
        \eta_{1,1}&=\frac{1}{2}W(b_1),\quad \eta_{2,2}=\frac{1}{2}W(b_2)\\
        \eta_{1,2}&=\eta_{2,1}=\frac{1}{2}W(\sqrt{b_1 b_2}),
    \end{aligned}
    \end{equation}
    and $\tr(\eta_{k,k'})=1/2$ for $k,k'=1,2$, satisfying $\sum_k \tr(\eta_{k,k})=1$. 

    When $b_1,b_2\in[0,1)$, $\eta_{k,k'}$ are full-rank for $k,k'=1,2$, and this MPDO is a Gibbs state of nearest-neighboring commuting Hamiltonian $\rho^{(N)}=e^{-H^{(N)}}$, with Hamiltonian
    \begin{equation}
        H^{(N)}=\sum_{i=1}^{N} h_i
    \end{equation}
    where $h_i=\tau_i(h)$ and $h$ is supported on two consecutive sites as 
    \begin{equation}
        h = \bigoplus_{k_1,k_2}\bo_{\mathcal{H}_l^{(k_1)}}\otimes (-\log \eta_{k_1,k_2})\otimes\bo_{\mathcal{H}_r^{(k_2)}}.
    \end{equation}
    The identities $e^{\bo\otimes A}=\bo\otimes e^A$ and $e^{A\oplus B}=e^A\oplus e^B$ are used to derive
    \begin{equation}
    \begin{aligned}
        &B_{i,i+1}:= e^{-\tau_i(h)\otimes \bo_{\mathrm{rest}}}\\
        &\quad= \left[\bigoplus_{k_i,k_{i+1}}\bo_{\mathcal{H}_l^{(k_i)}}\otimes  \eta_{k_i,k_{i+1}}\otimes\bo_{\mathcal{H}_r^{(k_{i+1})}}\right]\otimes \bo_{\mathrm{rest}},
    \end{aligned}
    \end{equation}
    leading to
    \begin{equation}
        \begin{aligned}
             \rho^{(N)}(M)&=\bigoplus_{k_1,\cdots,k_N}\eta_{k_1,k_2}\otimes \eta_{k_2,k_3} \otimes\cdots\otimes \eta_{k_N,k_1}\\
             &=\prod_{i=1}^{N}B_{i,i+1}. 
        \end{aligned}
    \end{equation}
   Indeed, the MPDO is a Gibbs state of commuting local Hamiltonian. 

    Finally, the channel $\mathcal{E}$ in parent Lindbladian is constructed as in \cref{eqn:simplechannel}, with isometries
    \[
    \begin{aligned}
    &Q_1=\begin{pmatrix}
        1 & 0 & 0 & 0 & 0 & 0 & 0 & 0\\
        0 & 1 & 0 & 0 & 0 & 0 & 0 & 0\\
        0 & 0 & 1 & 0 & 0 & 0 & 0 & 0\\
        0 & 0 & 0 & 1 & 0 & 0 & 0 & 0
    \end{pmatrix}\\
    &Q_2=\begin{pmatrix}
        0 & 0 & 0 & 0 & 1 & 0 & 0 & 0\\
        0 & 0 & 0 & 0 & 0 & 1 & 0 & 0\\
        0 & 0 & 0 & 0 & 0 & 0 & 1 & 0\\
        0 & 0 & 0 & 0 & 0 & 0 & 0 & 1
    \end{pmatrix},
    \end{aligned}
    \]
     and $\eta$'s constructed above. This is an example for $G_L\subseteq \mathcal{F}_{\mathcal{E}}$ commented in \cref{sec:general-comment}. In this example, $L=3$, and $G_L$ has dimension $\mathrm{dim}(G_{L=3})=D^2=16$ since the MPDO is injective. On the other hand, the fixed point space of $\mathcal{E}$ is
    \begin{equation}
        \begin{aligned}
            &\mathcal{F}_{\mathcal{E}}=\bigoplus_{k_1,k_3}\\
            &\mathcal{M}_{d_l^{(k_1)}}\otimes\left(\bigoplus_{k_2}\eta_{k_1,k_2}\otimes \eta_{k_2,k_3}\right)\otimes\mathcal{M}_{d_r^{(k_3)}},
        \end{aligned}
    \end{equation}
    where $\mathcal{M}_{d}$ is the full matrix algebra of $d\times d$ matrix. Therefore, the fixed point space has dimension $\mathrm{dim}(\mathcal{F}_{\mathcal{E}_i})=\sum_{k_1,k_3}(d_l^{(k_1)})^2 (d_r^{(k_3)})^2=64$, and $G_L\subseteq \mathcal{F}_{\mathcal{E}}$. Due to this mismatch, the techniques for proving ground state degeneracy for parent Hamiltonian do not directly apply to parent Lindbladian. For example, we cannot use the intersection property in \cite{perez2006matrix}. 
\end{exmp}

\section{Construction of RFP MPDOs using \texorpdfstring{\emph{C*}}{C*}-weak Hopf algebras}\label{appendix:WHA}

This appendix is devoted to introducing and recalling the rigorous definitions and statements for the construction of RFP MPDOs based on $C^*$-weak Hopf algebras. We also refer the reader to \cite{molnar2022matrix,ruizdealarcon2024matrix,yoshiko2024haag} for detailed proofs.

\subsection{Algebraic framework}

Let us first recall that a coalgebra is a complex vector space $C$ endowed with a linear map $\Delta:C\to C\otimes C$, called comultiplication, which is comultiplicative, i.e.
\begin{equation}
( \Delta \otimes \mathrm{Id} ) \circ \Delta = (\mathrm{Id}\otimes \Delta) \circ \Delta
\end{equation}
and a linear functional $\epsilon:C\to\mathbb{C}$, called counit, which is compatible in the sense that
\begin{equation}
( \epsilon \otimes \mathrm{Id} ) \circ \Delta = \mathrm{Id} = (\mathrm{Id}\otimes \epsilon) \circ \Delta.
\end{equation}
For successive applications of the comultiplication, 
\begin{equation}
\Delta^{(1)} := \Delta,\quad \Delta^{(n+1)} :=( \Delta\otimes\mathrm{Id}^{\otimes n})\circ \Delta^{(n)}
\end{equation}
for all $n\in\mathbb{N}$. In general, it is more appropriate to employ Sweedler's notation, in which we denote
\begin{equation}
x_{(1)}\otimes x_{(2)}\otimes \cdots \otimes x_{(n+1)} := \Delta^{(n)}(x),
\end{equation}
a shorthand notation for sums $\sum_{i} x_{1,i} \otimes \cdots \otimes x_{n+1,i}$; note that the terms $x_{(i)}$ above are not individual tensor factors.
It turns out that representations of coalgebras (more concretely, comodules) are in one-to-one correspondence with tensors generating MPS \cite{molnar2022matrix}, and the action of the comultiplication is understood as a notion of system size extensivity for these kinds of states.


\begin{defn}\label{def:WHA}
    A \emph{$C^\ast$-weak Hopf algebra} $A$ is a complex finite-dimensional $C^*$-algebra endowed with the structure of a coalgebra, such that the comultiplication is a $*$-algebra map, i.e.~for all $x\in A$:
    \begin{equation} \Delta(xy) = \Delta(x)\Delta(y), \quad \Delta(x^*) = \Delta(x)^*
    \end{equation}
        where the product and the $*$-operation is defined on $A\otimes A$ component-wise,
       the counit is \emph{weakly multiplicative}, i.e.
        for all $x,y,z\in A$:
        \begin{equation}
            \varepsilon(xyz) = \varepsilon(x y_{(1)})\varepsilon(y_{(2)}z)
            = \varepsilon(x y_{(2)})\varepsilon(y_{(1)}z),
        \end{equation}
       the unit is \emph{weakly comultiplicative}, i.e.
        \begin{equation}
            1_{(1)} \otimes 1_{(2)} \otimes 1_{(3)} = 1_{(1)} \otimes 1_{(2)} 1_{(1')}\otimes 1_{(2')}
        \end{equation}
        where the prime symbol is intended to distinguish between different coproduct instances of $1\in A$,
    and there exists a linear map, $S:A\to A$, called antipode, which is antimultiplicative, anticomultiplicative, and satisfies for all $x\in A$:
        \begin{align}
            S(x_{(1)}) x_{(2)} &= \varepsilon(1_{(1)} x) 1_{(2)}
        \\
            x_{(1)} S(x_{(2)}) &= \varepsilon(x 1_{(2)}) 1_{(1)}.
        \end{align}
        A \emph{$C^*$-Hopf algebra} is a $C^*$-weak Hopf algebra with
        \[ \Delta(1) = 1\otimes 1. \]
\end{defn}

The axioms of $C^*$-weak Hopf algebras are such that the notion is self-dual, i.e.~the dual vector space $A^* = \operatorname{Hom}(A,\mathbb{C})$ is endowed with the structure of a $C^\ast$-weak Hopf algebra \cite{bohm_weak_1999}, and such that category of $*$-algebra representations are unitary (multi)fusion category \cite{etingof_fusion_2005, etingof_tensor_2015}. First, the monoidal product is given by
$\Phi_1\boxtimes \Phi_2 := (\Phi_1\otimes \Phi_2)\circ \Delta$,
and there exists a \emph{trivial} $*$-representation in the sense of relaxed monoidal categories, denoted $(\mathcal{H}_\varepsilon,\Phi_\varepsilon)$ \cite{bohm_weak_2000}.
Second, for any finite $*$-re\-pre\-sen\-ta\-tion $(\mathcal{H},\Phi)$ there exists a $*$-re\-pre\-sen\-ta\-tion of $A$, given by the expression
\[
    \bar\Phi(x) = (\Phi\circ J)(x)^{\mathrm{t}},
\]
on the dual Hilbert space, for certain linear map $J:A\to A$ \cite{bohm_weak_2000}\footnote{The map $J:A\to A$ is given by
\( J(x) := \xi^{1/2}T(x) \xi^{-1/2} \), where $T:A\to A$ is given in Ref.~\cite{molnar2022matrix,ruizdealarcon2024matrix}}; this generalizes the notion of dual group-representations. We note that, in general, $J\neq S$.
Furthermore, it is a semisimple category and the set $\operatorname{Irr}A$ of equivalence classes of irreducible representations is finite, also known as the set of \emph{sectors} of $A$. 
The unusual feature of the trivial $*$-re\-pre\-sen\-ta\-tion in this setting is that it can be decomposable \cite{bohm_weak_1999,bohm_weak_2000}; if this is not the case, $A$ is said to be \emph{connected} or \emph{pure}, and if both trivial $*$-representations of $A$ and $A^*$ are indecomposable, $A$ is said to be \emph{biconnected}. We note that categories of $*$-representation of biconnected C*-weak Hopf algebras encompass unitary fusion categories \cite{etingof_fusion_2005}.

\begin{exmp}
\label{ex:group-algebra}
    Let $G$ be a finite group. Then, the group algebra $A = \mathbb{C}G$, endowed with the linear extensions of the maps defined by
    \[ \Delta(e_i) = e_i\otimes e_i, \; \varepsilon: e_i \mapsto 1, \; S(e_i) = e_i^* = e_i^{-1},  \]
    for all $e_i\in G$, defines a $C^*$-Hopf algebra.
\end{exmp}

\subsection{Tensor network constructions}

In terms of tensor networks, given a $C^*$-weak Hopf algebra $A$ and two faithful $*$-representations $\Phi$ and $\Psi$ of $A$ and $A^*$, respectively, one can define the rank-four tensor
\begin{equation}
\label{eqn:black-tensor-constr}
\begin{tikzpicture}[scale=1]
    \draw[-mid] (0.6, 0.688) -- (0.6, 1.129);
    \draw[-mid] (0.6, 0) -- (0.6, 0.441);
    \draw[Virtual, -mid] (1.199, 0.564) -- (0.723, 0.564);
    \draw[Virtual, -mid] (0.476, 0.564) -- (0, 0.564);
    \filldraw[black, ultra thin] 
    (0.6, 0.564) circle[radius=0.123];
\end{tikzpicture}
=
\sum_{i=1}^{\operatorname{dim}A} \Phi(e_i)\otimes \Psi(e^i),
\end{equation}
where $e_1,\ldots, e_d\in A$ stands for a basis of $A$ and $e^1,\ldots, e^d\in A^*$ stands for the dual basis. 
Here, the first tensor factor corresponds to an endomorphism at the physical level and the second one corresponds to an endomorphism to the virtual level.

For any $x\in A$, there is $b(x)\in\Psi(A^*)\subset\operatorname{End}(\mathcal{K})$ satisfying the following:
\begin{equation}
\begin{tikzpicture}[scale=0.75]
    \draw[-mid] (0.847, 0.688) -- (0.847, 1.129);
    \draw[Virtual] (0.723, 0.564) -- (0.247, 0.564);
    \draw[-mid] (1.835, 0.688) -- (1.835, 1.129);
    \draw[Virtual, -mid] (2.434, 0.564) -- (1.958, 0.564);
    \draw[Virtual, -mid] (1.835, 0.564) -- (0.847, 0.564);
    \draw[-mid] (3.81, 0.688) -- (3.81, 1.129);
    \draw[Virtual, -mid] (3.687, 0.564) -- (3.21, 0.564);
    \node[anchor=center, text=Virtual] at (2.819, 0.7) {$\overset{N}\cdots$};
    \draw[Virtual] (0.247, 0.564) arc[start angle=90, end angle=270, radius=0.247] -- (3.924, 0.071) arc[start angle=-90, end angle=90, radius=0.247];
    \filldraw[ultra thin, fill=Virtual] 
    (0.359, 0.564) circle[radius=0.123];
    \draw[bevel, -mid] (0.847, 0) -- (0.847, 0.441);
    \draw[bevel, -mid] (1.835, 0) -- (1.835, 0.441);
    \draw[bevel, -mid] (3.81, 0) -- (3.81, 0.441);
    \node[anchor=center, font=\footnotesize, text=Virtual] at (0.35, 0.904) {$b(x)$};
    \filldraw[black, ultra thin] 
    (3.81, 0.564) circle[radius=0.123];
    \filldraw[black, ultra thin] 
    (0.847, 0.564) circle[radius=0.123];
    \filldraw[black, ultra thin] 
    (1.835, 0.564) circle[radius=0.123];
\end{tikzpicture}
= (\Phi^{\otimes N}\circ \Delta^{(N-1)})(x).
\end{equation}
Moreover, provided a decomposition of the $*$-representation $\Psi = \Psi_1^{\otimes n_1}\oplus\cdots\oplus\Psi_g^{\otimes n_g}$ in terms of irreducible $*$-re\-pre\-sen\-tations $\Psi_1,\ldots,\Psi_g$, the rank-four tensor introduced above can be decomposed as a sum of the rank-four tensors denoted
\begin{equation}
\begin{tikzpicture}[scale=1]
    \draw[-mid] (0.6, 0.688) -- (0.6, 1.129);
    \draw[-mid] (0.6, 0) -- (0.6, 0.441);
    \draw[Virtual, -mid] (1.199, 0.564) -- (0.723, 0.564);
    \draw[Virtual, -mid] (0.476, 0.564) -- (0, 0.564);
    \node[anchor=center, font=\footnotesize, text=Virtual] at (0.24, 0.701) {$a$};
    \node[anchor=center, font=\footnotesize, text=Virtual] at (0.937, 0.701) {$a$};
    \filldraw[black, ultra thin] 
    (0.6, 0.564) circle[radius=0.123];
\end{tikzpicture}
= \sum_{i=1}^{\operatorname{dim}A} \Phi(e_{i})\otimes \Psi_a^{\otimes n_a}(e^i);
\end{equation}
or, from another point of view, these tensors induce a BNT. In particular, the tensor is in horizontal Canonical Form.
The duality manifests as the existence of another tensor,
\begin{equation}
\label{eqn:App-Cstar-white}
\begin{tikzpicture}[scale=1]
    \draw[-mid] (0.6, 0.688) -- (0.6, 1.129);
    \draw[-mid] (0.6, 0) -- (0.6, 0.441);
    \draw[Virtual, -mid] (1.199, 0.564) -- (0.723, 0.564);
    \draw[Virtual, -mid] (0.476, 0.564) -- (0, 0.564);
    \filldraw[ultra thin, fill=white] 
    (0.6, 0.564) circle[radius=0.123];
\end{tikzpicture}
= \sum_{i=1}^{\operatorname{dim}A} (\Phi\circ J)(e_i)\otimes \Psi(e^i).
\end{equation}

\subsection{Pulling-through identities}

In this context, it is fundamental to recall the existence of a positive idempotent $\Omega\in A$ satisfying a pulling-through identity, which can be regarded as a microscopical description of the topological order phenomena. The following result summarizes the fundamental properties needed in the framework.

\begin{thm}
    Let $A$ be a $C^*$-weak Hopf algebra. Then, there exists $\Omega\in A$ such that it is a positive idempotent which is invariant under both $S$ and $J$:
    \begin{equation} \Omega^2 = \Omega = \Omega^* = S(\Omega) = J(\Omega) \end{equation}
    and it is a cocentral element, i.e.~it satisfies
    \begin{equation}  \Omega_{(2)}\otimes\Omega_{(1)} = \Omega_{(1)}\otimes\Omega_{(2)} \end{equation} 
    or, equivalently, for any $n\in\mathbb{N}$ and any shift permutation $\sigma$ of $\{1,\ldots, n\}$, it holds that
    \begin{equation} 
        \Omega_{(\sigma(1))}\otimes\cdots\otimes \Omega_{(\sigma(n))} = \Omega_{1}\otimes\cdots\otimes \Omega_{(n)};
    \end{equation} 
    and moreover, there exists an invertible positive element $\xi\in A$ such that $J(\xi) = \xi$ and, for all $x,y\in A$,
    \begin{align}  x\xi^\frac{1}{2} J(\Omega_{(1)}) \otimes \Omega_{(2)} = \xi^\frac{1}{2} J(\Omega_{(1)})  \otimes  \Omega_{(2)} x,
    \\ J(\Omega_{(1)}) \xi^\frac{1}{2} y \otimes \Omega_{(2)} = J(\Omega_{(1)}) \xi^\frac{1}{2} \otimes y \Omega_{(2)},\label{eqn:pull-through}\end{align} 
    known as pulling-through identities. Let $\omega\in A^*$ stand for the dual analogue of $\Omega\in A$, then
    \begin{equation}
    \omega(\xi^{\frac{1}{2}} \Omega_{(1)} \xi^{\frac{1}{2}} J(x)) \Omega_{(2)} = x
    \label{eqn:omega-identity}
    \end{equation}
    for all $x\in A$.
\end{thm}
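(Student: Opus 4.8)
The plan is to realize $\Omega$ as the canonical normalized positive Haar idempotent of $A$, whose existence is guaranteed by the structure theory of $C^*$-weak Hopf algebras, and then to extract every listed property from that characterization together with the $C^*$-structure and the defining relations of \cref{def:WHA}. \emph{Construction and the basic identities.} Since $A$ and its dual $A^*$ are finite-dimensional $C^*$-algebras, both are semisimple; as recalled in Refs.~\cite{bohm_weak_1999,bohm_weak_2000,molnar2022matrix,ruizdealarcon2024matrix}, in this situation $A$ carries a Haar integral which, being unique up to normalization, can be taken to be a positive idempotent $\Omega$, so that $\Omega^2=\Omega=\Omega^*$ and $x\Omega$ (resp.\ $\Omega x$) depends on $x\in A$ only through its image under the target (resp.\ source) counital map. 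I would first note that such a normalized positive idempotent Haar element is unique. Both $S$ and the map $J=\xi^{1/2}T(\cdot)\xi^{-1/2}$ are anti-coalgebra maps that carry Haar integrals to Haar integrals and, in the $C^*$-setting, preserve positivity and normalization (for $S$ this is the B\"ohm--Szlach\'anyi positivity lemma; for $J$ it is built into its construction). Hence $S(\Omega)$ and $J(\Omega)$ are again normalized positive idempotent Haar elements, and uniqueness forces $S(\Omega)=\Omega=J(\Omega)$.

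\emph{Cocentrality.} The transposition identity $\Omega_{(2)}\otimes\Omega_{(1)}=\Omega_{(1)}\otimes\Omega_{(2)}$ I would obtain from the two-sided integral property of $\Omega$ together with $S(\Omega)=\Omega$: applying the antipode axioms $S(\Omega_{(1)})\Omega_{(2)}=\varepsilon(1_{(1)}\Omega)1_{(2)}$ and its mirror, and using weak comultiplicativity of the unit, one rewrites both $S(\Omega_{(1)})\otimes\Omega_{(2)}$ and $\Omega_{(1)}\otimes S(\Omega_{(2)})$ in terms of $\Delta(1)$, and comparing the two expressions yields that $\Delta(\Omega)$ is cocommutative. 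The $n$-leg shift-permutation version then follows by induction on $n$: a cyclic shift is a product of adjacent transpositions, and each adjacent transposition inside $\Delta^{(n-1)}(\Omega)$ is reduced to the $n=2$ case after using coassociativity to isolate the two relevant legs, the remaining legs being unaffected.

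\emph{The modular element and pulling-through.} I take $\xi$ to be the distinguished positive invertible element of Refs.~\cite{bohm_weak_2000,molnar2022matrix,ruizdealarcon2024matrix} characterized by the requirement that $\bar\Phi(x)=(\Phi\circ J)(x)^{\mathrm t}$ be a genuine $*$-representation; $J(\xi)=\xi$ and positivity of $\xi$ are part of that characterization. To prove \cref{eqn:pull-through}, I would first derive, directly from the antipode axioms applied to the integral $\Omega$, the transport relation in which $\Omega_{(1)}x\otimes\Omega_{(2)}$ is re-expressed with $x$ moved onto the second leg up to an $S$-twist (equivalently a $T$-twist); then conjugating the first leg by $\xi^{1/2}$ and using the definition $J=\xi^{1/2}T(\cdot)\xi^{-1/2}$ converts that twist into the plain $x$ on the right leg, giving the first line of \cref{eqn:pull-through}. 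The second line is obtained from the first by applying the $*$-operation (using $\Omega^*=\Omega$, $\xi^*=\xi$, $J(\xi)=\xi$) and, if needed, $S$.

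\emph{The reconstruction identity and the main obstacle.} Finally, \cref{eqn:omega-identity} is a Fourier-inversion statement in which $\omega\in A^*$, the dual Haar integral, plays the role of integration and $\xi^{1/2}\Omega_{(1)}\xi^{1/2}$ is the kernel. I would verify it by pairing both sides with an arbitrary $f\in A^*$ and showing that $\omega\big(\xi^{1/2}\Omega_{(1)}\xi^{1/2}J(x)\big)\,f(\Omega_{(2)})=f(x)$, using the Frobenius/integral relations between $\omega$ and $\Omega$, the pulling-through identities \cref{eqn:pull-through} just established, cocentrality of $\Omega$, and non-degeneracy of the pairing $A\times A^*\to\mathbb{C}$. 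The step I expect to be the genuine obstacle is the exact bookkeeping of the $\xi^{1/2}$ factors and of the $J$-versus-$S$ twists in \cref{eqn:pull-through} and \cref{eqn:omega-identity}: the familiar non-$C^*$ weak-Hopf-algebra identities are stated with $S$, and the content here is precisely that the $C^*$-structure allows one to symmetrize them via conjugation by $\xi^{1/2}$, so checking that all these modular corrections fit together consistently — and that ``normalized positive Haar idempotent'' is rigid enough to force $S(\Omega)=J(\Omega)=\Omega$ — is where the real work lies; positivity, idempotency and cocentrality are comparatively routine once the structure theory of Refs.~\cite{bohm_weak_1999,bohm_weak_2000,molnar2022matrix,ruizdealarcon2024matrix} is invoked.
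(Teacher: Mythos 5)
First, note that the paper does not actually prove this theorem: immediately after the statement it writes ``See \cite{ruizdealarcon2024matrix,yoshiko2024haag} for a proof,'' so there is no in-paper argument to compare yours against — the result is imported wholesale from the $C^*$-weak-Hopf-algebra literature. Your sketch does identify the right circle of ideas (a canonical positive idempotent, a modular element $\xi$, and a Fourier-type inversion paired against the dual integral $\omega$), but as a proof it has concrete gaps at exactly the load-bearing points.

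Two steps in particular would not go through as written. (i) Cocentrality: the antipode axioms only give you identities about the \emph{product} $S(\Omega_{(1)})\Omega_{(2)}=\varepsilon(1_{(1)}\Omega)1_{(2)}$, which lives in $A$, not about the tensor $\Delta(\Omega)\in A\otimes A$; ``comparing the two expressions'' cannot recover cocommutativity of $\Delta(\Omega)$. Cocommutativity of $\Omega$ is equivalent to the functional $f\mapsto f(\Omega)$ being a trace on $A^*$, and in the weak Hopf setting this hinges on controlling $S^2$ (generically a nontrivial inner automorphism implemented by the canonical grouplike element), i.e.~on Larson--Radford-type structure theory, not on the counit/antipode axioms alone. (ii) The $J$-invariance and the pulling-through identities: you assert that $J$ ``carries Haar integrals to Haar integrals \ldots built into its construction,'' but $J=\xi^{1/2}T(\cdot)\xi^{-1/2}$ is defined through an unspecified map $T$ and through the very element $\xi$ whose existence and properties the theorem is asserting, so the uniqueness argument for $J(\Omega)=\Omega$ is circular unless $T$ and $\xi$ are constructed explicitly and the invariance is verified. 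Relatedly, identifying $\Omega$ with the normalized Haar integral needs justification in the general weak case: the paper's own Remark gives $\Omega=\mathcal{D}^{-2}\sum_a d_a\mathrm{x}_a$, normalized by the global dimension rather than $\dim A$ and supported on the character algebra of $A^*$, and only asserts coincidence with ``the well-known Haar integral'' in the Hopf special case, so you would at minimum have to prove that this element is the Haar integral, or work with it directly. Your instinct that the $\xi^{1/2}$ and $J$-versus-$S$ bookkeeping is ``where the real work lies'' is correct — but that work, together with cocentrality, is essentially the entire content of the theorem, and the sketch does not supply it.
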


See \cite{ruizdealarcon2024matrix,yoshiko2024haag} for a proof.

\begin{remark}
\begin{enumerate}
\item If the C*-weak Hopf algebra $A$ is biconnected,
\[
    \Omega = \mathcal{D}^{-2}(d_1 \mathrm{x}_1 + \cdots + d_g \mathrm{x}_g),
\]
where $\mathrm{x}_1,\ldots,\mathrm{x}_g\in A\cong A^{**}$ stand for the characters of the irreducible $*$-re\-pre\-sen\-ta\-tions of $A^*$, $d_1,\ldots,d_g > 0$ are the associated quantum dimensions or Frobenius-Perron dimensions, and $\mathcal{D}^2 := d_1^2 + \cdots + d_g^2$ is the total quantum dimension.
\item In the particular case of $C^*$-Hopf algebras, see \cref{def:WHA}, the element introduced above $\Omega\in A$ coincides with the well-known Haar integral \cite{ruizdealarcon2024matrix}, $J = S$, and $\xi = \mathcal{D} 1$.
\item More concretely, for group algebras $A = \mathbb{C}G$, $$\Omega = |G|^{-1}\sum_{g\in G} g, \quad \omega = \delta_e, $$
     where $\delta_e$ is the dual basis element of $e\in G$.
\end{enumerate}
\end{remark}

\subsection{Renormalization fixed point MPDOs}

\begin{prop}
    The tensor
    \begin{equation}
    \label{eqn:M-tensor-constr}
    \begin{tikzpicture}[scale=1]
    \draw[-mid] (0.6, 0.811) -- (0.6, 1.129);
    \draw[-mid] (0.6, 0) -- (0.6, 0.318);
    \draw[Virtual, -mid] (1.199, 0.564) -- (0.847, 0.564);
    \whaMsymb{(0.6, 0.564)};
    \draw[Virtual, -mid] (0.353, 0.564) -- (0, 0.564);
\end{tikzpicture}
:= 
\begin{tikzpicture}[scale=1]
    \draw (0.6, 0.688) -- (0.6, 1.129);
    \draw[-mid] (0.6, 0) -- (0.6, 0.441);
    \draw[Virtual, -mid] (1.199, 0.564) -- (0.723, 0.564);
    \draw[Virtual, -mid] (0.476, 0.564) -- (0, 0.564);
    \filldraw[black, ultra thin] 
    (0.6, 0.564) circle[radius=0.123];
    \filldraw[ultra thin, fill=Virtual] 
    (0.6, 0.908) circle[radius=0.123];
    \node[anchor=center, font=\footnotesize] at (1.094, 0.908) {$b(\omega)$};
\end{tikzpicture}
\end{equation}
defines RFP MPDOs in the sense of \cref{def:RFP_MPDO}.
\end{prop}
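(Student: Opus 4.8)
The plan is to show that the tensor $M$ defined in \cref{eqn:M-tensor-constr} satisfies the two hypotheses of \cref{def:RFP_MPDO}: that $M$ generates a valid (positive, Hermitian) MPDO for every $N$, and that there exist quantum channels $\mathcal T$, $\mathcal S$ performing the coarse- and fine-graining. Since the state version is $\rho^{(N)}(M)_B$ with $B=b(\omega)$ (or more generally any admissible $B$), the first task is really the identification $\rho^{(N)}(M) = (\Phi^{\otimes N}\circ\Delta^{(N-1)})(\text{something built from }\Omega,\xi,\omega)$; indeed, contracting the black tensor with the boundary insertion $b(\omega)$ and using the defining relation for $b(\cdot)$ gives a closed-form expression as a representation of a distinguished element of $A$. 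Hermiticity then follows from $\Omega^*=\Omega$, $J$-invariance, and the $*$-compatibility of $\Delta$; positivity follows because $\Phi$ is a $*$-representation and the relevant element is a positive idempotent (up to the $\xi^{1/2}$ dressing in \cref{eqn:omega-identity}), so that $\rho^{(N)}(M)$ is, block by block, a positive operator. This is where the pulling-through theorem does the heavy lifting: the cocentrality of $\Omega$ and identity \cref{eqn:omega-identity} are exactly what make the construction consistent and positive for all system sizes, bypassing the undecidability obstruction mentioned after the general MPDO definition.

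Second, I would verify that $M$ satisfies the three algebraic relations \cref{eq:BlackVsWhite}, \cref{eq:XiOmega}, and \cref{eq:OmegaCommutes}, with $M'$ the white tensor of \cref{eqn:App-Cstar-white}, $\Omega$ the positive idempotent, and $\Xi$ built from $\xi^{1/2}$ and $\omega$. Each of these is a direct translation of a pulling-through identity: \cref{eq:OmegaCommutes} is cocentrality of $\Omega$ together with the fact that $\Omega$ acts as a scalar in each sector $a$; \cref{eq:BlackVsWhite} unpacks to the relation between $\Phi$ and $\Phi\circ J$ after inserting $\Omega$, i.e.\ the first line of \cref{eqn:pull-through}; and \cref{eq:XiOmega} is \cref{eqn:omega-identity} read diagrammatically, stating that wrapping $\Xi$ around the $M$/$M'$ loop collapses it to $\Omega$. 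Having these, the excerpt already tells us (in the paragraph following \cref{eq:MvsNXi}) that any $M$ obeying \cref{eq:BlackVsWhite,eq:XiOmega,eq:OmegaCommutes} defines an RFP: $\mathcal T$ is the explicit channel displayed there, $\mathcal S=\mathcal S^{(\Eone)}+\mathcal S^{(\Etwo)}$, and $\mathcal S\circ\mathcal T$ leaves $M$ invariant while $\mathcal T\circ\mathcal S$ stabilizes the two-site tensor. So the bulk of the proof is checking that $\mathcal T$ is CPTP and that the composition identities hold; CP is automatic from the Kraus/Stinespring form of the diagram, and trace preservation of $\mathcal T$ follows because $\Omega$ and $\xi$ are built to normalize correctly (one uses $\omega(\Omega)$-type evaluations, i.e.\ $\varepsilon$ applied appropriately). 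For $\mathcal S$, trace preservation is arranged by construction of $\mathcal S^{(\Etwo)}$ supported on the orthogonal complement of the two-site MPDO space.

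The key steps, in order: (i) contract $M^{\otimes N}$ with boundary $b(\omega)$ and rewrite $\rho^{(N)}(M) = (\Phi^{\otimes N}\circ\Delta^{(N-1)})$ of the element $\xi^{1/2}\Omega_{(1)}\xi^{1/2}\otimes\cdots$ extracted from \cref{eqn:omega-identity}, reading off Hermiticity and positivity from the $C^*$-representation property; (ii) establish \cref{eq:OmegaCommutes} from cocentrality and the sectorwise-scalar property of $\Omega$; (iii) establish \cref{eq:BlackVsWhite} from the first pulling-through identity in \cref{eqn:pull-through} relating $\Phi$ and $\Phi\circ J$ through $\Omega$; (iv) establish \cref{eq:XiOmega} as the diagrammatic form of \cref{eqn:omega-identity}; (v) invoke the already-stated consequence that $\mathcal T$ and $\mathcal S$ as displayed then satisfy $\mathcal S\circ\mathcal T = \mathrm{id}$ on $M$ and that $\mathcal T\circ\mathcal S$ fixes the two-site tensor, and check CPTP for $\mathcal T$ directly from its diagram and for $\mathcal S$ via the definition of $\mathcal S^{(\Etwo)}$; (vi) conclude $M$ is an RFP in the sense of \cref{def:RFP_MPDO}.

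The main obstacle I anticipate is step (i) combined with the normalization bookkeeping in (v): translating the abstract pulling-through identities (stated in Sweedler notation with the auxiliary element $\xi$) into the precise diagrammatic contractions, and keeping track of the $\xi^{1/2}$ dressings and the $\Omega$ insertions so that $\mathcal T$ comes out exactly trace-preserving (not merely trace-scaling) and so that the positivity of $\rho^{(N)}(M)$ is manifest rather than merely plausible. In other words, the conceptual content is entirely supplied by the weak Hopf algebra axioms and the existence theorem for $\Omega$ and $\xi$; the work is in the faithful diagram-to-algebra dictionary and in verifying that all the scalar factors conspire correctly. Since the excerpt defers the explicit constructions of $M,M',\Omega,\Xi$ to \cref{appendix:WHA} and states the pulling-through theorem there, the cleanest route is to prove \cref{eq:BlackVsWhite,eq:XiOmega,eq:OmegaCommutes} as lemmas from that theorem and then quote the general RFP criterion already recorded after \cref{eq:MvsNXi}.
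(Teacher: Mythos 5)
Your proposal is correct and follows essentially the same route as the paper: the paper's own proof consists of writing down the algebraic forms of $\mathcal{T}$ and $\mathcal{S}=\mathcal{S}^{(\Eone)}+\mathcal{S}^{(\Etwo)}$ in \cref{eqn:wha-channelT,eqn:wha-channelS1,eqn:wha-channelS2}, observing that the fine-graining identity $\mathcal{T}(b(\omega)\Phi(x))=b(\omega)^{\otimes 2}(\Phi^{\otimes 2}\circ\Delta)(x)$ and its coarse-graining counterpart are, via the duality between the product on $A^*$ and the coproduct on $A$, exactly the conditions of \cref{def:RFP_MPDO}, and deferring complete positivity, trace preservation, and the positivity of the generated states to Ref.~\cite{ruizdealarcon2024matrix} --- precisely the ingredients your outline isolates as the remaining work. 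One cosmetic correction: the paper obtains \cref{eq:BlackVsWhite} from \cref{eqn:omega-identity} and reduces \cref{eq:XiOmega} to the trace identity $\tr(\Psi_a(\hat{\xi}^{-1}))=d_a/\mathcal{D}^2$, so your pairing of which pulling-through identity yields which diagrammatic relation is swapped, although all three relations do indeed follow from the same theorem as you intend.
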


Moreover, let us define the rank-four tensor  introduced in \cref{subsubsec:nonSimpleRfpWha} by the expression
\begin{align}
\label{eq:ExplicitDefMInv}
    \begin{tikzpicture}[scale=1]
    \draw[-mid] (0.6, 0.811) -- (0.6, 1.129);
    \draw[-mid] (0.6, 0) -- (0.6, 0.318);
    \draw[Virtual, -mid] (1.199, 0.564) -- (0.847, 0.564);
    \whaNsymb{(0.6, 0.564)};
    \draw[Virtual, -mid] (0.353, 0.564) -- (0, 0.564);
\end{tikzpicture}
&:= 
\begin{tikzpicture}[scale=1]
    \draw (0.6, 0.688) -- (0.6, 1.129);
    \draw (0.6, 0) -- (0.6, 0.441);
    \draw[Virtual, -mid] (1.199, 0.564) -- (0.723, 0.564);
    \draw[Virtual, -mid] (0.476, 0.564) -- (0, 0.564);
    \filldraw[ultra thin, fill=white] 
    (0.6, 0.564) circle[radius=0.123];
    \filldraw[ultra thin, fill=GreenYellow!50] 
    (0.6, 0.22) circle[radius=0.123];
    \node[anchor=center, font=\footnotesize] at (1.305, 0.22) {$\Phi(\xi)^{1/2}$};
    \filldraw[ultra thin, fill=GreenYellow!50] 
    (0.6, 0.908) circle[radius=0.123];
    \node[anchor=center, font=\footnotesize] at (1.305, 0.908) {$\Phi(\xi)^{1/2}$};
\end{tikzpicture}~,\\
\intertext{the rank-two tensor $\Omega$ as the one given by}
    \begin{tikzpicture}[scale=1]
    \draw[Virtual, -mid] (1.199, 0.564) -- (0.847, 0.564);
    \filldraw[ultra thin, fill=white] 
    (0.6, 0.564) circle[radius=0.247];
    \node[anchor=center] at (0.6, 0.564) {$\Omega$};
    \draw[Virtual, -mid] (0.353, 0.564) -- (0, 0.564);
\end{tikzpicture}
&:= b(\Omega),\\
\intertext{and the rank-two tensor $\Xi$ is defined by}
\label{eq:ExplicitDefXi}
    \begin{tikzpicture}[scale=1]
    \draw[Virtual, -mid] (1.199, 0.564) -- (0.847, 0.564);
    \filldraw[ultra thin, fill=white] 
    (0.6, 0.564) circle[radius=0.247];
    \node[anchor=center] at (0.6, 0.564) {$\Xi$};
    \draw[Virtual, -mid] (0.353, 0.564) -- (0, 0.564);
\end{tikzpicture}
&:= \Psi(\hat{\xi})^{-1}
\end{align}
where $\hat{\xi}\in A^*$ stands for the dual analogue of $\xi$. By the above definitions, \cref{eq:BlackVsWhite} is simply a consequence of the pulling-through identity \cref{eqn:omega-identity}, and \cref{eq:XiOmega} is translated to 
\[
\tr(\Psi_a(\hat{\xi}^{-1}))=d_a/\mathcal{D}^2. 
\]

On the one hand, the explicit expression of the fine-graining quantum channel $\mathcal{T}$ is the following:
\begin{align}
\label{eqn:wha-channelT}
\mathcal{T}(X) := \mathrm{Tr}((\Phi\circ J)&(\xi^\frac{1}{2}\Omega_{(2)} \xi^\frac{1}{2})X) \\ & b(\omega)^{\otimes 2}(\Phi^{\otimes 2}\circ \Delta)(\Omega_{(1)});\notag
\end{align}
for any $X\in\operatorname{End}(\mathcal{H})$. 
%
%
We refer the reader to \cite{ruizdealarcon2024matrix} for a proof of the facts that it is completely positive, trace-preserving and, in addition, it satisfies the fine-graining condition
\[ \mathcal{T}(b(\omega)\Phi(x)) = b(\omega)^{\otimes 2} (\Phi^{\otimes 2}\circ\Delta)(x). \]
On the other hand, the coarse-graining quantum channel $\mathcal{S}:\operatorname{End}(\mathcal{H} \otimes \mathcal{H})\to \operatorname{End}(\mathcal{H})$ is given by
\[
    \mathcal{S} := \mathcal{S}^{(\Eone)} + \mathcal{S}^{(\Etwo)},
\]
where the first summand is defined by the expression
\begin{align}
\label{eqn:wha-channelS1}
\mathcal{S}^{(\Eone)}(Y) := \mathrm{Tr}((\Phi^{\otimes 2}\circ\Delta \circ J)(\xi^\frac{1}{2} \Omega_{(2)}&\xi^\frac{1}{2}) Y)
\\ &b(\omega) \Phi(\Omega_{(1)})\notag
\end{align}
and is supported on the symmetric subspace, satisfying the coarse-graining condition
\[
\mathcal{S}^{(\Eone)}(b(\omega)^{\otimes 2}(\Phi^{\otimes 2}\circ \Delta)(x)) = b(\omega)\Phi(x)
\]
for all $x\in A$, and the second part is given by the expression on the complementary subspace,
\begin{equation}
\label{eqn:wha-channelS2}
    \mathcal{S}^{(\Etwo)}(Y) := \mathrm{Tr}(Y-(\Phi^{\otimes 2}\circ \Delta)(1)Y) \sigma_0
\end{equation}
where $\sigma_0\in\operatorname{End}(\mathcal{H}\otimes\mathcal{H})$ is any mixed state; as commented above, this part is completely positive by construction and completes the first part making $\mathcal{S}$ a trace-preserving linear map. We typically consider $\sigma_0 := \omega(\Omega)^{-1} b(\omega)\Phi(\Omega)$ for simplicity. Note that
\[
\mathcal{S}^{(\Etwo)}(b(\omega)^{\otimes 2}(\Phi^{\otimes 2}\circ \Delta)(x)) = 0.
\]



\begin{widetext}
\section{Proof of Frustration Freeness}
\label{sec:proof-ff}
In this section, we prove the frustration freeness of a general parent Lindbladian (not necessarily commuting) using a ``flow diagram'' method. Define 
\begin{equation}
\mathcal{E}^{(N)}=\frac{1}{N}\sum_{i=1}^N \mathcal{E}_i,    
\end{equation}
which is related to parent Lindbladian by $\mathcal{L}^{(N)}=N(\mathcal{E}^{(N)}-\bo)$, and the steady state of $\mathcal{L}^{(N)}$ corresponds to the fixed point of $\mathcal{E}^{(N)}$. The proof is inspired by the observation that if $\rho$ is a fixed point of $\mathcal{E}^{(N)}$, then it is also a fixed point of $(\mathcal{E}^{(N)})^{m}$ for any positive integer $m$. One can rewrite the fixed-point $\rho$ into a summation
\begin{equation}
\begin{aligned}
    \rho&=\frac{1}{N}\mathcal{E}^{(\Eone)}_1(\rho)+\frac{1}{N}\mathcal{E}^{(\Etwo)}_1(\rho) + \frac{1}{N}\mathcal{E}^{(\Eone)}_2(\rho)+\frac{1}{N}\mathcal{E}^{(\Etwo)}_2(\rho)+ \cdots + \frac{1}{N}\mathcal{E}^{(\Eone)}_N(\rho)+\frac{1}{N}\mathcal{E}^{(\Etwo)}_N(\rho),
\end{aligned}
\label{eqn:rho-decomp}
\end{equation}
and ask how each of these terms gets modified by successive applications of $\mathcal{E}^{(N)}$ and draw the corresponding flow diagram.
We note that all $\mathcal{E}_i^{(\Eone)}$ and $\mathcal{E}_i^{(\Etwo)}$ are CP maps and each term in the decomposition is positive semi-definite. 

Since the simple injective MPDOs and the non-simple MPDOs generated by $C^*$-Hopf algebra are guaranteed to be frustration-free due to the commutation of their parent Lindbladians, we only need to consider the remaining two scenarios: (1) simple non-injective simple MPDOs, and (2) non-simple MPDOs generated by a $C^*$-weak Hopf algebra which is not a $C^*$-Hopf algebra. 
For non-injective simple MPDOs, by choosing $\rho_0$ in $\mathcal{E}_i^{(\Etwo)}$ as \cref{eqn:simplerho0}, the pieces in the channels satisfy
    \begin{align}
        \mathcal{E}_i^{(\Eone)}(\rho_0)&=\rho_0\label{eqn:rho0eig}\\
        \mathcal{E}_i^{(\Eone)}\circ \mathcal{E}_i^{(\Etwo)}&=\mathcal{E}_i^{(\Etwo)}\label{eqn:e12}\\
        \mathcal{E}_i^{(\Etwo)}\circ\mathcal{E}_i^{(\Eone)}&=0\label{eqn:e21vanish}\\
        \mathcal{E}_i^{(\Eone)}\circ \mathcal{E}_i^{(\Eone)}&=\mathcal{E}_i^{(\Eone)}\label{eqn:idempotent}\\
    \mathcal{E}_i^{(\Eone)}\circ \mathcal{E}_{i+1}^{(\Eone)}&=\mathcal{E}_{i+1}^{(\Eone)}\circ\mathcal{E}_i^{(\Eone)}\label{eqn:ecommute}.
    \end{align}
For MPDO generated by $C^*$-weak Hopf algebra, by choosing $\mathcal{E}^{(\Etwo)}$ as \cref{eqn:choiceE2WHA}, the pieces in the channels also satisfy \cref{eqn:rho0eig,eqn:e12,eqn:e21vanish,eqn:idempotent,eqn:ecommute}. 
We will use these properties to draw the flow diagram. 
We organize the terms with different structures into separate boxes. For every action of $\mathcal{E}^{(N)}$, $\rho$ is rewritten, flowing between boxes.
Since $\mathcal{E}^{(N)}$ is a quantum channel, the summation of the trace of terms in different boxes is equal to one.
By iterating this process for large enough $m$, we prove below that all the terms of a fixed-point $\rho$ in \cref{eqn:rho-decomp} will flow to the final box, in the form of 
\begin{equation}
    \rho=\mathcal{E}^{(\Eone)}_1\circ \mathcal{E}^{(\Eone)}_2\circ\cdots\circ\mathcal{E}^{(\Eone)}_N(\rho).
    \label{eqn:ssform}
\end{equation}
Using \cref{eqn:e21vanish,eqn:idempotent,eqn:ecommute}, one can then show such a fixed-point state is frustration-free: $\mathcal{E}_i(\rho)=\mathcal{E}^{(\Eone)}_i(\rho)=\rho$, leading to $\mathcal{L}_i(\rho)=0$. In the following, we separate the treatments of the two cases.

\subsection{Non-injective simple MPDO}
Below, we prove the frustration-freeness of parent Lindbladian for non-injective simple MPDO RFP with system size $N=5$ as an explicit example. Generalization to larger $N$ is straightforward.

\begin{prop}
    The parent Lindbladian for non-injective simple MPDO RFP with system size $N=5$ is frustration-free. 
\end{prop}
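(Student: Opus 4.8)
The plan is to set up a ``flow diagram'' bookkeeping argument on the decomposition \cref{eqn:rho-decomp}. Fix $N=5$ and let $\rho$ be a fixed point of $\mathcal{E}^{(5)}=\frac{1}{5}\sum_{i=1}^5\mathcal{E}_i$, so $\rho=(\mathcal{E}^{(5)})^m(\rho)$ for every $m\ge 1$. Each term appearing after expanding $(\mathcal{E}^{(5)})^m$ is a composition of local CP maps $\mathcal{E}_{i_1}^{(s_1)}\circ\cdots\circ\mathcal{E}_{i_m}^{(s_m)}$ with $s_j\in\{\Eone,\Etwo\}$, applied to $\rho$, and is positive semidefinite; the sum of their traces is $\tr(\rho)=1$. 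The idea is to classify these monomials into a finite set of ``boxes'' according to a combinatorial invariant that can only increase (or stay fixed) under left-multiplication by another $\mathcal{E}^{(5)}$, and to show that after enough steps essentially all the trace has migrated into a single terminal box, whose members are all of the form $\mathcal{E}^{(\Eone)}_1\circ\mathcal{E}^{(\Eone)}_2\circ\cdots\circ\mathcal{E}^{(\Eone)}_5(\rho)$ (using commutativity \cref{eqn:ecommute} to reorder, idempotency \cref{eqn:idempotent} to collapse repeats, and \cref{eqn:e12} to absorb any trailing $\mathcal{E}^{(\Etwo)}$ whose output is already stabilized). Concretely, I would track, for each monomial, the set $T\subseteq\{1,\ldots,5\}$ of ``locked'' sites: a bond is locked once an $\mathcal{E}^{(\Eone)}_i$ has acted with no later $\mathcal{E}^{(\Etwo)}_j$ (for overlapping $j$) undoing it, and use \cref{eqn:e21vanish} (so $\mathcal{E}^{(\Etwo)}_i\circ\mathcal{E}^{(\Eone)}_i=0$ kills terms that try to re-excite an already-processed bond at the same location) together with \cref{eqn:rho0eig} (so $\mathcal{E}^{(\Etwo)}_i$ outputs the canonical $\rho_0$, which is already in the terminal form) to argue $|T|$ is monotone.

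Next I would make the ``enough steps'' quantitative. Since $N=5$ is fixed and there are only finitely many boxes, there is a strictly positive probability (bounded below uniformly) that a uniformly random length-$\ell$ word over $\{\mathcal{E}_1,\ldots,\mathcal{E}_5\}$ — with each letter further expanded into its $\Eone/\Etwo$ parts — realizes a specific ``sweeping'' pattern $\mathcal{E}^{(\Eone)}_1\mathcal{E}^{(\Eone)}_2\mathcal{E}^{(\Eone)}_3\mathcal{E}^{(\Eone)}_4\mathcal{E}^{(\Eone)}_5$ that fully locks all bonds. Because $\rho=(\mathcal{E}^{(5)})^m(\rho)$ exactly for all $m$, and because every monomial contributes a nonnegative operator, the trace sitting outside the terminal box after $m$ steps is nonincreasing and, by the uniform lower bound on the escape probability, decays at least geometrically to $0$; but the decomposition is an exact identity independent of $m$, so that ``leaked'' trace must in fact be zero from the start, forcing $\rho$ to lie entirely in the terminal box, i.e. \cref{eqn:ssform}. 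Once \cref{eqn:ssform} holds, applying $\mathcal{E}_i$ to both sides and using $\mathcal{E}_i\circ\mathcal{E}^{(\Eone)}_i=\mathcal{E}^{(\Eone)}_i\circ\mathcal{E}^{(\Eone)}_i=\mathcal{E}^{(\Eone)}_i$ on the relevant bond (plus \cref{eqn:ecommute} to commute $\mathcal{E}_i$ past the non-overlapping factors) gives $\mathcal{E}_i(\rho)=\rho$, hence $\mathcal{L}_i(\rho)=0$ for all $i$, which is exactly frustration-freeness.

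The main obstacle I anticipate is making the monotonicity of the invariant $|T|$ genuinely airtight: an $\mathcal{E}^{(\Etwo)}_j$ term overlapping a previously locked bond could a priori ``unlock'' it, and one must check — using \cref{eqn:e12} and \cref{eqn:rho0eig} — that whenever this happens the resulting operator is actually already of the terminal form $\mathcal{E}^{(\Eone)}_1\circ\cdots\circ\mathcal{E}^{(\Eone)}_5(\rho)$ (because $\mathcal{E}^{(\Etwo)}_j$ replaces the overlapping region by $\rho_0$, which the remaining $\mathcal{E}^{(\Eone)}$'s stabilize), so that such a term should really be accounted in the terminal box rather than reset. Equivalently, the bookkeeping must be set up so that ``box'' means ``the set of monomials whose operator value equals the terminal form,'' and the flow diagram shows this set absorbs all others; the combinatorics for $N=5$ is finite and explicit (a few boxes, a handful of transitions driven by \cref{eqn:rho0eig,eqn:e12,eqn:e21vanish,eqn:idempotent,eqn:ecommute}), and the generalization to arbitrary $N$ is only notationally heavier. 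I would present the $N=5$ diagram explicitly and then remark that the argument is uniform in $N$.
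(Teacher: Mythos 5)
Your overall strategy---decomposing the fixed point as in \cref{eqn:rho-decomp}, sorting the resulting positive monomials into boxes, and showing all trace is absorbed into the terminal box of the form \cref{eqn:ssform}---is the same ``flow diagram'' approach the paper takes. However, the step you rely on to make it quantitative does not work as stated, and it is precisely where the real difficulty of the proof lives.

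You argue that a random word over $\{\mathcal{E}_1,\dots,\mathcal{E}_5\}$ realizes the all-$\Eone$ sweeping pattern $\mathcal{E}^{(\Eone)}_1\circ\cdots\circ\mathcal{E}^{(\Eone)}_5$ with ``strictly positive probability bounded below uniformly.'' But $\mathcal{E}_i=\mathcal{E}_i^{(\Eone)}+\mathcal{E}_i^{(\Etwo)}$ is an operator sum, not a probabilistic mixture with fixed weights: the fraction of trace carried by any particular $\Eone/\Etwo$ branch depends on the state it acts on, and the all-$\Eone$ branch can carry \emph{zero} trace at an intermediate step. Concretely, it can happen that $\mathcal{E}^{(\Eone)}_3\circ\mathcal{E}^{(\Eone)}_1\circ\mathcal{E}^{(\Eone)}_2(\rho)=0$ even though $\mathcal{E}^{(\Eone)}_1\circ\mathcal{E}^{(\Eone)}_2(\rho)\gneq 0$; this is exactly the case the paper must handle by routing through box S. The content you are missing is the pair of positivity statements with uniform trace lower bounds: (i) for a term $(\rho_0)_{i,i+1,i+2}\otimes O$ in box S, one has $\mathcal{E}^{(\Eone)}_j((\rho_0)\otimes O)\gneq 0$ with $\tr(\cdot)\geq \tfrac{1}{g^2}\tr((\rho_0)\otimes O)$, and (ii) when the direct $\Eone$ step vanishes, the detour $\mathcal{E}^{(\Eone)}\circ\cdots\circ\mathcal{E}^{(\Etwo)}_j\circ(\cdots)$ is again strictly positive with a $1/g^2$ lower bound. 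Both rest on the specific choice of $\rho_0$ in \cref{eqn:simplerho0} and on the fact that different BNT elements occupy orthogonal local subspaces, so that at least one block $P_\hbnt\otimes\cdots\otimes P_\hbnt$ retains a $1/g$ fraction of the trace. Your proposal flags this as ``the main obstacle'' but does not resolve it; without these bounds the claimed geometric decay of the leaked trace has no basis.

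A secondary gap: your monotonicity argument for the locked set $T$ invokes $\mathcal{E}^{(\Etwo)}_i\circ\mathcal{E}^{(\Eone)}_i=0$ only at coinciding locations, but an overlapping $\mathcal{E}^{(\Etwo)}_j$ with $j\neq i$ does not vanish in general. What actually prevents backflow from the late boxes is that for $N=5$ the support of three consecutive overlapping $\mathcal{E}^{(\Eone)}$'s covers all five sites, so those terms already lie in the subspace annihilated by every $\mathcal{E}^{(\Etwo)}_j$; this needs to be stated and used explicitly rather than absorbed into the definition of the terminal box.
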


\begin{proof}
We first decompose fixed-point $\rho$ as in \cref{eqn:rho-decomp}. 
Under an action of $\mathcal{E}^{(N)}$, the decomposition terms of $\rho$ will flow to each other, which includes the following arrows,
\begin{equation}
    \begin{array}{c}
        \begin{tikzpicture}[scale=.45,baseline={([yshift=-0.75ex] current bounding box.center)}
        ]
        \def\boxh{6.2};
        \def\boxv{6.2};
        \def\eqnh{1};
        \def\eqnht{0.8};
        \def\sf{0.};
        \def\posiB{0.9*\boxh};
        \def\posiC{2.*\boxh};
        \def\posiD{3.3*\boxh};
        \def\posiE{4.8*\boxh};
        \def\posiS{1.2*\boxh};
        \EquationBox{(0,0)}{2}{2.5}{boxc1}{
        $\begin{aligned}
            &\mathcal{E}_1^{(\Eone)}(\rho)\\
            &\mathcal{E}_2^{(\Eone)}(\rho)\\
            &\mathcal{E}_3^{(\Eone)}(\rho)\\
            &\cdots
        \end{aligned}$
        }{A}
        \draw [->, thick](-0.8,\eqnh*1.9) to [out=90,in=90] (0.8,\eqnh*1.9) ;
        \EquationBox{(\posiB,0)}{2.5}{2.5}{boxc2}{
        $\begin{aligned}
            &\mathcal{E}_1^{(\Eone)}\circ \mathcal{E}_2^{(\Eone)}(\rho)\\
            &\mathcal{E}_1^{(\Eone)}\circ \mathcal{E}_3^{(\Eone)}(\rho)\\
            &\mathcal{E}_4^{(\Eone)}\circ \mathcal{E}_1^{(\Eone)}(\rho)\\
            &\cdots
        \end{aligned}$
        }{B}
        \draw [->, thick](-0.8+\posiB,\eqnh*1.9) to [out=90,in=90] (0.8+\posiB,\eqnh*1.9);
        \EquationBox{(\posiC,0.)}{3.2}{2.5}{boxc3}{
        $\begin{aligned}
            &\mathcal{E}_1^{(\Eone)}\circ \mathcal{E}_2^{(\Eone)}\circ\mathcal{E}_3^{(\Eone)}(\rho)\\
            &\mathcal{E}_1^{(\Eone)}\circ \mathcal{E}_2^{(\Eone)}\circ\mathcal{E}_4^{(\Eone)}(\rho)\\
            &\cdots
        \end{aligned}$
        }{C}
        \draw [->, thick](-0.8+\posiC,\eqnh*1.5) to [out=90,in=90] (0.8+\posiC,\eqnh*1.5) ;
        \EquationBox{(\posiD,0.)}{4}{2.5}{boxc4}{
        $\begin{aligned}
            &\mathcal{E}_1^{(\Eone)}\circ \mathcal{E}_2^{(\Eone)}\circ\mathcal{E}_3^{(\Eone)}\circ\mathcal{E}_4^{(\Eone)}(\rho)\\
            &\mathcal{E}_1^{(\Eone)}\circ \mathcal{E}_2^{(\Eone)}\circ\mathcal{E}_3^{(\Eone)}\circ\mathcal{E}_5^{(\Eone)}(\rho)\\
            &\cdots
        \end{aligned}$
        }{D}
        \draw [->, thick](-0.8+\posiD,\eqnh*1.5) to [out=90,in=90] (0.8+\posiD,\eqnh*1.5) ;
        \EquationBox{(\posiE,0.)}{4.5}{2.5}{boxc5}{
        $\begin{aligned}
            \mathcal{E}_1^{(\Eone)}\circ \mathcal{E}_2^{(\Eone)}\circ\mathcal{E}_3^{(\Eone)}\circ\mathcal{E}_4^{(\Eone)}\circ\mathcal{E}_5^{(\Eone)}(\rho)
        \end{aligned}$
        }{E}
        \draw [->, thick](-0.8+\posiE,\eqnh*0.5) to [out=90,in=90] (0.8+\posiE,\eqnh*0.5) ;
        \EquationBox{(\posiS,-\boxv)}{7.5}{1.5}{lcolor}{
        $\begin{aligned}
            &(\rho_0)_{123}\otimes O_{45}\quad \quad (\rho_0)_{234}\otimes O_{51}\quad(\rho_0)_{345}\otimes O_{12}\\
            & (\rho_0)_{451}\otimes O_{23} \quad (\rho_0)_{512}\otimes O_{34}
        \end{aligned}$
        }{S}
        \draw [->, thick](-0.8+\posiS,-\eqnh-\boxv) to [out=-90,in=-90] (0.8+\posiS,-\eqnh-\boxv);
        \draw [-to,  thick](1.5,0) -- (\posiB-2,0);
        \draw [-to,  thick](2+\posiB,0) -- (\posiC-2.8,0);
        \draw [-to,  thick](2.9+\posiC,0) -- (\posiD-3.5,0);
        \draw [-to,  thick](3.7+\posiD,0) -- (\posiE-4.2,0);
        \draw [-to,  thick](1.2,-2) -- (\posiS-4,1-\boxv);
        \draw [-to, thick](\posiS-3,1-\boxv) -- (\posiB-0.5,-2);
        \draw [-to,  thick](\posiB+0.5,-2) -- (\posiS+1,1-\boxv);
        \draw [-to, thick](\posiS+3,1-\boxv) -- (\posiC-0.5,-2);
        \end{tikzpicture}
        \end{array}
\end{equation}
We group all possible terms in boxes labeled by ``A'', ``B'', ``C'', ``D'', ``E'' and ``S''. The terms in box B come from two neighboring actions of the form $\mathcal{E}^{(\Eone)}_i\circ\mathcal{E}^{(\Eone)}_j$, where ``neighboring'' means the support of $\mathcal{E}^{(\Eone)}_i$ and $\mathcal{E}^{(\Eone)}_j$ overlap; namely, $j$ can be $i-2,i-1,i+1,i+2$. 
The terms in box C come from three neighboring actions of the form $\mathcal{E}^{(\Eone)}_i\circ\mathcal{E}^{(\Eone)}_j\circ\mathcal{E}_k^{(\Eone)}$, where ``neighboring'' means the support of $\mathcal{E}^{(\Eone)}_i$ and $\mathcal{E}^{(\Eone)}_j$ overlap, and the support of $\mathcal{E}^{(\Eone)}_j$ and $\mathcal{E}^{(\Eone)}_k$ overlap. For example, when considering a sufficiently large $N$, say $N > 8$, the term $\mathcal{E}_1^{(\Eone)}\circ\mathcal{E}_2^{(\Eone)}\circ \mathcal{E}_3^{(\Eone)}(\rho)$ belongs to box C, while $\mathcal{E}_1^{(\Eone)}\circ\mathcal{E}_2^{(\Eone)}\circ \mathcal{E}_6^{(\Eone)}(\rho)$ belongs to box B.
Similarly for boxes ``D'' and ``E''. Finally, the terms in box ``S'' come from an action of $\mathcal{E}^{(\Etwo)}$, taking the form of $\mathcal{E}_i^{(\Etwo)}(\rho)=(\rho_0)_{i,i+1,i+2} \otimes O_{i+3,i+4}$.  $\mathcal{E}^{(\Etwo)}$ is a completely-positive linear map and $\rho$ is a density matrix, leading to $O_{i+3,i+4}$ being positive semi-definite. 

We justify the following facts:
\begin{enumerate}
\item \textbf{Either arrow $\text{A}\Rightarrow\text{B}$ or arrow $\text{A}\Rightarrow\text{S}$ must exist.} Without loss of generality, let's start from term $\mathcal{E}_1^{(\Eone)}(\rho)$. When acting $\mathcal{E}_2$, $\mathcal{E}_2^{(\Eone)}$ will bring it to term $\mathcal{E}_1^{(\Eone)}\circ\mathcal{E}_2^{(\Eone)}$ in box B, and $\mathcal{E}_2^{(\Etwo)}$ will bring it to term $(\rho_0)_{234} \otimes O_{51}$ in box S. Since $\mathcal{E}_2$ is a CPTP map, at least one of these two terms is non-zero.  Therefore, an arrow must exist for either box A to box B or box A to box S. 
    \item \textbf{The arrow $\text{S}\Rightarrow\text{B}$ must exist. } Without loss of generality, let's start from term $((\rho_0)_{123}\otimes O_{45})$, and decompose $O_{45}=\sum_{\hbnt \hbnt'} (P_\hbnt\otimes\bo) O_{45} (P_{\hbnt'}\otimes\bo)$. Since $O_{45}\gneq 0$, for each $\hbnt$, $(P_\hbnt\otimes\bo) O_{45} (P_\hbnt\otimes\bo)\geq 0$ and $\tr(O_{45})=\sum_\hbnt \tr((P_\hbnt\otimes\bo) O_{45} (P_\hbnt\otimes\bo))>0$. Therefore, there exists at least one $\hbnt$ such that $(P_\hbnt\otimes\bo) O_{45} (P_\hbnt\otimes\bo)\gneq 0$, and for this $a$ we have a lower bound for the trace $\tr((P_\hbnt\otimes\bo) O_{45} (P_\hbnt\otimes\bo))\geq\frac{1}{g}\tr(O_{45})$, where $g$ is the number of elements in the BNT. For this $\hbnt$ and with explicit choice of $\rho_0$ in \cref{eqn:simplerho0}, $(P_\hbnt\otimes P_\hbnt\otimes P_\hbnt\otimes P_\hbnt\otimes \bo)((\rho_0)_{123}\otimes O_{45})(P_\hbnt\otimes P_\hbnt\otimes P_\hbnt\otimes P_\hbnt\otimes \bo)\gneq 0$, leading to $(\bo\otimes P_\hbnt\otimes P_\hbnt\otimes P_\hbnt\otimes \bo)((\rho_0)_{123}\otimes O_{45})(\bo\otimes P_\hbnt\otimes P_\hbnt\otimes P_\hbnt\otimes \bo)\gneq 0$, and subsequently,
\[
\mathcal{E}^{(\Eone)}_2((\rho_0)_{123}\otimes O_{45}) \gneq 0.
\]
The action $\frac{1}{N}\mathcal{E}_2^{(\Eone)}$ will bring the term $(\rho_0)_{123}\otimes O_{45}$ from box S to box B. 
Therefore, the arrows from box S to box B must exist. 

Furthermore, by noting that $\tr(P_a\otimes P_a\otimes P_a (\rho_0) P_a\otimes P_a\otimes P_a )=\frac{1}{g}\tr(\rho_0)$ by construction, we can put a lower bound for the trace,
\[
\tr(\mathcal{E}^{(\Eone)}_2((\rho_0)_{123}\otimes O_{45}))\geq \frac{1}{g^2} \tr((\rho_0)_{123}\otimes O_{45})),
\]
where the lower bound $1/g^2$ is a constant.

\item \textbf{Either arrow $\text{B}\Rightarrow\text{C}$ or arrows $\text{B}\Rightarrow\text{S}\Rightarrow\text{C}$ must exist.} Without loss of generality, let's start from term $\mathcal{E}_1^{(\Eone)}\circ\mathcal{E}_2^{(\Eone)}(\rho)$ in box C and consider the action of $\frac{1}{N}\mathcal{E}_3$. If $\mathcal{E}_3^{(\Eone)}\circ\mathcal{E}_1^{(\Eone)}\circ\mathcal{E}_2^{(\Eone)}(\rho)\gneq 0$, then $\frac{1}{N}\mathcal{E}_3$ will bring $\mathcal{E}_1^{(\Eone)}\circ\mathcal{E}_2^{(\Eone)}(\rho)$ from box B to box C. 

If, however, that $\mathcal{E}_3^{(\Eone)}\circ\mathcal{E}_1^{(\Eone)}\circ\mathcal{E}_2^{(\Eone)}(\rho)=0$, we then consider $\mathcal{E}_3^{(\Etwo)}\circ(\mathcal{E}_1^{(\Eone)}\circ\mathcal{E}_2^{(\Eone)}(\rho))$, which is term $(\rho_0)_{345}\otimes O_{12}$ in box B. Due to the structure of $\mathcal{E}_1^{(\Eone)}\circ\mathcal{E}_2^{(\Eone)}$, $\mathcal{E}_1^{(\Eone)}\circ\mathcal{E}_2^{(\Eone)}(\rho)$ satisfies
\[
\sum_\hbnt (P_\hbnt\otimes P_\hbnt\otimes \bo\otimes\bo\otimes \bo)(\mathcal{E}_1^{(\Eone)}\circ\mathcal{E}_2^{(\Eone)}(\rho)) (P_\hbnt\otimes P_\hbnt\otimes \bo\otimes\bo\otimes \bo)=\mathcal{E}_1^{(\Eone)}\circ\mathcal{E}_2^{(\Eone)}(\rho),
\]
and therefore by acting  $\mathcal{E}_3^{(\Etwo)}$ which does not act on site 1 and 2, $\mathcal{E}_3^{(\Etwo)}\circ(\mathcal{E}_1^{(\Eone)}\circ\mathcal{E}_2^{(\Eone)}(\rho))$ can be written as a sum of direct product,
\[
\mathcal{E}_3^{(\Etwo)}\circ(\mathcal{E}_1^{(\Eone)}\circ\mathcal{E}_2^{(\Eone)}(\rho))=\sum_\hbnt  (O_\hbnt)_{12}\otimes (\rho_0)_{345}, 
\]
where $O_\hbnt\geq 0$ and $(P_\hbnt\otimes P_\hbnt \otimes \bo\otimes \bo\otimes\bo)(O_\hbnt)_{12}(P_\hbnt\otimes P_\hbnt \otimes \bo\otimes \bo\otimes\bo)=(O_\hbnt)_{12}$. There exists at least one $\hbnt$ such that $O_\hbnt\gneq 0$, and for this $\hbnt$ we have a lower bound for the trace $\tr((P_\hbnt\otimes P_\hbnt \otimes \bo\otimes \bo\otimes\bo)(O_\hbnt\otimes\rho_0))\geq \frac{1}{g}\tr(\mathcal{E}_3^{(\Etwo)}\circ(\mathcal{E}_1^{(\Eone)}\circ\mathcal{E}_2^{(\Eone)}(\rho)))$. For this $\hbnt$ and with explicit choice of $\rho_0$ in \cref{eqn:simplerho0}, $(P_\hbnt\otimes P_\hbnt\otimes P_\hbnt\otimes P_\hbnt\otimes P_\hbnt)(\mathcal{E}_3^{(\Etwo)}\circ(\mathcal{E}_1^{(\Eone)}\circ\mathcal{E}_2^{(\Eone)}(\rho)))(P_\hbnt\otimes P_\hbnt\otimes P_\hbnt\otimes P_\hbnt\otimes P_\hbnt)\gneq 0$, therefore by acting on another $\mathcal{E}_1^{(\Eone)}\circ\mathcal{E}_2^{(\Eone)}$,
\[
\mathcal{E}_1^{(\Eone)}\circ\mathcal{E}_2^{(\Eone)}\circ \mathcal{E}_3^{(\Etwo)}\circ(\mathcal{E}_1^{(\Eone)}\circ\mathcal{E}_2^{(\Eone)}(\rho))\gneq 0.
\]
Using \cref{eqn:e12}, we see this term is in box C. Therefore, either arrow from box B to box C or from box B to box C via box S must exist. Furthermore, we have a lower bound for the trace
\[
\tr(\mathcal{E}_1^{(\Eone)}\circ\mathcal{E}_2^{(\Eone)}\circ \mathcal{E}_3^{(\Etwo)}\circ(\mathcal{E}_1^{(\Eone)}\circ\mathcal{E}_2^{(\Eone)}(\rho)))\geq \frac{1}{g^2} \tr(\mathcal{E}_3^{(\Etwo)}\circ(\mathcal{E}_1^{(\Eone)}\circ\mathcal{E}_2^{(\Eone)}(\rho))),
\]
where the lower bound $1/g^2$ is a constant.

\item \textbf{There are no arrows from boxes C, D, E to box S. } At $N=5$, the terms in box C are already in the desired space. Specifically, taking $\mathcal{E}_1^{(\Eone)}\circ\mathcal{E}_2^{(\Eone)}\circ\mathcal{E}_3^{(\Eone)}(\rho)$ as an example, it already satisfies 
\[
    \sum_\hbnt (P_\hbnt\otimes P_\hbnt\otimes P_\hbnt\otimes P_\hbnt\otimes P_\hbnt)(\mathcal{E}_1^{(\Eone)}\circ\mathcal{E}_2^{(\Eone)}\circ\mathcal{E}_3^{(\Eone)}(\rho))(P_\hbnt\otimes P_\hbnt\otimes P_\hbnt\otimes P_\hbnt\otimes P_\hbnt) = \mathcal{E}_1^{(\Eone)}\circ\mathcal{E}_2^{(\Eone)}\circ\mathcal{E}_3^{(\Eone)}(\rho)
    \]
    because the support of $\mathcal{E}_1^{(\Eone)}\circ\mathcal{E}_2^{(\Eone)}\circ\mathcal{E}_3^{(\Eone)}$ covers all the five sites. When acting $\mathcal{E}_4$ only $\mathcal{E}_4^{(\Eone)}$ leads to non-zero result and brings it to term $\mathcal{E}_1^{(\Eone)}\circ\mathcal{E}_2^{(\Eone)}\circ\mathcal{E}_3^{(\Eone)}\circ\mathcal{E}_4^{(\Eone)}(\rho)$ in box D, while acting $\mathcal{E}_4^{(\Etwo)}$ gives zero. Similarly, the terms in box D are already in the desired space, and further actions of $\mathcal{E}_i$ can only bring the terms to themselves or the term in box E. Finally, acting $\mathcal{E}_i$ to terms in box E bring them to themselves. Therefore, no arrows are from boxes C, D, and E to box S.
\end{enumerate}
With the above arrows, it is clear that the trace of terms in box E will keep increasing as long as there exist terms in previous boxes, and subsequently, a steady state can only be supported in box E, taking the form of \cref{eqn:ssform}. Therefore, the parent Lindbladian is frustration-free.

\end{proof}

The above proof can be generalized to arbitrary $N$ by repeating the argument in point 3. For example, starting from $\mathcal{E}_1^{(\Eone)}\circ \mathcal{E}_2^{(\Eone)}\circ \mathcal{E}_3^{(\Eone)}(\rho)$, if $\mathcal{E}^{(\Eone)}_4\circ(\mathcal{E}_1^{(\Eone)}\circ \mathcal{E}_2^{(\Eone)}\circ \mathcal{E}_3^{(\Eone)}(\rho))=0$, then one can similarly show
\[
\mathcal{E}_1^{(\Eone)}\circ \mathcal{E}_2^{(\Eone)}\circ \mathcal{E}_3^{(\Eone)}\circ \mathcal{E}_4^{(\Etwo)}\circ (\mathcal{E}_1^{(\Eone)}\circ \mathcal{E}_2^{(\Eone)}\circ \mathcal{E}_3^{(\Eone)}(\rho))\gneq 0,
\]
justifying either arrow $\text{C}\Rightarrow\text{D}$ or arrows $\text{C}\Rightarrow\text{S}\Rightarrow\text{D}$ must exist.

\subsection{MPDO from \texorpdfstring{$\boldsymbol{C^*}$}{C*}-weak Hopf algebras}
In this section, we prove the frustration-freeness of parent Lindbladian for non-simple MPDO RFPs generated by $C^*$-weak Hopf algebras. We make a particular choice for $\mathcal{E}^{(\Etwo)}$ as in \cref{eqn:choiceE2WHA}, which takes the explicit form of
\begin{equation}
    \mathcal{E}^{(\Etwo)}(X)
    :=\tr(P^\perp X)\rho_0, \quad
    \rho_0 
    := \frac{1}{\omega(\Omega)}b(\omega)^{\otimes 2}(\Phi^{\otimes 2}\circ\Delta)(\Omega),
\end{equation}
where $P^\perp:=\bo-(\Phi^{\otimes 2}\circ\Delta)(1)$ and $\rho_0$ satisfies \cref{eqn:rho0eig}. 

\subsubsection{Open boundary condition}

Let us first consider system size $N=4$ and open boundary conditions as an explicit example, where under open boundary condition $\mathcal{E}^{(N)}$ is defined as $\mathcal{E}^{(N)}=\frac{1}{N-1}\sum_{i=1}^{N-1}\mathcal{E}_i$.

\begin{prop}
    The parent Lindbladian for non-simple MPDO RFPs generated by $ C*$-Weak Hopf algebras under the open boundary condition with a system size of $N=4$ is frustration-free. 
\end{prop}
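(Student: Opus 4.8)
The plan is to adapt the ``flow diagram'' argument carried out above for non-injective simple MPDO RFPs, now with the local two-site channels $\mathcal{E}_i=\mathcal{T}\circ\mathcal{S}$ of the $C^*$-weak Hopf algebra construction and the choice of $\mathcal{E}^{(\Etwo)}$ in \cref{eqn:choiceE2WHA}. First I would recall that, with open boundary conditions, $\mathcal{E}^{(N)}=\frac{1}{N-1}\sum_{i=1}^{N-1}\mathcal{E}_i$ (here $N=4$, so $i=1,2,3$), that $\mathcal{L}^{(N)}=(N-1)(\mathcal{E}^{(N)}-\bo)$, and hence that a steady state of $\mathcal{L}^{(N)}$ is precisely a fixed point of $\mathcal{E}^{(N)}$. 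I would then decompose any such fixed point $\rho$ as in \cref{eqn:rho-decomp}, i.e.~$\rho=\frac{1}{3}\sum_{i=1}^{3}\bigl(\mathcal{E}_i^{(\Eone)}(\rho)+\mathcal{E}_i^{(\Etwo)}(\rho)\bigr)$, each summand being positive semidefinite since all $\mathcal{E}_i^{(k)}$ are completely positive. Because $\rho$ is also a fixed point of $(\mathcal{E}^{(N)})^m$ for every $m$, the strategy is to follow how these summands get reshuffled under repeated applications of $\mathcal{E}^{(N)}$.

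Next I would organize the terms into boxes exactly as before: box A collects the single-channel terms $\mathcal{E}_i^{(\Eone)}(\rho)$; box B collects the two-channel compositions $\mathcal{E}_i^{(\Eone)}\circ\mathcal{E}_j^{(\Eone)}(\rho)$; box C (the ``final'' box) collects the full composition $\mathcal{E}_1^{(\Eone)}\circ\mathcal{E}_2^{(\Eone)}\circ\mathcal{E}_3^{(\Eone)}(\rho)$, whose image is supported on all $N$ sites; and box S collects the terms produced by an $\mathcal{E}^{(\Etwo)}_i$ action, which by \cref{eqn:choiceE2WHA} have the product form $(\rho_0)_{i,i+1}\otimes O_{\mathrm{rest}}$ with $O_{\mathrm{rest}}\ge 0$. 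Using the identities \cref{eqn:rho0eig,eqn:e12,eqn:e21vanish,eqn:idempotent,eqn:ecommute}, which by construction also hold in this setting, together with trace-preservation of each $\mathcal{E}_i$, I would establish the arrows: (i) acting on a box-A term with a neighboring $\mathcal{E}_j$ lands either in box B (via $\mathcal{E}_j^{(\Eone)}$) or in box S (via $\mathcal{E}_j^{(\Etwo)}$), and CPTP-ness forces at least one to be nonzero; (ii) a box-S term flows forward into box B under an appropriate $\mathcal{E}_j^{(\Eone)}$, with a trace lower bound independent of $N$; (iii) from box B one reaches box C directly or via box S; and (iv) no arrow leaves box C, since once a term carries the RFP two-site structure on every bond, $P^\perp$ annihilates it (as in \cref{eq:E1Stab,eq:S2annh}) and only $\mathcal{E}_i^{(\Eone)}$ acts nontrivially, mapping it to itself by \cref{eqn:idempotent,eqn:ecommute}. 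Since $\mathcal{E}^{(N)}$ is trace-preserving and all pieces are positive, the total weight is monotonically funneled into box C, so a fixed point must have the form \cref{eqn:ssform}, $\rho=\mathcal{E}_1^{(\Eone)}\circ\mathcal{E}_2^{(\Eone)}\circ\mathcal{E}_3^{(\Eone)}(\rho)$. Finally \cref{eqn:e21vanish,eqn:idempotent,eqn:ecommute} give $\mathcal{E}_i(\rho)=\mathcal{E}_i^{(\Eone)}(\rho)=\rho$ for each $i$, i.e.~$\mathcal{L}_i(\rho)=0$, which is the claimed frustration-freeness.

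The main obstacle I expect is arrow (ii), the forward flow out of box S. In the simple case this rested on the BNT elements being supported on mutually orthogonal physical subspaces, which allowed one to conjugate $\rho_0$ (chosen block-diagonal in the labels $\hbnt$ as in \cref{eqn:simplerho0}) by projectors $P_\hbnt$ and extract a positive piece of trace bounded below by $1/g$. Here $\rho_0=\omega(\Omega)^{-1}\,b(\omega)^{\otimes 2}(\Phi^{\otimes 2}\circ\Delta)(\Omega)$ has no such block-diagonal factorization, so I would instead use the pulling-through identities and cocentrality of $\Omega$ (\cref{eqn:pull-through,eqn:omega-identity} and \cref{eq:BlackVsWhite,eq:XiOmega,eq:OmegaCommutes}) to show that applying $\mathcal{E}_j^{(\Eone)}$ to $(\rho_0)_{i,i+1}\otimes O_{\mathrm{rest}}$ yields, for at least one $j$ adjacent to the $\rho_0$-block, a nonzero positive operator whose trace is bounded below by a constant depending only on $\dim A$, not on $N$; this uniform bound is exactly what forces the flow to terminate in box C after finitely many steps. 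A secondary, purely bookkeeping issue specific to $N=4$ is that the two outer channels $\mathcal{E}_1$ and $\mathcal{E}_3$ have disjoint support, so the box-B term $\mathcal{E}_1^{(\Eone)}\circ\mathcal{E}_3^{(\Eone)}(\rho)$ already spans all four sites but factorizes across the middle bond; one must check it still flows into box C under $\mathcal{E}_2$ rather than getting stuck, which again follows from the same identities. The argument then extends verbatim to arbitrary $N$ by iterating step (iii), exactly as in the simple case.
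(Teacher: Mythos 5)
Your proposal follows essentially the same route as the paper: the same flow-diagram decomposition into boxes A, B, C, S, the same arrows, the same use of the identities \cref{eqn:rho0eig,eqn:e12,eqn:e21vanish,eqn:idempotent,eqn:ecommute}, and the same identification of the $\text{S}\Rightarrow\text{B}$ arrow as the step that genuinely differs from the simple case. The one place where your plan is only a gesture is exactly that step: the paper resolves it not by an abstract appeal to pulling-through and cocentrality but by the concrete factorization $\xi=\xi_L\xi_R$, which lets the $\Xi$ insertion be absorbed into a conjugation of $O$ by $\Phi(\xi_L)^{-1}$, so that $\mathcal{E}_2^{(\Eone)}\bigl((\rho_0)_{12}\otimes O_{34}\bigr)$ is recognized as $((\mathcal{T}\otimes\mathrm{Id})\circ\mathcal{T})$ applied to a positive operator; positivity then comes from complete positivity of $\mathcal{T}$ and the uniform trace lower bound from $\lambda_{\min}(\Phi(\xi_L)^{-2})>0$. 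You would need to supply this (or an equivalent) mechanism to close the argument, but the overall structure and conclusion match the paper's proof.
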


\begin{proof}
    We first decompose the fixed-point $\rho$ as in \cref{eqn:rho-decomp}. Under an action of $\mathcal{E}^{(N)}$, the decomposition terms of $\rho$ flow to each other as
    \begin{equation}
    \begin{array}{c}
    \begin{tikzpicture}[scale=.45,baseline={([yshift=-0.75ex] current bounding box.center)}
            ]
        \def\boxh{6.2};
        \def\boxv{5.2};
        \def\eqnh{1};
        \def\eqnht{0.8};
        \def\sf{0.};
        \def\posiB{\boxh};
        \def\posiC{2.2*\boxh};
        \def\posiS{\boxh};
        \EquationBox{(0,0)}{2}{2.}{boxc1}{
        $\begin{aligned}
            &\mathcal{E}_1^{(\Eone)}(\rho)\\
            &\mathcal{E}_2^{(\Eone)}(\rho)\\
            &\mathcal{E}_3^{(\Eone)}(\rho)
        \end{aligned}$
        }{A}
        \draw [->,  thick](-0.8,\eqnh*1.4) to [out=90,in=90] (0.8,\eqnh*1.4) ;
        \EquationBox{(\posiS,-\boxv)}{5.5}{1.5}{lcolor}{
        $\begin{aligned}
            &(\rho_0)_{12}\otimes O_{34}\quad (\rho_0)_{23}\otimes O_{41}\\
            &(\rho_0)_{34}\otimes O_{12}
        \end{aligned}$
        }{S}
        \draw [->, thick](-0.8+\posiS,-\eqnh-\boxv) to [out=-90,in=-90] (0.8+\posiS,-\eqnh-\boxv);
        \EquationBox{(\posiB,0)}{2.5}{2.}{boxc2}{
        $\begin{aligned}
            &\mathcal{E}_1^{(\Eone)}\circ \mathcal{E}_2^{(\Eone)}(\rho)\\
            &\mathcal{E}_2^{(\Eone)}\circ \mathcal{E}_3^{(\Eone)}(\rho)
        \end{aligned}$
        }{B}
        \draw [->, thick](-0.8+\posiB,\eqnh*1.) to [out=90,in=90] (0.8+\posiB,\eqnh*1.) ;
        \EquationBox{(2.2*\boxh,0.)}{3.2}{2.}{boxc3}{
        $\begin{aligned}
            \mathcal{E}_1^{(\Eone)}\circ \mathcal{E}_2^{(\Eone)}\circ\mathcal{E}_3^{(\Eone)}(\rho)
        \end{aligned}$
        }{C}
        \draw [->,  thick](-0.8+\posiC,\eqnh*0.5) to [out=90,in=90] (0.8+\posiC,\eqnh*0.5);
        \draw [-to, thick](1.5,0) -- (\posiB-2,0);
        \draw [-to,  thick](2+\posiB,0) -- (\posiC-2.8,0);
        \draw [-to,  thick](1.2,-1.8) -- (\posiS-3,1.1-\boxv);
        \draw [-to, thick](\posiS-2,1.1-\boxv) -- (\posiB-0.5,-1.8);
        \draw [-to, thick](\posiB+0.5,-1.8) -- (\posiS+2,1.1-\boxv);
        \draw [-to, thick](\posiS+3,1.1-\boxv) -- (\posiC-2.5,-1.8);
    \end{tikzpicture}
    \end{array}
    \end{equation}
    We group all possible terms in five boxes labeled by ``A'', ``B'', ``C'', ``S''. The terms in box B come from two neighboring actions of the form $\mathcal{E}^{(\Eone)}_i\circ\mathcal{E}^{(\Eone)}_j$, where ``neighboring'' means the support of $\mathcal{E}^{(\Eone)}_i$ and $\mathcal{E}^{(\Eone)}_j$ overlap; namely, $j$ can be $i-1,i+1$. The terms in box C come from three neighboring actions of $\mathcal{E}^{(\Eone)}$. Finally, the terms in box ``S'' come from an action of $\mathcal{E}^{(\Etwo)}$ on a density matrix and therefore $O_{i,i+1}\geq 0$. 

    We justify the following facts.
    \begin{enumerate}
        \item \textbf{Either arrow $\text{A}\Rightarrow\text{B}$ or arrow $\text{A}\Rightarrow\text{S}$ must exist.} The argument is the same as for non-injective simple MPDOs.
        \item \textbf{The arrow $\text{S}\Rightarrow\text{B}$ must exist. } Without loss of generality, let's start from the term $(\rho_0)_{12}\otimes O_{34}$ and act $\mathcal{E}_2^{(\Eone)}$, and show
        \begin{equation}
            \label{eq:SubtletyWHA}
            \mathcal{E}_2^{(\Eone)}\circ((\rho_0)_{12}\otimes O_{34})\gneq 0.
        \end{equation}
This expression can be simplified to 
    \[
\mathcal{E}_2^{(\Eone)}\circ((\rho_0)_{12}\otimes O_{34}) \equiv
\begin{tikzpicture}[scale=1]
    \draw[Virtual] (2.223, 1.976) arc[start angle=90, end angle=180, x radius=0.494, y radius=-0.494];
    \draw[Virtual] (2.469, 2.963) -- (2.223, 2.963) arc[start angle=90, end angle=151.044, radius=0.494];
    \draw[Virtual] (2.858, 1.976) -- (2.469, 1.976);
    \draw[Virtual] (4.198, 1.976) -- (3.351, 1.976);
    \draw[Virtual] (4.198, 1.976) arc[start angle=-90, end angle=90, radius=0.494] -- (3.951, 2.963);
    \draw[-mid] (3.704, 1.235) -- (3.704, 1.729);
    \draw[bevel, -mid] (3.714, 2.222) arc[start angle=-163.45, end angle=0, x radius=0.247, y radius=-0.247] -- (4.198, 1.235);
    \filldraw[ultra thin, fill=white] 
    (1.729, 2.469) circle[radius=0.247];
    \node[anchor=center] at (1.729, 2.469) {$\Omega$};
    \filldraw[ultra thin, fill=white] 
    (3.104, 1.976) circle[radius=0.247];
    \node[anchor=center] at (3.104, 1.976) {$\Xi$};
    \draw[Virtual] (0.502, 0.926) arc[start angle=118.956, end angle=151.044, radius=0.494];
    \draw[Virtual, -mid] (1.976, 0.988) -- (0.988, 0.988);
    \draw[-mid] (0.741, 1.235) -- (0.741, 1.482);
    \draw[bevel, mid-] (0.741, 0.741) arc[start angle=180, end angle=360, radius=0.247] -- (1.235, 1.482);
    \draw[Virtual] (0.741, 0) arc[start angle=90, end angle=180, x radius=0.494, y radius=-0.494];
    \draw[Virtual] (2.716, 0) -- (0.741, 0);
    \draw[Virtual] (2.716, 0) arc[start angle=-90, end angle=90, radius=0.494] -- (1.729, 0.988);
    \filldraw[ultra thin, fill=white] 
    (0.247, 0.494) circle[radius=0.247];
    \node[anchor=center] at (0.247, 0.494) {$\Omega$};
    \draw[-mid] (5.186, 1.235) -- (5.186, 1.729);
    \draw[mid-] (5.68, 1.235) -- (5.68, 1.729);
    \filldraw[ultra thin, fill=black!2] (3.704, 1.235) arc[start angle=90, end angle=270, radius=0.247] -- (5.68, 0.741) arc[start angle=-90, end angle=90, radius=0.247] -- cycle;
    \node[anchor=center] at (4.692, 0.988) {$O_{34}$};
    \node[anchor=center] at (0.741, 0.988) {$M$};
    \filldraw[ultra thin, fill=Goldenrod] 
    (0.741, 0.988) circle[radius=0.247];
    \node[anchor=center] at (0.741, 0.988) {$M$};
    \draw[bevel, -mid] (2.233, 2.222) arc[start angle=-163.45, end angle=0, x radius=0.247, y radius=-0.247] -- (2.716, 1.411) -- (2.716, 0.741) arc[start angle=-0.885, end angle=180.885, x radius=0.247, y radius=-0.247];
    \draw[-mid] (2.223, 1.235) -- (2.223, 1.729);
    \node[anchor=center] at (2.223, 0.988) {$M$};
    \filldraw[ultra thin, fill=Goldenrod] 
    (2.223, 0.988) circle[radius=0.247];
    \node[anchor=center] at (2.223, 0.988) {$M$};
    \draw[Virtual, -mid] (3.457, 2.963) -- (2.469, 2.963);
    \draw[-mid] (3.704, 3.21) -- (3.704, 3.457);
    \draw[bevel, mid-] (3.714, 2.717) arc[start angle=-163.45, end angle=0, radius=0.247] -- (4.198, 3.457);
    \filldraw[ultra thin, fill=Goldenrod] 
    (3.704, 2.963) circle[radius=0.247];
    \node[anchor=center] at (3.704, 2.963) {$M$};
    \draw[-mid] (2.223, 3.21) -- (2.223, 3.457);
    \draw[bevel, mid-] (2.233, 2.717) arc[start angle=-163.45, end angle=0, radius=0.247] -- (2.716, 3.457);
    \filldraw[ultra thin, fill=Goldenrod] 
    (2.223, 2.963) circle[radius=0.247];
    \node[anchor=center] at (2.223, 2.963) {$M$};
     \whaNsymb{(3.704, 1.976)};
    \whaNsymb{(2.223, 1.976)};
\end{tikzpicture}
=
\begin{tikzpicture}[scale=1]
    \draw[Virtual] (2.858, 1.976) -- (2.611, 1.976);
    \draw[Virtual] (4.198, 1.976) -- (3.351, 1.976);
    \draw[Virtual] (4.198, 1.976) arc[start angle=-90, end angle=90, radius=0.494] -- (3.21, 2.963);
    \draw[-mid] (3.704, 1.235) -- (3.704, 1.729);
    \filldraw[ultra thin, fill=white] 
    (3.104, 1.976) circle[radius=0.247];
    \node[anchor=center] at (3.104, 1.976) {$\Xi$};
    \draw[Virtual] (0.502, 0.926) arc[start angle=118.956, end angle=151.044, radius=0.494];
    \draw[Virtual] (0.741, 0) arc[start angle=90, end angle=180, x radius=0.494, y radius=-0.494];
    \draw[Virtual] (2.716, 0) -- (0.741, 0);
    \draw[Virtual] (2.716, 0) arc[start angle=-90, end angle=90, radius=0.494] -- (2.716, 0.988);
    \filldraw[ultra thin, fill=white] 
    (0.247, 0.494) circle[radius=0.247];
    \node[anchor=center] at (0.247, 0.494) {$\Omega$};
    \draw[-mid] (5.186, 1.235) -- (5.186, 1.729);
    \draw[mid-] (5.68, 1.235) -- (5.68, 1.729);
    \draw[Virtual, -mid] (2.611, 0.988) arc[start angle=90, end angle=270, x radius=0.494, y radius=-0.494];
    \draw[Virtual] (2.716, 0.988) -- (2.611, 0.988);
    \draw[Virtual, -mid] (1.976, 2.963) .. controls (1.129, 2.716) and (1.87, 1.235) .. (0.988, 0.988);
    \draw[-mid] (0.741, 1.235) -- (0.741, 1.482);
    \draw[bevel, mid-] (0.741, 0.741) arc[start angle=180, end angle=360, radius=0.247] -- (1.235, 1.482);
    \node[anchor=center] at (0.741, 0.988) {$M$};
    \filldraw[ultra thin, fill=Goldenrod] 
    (0.741, 0.988) circle[radius=0.247];
    \node[anchor=center] at (0.741, 0.988) {$M$};
    \draw[bevel, -mid] (3.714, 2.222) arc[start angle=-163.45, end angle=0, x radius=0.247, y radius=-0.247] -- (4.198, 1.235);
    \filldraw[ultra thin, fill=black!2] (3.704, 1.235) arc[start angle=90, end angle=270, radius=0.247] -- (5.68, 0.741) arc[start angle=-90, end angle=90, radius=0.247] -- cycle;
    \node[anchor=center] at (4.692, 0.988) {$O_{34}$};
    \draw[Virtual, -mid] (3.457, 2.963) -- (2.469, 2.963);
    \draw[-mid] (3.704, 3.21) -- (3.704, 3.457);
    \draw[bevel, mid-] (3.714, 2.717) arc[start angle=-163.45, end angle=0, radius=0.247] -- (4.198, 3.457);
    \filldraw[ultra thin, fill=Goldenrod] 
    (3.704, 2.963) circle[radius=0.247];
    \node[anchor=center] at (3.704, 2.963) {$M$};
    \draw[-mid] (2.223, 3.21) -- (2.223, 3.457);
    \draw[bevel, mid-] (2.233, 2.717) arc[start angle=-163.45, end angle=0, radius=0.247] -- (2.716, 3.457);
    \filldraw[ultra thin, fill=Goldenrod] 
    (2.223, 2.963) circle[radius=0.247];
    \node[anchor=center] at (2.223, 2.963) {$M$};
    \whaNsymb{(3.704, 1.976)};
\end{tikzpicture}
    \]
by virtue of \cref{eq:BlackVsWhite}. 

Now, let us recall the definition of the tensors $M'$ and $\Xi$ in \cref{eq:ExplicitDefMInv,eq:ExplicitDefXi}. One can also prove that the element $\xi\in A$ can be factorized in the form $\xi = \xi_L\xi_R$ for some positive invertible elements $\xi_L,\xi_R\in A$, which in particular have the properties of $\xi_L x=\hat{\xi}_L(x_{(1)}) x_{(2)}, x \xi_R=\hat{\xi}_R (x_{(1)}) x_{(2)}$; we refer the reader to Lemma 3.5 in \cite{ruizdealarcon2024matrix} for a proof. Using these properties, 
\begin{equation}
\label{eqn:action-right}
\begin{tikzpicture}[scale=1]
    \draw[Virtual,mid-] (0.988, 0.988) -- (0, 0.988);
    \draw[Virtual,-mid] (0, 0.988) -- (0.247, 0.988);
    \draw[Virtual] (1.482, 0.988) -- (1.729, 0.988) -- (2.117, 0.988);
    \draw[bevel, -mid] (1.235, 1.235) -- (1.235, 1.235) arc[start angle=180, end angle=360, x radius=0.247, y radius=-0.247] -- (1.729, 0);
    \draw[-mid] (1.235, 0) -- (1.235, 0.741);
    \whaNsymb{(1.235, 0.988)};
    \filldraw[ultra thin, fill=white] 
    (0.494, 0.988) circle[radius=0.247];
    \node[anchor=center] at (0.494, 0.988) {$\Xi$};
\end{tikzpicture}
=
\begin{tikzpicture}[scale=1]
    \draw[Virtual] (0.512, 0.988) -- (0, 0.988);
    \draw[-mid] (1.235, 0) -- (1.235, 0.864);
    \draw[Virtual] (0.758, 0.988) -- (1.729, 0.988) -- (2.117, 0.988);
    \filldraw[ultra thin, fill=LimeGreen!50] 
    (0.635, 0.988) circle[radius=0.123];
    \filldraw[ultra thin, fill=GreenYellow!50] 
    (1.235, 0.432) circle[radius=0.123];
    \node[anchor=center, font=\footnotesize] at (2.469, 0.432) {$\Phi(\xi)^{1/2}$};
    \node[anchor=center, font=\footnotesize] at (0.635, 1.376) {$\Psi(\hat\xi)^{-1}$};
    \node[anchor=center, font=\footnotesize] at (0.529, 0.432) {$\Phi(\xi)^{1/2}$};
    \draw[bevel, -mid] (1.235, 1.235) -- (1.235, 1.235) arc[start angle=-180, end angle=0, x radius=0.247, y radius=-0.247] -- (1.729, 0);
    \draw (1.235, 1.111) -- (1.235, 1.235);
    \filldraw[ultra thin, fill=white] 
    (1.235, 0.988) circle[radius=0.123];
    \filldraw[ultra thin, fill=GreenYellow!50] 
    (1.729, 0.432) circle[radius=0.123];
\end{tikzpicture}
=
\begin{tikzpicture}[scale=1]
    \draw[-mid] (1.447, 0) -- (1.447, 0.864);
    \filldraw[ultra thin, fill=GreenYellow!50] 
    (1.447, 0.432) circle[radius=0.123];
    \node[anchor=center, font=\footnotesize] at (2.823, 0.432) {$\Phi(g)^{-\frac{1}{2}}$};
    \node[anchor=center, font=\footnotesize] at (0.6, 0.432) {$\Phi(g)^{-\frac{1}{2}}$};
    \draw[Virtual] (0.97, 0.988) -- (1.941, 0.988) -- (2.329, 0.988);
    \draw[bevel, -mid] (1.447, 1.235) -- (1.447, 1.235) arc[start angle=-180, end angle=0, x radius=0.247, y radius=-0.247] -- (1.941, 0);
    \draw (1.447, 1.111) -- (1.447, 1.235);
    \filldraw[ultra thin, fill=GreenYellow!50] 
    (1.941, 0.432) circle[radius=0.123];
    \filldraw[ultra thin, fill=white] 
    (1.447, 0.988) circle[radius=0.123];
\end{tikzpicture}
=
\begin{tikzpicture}[scale=1]
    \draw[Virtual] (1.694, 0.988) -- (1.941, 0.988) -- (2.329, 0.988);
    \draw (1.447, 0) -- (1.447, 0.741);
    \draw[bevel, -mid] (1.447, 1.235) -- (1.447, 1.235) arc[start angle=-180, end angle=0, x radius=0.247, y radius=-0.247] -- (1.941, 0);
    \whaNsymb{(1.447, 0.988)};
    \draw[Virtual, -mid] (0.565, 0.988) -- (0.812, 0.988) -- (1.2, 0.988);
    \filldraw[ultra thin, fill=GreenYellow!50] 
    (1.447, 0.37) circle[radius=0.123];
    \filldraw[ultra thin, fill=GreenYellow!50] 
    (1.941, 0.37) circle[radius=0.123];
    \node[anchor=center, font=\footnotesize] at (2.823, 0.37) {$\Phi(\xi_L)^{-1}$};
    \node[anchor=center, font=\footnotesize] at (0.6, 0.37) {$\Phi(\xi_L)^{-1}$};
\end{tikzpicture}
\end{equation}
where $g:=\xi_L \xi_R^{-1}$, 
i.e.~the previous expression takes the form
\[
\mathcal{E}_2^{(\Eone)}\circ((\rho_0)_{12}\otimes O_{34})
=
\begin{tikzpicture}[scale=1]
    \draw[-mid] (3.21, 0.494) -- (3.21, 0.988);
    \draw[mid-] (3.704, 0.494) -- (3.704, 0.988);
    \draw[Virtual] (0.502, 2.161) arc[start angle=118.956, end angle=151.044, radius=0.494];
    \draw[Virtual, -mid] (1.976, 2.223) -- (0.988, 2.223);
    \draw[Virtual] (0.741, 1.235) arc[start angle=90, end angle=180, x radius=0.494, y radius=-0.494];
    \draw[Virtual] (2.716, 1.235) -- (0.741, 1.235);
    \draw[Virtual] (2.716, 1.235) arc[start angle=-90, end angle=90, radius=0.494] -- (1.729, 2.223);
    \filldraw[ultra thin, fill=white] 
    (0.247, 1.729) circle[radius=0.247];
    \node[anchor=center] at (0.247, 1.729) {$\Omega$};
    \draw[bevel, -mid] (1.739, 1.481) arc[start angle=-163.45, end angle=0, x radius=0.247, y radius=-0.247] -- (2.223, 0.494);
    \draw[-mid] (0.741, 2.469) -- (0.741, 2.716);
    \draw[bevel, mid-] (0.751, 1.976) arc[start angle=-163.45, end angle=0, radius=0.247] -- (1.235, 2.716);
    \filldraw[ultra thin, fill=Goldenrod] 
    (0.741, 2.223) circle[radius=0.247];
    \node[anchor=center] at (0.741, 2.223) {$M$};
    \draw[-mid] (1.606, 2.469) -- (1.606, 2.716);
    \draw[bevel, mid-] (1.616, 1.976) arc[start angle=-163.45, end angle=0, radius=0.247] -- (2.1, 2.716);
    \filldraw[ultra thin, fill=Goldenrod] 
    (1.606, 2.223) circle[radius=0.247];
    \node[anchor=center] at (1.606, 2.223) {$M$};
    \draw[-mid] (2.469, 2.469) -- (2.469, 2.716);
    \draw[bevel, mid-] (2.48, 1.976) arc[start angle=-163.45, end angle=0, radius=0.247] -- (2.963, 2.716);
    \filldraw[ultra thin, fill=Goldenrod] 
    (2.469, 2.223) circle[radius=0.247];
    \node[anchor=center] at (2.469, 2.223) {$M$};
    \draw (1.729, 0.494) -- (1.729, 0.988);
    \filldraw[ultra thin, fill=GreenYellow!50] 
    (1.729, 0.741) circle[radius=0.123];
    \filldraw[ultra thin, fill=GreenYellow!50] 
    (2.223, 0.741) circle[radius=0.123];
    \filldraw[ultra thin, fill=black!2] (1.729, 0.494) arc[start angle=90, end angle=270, radius=0.247] -- (3.704, 0) arc[start angle=-90, end angle=90, radius=0.247] -- cycle;
    \node[anchor=center] at (2.716, 0.247) {$O_{34}$};
    \whaNsymb{(1.729, 1.235)};
\end{tikzpicture}
=((\mathcal{T}\otimes\mathrm{Id})\circ \mathcal{T})( (\Phi(\xi_L^{-1})\otimes\mathrm{Id}) O_{34} (\Phi(\xi_L^{-1})^\dagger\otimes\mathrm{Id})),
\]
and since $\mathcal{T}$ is completely positive, we conclude that the term in \cref{eq:SubtletyWHA} is positive. Moreover, it is non-zero since its trace is of the form $\operatorname{Tr}(\Phi(\xi_L)^{-2}\otimes\mathrm{Id})O_{34})$ by the previous equation and the fact that $\mathcal{T}$ is trace-preserving, and $\xi_L$ is positive definite, leading to that $\operatorname{Tr}(\Phi(\xi_L)^{-2}\otimes\mathrm{Id})O_{34})>0$.  Therefore, the arrow from box S to box B must exist.

Furthermore, using the von Neumann trace inequality, the trace is lower bounded by $\tr(\Phi(\xi_L)^{-2}\otimes\mathrm{Id})O_{34})\geq r\tr(O_{34})$ where $r=\lambda_{\min} (\Phi(\xi_L)^{-2})$ is the minimal eigenvalue of $\Phi(\xi_L)^{-2}$, which is a constant given a $C^*$-weak Hopf algebra. Since $\mT$ is trace-preserving, we can put a lower bound for the trace
\[
\tr(\mathcal{E}_2^{(\Eone)}\circ((\rho_0)_{12}\otimes O_{34}))\geq r \tr((\rho_0)_{12}\otimes O_{34}),
\]
where $r$ is a constant. 

\item \textbf{Either arrow $\text{B}\Rightarrow\text{C}$ or arrows $\text{B}\Rightarrow\text{S}\Rightarrow\text{C}$ must exist.} Without loss of generality, let's start with the term $\mathcal{E}_2^{(\Eone)}\circ\mathcal{E}_3^{(\Eone)}(\rho)$ in box C and consider the action of $\mathcal{E}_1$. If $\mathcal{E}_1^{(\Eone)}\circ\mathcal{E}_2^{(\Eone)}\circ\mathcal{E}_3^{(\Eone)}(\rho)\gneq 0$, then $\frac{1}{N}\mathcal{E}_1$ will bring $\mathcal{E}_2^{(\Eone)}\circ\mathcal{E}_3^{(\Eone)}(\rho)$ from box B to box C. 

If, however, that $\mathcal{E}^{(\Eone)}_1\circ(\mathcal{E}_2^{(\Eone)}\circ\mathcal{E}_3^{(\Eone)}(\rho))=0$, we then need to consider $\mathcal{E}^{(\Etwo)}_1\circ(\mathcal{E}_2^{(\Eone)}\circ\mathcal{E}_3^{(\Eone)}(\rho))$, which is term $(\rho_0)_{12}\otimes O_{34}$ in box B. We need to show by acting another $\mathcal{E}_2^{(\Eone)}\circ\mathcal{E}_3^{(\Eone)}$, 
\[
\mathcal{E}_2^{(\Eone)}\circ\mathcal{E}_3^{(\Eone)}\circ (\mathcal{E}^{(\Etwo)}_1\circ\mathcal{E}_2^{(\Eone)}\circ\mathcal{E}_3^{(\Eone)}(\rho))\gneq 0,
\]
which, by noting that $[\mathcal{E}_3^{(\Eone)},\mathcal{E}_1^{(\Etwo)}]=0$ because these two linear maps do not overlap and using $[\mathcal{E}_i^{(\Eone)},\mathcal{E}_j^{(\Eone)}]=0$, $\mathcal{E}_i^{(\Eone)}\circ\mathcal{E}_i^{(\Eone)}=\mathcal{E}_i^{(\Eone)}$, amounts to
\begin{equation}
\mathcal{E}_2^{(\Eone)}\circ (\mathcal{E}^{(\Etwo)}_1\circ\mathcal{E}_2^{(\Eone)}\circ\mathcal{E}_3^{(\Eone)}(\rho))=\mathcal{E}_2^{(\Eone)}\circ((\rho_0)_{12}\otimes O_{34})\gneq 0.
\end{equation}
This is true based on the proof in point 3; using \cref{eqn:e12}, we see this term is in box C. 

When starting from $\mathcal{E}_1^{(\Eone)}\circ\mathcal{E}_2^{(\Eone)}(\rho)$, we would need to consider the action of $\mathcal{E}_3$ and finish the proof using a similar equation as \cref{eqn:action-right} with $\Xi$ on the right of $M'$. Therefore, either arrow from box B to box C or from box B to box C via box S must exist. 

\item \textbf{There are no arrows from box C to box S.} This is a direct result of \cref{eqn:e21vanish}. 
\end{enumerate}

With the above arrows, it is clear that the trace of terms in box C will keep increasing as long as there exist terms in previous boxes, and subsequently, a steady state can only be supported in box C, taking the form of $\rho=\mathcal{E}^{(\Eone)}_1\circ \mathcal{E}^{(\Eone)}_2\circ\cdots\circ\mathcal{E}^{(\Eone)}_{N-1}(\rho)$. Therefore, the parent Lindbladian is frustration-free.

\end{proof}

The above proof can be generalized to arbitrary $N$ with the open boundary condition by repeating the argument in point 3. 

\subsubsection{Periodic boundary condition}
Finally, for the periodic boundary condition, one more step is needed to close the boundary. Let's start from a term $\mathcal{E}^{(\Eone)}_1\circ \mathcal{E}^{(\Eone)}_2\circ\cdots\circ\mathcal{E}^{(\Eone)}_{N-1}(\rho)\gneq 0$ in the second to last box and act $\mathcal{E}_N$ on it. If $\mathcal{E}^{(\Eone)}_N\circ(\mathcal{E}^{(\Eone)}_1\circ \mathcal{E}^{(\Eone)}_2\circ\cdots\circ\mathcal{E}^{(\Eone)}_{N-1}(\rho))\gneq 0$, the resulting term is in the final box. If, however, that $\mathcal{E}^{(\Eone)}_N\circ(\mathcal{E}^{(\Eone)}_1\circ \mathcal{E}^{(\Eone)}_2\circ\cdots\circ\mathcal{E}^{(\Eone)}_{N-1}(\rho))=0$, we need to consider the term $\mathcal{E}^{(\Etwo)}_N\circ(\mathcal{E}^{(\Eone)}_1\circ \mathcal{E}^{(\Eone)}_2\circ\cdots\circ\mathcal{E}^{(\Eone)}_{N-1}(\rho))$. By acting $\mathcal{E}_{N-1}^{(\Eone)}\circ \mathcal{E}_{1}^{(\Eone)}$, one can show
\begin{equation}
    \mathcal{E}_{N-1}^{(\Eone)}\circ \mathcal{E}_{1}^{(\Eone)}\circ (\mathcal{E}^{(\Etwo)}_N\circ(\mathcal{E}^{(\Eone)}_1\circ \mathcal{E}^{(\Eone)}_2\circ\cdots\circ\mathcal{E}^{(\Eone)}_{N-1}(\rho)))\gneq 0,
\end{equation}
which is a term in the final box. In the following, we provide the proof. 

By assumption, the $C^*$-weak Hopf algebra is biconnected with a unique vacuum sector $e$. 
Denote $\tilde{\rho}=\mathcal{E}^{(\Eone)}_1\circ \mathcal{E}^{(\Eone)}_2\circ\cdots\circ\mathcal{E}^{(\Eone)}_{N-1}(\rho)$ and the corresponding boundary condition matrix as $B=b(x)$ where $x\in A$ is positive semi-definite. W.l.o,g., we choose $B=b(x)$ to be block diagonal in the same basis as $\Psi$, i.e., $B=\bigoplus_a B_a$. Recalling \cref{eqn:pbc-action} and note that $\Omega_a$ is proportional to identity, if $\mathcal{E}^{(\Eone)}_N (\tilde{\rho})=0$, then $\tr(B_a \Xi_a)=\tr(\Xi_a^{1/2} B_a \Xi_a^{1/2})=0$ for any $a$ (for example, in Fibonacci $C^*$-WHA, choosing $x=e_{2,11}$ would lead to $\tr(B_a \Xi_a)=0$ for all $a$). In this case we consider $\mathcal{E}_N^{(\Etwo)}(\tilde{\rho})$, which amounts to partially tracing $\tilde{\rho}$ and replacing the traced degrees of freedom by $\rho_0$, diagramatically. 
\[
\mathcal{E}_N^{(\Etwo)}(\tilde{\rho})=\frac{1}{\omega(\Omega)}
\begin{array}{c}
\begin{tikzpicture}[scale=1]
\wharhozeroloose{(-3,0)}{0.987}{2};
    \draw[Virtual, -mid] (2.223, 1.552) -- (1.729, 1.552);
    \draw[Virtual] (2.963, 1.552) -- (2.716, 1.552);
    \draw[Virtual, mid-] (4.198, 1.552) -- (4.691, 1.552);
    \node[anchor=center] at (3.212, 1.68) {$\overset{N}\cdots$};
    \draw[Virtual] (3.704, 1.552) -- (3.457, 1.552);
    \draw[Virtual] (0.247, 1.552) arc[start angle=90, end angle=270, radius=0.494] -- (5.432, 0.564) arc[start angle=-90, end angle=90, radius=0.494];
    \draw[Virtual] (1.482, 1.552) -- (1.244, 1.552);
    \draw[Virtual, -mid] (4.198, 1.552) -- (3.951, 1.552);
    \draw[Virtual] (5.185, 1.552) -- (5.432, 1.552);
     \whaMtracesgl{(1.482, 1.552)}{\small $M$};
     \whaMDouble{(2.469, 1.552)};
     \whaMDouble{(3.951, 1.552)};
     \whaMtracesgl{(4.938, 1.552)}{\small $M$};
    \draw[Virtual, -mid] (1.235, 1.552) -- (0.741, 1.552);
    \filldraw[ultra thin, fill=white] 
    (0.494, 1.552) circle[radius=0.247];
    \node[anchor=center] at (0.494, 1.552) {$B$};
\end{tikzpicture}
\,.
\end{array}
\]
The partial trace amounts to replacing the boundary condition matrix
$B\ra B'=E B E$
where $E$ is the transfer matrix. By construction, the transfer matrix $E=\Psi(\omega)$ is rank-one and is positive semi-definite. Therefore, $B'\propto E\gneq 0$ and in particular, $B'_e\propto E_e \gneq 0$.


We next act $\mathcal{E}_{N-1}^{(\Eone)}\circ\mathcal{E}_1^{(\Eone)}$. Diagrammatically, 
\[
\begin{aligned}
\mathcal{E}_{N-1}^{(\Eone)}\circ\mathcal{E}_1^{(\Eone)}\circ(\mathcal{E}_N^{(\Etwo)}(\tilde{\rho}))&=
\begin{array}{c}
\begin{tikzpicture}[scale=1]
\whachannelone{(0,0)}{2.0};
\whachannelone{(4.0,0)}{2.0};
\wharhozeroloose{(2.0,-1.976)}{2.0}{1};
\wharholeft{(6.0,-1.976)}{$B'$};
\wharhoright{(0,-1.976)};
\end{tikzpicture}
\end{array}\\
&=
\begin{array}{c}
\begin{tikzpicture}[scale=1]
    \whachannelonefour{(0,0)}{2}{$\Xi^2$};
    \wharholeft{(1.0,-1.976)}{$B'$};
\wharhoright{(-1.0,-1.976)};
\end{tikzpicture}
\end{array}=
\begin{array}{c}
\begin{tikzpicture}[scale=1]
    \whachannelonefour{(0,0)}{2}{$\Xi$};
    \wharholefttwo{(1.0,-1.976)};
\wharhoright{(-1.0,-1.976)};
\end{tikzpicture}
\end{array}
\end{aligned}
\]
where we use Lemma 3.5 in \cite{ruizdealarcon2024matrix} to move the position of $\Xi$ and \cref{eq:BlackVsWhite}. 
The boundary condition matrix is replaced by 
 $B'\ra B''=\Xi^{1/2} B' \Xi^{1/2}$ 
where $\Xi^{1/2}$ is positive definite (and each block $\Xi_a$ is positive definite). In particular, $B''_e=\Xi_e^{1/2} B'_e \Xi_e^{1/2}\gneq 0$. At least for the vacuum sector $\tr(\Xi_e^{1/2} B''_e \Xi_e^{1/2})>0$, and therefore $\mathcal{E}^{(\Eone)}$ acting on this state must be non-zero. 
With this, we conclude
\[
\mathcal{E}_{N-1}^{(\Eone)}\circ \mathcal{E}_{1}^{(\Eone)}\circ (\mathcal{E}^{(\Etwo)}_N\circ(\mathcal{E}^{(\Eone)}_1\circ \mathcal{E}^{(\Eone)}_2\circ\cdots\circ\mathcal{E}^{(\Eone)}_{N-1}(\rho)))\gneq 0.
\]
This finishes the proof. 
\end{widetext}

\section{Example of \texorpdfstring{$\mathbb{C}\mathbb{Z}_2$}{CZ2}}
\label{sec:CZ2-construction}
In this section, we show explicitly the construction in \cref{ex:CZ2_M_N} of the MPDO tensor for the boundary state of the toric code, using $\mathbb{C}\mathbb{Z}_2$ group $C^*$-algebra. The MPDO tensor is constructed by \cref{eqn:black-tensor-constr} and \cref{eqn:M-tensor-constr}. Hence, we need to identify the basis $\lbrace e_i\rbrace$ of $A$ and the dual basis $\lbrace e^i\rbrace$ of $A^*$, their faithful $*$-representations $\Phi,\Psi$, and $b(\omega)$. 

Since $A=\mathbb{C}\mathbb{Z}_2$ is a group algebra, its basis elements are identified with the group elements, and we denote the two basis elements as $e_1, e_z$; namely, the dimension of $A$ is 2. They obey the multiplication rule of $\mathbb{Z}_2$ group, and their faithful representation is the direct sum of the two irreps of $\mathbb{Z}_2$ group,
\begin{equation}
    \Phi(e_1)=\bo_2,\quad \Phi(e_z)=\sigma_z. 
\end{equation}
For the dual algebra $A^*$, its multiplication rule is induced by the comultiplication rule in $A$, $\Delta(e_g)=e_g\otimes e_g$. The multiplication of $A^*$ is thus $e^i\cdot e^j=\sum_k \delta_{ijk} e^k$, and the faithful representation of $A^*$ is
\begin{equation}
    \Psi(e^1)=\begin{pmatrix}
        1 & 0\\
        0 & 0
    \end{pmatrix},\quad \Psi(e^z)=\begin{pmatrix}
        0 & 0 \\
        0 & 1
    \end{pmatrix}.
\end{equation}
Finally, the boundary condition matrix $b(\omega)$ is found using $\omega=e^1$ and by definition $\tr(b(\omega)\Phi(x))=\omega(x)$ for any $x\in A$. The result is $b(\omega)=\bo_2/2$. 

Assemble everything, we obtain the MPDO tensor
\begin{equation}
\begin{aligned}
    \begin{array}{c}
        \begin{tikzpicture}[scale=1.,baseline={([yshift=-0.65ex] current bounding box.center)}]
        \whaM{(0,0)}{2};
        \end{tikzpicture}
        \end{array}&=\frac{\bo_2}{2}\otimes\begin{pmatrix}
            1 & 0\\
            0 & 0
        \end{pmatrix}+\frac{\sigma_z}{2}\otimes\begin{pmatrix}
            0 & 0\\
            0 & 1
        \end{pmatrix},
\end{aligned}
\end{equation}
where the first tensor factor corresponds to the endomorphism in the physical space and the second tensor factor to one in the virtual space, as given in \cref{ex:CZ2_M_N}. The parent Lindbladian of this MPDO is given in \cref{exm:CZ2-parent-Lindbladian}. 

\section{Example of Fibonacci}\label{sec:Fib}
In this section, we construct explicitly the MPDO RFP for the Fibonacci weak Hopf algebra $\Afib$ and its parent Lindbladian as a more complicated example. $\Afib$ is not a Hopf algebra because $\Delta(1)\neq 1\otimes 1$. This necessitates the second piece $\mathcal{E}^{(\Etwo)}$ in the parent Lindbladian construction. We will show the parent Lindbladian construction from two different ways (horizontal canonical form and $C^*$-weak Hopf algebra) and see that they give the same Lindbladian.

\subsection{Fibonacci MPDO construction}
To construct the MPDO RFP using \cref{eqn:black-tensor-constr} and \cref{eqn:M-tensor-constr}, we need to identify the basis $\lbrace e_i\rbrace$ of $A$ and its dual basis $\lbrace e^i\rbrace$ of $A^*$, their faithful $*$-representations $\Phi$ and $\Psi$, and $b(\omega)$. The Fibonacci $C^*$-weak Hopf algebra is defined as the direct sum $\mathcal{M}_{2}(\mathbb{C})\oplus \mathcal{M}_{3}(\mathbb{C})$ of full matrix $C^*$-algebras with complex coefficients. The basis elements of $\Afib$ is identified with the matrix units $e_{1,ij},i,j=1,2$ in $\mathcal{M}_{2}(\mathbb{C})$ and  $e_{2,ij},i,j=1,2,3$ in $\mathcal{M}_{3}(\mathbb{C})$. Namely, the dimension of $A$ is 13. The multiplication rule comes from the matrix multiplication, and the faithful representation $\Phi$ is simply the $5\times 5$ matrix as a direct sum of the two irreducible $*$-representations labeled by $a=1$ and $a=\tau$. For example, 
\begin{equation}
    \Phi(e_{1,12})=\begin{pmatrix}
        0 & 1 & & & \\
        0 & 0 & & & \\
        & & 0 & 0 & 0\\
        & & 0 & 0 & 0\\
        & & 0 & 0 & 0
    \end{pmatrix}.
\end{equation}

To obtain the MPDO tensor, we still need the faithful representation $\Psi$ of $A^*$. Let's use $I$ to denote the combined index $(\mathrm{a},ij)$, where $\mathrm{a}=1,2$. $\Psi$ can be found by noticing that $\Afib$ is self-dual, and therefore there exists a ``natural basis'' $\lbrace \tilde{e}^I \rbrace$, which is related to the dual basis $\lbrace e^I\rbrace$ by a transformation $\tilde{e}^I=\sum_J R^{IJ} e^J$, such that the faithful representation in the natural basis reads
\begin{equation}
    \Psi(\tilde{e}^I)=\Phi(e_I).
\end{equation}
The transformation matrix $R$, in general, is not unique. It is the solution of a consistency relation between multiplication and comultiplication. Specifically, the multiplication and comultiplication on $A$ can be specified using the basis elements
\[\begin{aligned}
     e_I\cdot e_J&=\sum_K \lambda_{IJ}^K e_K\\
     \Delta(e_K)&=\sum_{IJ}\Lambda^{IJ}_K e_I\otimes e_J.
\end{aligned}\]
By definition of dual algebra, they induce the comultiplication and multiplication in $A^*$,
\[
\begin{aligned}
    e^I\cdot e^J&=\sum_K \Lambda_K^{IJ} e^K\\
    \Delta(e^K)&=\sum_{IJ} \lambda_{IJ}^K e^I\otimes e^J.
\end{aligned}
\]
As a natural basis, $\tilde{e}^I\cdot\tilde{e}^J=\sum_K \lambda_{IJ}^K \tilde{e}^K$ and therefore the consistency relation requires
\begin{equation}
\label{eqn:consistencyR}
    \sum_{I' J'} R^{II'} R^{JJ'} \Lambda_S^{I' J'}=\sum_K \lambda_{IJ}^K R^{KS}.
\end{equation}
For $\Afib$, the multiplication rule $\lambda_{IJ}^K$ comes from matrix multiplication and the comultiplication rule $\Lambda_K^{IJ}$ is given explicitly in Example 2.5 of \cite{ruizdealarcon2024matrix}\footnote{We note two typos $\Delta(e_{2,13})=e_{1,12}\otimes e_{2,13}+e_{2,13}\otimes e_{1,22}+\zeta e_{2,12}\otimes e_{2,31}-\zeta^2 e_{2,13}\otimes e_{2,33}$ and $\Delta(e_{2,23})=e_{1,22}\otimes e_{2,23}+e_{2,23}\otimes e_{1,12}+\zeta e_{2,32}\otimes e_{2,21}-\zeta^2 e_{2,33}\otimes e_{2,23}$.}, and one possible solution for transformation $R$ of \cref{eqn:consistencyR} is~\cite{wang2024hopf}
\begin{equation}
\begin{aligned}
    &\tilde{e}^{1,11}= e^{1,11},\quad \tilde{e}^{1,12}=e^{2,11},\quad \tilde{e}^{1,21}=e^{2,22},\\
    &\tilde{e}^{1,22}=\zeta^4 e^{1,22}+\zeta^2 e^{2,33}\\
    &\tilde{e}^{2,11}=e^{1,12},\quad 
    \tilde{e}^{2,12}=e^{2,12},\quad\tilde{e}^{2,13}=e^{2,13}\\
    &\tilde{e}^{2,21}=\zeta^{-2} e^{2,21},\quad \tilde{e}^{2,22}=e^{1,21},\quad\tilde{e}^{2,23}=\zeta^{-1} e^{2,31}\\
    &\tilde{e}^{2,31}=\zeta^{-1} e^{2,23},\quad \tilde{e}^{2,32}=e^{2,32},\\
    & \tilde{e}^{2,33}=\zeta^2 e^{1,22}-\zeta^2 e^{2,33},
\end{aligned}
\label{eqn:eqnR}
\end{equation}
where $\zeta^2=2/(1+\sqrt{5})$. 

Finally, the boundary condition matrix $b(\omega)$ can be found by using character $\chi_\hbnt$ of Irrep class $\hbnt\in\text{Irr}(A)$ (namely, $\chi_\hbnt=\tr\circ\Phi_\hbnt$, which is the dual analog of $\mathrm{x}_a$), 
\[
\begin{aligned}
\chi_1&=e^{1,11}+e^{1,22}\\
\chi_\tau&=e^{2,11}+e^{2,22}+e^{2,33}\\
\omega&=\frac{1}{\mathcal{D}^2}(\delta_1 \chi_1+\delta_\tau \chi_\tau),
\end{aligned}
\]
where $\delta_1=1, \delta_\tau=\zeta^{-2}$ are the quantum dimensions of each Irrep, $\mathcal{D}^2=\sum_a \delta_a^2=1+\zeta^{-4}$ is the total quantum dimenstion, and by definition $\tr(b(\omega)\Phi(x))=\omega(x)$ for any $x\in A$. The result is 
\[
b(\omega)=\frac{\delta_1}{\mathcal{D}^2}\bo_2 \oplus \frac{\delta_\tau}{\mathcal{D}^2}\bo_3.
\]
In fact, one can show that in general, $b(\omega)=\oplus_\hbnt (\delta_\hbnt/\mathcal{D}^2)\bo_{\hbnt}$. 

Assemble everything, we obtain the MPDO tensor
\begin{equation}
\begin{aligned}
    \begin{array}{c}
        \begin{tikzpicture}[scale=1.,baseline={([yshift=-0.65ex] current bounding box.center)}]
		\draw (-0.75,0) node {$\alpha$};
		\draw (0.75,0) node {$\beta$};
		\draw (0,0.75) node {$i$};
        \draw (0,-0.75) node {$j$};
        \whaM{(0,0)}{2};
        \end{tikzpicture}
        \end{array}&=\sum_{I=1}^{13}[b(\omega)]_{ii'}\Phi_{i'j}(e_I)\Psi_{\alpha\beta}(e^I)\\
        &=\sum_{IJ}(R^{-1})^{IJ}[b(\omega)]_{ii'}\Phi_{i'j}(e_I)\Phi_{\alpha\beta}(e_J).
\end{aligned}
\end{equation}
Using the representation $\Phi$ and transformation $R$ in \cref{eqn:eqnR}, the part without $b(\omega)$ is 
\begin{equation}
\begin{aligned}
&\begin{array}{c}
\begin{tikzpicture}[scale=\rof]
\blackdot{(0,0)}{1}{1}{1}{1};
\end{tikzpicture}
\end{array}=
\begin{array}{c}
\begin{tikzpicture}[scale=\rof]
\blackdot{(0,0)}{3}{3}{1}{2};
\end{tikzpicture}
\end{array}=
\begin{array}{c}
\begin{tikzpicture}[scale=\rof]
\blackdot{(0,0)}{4}{4}{2}{1};
\end{tikzpicture}
\end{array}=
\begin{array}{c}
\begin{tikzpicture}[scale=\rof]
\blackdot{(0,0)}{2}{2}{2}{2};
\end{tikzpicture}
\end{array}\\
=&\begin{array}{c}
\begin{tikzpicture}[scale=\rof]
\blackdot{(0,0)}{5}{5}{2}{2};
\end{tikzpicture}
\end{array}=
\begin{array}{c}
\begin{tikzpicture}[scale=\rof]
\blackdot{(0,0)}{2}{1}{3}{3};
\end{tikzpicture}
\end{array}=\begin{array}{c}
\begin{tikzpicture}[scale=\rof]
\blackdot{(0,0)}{4}{3}{3}{4};
\end{tikzpicture}
\end{array}=
\begin{array}{c}
\begin{tikzpicture}[scale=\rof]
\blackdot{(0,0)}{5}{3}{3}{5};
\end{tikzpicture}
\end{array}\\
=&\begin{array}{c}
\begin{tikzpicture}[scale=\rof]
\blackdot{(0,0)}{1}{2}{4}{4};
\end{tikzpicture}
\end{array}=
\begin{array}{c}
\begin{tikzpicture}[scale=\rof]
\blackdot{(0,0)}{4}{5}{5}{4};
\end{tikzpicture}
\end{array}=
\begin{array}{c}
\begin{tikzpicture}[scale=\rof]
\blackdot{(0,0)}{2}{2}{5}{5};
\end{tikzpicture}
\end{array}=1,\\
&\begin{array}{c}
\begin{tikzpicture}[scale=\rof]
\blackdot{(0,0)}{3}{5}{4}{5};
\end{tikzpicture}
\end{array}=
\begin{array}{c}
\begin{tikzpicture}[scale=\rof]
\blackdot{(0,0)}{5}{4}{5}{3};
\end{tikzpicture}
\end{array}=\zeta,\\
&\begin{array}{c}
\begin{tikzpicture}[scale=\rof]
\blackdot{(0,0)}{3}{4}{4}{3};
\end{tikzpicture}
\end{array}=-
\begin{array}{c}
\begin{tikzpicture}[scale=\rof]
\blackdot{(0,0)}{5}{5}{5}{5};
\end{tikzpicture}
\end{array}=\zeta^2,
\end{aligned}
\end{equation}
which agrees with the explicit components given in \cite{ruizdealarcon2024matrix}. 

Using the explicit expression, one can also compute the transfer matrix 
\begin{equation}
    E=
\begin{array}{c}
        \begin{tikzpicture}[scale=1.,baseline={([yshift=-0.75ex] current bounding box.center)}
        ]
         \whaMtrace{(0,0)}{$M$};
        \end{tikzpicture}
        \end{array}=\begin{pmatrix}
        \frac{2}{5+\sqrt{5}} & \frac{1}{\sqrt{5}} & 0 & 0 & 0\\
        \frac{1}{\sqrt{5}} & \frac{1}{10}(5+\sqrt{5}) & 0 & 0 & 0\\
        0 & 0 & 0 & 0 & 0\\
        0 & 0 & 0 & 0 & 0\\
        0 & 0 & 0 & 0 & 0
    \end{pmatrix},
\end{equation}
and verifies that $E^2=E$ and $E=\Psi(\omega)$. The eigenvalues of $E$ are $1,0,0,0,0$, which indicates that $\rho(M)$ is a short-range correlated state, similar to the boundary state of the toric code. 

\subsection{Parent Lindbladian construction: vertical canonical form}
\label{sec:Fibvertical}
To construct the parent Lindbladian, we need the renormalization channels $\mT$ and $\mS$. In this section, we construct these two channels from the vertical canonical form \cref{eqn:Mvertical,eqn:Kvertical} and \cref{thm:non-simple}.

The one-site tensor $M_{(\alpha\beta)}$, taking the form of a direct sum, is already in the vertical canonical form. We write explicitly
\begin{equation}
    M_{(\alpha\beta)}=\mu_1 M_{(\alpha\beta),1}\oplus \mu_2 M_{(\alpha\beta),2},
\end{equation}
where $M_{(\alpha\beta),1}$ is the upper-left $2\times 2$ block of $M_{(\alpha\beta)}$ and $M_{(\alpha\beta),2}$ is the lower-right $3\times 3$ block of $M_{(\alpha\beta)}$ with proper normalization constants. Namely, there are two BNT elements. The corresponding multiplicity matrices are constants $\mu_1 = \frac{1}{\sqrt{5}}$ and $\mu_2=\frac{5+\sqrt{5}}{10}$, which takes care of the normalization (viewing $M_{(\alpha\beta),\lambda}$ as matrix product vector). 

To obtain the vertical canonical form for two-site MPDO $K_{(\alpha\beta)}$, we need a basis rotation $U$ on the physical indices. The result is
\begin{equation}
    \begin{aligned}
        U^\dg K_{(\alpha\beta)} U &= \begin{array}{c}
        \begin{tikzpicture}[scale=.4,baseline={([yshift=-0.75ex] current bounding box.center)}
        ]
        \uTensor{(\singledx*0.5,1.5)}{1.5}{0.5}{\singledx*0.5}{1}{\small $U^\dg$};
        \uTensor{(\singledx*0.5,-1.5)}{1.5}{0.5}{\singledx*0.5}{1}{\small $U$};
       \MTensor{(0,0)}{1}{.6}{\small $M$};
       \MTensor{(\singledx,0)}{1}{.6}{\small $M$};
       \draw(-1.5,0) node {\small $\alpha$};
       \draw(1.5+\singledx,0) node {\small $\beta$};
        \end{tikzpicture}
        \end{array}\\
        &=\left[\bigoplus_{\kappa=1}^2 \nu_\kappa\otimes M_{(\alpha\beta),\kappa}\right]\oplus \mathbf{0}_{12},\\
        \mathrm{where}\hspace{0.5em} \nu_1&=\begin{pmatrix}
        \frac{\sqrt{5}+1}{10} & 0\\
        0 & \frac{\sqrt{5}-1}{10}
    \end{pmatrix}, \nu_2=\begin{pmatrix}
        \frac{1}{5} & 0 & 0\\
        0 & \frac{1}{5} & 0\\
        0 & 0 & \frac{\sqrt{5}+1}{10}
    \end{pmatrix},
    \end{aligned}
\end{equation}
and $\mathbf{0}_{12}$ denotes a $12\times 12$ zero matrix. The explicit form of $U$ will be specified at the end of the section. Indeed, the vertical BNT of $K$ is still $\lbrace M_{(\alpha\beta),\lambda}\rbrace$, as stated in \cref{thm:non-simple}. The trace relation is also satisfied: $m_1=\mu_1=\tr(\nu_1)=\frac{1}{\sqrt{5}}$ and $m_2=\mu_2=\tr(\nu_2)=\frac{5+\sqrt{5}}{10}$. 

Using the vertical canonical form of $M$ and $K$, one can construct the channel $\mT$ and $\mS$ as follows. The one-site MPDO to two-site MPDO channel $\mT$ is
\begin{equation}
\begin{aligned}
     &\mT(X) \\
     &= U\left[(\frac{\nu_1}{m_1}\otimes P_1 X P_1^\dg) \oplus (\frac{\nu_2}{m_2}\otimes P_2 X P_2^\dg)\oplus\mathbf{0}_{12} \right] U^\dg,
\end{aligned}
\end{equation}
where 
\[
P_1=\begin{pmatrix}
    1 & 0 & 0 & 0 & 0\\
    0 & 1 & 0 & 0 & 0
\end{pmatrix},\quad
P_2=\begin{pmatrix}
    0 & 0 & 1 & 0 & 0\\
    0 & 0 & 0 & 1 & 0\\
    0 & 0 & 0 & 0 & 1
\end{pmatrix}
\]
are the isometries projecting to the first and second BNT, respectively. By direct inspection, one can see $\mT(M_{(\alpha\beta)})=K_{(\alpha\beta)}$. 

The two-site MPDO to one-site MPDO channel $\mS$ is 
\begin{equation}
\begin{aligned}
    \mS &= \mS^{(\Eone)}+\mS^{(\Etwo)}\\
    \mS^{(\Eone)}(X)&=\tr_1(P_1' U^\dg X U (P_1')^\dg) \oplus  \tr_1(P_2' U^\dg X U (P_2')^\dg) \\
    \mS^{(\Etwo)}(X)&=\tr(P_0' U^\dg X U (P_0')^\dg) \rho_0
\end{aligned}
\end{equation}
where we use $\mu_\lambda=m_\lambda$ in this example. The isometries $P_1',P_2',P_0'$ project to the blocks corresponding to the first BNT, the second BNT, and the zero matrix, respectively. Their corresponding dimensions are $4\times 25,9\times 25$ and $12\times 25$. $\tr_1$ denotes tracing over the multiplicity degrees of freedom. The term $\mS^{(\Etwo)}$ is added to make the map trace-preserving, and $\rho_0$ can be any density matrix supported on one site. By direct inspection one can see $\mS(K_{(\alpha\beta)})=M_{(\alpha\beta)}$. 

Combining $\mT$ and $\mS$, the local quantum channel $\mathcal{E}$ is
\begin{equation}
    \begin{aligned}
        \mathcal{E}&=\mathcal{E}^{(\Eone)}+\mathcal{E}^{(\Etwo)}\\
        \mathcal{E}^{(\Eone)}(X)&=U\left[\frac{\nu_1}{m_1}\otimes\tr_1(P_1' U^\dg X U (P_1')^\dg)\right.\\
        &\quad \left.\oplus \frac{\nu_2}{m_2}\otimes\tr_1(P_2' U^\dg X U (P_2')^\dg)\oplus \mathbf{0}_{12}\right]U^\dg\\
        \mathcal{E}^{(\Etwo)}(X)&=U\left[\frac{\nu_1}{m_1}\otimes (P_1 \rho_0 P_1^\dg)\tr(P_0' U^\dg X U (P_0')^\dg)\right.\\
         \oplus\frac{\nu_2}{m_2}&\left.\otimes (P_2 \rho_0 P_2^\dg)\tr(P_0' U^\dg X U (P_0')^\dg)\oplus \mathbf{0}_{12}\right]U^\dg.
    \end{aligned}
    \label{eqn:FibEv}
\end{equation}
By direct inspection, one can see the fixed-point space of channel $\mathcal{E}$ is the fixed-point space of $\mathcal{E}^{(\Eone)}$, 
\begin{equation}
    \mathcal{F}_{\mathcal{E}}=U\left[\left(\bigoplus_{\lambda=1}^2 \nu_\lambda\otimes \mathcal{M}_{d_\lambda}\right)\oplus \mathbf{0}_{12}\right] U^\dg.
\end{equation}
where $\mathcal{M}_d$ is the full $d\times d$ dimensional matrix algebra and $d_{\lambda=1}=2,d_{\lambda=2}=3$. The dimension of the fixed point space is therefore $d_{\lambda=1}^2+d_{\lambda=2}^2=13$. This finishes the explicit construction of parent Lindbladian. We note again that the construction is based on the fact that at RFP, $M$ and $K$ share the same vertical BNT; the coarse-graining and fine-graining channels only act on the multiplicity degrees of freedom, without changing the BNT part. 

Although this construction is intuitive, proving frustration-freeness and finding global steady states is difficult. In the next section, we present another construction using $C^*$-weak Hopf algebra, which is more convenient for finding the global steady states. Numerically, we verify that these two constructions give the same CP maps $\mathcal{E}^{(\Eone)}$. 

Finally, we provide the explicit form of the transformation matrix $U$. The unitary matrix $U$ is the concatenation the following 25 column vectors (where $|v_i\rangle_j$ is the $j$-the component of the vector $|v_i\rangle$),
\begin{equation}
\begin{aligned}
    &|v_1\rangle_{14}=1,\quad |v_2\rangle_{18}=\frac{\zeta}{\sqrt{\zeta^2+1}},\quad |v_2\rangle_{25}=\frac{1}{\sqrt{\zeta^2+1}}\\
    &|v_3\rangle_1=1,\quad |v_4\rangle_7=1\\
    &|v_5\rangle_3=1,\quad |v_6\rangle_9=1,\quad |v_7\rangle_{10}=1\\
    &|v_8\rangle_{12}=1,\quad |v_9\rangle_{16}=1,\quad |v_{10}\rangle_{22}=1\\
    &|v_{11}\rangle_{15}=1,\quad |v_{12}\rangle_{24}=1,\\
    &|v_{13}\rangle_{18}=\frac{1}{\sqrt{\zeta^2+1}},\quad |v_{13}\rangle_{25}=\frac{-\zeta}{\sqrt{\zeta^2+1}},
\end{aligned}
\end{equation}
with all other components equal to zero, 
and $|v_{14}\rangle$ to $|v_{25}\rangle$ being arbitrary basis vectors in the orthogonal space of $|v_1\rangle$ to $|v_{13}\rangle$, 

\subsection{Parent Lindbladian construction: \texorpdfstring{$\boldsymbol{C^*}$}{C*}-weak Hopf algebras}
In terms of $C^*$-weak Hopf algebras, the quantum channels $\mT$ and $\mS$ are given in \cref{eqn:wha-channelT,eqn:wha-channelS1,eqn:wha-channelS2}, from which we can write down $\mathcal{E}=\mT\circ\mS$. The first piece is
\begin{equation}
\begin{aligned}
    \mE^{(\Eone)}(X)&=\mT\circ\mS^{(\Eone)}(X)\\
    &=b(\omega)^{\otimes 2}(\Phi^{\otimes 2}\circ\Delta)(\Omega_{(1)})\\
    &\hspace{3.5em}\cdot\tr((\Phi^{\otimes 2}\circ\Delta\circ J)(\xi^{\frac{1}{2}}\Omega_{(2)}\xi^{\frac{1}{2}}) X),
\end{aligned}
\label{eqn:FibEh}
\end{equation}
where we use \cref{eqn:omega-identity}, 
and the second piece is
\begin{equation}
    \begin{aligned}
        \mE^{(\Etwo)}(X)&=\mT\circ\mS^{(\Etwo)}(X)\\
        &=b(\omega)^{\otimes 2}(\Phi^{\otimes 2}\circ\Delta)(\Omega_{(1)})\\
        &\hspace{3.5em}\cdot\tr((\Phi\circ J)(\xi^{\frac{1}{2}}\Omega_{(2)}\xi^{\frac{1}{2}})\rho_0)\tr(P^\perp X),
    \end{aligned}
\end{equation}
where $P^\perp=\bo-(\Phi^{\otimes 2}\circ\Delta)(1)$ and $\rho_0$ is an arbitrary density matrix. 

To explicitly construct $\mE$, we need the explicit form of $1,\Omega,\xi$ and the map $J$. The comultiplication $\Delta$ is given explicitly in \cite{ruizdealarcon2024matrix}. The explicit form of $1$ comes from its definition as the unit element in $A$,
\begin{equation}
    \begin{aligned}
        1&=e_{1,11}+e_{1,22}+e_{2,11}+e_{2,22}+e_{2,33},
    \end{aligned}
\end{equation}
and the explicit form of $\Omega$ is
\begin{equation}
    \begin{aligned}
        \Omega & = \frac{1}{\mathcal{D}^2}(d_1 \mathrm{x}_1+d_\tau \mathrm{x}_\tau)\\
        &=\frac{1}{\mathcal{D}^2}(e_{1,11}+\frac{1}{\zeta^2}e_{1,12}+\frac{1}{\zeta^2}e_{1,21}+\frac{1}{\zeta^4}e_{1,22})
    \end{aligned}
\end{equation}
where $d_1=1, d_\tau=\zeta^{-2}, \mathcal{D}^2=1+\zeta^{-4}$. The form of $\xi$ can be determined using $\xi^{-1}=\omega(\Omega_{(1)})\Omega_{(2)}$, and the result reads
\begin{equation}
    \xi=\mathcal{D}^4(e_{1,11}+\zeta^4 e_{1,22}+ \zeta^2 e_{2,11} + \zeta^2 e_{2,22} + \zeta^4 e_{2,33}).
\end{equation}
Finally, the linear map $J$ can be obtained from the pulling-through identity \cref{eqn:pull-through}, which after applying $J$ to the first site, becomes $(J(x)\xi^{\frac{1}{2}}\otimes 1)\cdot\Delta(\Omega)=(\xi^{\frac{1}{2}}\otimes x)\cdot\Delta(\Omega)$. The result is
\begin{equation}
\begin{aligned}
    &J(e_{1,11})=e_{1,11},\tb J(e_{1,12})=e_{1,21},\tb J(e_{1,21})=e_{1,12}\\
    &J(e_{1,22})=e_{1,22},\tb J(e_{2,11})=e_{2,22},\tb J(e_{2,12})=e_{2,12}\\
    &J(e_{2,13})=e_{2,32},\tb J(e_{2,21})=e_{2,21},\tb J(e_{2,22})=e_{2,11}\\
    &J(e_{2,23})=e_{2,31},\tb J(e_{2,31})=e_{2,23},\tb J(e_{2,32})=e_{2,13}\\
    &J(e_{2,33})=e_{2,33}.
\end{aligned}
\end{equation}
Using the above data, one can explicitly construct the quantum channel $\mathcal{E}$. 

By direct inspection, one can see the fixed-point space of channel $\mathcal{E}$ is the fixed-point space of $\mathcal{E}^{(\Eone)}$, which is the span
\begin{equation}
    \mathcal{F}_\mE=\left\lbrace \sum_{I=1}^{13} c_I b(\omega)^{\otimes 2}(\Phi^{\otimes 2}\circ\Delta) (e_I)\Big|c_I\in\mathbb{C}\right\rbrace.
\end{equation}
The dimension of the fixed-point space is 13. 

\subsection{A hint of duality}
Using two explicit parent Lindbladian constructions, we numerically find that $\mathcal{E}^{(\Eone)}$ from \cref{eqn:FibEv} and \cref{eqn:FibEh} are identical. This result is intriguing because the first construction is derived from the vertical canonical form of the MPDO, while the second stems from the algebraic structure, which implicitly relies on the horizontal canonical form. This observation suggests a potential duality between the vertical and horizontal canonical forms. One might speculate whether this duality arises from the self-duality of the Fibonacci $C^*$-weak Hopf algebra. To explore this further, we examine a non-self-dual case, the MPDO RFP constructed from the $\mathbb{C}S_3$ Hopf algebra. Interestingly, we find that $\mathcal{E}$ from the two constructions remains identical. A deeper understanding of this apparent duality is left for future work.

\section{Example of $\rho_{CZX}$}
\label{app:CZX}

In this appendix, we provide details of the vertical canonical form analysis of $\rho_{CZX}$ and study the properties of its parent Lindbladian. This is an example of the non-simple MPDO beyond the $C^*$-weak Hopf algebra~\cite{liu2025trading}. 

\subsection{Vertical canonical form}
We now show explicitly that this MPDO is an RFP after blocking two sites by using the vertical canonical form. Define $M$ as the tensor obtained by blocking two sites of $\tilde{M}$. By a unitary transformation $U$ on the physical degrees of freedom,
\begin{equation}
\begin{aligned}
    M_{(\alpha \beta)}&\ra M^U_{(\alpha \beta)} = U M_{(\alpha \beta)} U^\dg \\ 
    U&=\begin{pmatrix}
        1 & 0 & 0 & 0\\
        0 & 0 & 0 & 1\\
        0 & 1 & 0 & 0\\
        0 & 0 & 1 & 0
    \end{pmatrix}
\end{aligned}
\end{equation}
one can bring $M$ to the vertical canonical form with two BNT elements, 
\begin{equation}
    M^U_{(\alpha \beta)} = \mu_1 M_{(\alpha\beta),1}\oplus \mu_2 M_{(\alpha\beta),2}, 
\end{equation}
with $\mu_1= \mu_2 = \frac{\sqrt{3}}{4}$,
\begin{equation}
\begin{aligned}
    M_{(11),1} & = \frac{1}{\sqrt{3}}\begin{pmatrix}
        1 & 0 \\
        0 & 1
    \end{pmatrix},\quad 
    M_{(22),1}=\frac{1}{\sqrt{3}}\begin{pmatrix}
        0 & 1\\
        0 & 0
    \end{pmatrix}\\
    M_{(23),1} &=\frac{1}{\sqrt{3}}\begin{pmatrix}
        0 & 1\\
        0 & 0
    \end{pmatrix},\quad
    M_{(32),1} = \frac{1}{\sqrt{3}}\begin{pmatrix}
        0 & 0\\
        -1 & 0
    \end{pmatrix}\\
    M_{(33),1}&=\frac{1}{\sqrt{3}}\begin{pmatrix}
        0 & 0\\
        1 & 0
    \end{pmatrix}
\end{aligned}
\end{equation}
and
\begin{equation}
\begin{aligned}
    M_{(11),2} & = \frac{1}{\sqrt{3}}\begin{pmatrix}
        1 & 0 \\
        0 & 1
    \end{pmatrix},\quad
    M_{(22),2}=\frac{1}{\sqrt{3}}\begin{pmatrix}
        0 & 1\\
        0 & 0
    \end{pmatrix}\\
    M_{(23),2} &=\frac{1}{\sqrt{3}}\begin{pmatrix}
        0 & -1\\
        0 & 0
    \end{pmatrix},\quad
    M_{(32),2} = \frac{1}{\sqrt{3}}\begin{pmatrix}
        0 & 0\\
        1 & 0
    \end{pmatrix}\\
    M_{(33),2}&=\frac{1}{\sqrt{3}}\begin{pmatrix}
        0 & 0\\
        1 & 0
    \end{pmatrix}.
\end{aligned}
\end{equation}
One can verify that these two BNT elements are indeed independent by computing the leading eigenvalue of the mixed transfer matrix. 

Next, consider $K$ as the tensor obtained by blocking two sites of $M$ (which is, four sites of $\tilde{M}$). By a unitary transformation $V$ on the physical degrees of freedom, 
\begin{equation}
    K_{(\alpha\beta)}\ra K_{(\alpha\beta)}^V = V K_{(\alpha\beta)} V^\dg,
\end{equation}
one can bring $K$ to the vertical canonical form with the same two BNT elements,
\begin{equation}
    K_{(\alpha\beta)}^V = \nu_1 M_{(\alpha\beta),1}\oplus \nu_2 M_{(\alpha\beta),2}
\end{equation}
with $\nu_1=\nu_2=\frac{\sqrt{3}}{16}\bo_4$. The coefficients verify $m_\lambda=\mu_\lambda=\tr[\nu_\lambda]=\frac{\sqrt{3}}{4}$. The explicit form of the unitary $V$ is
\begin{equation}
    \begin{aligned}
        &V_{1,1}=V_{2,16}=V_{3,3}=V_{4,14}=1,\\
        &V_{5,5}=V_{6,12}=V_{7,7}=-V_{8,10}=1,\\
        &V_{9,2}=V_{10,15}=V_{11,4}=-V_{12,13}=1,\\
        &V_{13,6}=V_{14,11}=V_{15,8}=V_{16,9}=1.
    \end{aligned}
\end{equation}

We would like to comment that the one-site MPDO tensor $\tilde{M}$ contains only one BNT element, and therefore $\tilde{M}$ itself is not an RFP tensor. On the other hand, after blocking two sites, we see that $M$ is an RFP tensor by showing that $M$ and $K$ share the same set of BNT. This is expected, since we require $N$ to be even, and the above MPDO representation is valid only for even $N$.

\subsection{Steady states of parent Lindbladian}

Using the vertical canonical form, the renormalization channels $\mT,\mS$ are constructed as in~\cref{ex:CZX-ex} and with them one constructs the parent Lindbladian. In this section, we analyze the properties of this parent Lindbladian. 

Following the general treatment, we build the local channel $\mathcal{E}$ (that acts on four original sites) and the Lindbladian
\begin{equation}
    \begin{aligned}
        \mathcal{E}&=\mathcal{T}\circ\mathcal{S},\quad \mathcal{E}_i=\tau_i(\mathcal{E})\\
        \mathcal{L}_i&=\mathcal{E}_i-\bo,\quad \mathcal{L}^{(N)}=\sum_{i=1}^{N/2} \mathcal{L}_{2i-1}.
    \end{aligned}
\end{equation}
Let's first examine the fixed-point space of the local channel $\mathcal{E}$,
\begin{equation}
    \mathcal{F}_{\mE}=V^\dg\left[
    \bigoplus_{\lambda=1}^2 \nu_\lambda \otimes \mathcal{M}_{d_{\lambda}}
    \right] V,
\end{equation}
where $\mathcal{M}_{d_{\lambda}}$ is the full $d_{\lambda}\times d_{\lambda}$ dimensional matrix algebra, and $d_{\lambda=1}=d_{\lambda=2}=2$. The dimension of the fixed point space is therefore $d_{\lambda=1}^2+d_{\lambda=2}^2=8$. Explicitly,
\begin{equation}
\label{eqn:CZX-fixed-point-space}
\begin{aligned}
    &\mathcal{F}_\mE = \mV\oplus\mW, \\
    &\mV  = \text{span} \{
    \bo_2\otimes \bo_2\otimes \bo_2\otimes \bo_2,\bo_2\otimes \bo_2\otimes \bo_2\otimes \sigma_z,\\
    &\quad\quad  \sigma_z\otimes \bo_2\otimes \bo_2\otimes \bo_2,\sigma_z\otimes \bo_2\otimes \bo_2\otimes \sigma_z\}\\
    &\mW  = \text{span}\{\\
    & \sigma_x^{\otimes 4} CZ^{(4)}(|m\rangle\langle m|\otimes \bo_2\otimes\bo_2\otimes|n\rangle\langle n|)|m,n=0,1\},
\end{aligned}
\end{equation}
where $CZ^{(4)}=\prod_{i=1}^4(CZ)_{i,i+1}$ and note that $\mV\cap \mW =\{0\}$.
In particular, the vector space $\mW$ allows a simple MPO representation same as $U_{CZX}$,
\begin{equation}
\begin{aligned}
    \mW &= G_{L=4}(O)\\
    O^{12}&=\begin{pmatrix}
        1 & 1\\
        0 & 0
    \end{pmatrix},\quad O^{21}=\begin{pmatrix}
        0 & 0\\
        1 & -1
    \end{pmatrix},
\end{aligned}
\end{equation}
where $G_{L}(O)$ is defined as in Eq.~\eqref{eqn:GLM}. As a side remark, we note that the vector space $G_{L=4}(\tilde{M})$ has dimension 5 and is a subspace of the fixed-point space,
\begin{equation}
    G_{L=4}(\tilde{M})\subseteq \mathcal{F}_{\mE}.
\end{equation}

To find the space of steady state of the global Lindbladian, we examine the commutation relation of local channels. We verify numerically that 
\begin{equation}
\begin{aligned}
     &[\mathcal{E}_i,\mathcal{E}_{i+2}]=0.\\
\end{aligned}
\end{equation}
Therefore, the parent Lindbladian is commuting and thus frustration-free. The global space of steady state is then the intersection of the fixed-point spaces $\cap_{i=1}^N \mathcal{F}_{\mE_i}$. Let's first consider
\begin{equation}
\begin{aligned}
    \mathcal{F}_{\mE_i} \cap \mathcal{F}_{\mE_{i+2}} &= (\mV_i\oplus \mW_i) \cap  (\mV_{i+2}\oplus \mW_{i+2}),
\end{aligned}
\end{equation}
where $\mV_i= \mV\otimes \mathcal{M}_2\otimes\mathcal{M}_2$, $\mV_{i+2}=\mathcal{M}_2\otimes\mathcal{M}_2\otimes \mV$, and similar for $\mW_i,\mW_{i+2}$. To simplify, we note that the following two intersections are empty: $V_i\cap W_{i+2}=\{0\}$ and $W_i\cap V_{i+2}=\{0\}$, which can be proven by considering a non-zero vector in this intersection and then tracing out site $i+2$, leading to a contradiction. Along with the conditions $V_i\cap W_i=\{0\}$ and $V_{i+2}\cap W_{i+2}=\{0\}$, the above expression can be simplified to 
\begin{equation}
    \mathcal{F}_{\mE_i} \cap \mathcal{F}_{\mE_{i+2}} = (\mV_i\cap \mV_{i+2})\oplus(\mW_i\cap \mW_{i+2}).
\end{equation}
The first term can be computed by considering a non-zero vector in the intersection $\bm{v}_1 = \bm{v}_2$, $\bm{v}_1\in \mV_i,\bm{v}_2\in \mV_{i+2}$, multiplying $\sigma_z$ on site $i+2$, and then tracing out site $i+2$. Since $\bm{v}_1$ can only take $\bo_2$ on site $i+2$, this procedure leads to $\tr_{i+2}[\bm{v}_2 (\sigma_z)_{i+2}]=0$, enforcing $\bm{v}_2$ to take $\bo_2$ on site $i+2$. One can apply similar procedures on other sites to obtain other constraints, and the final result is
\begin{equation}
\begin{aligned}
    \mV_{i,i+2}:&=\mV_i\cap \mV_{i+2}\\
    &=\text{span}\{ \bo_2\otimes \bo_2\otimes \bo_2\otimes \bo_2\otimes \bo_2\otimes \bo_2,\\
    &\quad \bo_2\otimes \bo_2\otimes \bo_2\otimes \bo_2\otimes \bo_2\otimes \sigma_z,\\
    &\quad \sigma_z\otimes \bo_2\otimes \bo_2\otimes \bo_2\otimes \bo_2\otimes \bo_2,\\
    &\quad \sigma_z\otimes \bo_2\otimes \bo_2\otimes \bo_2\otimes \bo_2\otimes \sigma_z\}.
\end{aligned}
\end{equation}
Next, we consider the second term $ \mW_i\cap \mW_{i+2}$. We note that by blocking two sites of MPO tensor $O^{ij}_{\alpha\beta}$, it becomes injective when viewed as a matrix product vector (MPV) (by grouping $i,j$ indices). One can then apply the intersection property of MPV~\cite{cirac2021matrix} and obtain
\begin{equation}
\begin{aligned}
    \mW_{i,i+2}:&=\mW_i\cap \mW_{i+2} \\
    &= (G_{L=4}(O) \otimes \mathcal{M}_4)\cap (\mathcal{M}_4\otimes G_{L=4}(O))\\&= G_{L=6}(O). 
\end{aligned}
\end{equation}
The above procedure can be performed iteratively until we reach the system size $N$, and for $\mW$, using the closure property when crossing the boundary. We conclude
\begin{equation}
\label{eqn:CZX-steady-state}
    \rho_\infty^{(N)}=\frac{1}{2^N}(\bo_2^{\otimes N}+ c U_{CZX}^{(N)}),\quad c\in[-1,1], 
\end{equation}
where $N$ is even. 

We note that the above-constructed parent Lindbladian is rapid-mixing, which is a direct result of commuting local terms. Nevertheless, it cannot drive a product state fast to $\rho_{CZX}$, as shown in the following evolution analysis. 

\subsubsection{Evolution}
Starting from a trivial state $\rho=\frac{1}{2^N}(\bo^{\otimes N}+\sigma_x^{\otimes N})$, we numerically verify that: when $N=4,6,8$, the resulting steady state is
\begin{equation}
    \rho_\infty^{(N)} = \frac{1}{2^N}(\bo_2^{\otimes N}+ \frac{1}{2^{N/2-1}} U_{CZX}^{(N)}).
\end{equation}
We conjecture that this relation holds for general $N$. In the following, we prove that the parent Lindbladian constructed above cannot drive a fully separable state to $\rho_{CZX}$. 

\begin{prop}
Consider any initial state as a fully separable state
\begin{equation}
    \rho_\init^{(N)} =\sum_s q_s (\rho_{1,s}\otimes \rho_{2,s}\otimes\cdots\otimes \rho_{N,s})
\end{equation}
where $N$ is an even integer and $\sum_s q_s =1$. The corresponding steady state $\rho_\infty^{(N)}$ under the evolution of CZX parent Lindbladian takes the form of~\cref{eqn:CZX-steady-state}, with
\begin{equation}
    |c|\leq \frac{1}{2^{N/2-1}}. 
\end{equation} 
\end{prop}

\begin{proof}
    To begin, note that the coefficient $c$ in~\cref{eqn:CZX-steady-state} can be computed via
    \begin{equation}
    \begin{aligned}
        c &= \tr[U_{CZX}^{(N)} \rho_\infty^{(N)}]=\tr[U_{CZX}^{(N)} \mathcal{E}_{\infty}(\rho_\init^{(N)})]\\
        & = \tr[\mathcal{E}_\infty^*(U_{CZX}^{(N)}) \rho_\init^{(N)}]
    \end{aligned}
    \end{equation}
    where $\mathcal{E}_{\infty}:=\lim_{t\ra\infty}e^{\mathcal{L}t}$, and $\mathcal{E}_\infty^*$ is its adjoint. Since the CZX parent Lindbladian $\mathcal{L}$ is a commuting Lindbladian, $\mathcal{E}_{\infty}=\prod_i \mathcal{E}_{i,\infty}$ where $\mathcal{E}_{i,\infty}:=\lim_{t\ra\infty} e^{\mathcal{L}_i t}$. Similarly, $\mathcal{E}_\infty^*$ can also be decomposed as $\mathcal{E}_\infty^*=\prod_i \mathcal{E}_{i,\infty}^*$ where $\mathcal{E}_{i,\infty}^*$ is the adjoint of $\mathcal{E}_{i,\infty}^*$. 

    The local infinite time evolution channel is a projector on the space of local fixed point (on four physical sites), 
    \begin{equation}
    \mathcal{E}_{i,\infty}(X)=\sum_{m=1}^8 \tr[O_m^\dg X] O_m,
    \end{equation}
    where $\tr[O_m^\dg O_n]=\delta_{mn}$, and $\{O_m\}$ are the basis operators of local fixed point space in~\cref{eqn:CZX-fixed-point-space} up to normalization, 
    \begin{equation}
        \begin{aligned}
            &O_1=\frac{1}{4}\bo_2\otimes \bo_2\otimes \bo_2\otimes \bo_2, O_2=\frac{1}{4}\bo_2\otimes \bo_2\otimes \bo_2\otimes \sigma_z,\\
    &O_3 = \frac{1}{4}\sigma_z\otimes \bo_2\otimes \bo_2\otimes \bo_2,O_4=\frac{1}{4}\sigma_z\otimes \bo_2\otimes \bo_2\otimes \sigma_z,\\
    &O_5 =\sigma_x^{\otimes 4} CZ^{(4)}(|0\rangle\langle 0|\otimes \bo_2\otimes\bo_2\otimes|0\rangle\langle 0|),\\
    &O_6 =\sigma_x^{\otimes 4} CZ^{(4)}(|0\rangle\langle 0|\otimes \bo_2\otimes\bo_2\otimes|1\rangle\langle 1|),\\
    &O_7 =\sigma_x^{\otimes 4} CZ^{(4)}(|1\rangle\langle 1|\otimes \bo_2\otimes\bo_2\otimes|0\rangle\langle 0|),\\
    &O_8 =\sigma_x^{\otimes 4} CZ^{(4)}(|1\rangle\langle 1|\otimes \bo_2\otimes\bo_2\otimes|1\rangle\langle 1|).
        \end{aligned}
    \end{equation}
    Subsequently, its adjoint takes the form of
    \begin{equation}
        \mathcal{E}_{i,\infty}^*(X) = \sum_{m=1}^8 \tr[O_m X] O_m^\dg.
    \end{equation}
    Note that $O_m=O_m^\dg$ for $m=1,\cdots,4$ and $O_5^\dg=O_8, O_6^\dg=O_7$, the explicit form of $\mathcal{E}_{i,\infty}^*$ is therefore the same as $\mathcal{E}_{i,\infty}$, and thus $U_{CZX}$ stays invariant under the global ajoint map,
    \begin{equation}
        \mathcal{E}_{\infty}^*(U_{CZX}):= \left(\prod_{i=1}^{N/2} \mathcal{E}_{2i-1,\infty}^*\right) U_{CZX}=U_{CZX}. 
    \end{equation}
    
    Now, consider a pure single-site initial state with no summation over $s$, $\rho_i=|v_i\rangle\langle v_i|$ where $|v_i\rangle=a_i|0\rangle+b_i|1\rangle$, $|a_i|^2+|b_i|^2=1$. The MPU representation of $U_{CZX}$ reads
    \begin{equation}
        U^{01}=\begin{pmatrix}
            1 & 1\\
            0 & 0
        \end{pmatrix},\quad U^{10}=\begin{pmatrix}
            0 & 0\\
            1 & -1
        \end{pmatrix},
    \end{equation}
    and we define a transfer matrix 
    \begin{equation}
        E_{U,i} :=\begin{array}{c}
        \begin{tikzpicture}[scale=1.,baseline={([yshift=-0.75ex] current bounding box.center)}
        ]
        \whaMtraceOp{(0,0)}{\small $U$}{\small $\rho_i$};
        \end{tikzpicture}
        \end{array} = \begin{pmatrix}
            a_i^* b_i & a_i^* b_i\\
            a_i b_i^* & -a_i b_i^*
        \end{pmatrix}.
    \end{equation}
    Then,
    \begin{equation}
        c = \tr[E_{U,1} E_{U,2}\cdots E_{U,N} ]
    \end{equation}
    and its absolute value can be bound by
    \begin{equation}
    \begin{aligned}
        |c|&\leq \|E_{U,1}\|_1 \|E_{U,2}\cdots E_{U,N} \|_\infty \\
        &\leq \|E_{U,1}\|_1 \prod_{i=2}^N \|E_{U,i}\|_\infty. 
    \end{aligned}
    \end{equation}
    Using the explicit form of $E_{U,i}$, one can show $\|E_{U,i}\|_\infty=\sqrt{2}|a_i||b_i|\leq 1/\sqrt{2}$, where the inequality comes from $|a_i|^2+|b_i|^2=1$; and similarly $\|E_{U,1}\|_1 = 2\sqrt{2}|a_1||b_1|\leq \sqrt{2}$. This leads to
    \begin{equation}
        |c|\leq \frac{1}{2^{N/2-1}}.
    \end{equation}

    Finally, we justify that the extreme value of $c$ must occur when there is no summation over $s$ and each $\rho_i$ in $\rho_\init$ is a pure state. To see this, consider 
    \begin{equation}
    \label{eqn:c+1}
    \begin{aligned}
         &c + 1\\
         &= \tr[(\bo_2^{\otimes N}+U^{(N)}_{CZX}) \rho_\infty^{(N)}]\\
         & = \tr[\mathcal{E}_\infty^*(\bo_2^{\otimes N}+U_{CZX}^{(N)}) \rho_\init^{(N)}]\\
         & =\sum_s q_s \sum_{\{m_{is}\}} p_{m_{1s}}\cdots p_{m_{Ns}}\tr[(\bo_2^{\otimes N}+U^{(N)}_{CZX})\\
         &\quad\cdot(|\psi_{m_{1s}}\rangle\langle\psi_{m_{1s}}|\otimes\cdots\otimes |\psi_{m_{Ns}}\rangle\langle\psi_{m_{Ns}}|)]
    \end{aligned}
    \end{equation}
    where we decompose each 
    \[
    \rho_{i,s}=\sum_{m_{is}}p_{m_{is}}|\psi_{m_{is}}\rangle\langle\psi_{m_{is}}|
    \]
    with $p_{m_{is}}\geq 0,\sum_{m_{is}}p_{m_{is}}=1$. Since $(\bo_2^{\otimes N}+U^{(N)}_{CZX})$ is positive semi-definite, each trace term in the last line of~\cref{eqn:c+1} must be non-negative. Therefore,
    \begin{equation}
        c \leq \max_{\{|\psi_{m_{is}}\rangle\}}\langle\psi_{m_{1s}}|\cdots\langle\psi_{m_{Ns}}|U_{CZX}^{(N)}|\psi_{m_{1s}}\rangle\cdots |\psi_{m_{Ns}}\rangle,
    \end{equation}
    and 
    \begin{equation}
        c \geq \min_{\{|\psi_{m_{is}}\rangle\}}\langle\psi_{m_{1s}}|\cdots\langle\psi_{m_{Ns}}|U_{CZX}^{(N)}|\psi_{m_{1s}}\rangle\cdots |\psi_{m_{Ns}}\rangle. 
    \end{equation}
    This finishes the proof. 
\end{proof}



The above proposition and proof can be generalized to the initial state after blocking a finite number of $l$ sites.
\begin{prop}
Consider any initial state as a fully separable state
\begin{equation}
    \rho_\init^{(N)} =\sum_s q_s (\rho_{1,s}\otimes \rho_{2,s}\otimes\cdots\otimes \rho_{N/l,s})
\end{equation}
where each $\rho_{i,s}$ supports on $l$ consecutive sites of qubits, $N$ is an even integer, $N/l$ is an integer and $\sum_s q_s =1$. The corresponding steady state $\rho_\infty^{(N)}$ under the evolution of CZX parent Lindbladian takes the form of~\cref{eqn:CZX-steady-state}, with
\begin{equation}
    |c|\leq \frac{1}{2^{N/2l-1}}.
\end{equation}
\end{prop}

\begin{proof}
    Most of the proof runs parallel to the previous one. In particular, we now consider the transfer matrix of a block of $l$ consecutive sites, 
    \begin{equation}
        E_{U^{(l)}}:= \begin{array}{c}
        \begin{tikzpicture}[scale=1.,baseline={([yshift=-0.75ex] current bounding box.center)}
        ]
        \whaMtraceOpmul{(0,0)}{\small $U$}{\small $\rho$};
        \end{tikzpicture}
        \end{array}
    \end{equation}
    where we omit the position label $i$, and we would like to show that $\|E_{U^{(l)}}\|_\infty\leq 1/\sqrt{2}$ and $\|E_{U^{(l)}}\|_1\leq \sqrt{2}$. 

    To start, denote $U^{(l)}$ as the horizontal blocking of $l$ sites of $U$, and apply a Hadamard gate $H$ on its right, $U^{(l)}\ra U^{(l)} H$. Note that $\|U^{(l)} H\|_p=\|U^{(l)}\|_p$, the Schatten norm is unchanged. 
    The explicit form of $U^{(l)}$ (after applying $H$) reads
    \begin{equation}
        \begin{aligned}
            U^{(l)}&=\sqrt{2}|0\rangle\langle 0|\otimes \sigma_1^+\otimes O^{(1)}\otimes \sigma_l^+\\
            &\quad+  \sqrt{2}|0\rangle\langle 1|\otimes \sigma_1^+\otimes O^{(2)}\otimes \sigma_l^-\\
            &\quad +\sqrt{2}|1\rangle\langle 0|\otimes \sigma_1^-\otimes O^{(3)}\otimes \sigma_l^+ \\
            &\quad +  \sqrt{2}|1\rangle\langle 1|\otimes \sigma_1^-\otimes O^{(4)}\otimes \sigma_l^-
        \end{aligned}
    \end{equation}
    where the first register is the virtual bond degree of freedom, and $O^{(r)}$ is a summation of strings of $\sigma^{\pm}$, 
    \begin{equation}
        \begin{aligned}
            O^{(r)}=\sum_k O_k f_r(k)
        \end{aligned}
    \end{equation}
    where $k=(k_2,k_3,\cdots,k_{n-1})$ with $k_i=0,1$ and denoting $\sigma^1:=\sigma^-,\sigma^0:=\sigma^+$,
    \begin{equation}
    \begin{aligned}
         O_k&= \sigma_2^{k_2}\otimes \sigma_3^{k_3}\otimes \cdots\otimes \sigma_{l-1}^{k_{l-1}}\\
         f_1(k)&=(-1)^{k_2 k_3+k_3 k_4+\cdots+k_{l-2} k_{l-1}}\\
         f_2(k)&=(-1)^{k_2 k_3+k_3 k_4+\cdots+k_{l-2} k_{l-1}+ k_{l-1}} \\
         f_3(k)&=(-1)^{k_2 k_3+k_3 k_4+\cdots+k_{l-2} k_{l-1}+ k_{2}} \\
         f_4(k)&=(-1)^{k_2 k_3+k_3 k_4+\cdots+k_{l-2} k_{l-1}+ k_2+ k_{l-1}}. 
    \end{aligned}
    \end{equation}

    Now, consider the pure state $\rho=|\psi\rangle\langle\psi|$ and write
    \begin{equation}
        |\psi\rangle=\sum_{i,j=0}^1 |i\rangle_1\otimes |\psi_{i,j}\rangle\otimes |j\rangle_l
    \end{equation}
    where we single out the first and last physical qubit as above. Thus, we can write
    \begin{equation}
        E_{U^{l}}=\sum_k E_{U^{l},k}
    \end{equation}
    where $E_{U^{l},k}$ are $2\times 2$ matrices with components
    \begin{equation}
    \begin{aligned}
        E_{U^{l},k}(0,0)&=\sqrt{2} \langle \psi_{0,0}|O_k|\psi_{1,1}\rangle f_1(k)\\
        E_{U^{l},k}(0,1)&=\sqrt{2} \langle \psi_{0,1}|O_k|\psi_{1,0}\rangle f_2(k)\\
        E_{U^{l},k}(1,0)&=\sqrt{2} \langle \psi_{1,0}|O_{\bar{k}}|\psi_{0,1}\rangle f_3(\bar{k})\\
        E_{U^{l},k}(1,1)&=\sqrt{2} \langle \psi_{1,1}|O_{\bar{k}}|\psi_{0,0}\rangle f_4(\bar{k})
    \end{aligned}
    \end{equation}
    Note that in the third and fourth lines, we have changed the index of summation and replaced $k\ra\bar{k}=(1,1,\cdots,1)-k$. Expand $|\psi_{i,j}\rangle$ in the computational basis
    \begin{equation}
        |\psi_{i,j}\rangle=\sum_k a^{ij}_k |k\rangle= \sum_k b^{ij}_k |\bar{k}\rangle.
    \end{equation}
    Note that $\langle k''|O_k|k'\rangle=\delta_{k',\overline{k''}}\delta_{k,k''}$, the matrix $E_{U^{l},k}$ is simplified to
    \begin{equation}
        E_{U^{l},k}=\sqrt{2} \begin{pmatrix}
            y_k f_1(k) & z_k f_2(k)\\
            \bar{z}_k f_3(\bar{k}) & \bar{y}_k f_4(\bar{k})
        \end{pmatrix}
    \end{equation}
    where $y_k=\bar{a}_k^{00} b_k^{11}$ and $z_k=\bar{a}_k^{01} b_k^{10}$. Using 
    \begin{equation}
        \|E_{U^{l}}\|_p \leq \sum_k \|E_{U^{l},k}\|_p 
    \end{equation}
    we need to compute the singular values of $E_{U^{l},k}$. When $f_1(k) f_3(\bar{k})+f_2(k) f_4(\bar{k})=0$, the singular values are degenerate, 
    \begin{equation}
    \begin{aligned}
        \lambda_1=\lambda_2&=\sqrt{2}\sqrt{|y_k|^2+|z_k|^2}\\
        &\leq \frac{1}{\sqrt{2}}(|a_k^{00}|^2+|b_k^{11}|^2+|a_k^{01}|^2+|b_k^{10}|^2);
    \end{aligned}
    \end{equation}
    and when $f_1(k) f_3(\bar{k})+f_2(k) f_4(\bar{k})=\pm 2$, the singular values are
    \begin{equation}
    \begin{aligned}
        \lambda_1 &= \sqrt{2}(|y_k|+|z_k|)\\
        \lambda_2 &= \sqrt{2}| |y_k|-|z_k| |\\
        \lambda_{1,2}&\leq  \frac{1}{\sqrt{2}}(|a_k^{00}|^2+|b_k^{11}|^2+|a_k^{01}|^2+|b_k^{10}|^2). 
    \end{aligned}
    \end{equation}
    Thus,
    \begin{equation}
    \begin{aligned}
        \|E_{U^{l}}\|_\infty&\leq \frac{1}{\sqrt{2}}\sum_k (|a_k^{00}|^2+|b_k^{11}|^2+|a_k^{01}|^2+|b_k^{10}|^2)\\
        &=\frac{1}{\sqrt{2}}\langle\psi|\psi\rangle=\frac{1}{\sqrt{2}}.
    \end{aligned}
    \end{equation}
    Similarly, 
    \begin{equation}
        \|E_{U^{l}}\|_1\leq \sqrt{2}. 
    \end{equation}
    This finishes the proof. 
\end{proof}

\end{document}